\documentclass[11pt]{article}
\RequirePackage[colorlinks,citecolor=blue,urlcolor=blue,linkcolor=blue]{hyperref}
\usepackage{url}

\setlength{\topmargin}{-.6in}
\setlength{\oddsidemargin}{-0cm}
\setlength{\evensidemargin}{-1cm}
\setlength{\textwidth}{16.5cm}
\setlength{\textheight}{23cm}

\DeclareSymbolFont{slenderlargesymbols}{OMX}{ccex}{m}{n}
\DeclareMathSymbol{\prod}{\mathop}{slenderlargesymbols}{"51}
\usepackage{latexsym,amssymb,amsmath,amsthm,graphics,graphicx,float,psfrag, epsfig, color, authblk, enumerate}

\usepackage{pgfplots}
\usepackage{url,amssymb,amsmath,amsthm,amscd,paralist,bbm,wrapfig}
\usepackage{bm}
\usepackage{algorithm}
\usepackage{algorithmic} %//format of the algorithm 

\usepackage{mathtools}

\usepackage[T1]{fontenc}
\usepackage[numbers]{natbib}
\tikzstyle{vertex}=[circle,black, fill=black, draw, inner sep=0pt, minimum size=6pt]

\usepackage{tikz}
\usetikzlibrary{decorations.pathreplacing}
\usepackage{xcolor}
\usepackage{mathtools}
\definecolor{cof}{RGB}{219,144,71}
\definecolor{pur}{RGB}{186,146,162}
\definecolor{greeo}{RGB}{91,173,69}
\definecolor{greet}{RGB}{52,111,72}
\pgfplotsset{compat=1.14}

\providecommand{\customgenericname}{}
\newcommand{\newcustomtheorem}[2]{%
  \newenvironment{#1}[1]
  {%
   \renewcommand\customgenericname{#2}%
   \renewcommand\theinnercustomgeneric{##1}%
   \innercustomgeneric
  }
  {\endinnercustomgeneric}
}

\newcustomtheorem{customasmp}{Assumptions}
\renewcommand{\epsilon}{\varepsilon}
%% Notation

\newcommand{\R}{\mathbb{R}}
\newcommand{\C}{\mathbb{C}}
\newcommand{\N}{\mathbb{N}}
\newcommand{\al}{\alpha}
\newcommand{\PiWdix}{\Pi_{i=d}^1 W_{i,+,x}}

\newcommand{\relu}{\text{relu}}
\newcommand{\E}{\operatorname{\mathbb{E}}}
\newcommand{\Pro}{\operatorname{\mathbb{P}}}
\newcommand{\one}{\operatorname{\mathbbm{1}}}
\newcommand{\sgn}{\operatorname{sgn}}
\newcommand{\sign}{\operatorname{sgn}}
\newcommand{\diag}{\operatorname{diag}}
\newcommand{\dimension}{\operatorname{dim}}

\newcommand{\Seps}{\mathcal{S}}
\newcommand{\G}{G}

\newcommand{\F}{\mathcal{F}} 
\renewcommand{\top}{{\mathrm{T}}}

\newtheorem{thm}{Theorem}
\newtheorem{lem}{Lemma}
\newtheorem{cor}[thm]{Corollary}
\newtheorem{prop}{Proposition}

\newtheorem{defn}[thm]{Definition}

\theoremstyle{remark}

\title{Compressive Phase Retrieval: Optimal Sample Complexity with Deep Generative Priors}

\author{Paul Hand\thanks{Department of Mathematics and Khoury College of Computer and Information Science, Northeastern University, Boston, MA}, Oscar Leong\thanks{Department of Computational and Applied Mathematics, Rice University, Houston, TX}, and Vladislav Voroninski\thanks{Helm.ai, Menlo Park, CA}}

\begin{document}

\maketitle
%\author{Paul Hand\thanks{Department of Mathematics and College of Computer and Information Science, Northeastern University}, Oscar Leong\thanks{Department of Computational and Applied Mathematics, Rice University}, and Vladislav Voroninski\thanks{Helm.ai, Menlo Park, CA}}

%    Abstract is required.
\begin{abstract}

Advances in compressive sensing provided reconstruction algorithms of sparse signals from linear measurements with optimal sample complexity, but natural extensions of this methodology to nonlinear inverse problems have been met with potentially fundamental sample complexity bottlenecks. In particular, tractable algorithms for compressive phase retrieval with sparsity priors have not been able to achieve optimal sample complexity.  This has created an open problem in compressive phase retrieval: under generic, phaseless linear measurements, are there tractable reconstruction algorithms that succeed with optimal sample complexity? Meanwhile, progress in machine learning has led to the development of new data-driven signal priors in the form of generative models, which can outperform sparsity priors with significantly fewer measurements.   %there has been tremendous progress using deep learning techniques to learn image priors in the form of generative models given by deep neural networks which, when used in compressive sensing, can outperform sparsity priors in some cases. 
 In this work, we resolve the open problem in compressive phase retrieval and demonstrate that generative priors can lead to a fundamental advance by permitting optimal sample complexity by a tractable algorithm in this challenging nonlinear inverse problem. We additionally provide empirics showing that exploiting generative priors in phase retrieval can significantly outperform sparsity priors. 
 These results provide support for generative priors as a new paradigm for signal recovery in a variety of contexts, both empirically and theoretically.  The strengths of this paradigm are that (1) generative priors can represent some classes of natural signals more concisely than sparsity priors, (2) generative priors allow for direct optimization over the natural signal manifold, which is intractable under sparsity priors, and (3) the resulting non-convex optimization problems with generative priors can admit benign optimization landscapes at optimal sample complexity, perhaps surprisingly, even in cases of nonlinear measurements.

\end{abstract}

\maketitle

%\tableofcontents

%%%%%%%%%%%%%%%%%%%

\section{Introduction}

%\subsection{New Intro}

 The study of inverse problems pervades virtually all of the natural sciences including biological and astronomical imaging, X-ray crystallography, oil exploration, and shape optimization and reconstruction. An object of interest is observed via some forward mapping process, and the task is to recover the object, often subject to ill-posedness and noise. In order to increase fidelity of the estimate or decrease the number of required measurements, one can enforce structural assumptions or priors on the signal, a practice dating as far back as Tikhonov regularization \cite{TikReg} and the Nyquist sampling theorem \cite{NyquistShannonSampling}. A canonical example of an ill-posed inverse problem in the field of imaging is compressive sensing (CS), in which one aims to recover a signal from undersampled linear measurements. By exploiting the sparsity of natural images in the wavelet domain as a structural prior, CS has led to a number of practical developments across the imaging sciences, such as speeding up some forms of MRI imaging by an order of magnitude \cite{Dono2007}.

In terms of theory, advances in CS have provided reconstruction algorithms using sparsity priors with information theoretically optimal sample complexity \cite{Tao2006,Donoho2006}. A seminal result in the field states that if given $m < n$ undersampled linear measurements $b_* = Ay_*$ where $A \in \R^{m \times n}$ has i.i.d. Gaussian entries and $y_* \in \R^n$ is an $s$-sparse signal, recovery is guaranteed with high probability when $m = O(s \log n)$ by solving the following convex program:
\begin{align*}
    \min_{y \in \R^n} \|y\|_1\ \text{s.t.}\ Ay = b_*.
\end{align*}

\noindent 
%CS and the notion of convex relaxations led to many other developments, including matrix completion, . 
The success of compressive sensing has popularized the notion of signal sparsity throughout the imaging sciences, resulting in sparsity becoming a common choice as a structural prior.

Sparsity-based priors when applied to nonlinear inverse problems such as phase retrieval have been met with potentially fundamental sample complexity bottlenecks. In phase
retrieval, a signal $y_* \in \R^n$ or $\C^n$ is to be estimated from observations $|\langle a_i,y_*\rangle|^2$, $i = 1,2,\dots m$. Compressive phase retrieval considers the case $m<n$, which requires structural priors to enable recovery. While an $s$-sparse signal is information theoretically recoverable from $O(s \log n)$ generic phaseless measurements, compressive phase retrieval  algorithms have not achieved sample complexity below $O(s^2 \log n)$ \cite{SparsePRoverview}. In fact, convex algorithms such as PhaseLift \cite{CandStrohVoron2013,CandLi2014}, 
provably fail below $O(s^2 \log n)$ measurements under natural extensions to incorporate sparsity \cite{Voroninski2013, Oymaketal15}. This has created an open problem in compressive phase retrieval to find a computationally efficient algorithm to reconstruct signals from generic, phaseless linear measurements with optimal sample complexity with respect to the signal's intrinsic dimensionality. Furthermore, there is evidence to support that these sample complexity limitations may be fundamental for sparse phase retrieval. In the closely related sparse PCA problem, a reduction from planted clique was found, and it is widely conjectured to be NP-hard \cite{BerthetRigollet13, SOS_planted_clique}.  These observations open the question of whether other signal priors may successfully achieve sample-optimal reconstruction algorithms.

Simultaneously, there has been tremendous progress on priors in the form of generative models given by a deep neural network, which in some cases significantly outperforms sparsity priors at compressive sensing. These generative models, such as Generative Adversarial Networks \cite{Goodfellow2014} and Variational Autoencoders \cite{Kingma2014}, learn an explicit mapping from a low-dimensional latent space $\R^k$ to an approximation of the natural image manifold in $\R^n$ and can be trained on datasets of various natural signal classes to create  realistic, yet synthetic samples of human faces \cite{Karras2019}, MRIs \cite{MRI_GAN}, cells \cite{Cell_GAN}, human fingerprints \cite{FingerGAN}, and more. Enforcing a generative prior in CS tasks by directly optimizing over the latent space has been shown to outperform sparsity-based methods such as Lasso by 5-10x fewer measurements \cite{Price2017} in some cases. Moreover, while the optimization problem posed over latent space is non-convex, \cite{HV2017} showed that when the number of measurements $m$ is proportional to $k$ up to log factors, the empirical risk minimization problem under a suitable random generator model exhibits favorable global geometry in the sense that there are no spurious local minima away from small neighborhoods of the true solution and a negative multiple thereof. %Finally, because they directly model a nonlinear signal manifold, generative priors can lower sample complexity as these models can be more compressible than those given by a sparsity prior. That is, the latent dimensionality of a generator for a class of images can be much lower than the sparsity level of such images with respect to a wavelet basis.

The above empirical and theoretical evidence indicates that generative neural networks can potentially succeed as structural priors in nonlinear inverse problems where previous methods exploiting sparsity have thus far been met with likely fundamental bottlenecks.

In this work, we resolve the open problem in compressive phase retrieval by presenting a computationally efficient algorithm that achieves optimal sample complexity with generic measurements under a generative prior. In particular, we consider a deep generative prior for compressive phase retrieval by supposing that the desired signal lives in the range of a feed-forward neural network with ReLU activation functions and latent code dimensionality $k$. We establish the sufficiency of two deterministic conditions on the weights of the generative model and the measurement matrix to guarantee that the signal can be recovered by a subgradient descent algorithm. Moreover, we show that these conditions are satisfied with high probability for Gaussian weights and generic Gaussian measurements as soon as $m$ is proportional to $k$, up to log factors, which is information theoretically optimal in $k$. In addition to our theoretical results, we empirically establish that exploiting generative models in phase retrieval tasks can significantly outperform sparsity-based methods.

%Our results come with a surprise, in that  These results add to the literature of instances in which an optimization problem exhibits a type of benign non-convexity. 
%Traditionally, rigorous understandings of optimization tasks for inverse problems are typically posed in the convex setting.  
%More recently, 

Subsequent to the publication of preliminary versions of the results of this paper \cite{HLV18}, generative priors have also been shown to break through sample complexity barriers in PCA. In particular, all known algorithms to achieve optimal statistical sample complexity in sparse PCA are computationally intractable and all known polynomial time algorithms exhibit a sub-optimal quadratic sample complexity on the sparsity of the true signal \cite{Krauthgameretal15,DeshpandeMontanari14}. Gaps of this nature have also been observed in a number of related problems \cite{Decelleetal11,RichardMontanari14}. However, with respect to PCA, recent work in both the asymptotic \cite{SpikedGenPrior19} and non-asymptotic regimes \cite{CHV20} have shown that the low rank matrix recovery problem with generative priors does not exhibit a computational-to-statistical gap, offering further evidence of the benefit of generative priors in inverse problems.

The results in the present work provide  empirical and theoretical support to the notion that deep generative priors offer a new paradigm for signal recovery that offers fundamental advances.  In this paradigm, a model of a natural signal class is learned from data in the form of a generative model. The generative model directly parameterizes a low-dimensional signal manifold, and recovering a signal subject to noisy measurements can be posed as a direct optimization problem whose search space is restricted to the range of the generative model.  This paradigm has several strengths in comparison to sparsity priors.  First, generative models may provide better compression of natural signals than sparsity priors.  Precisely, the dimensionality of the manifold modeling the natural signal class under a generative prior may be lower than the sparsity level of the same signals. Second, generative priors allow for direct optimization over the natural signal manifold.  In contrast, sparsity priors give rise to combinatorial optimization problems which can not be directly solved.  Tractable convex relaxations have not been successful in important nonlinear settings.  Third, the non-convex optimization problems under generative priors can admit benign optimization landscapes at optimal sample complexity even in the case of nonlinear measurements.  This fundamental advance has so far not been realized by sparsity priors.

\subsection{Related Work}

%\subsubsection{Phase Retrieval}

\paragraph{Phase Retrieval:} Some of the earliest methods to solve phase retrieval tasks are the non-convex alternating minimization Gerchberg-Saxton \cite{GerchSaxAlg} and Fienup \cite{FienupAlg} algorithms.  Recently, a variety of methods have been introduced that enjoy theoretical guarantees.  Convex methods, such as the seminal lifting-based approach PhaseLift \cite{CandStrohVoron2013}, can achieve optimal sample complexity for unstructured signals \cite{CandLi2014}.  Further recovery guarantees have been extended to non-convex formulations such as Wirtinger Flow \cite{Candes2015, Wright2016, Mahdi2019} and its non-smooth variant Amplitude Flow \cite{Wang2017,Mahdi2019,AmpFlow}.  Other approaches include Phasemax \cite{Phasemax, BahmaniRomberg_LinProg}, Phasecut \cite{Phasecut}, AltMinPhase \cite{Jain2013}, and alternating projection methods \cite{Waldspurger_Alt_Proj}. %In recent years, more direct non-convex optimization approaches such as the seminal Wirtinger Flow approach \cite{Candes2015} directly analyze a gradient descent scheme with spectral initialization. Further global analyses in \cite{Wright2016} ruled out adversarial geometries of the non-convex Wirtinger Flow objective. Other works \cite{Wang2017,Mahdi2019,AmpFlow} have studied Amplitude Flow, a non-smooth variation of Wirtinger Flow, which has been shown to empirically outperform Wirtinger Flow \cite{Yeh2015}. Lifting-based techniques, made popular by the seminal approach in \cite{CandStrohVoron2013}, view the quadratic measurements as linear measurements of a rank-1 matrix and can achieve $m=O(n)$ sample complexity \cite{CandLi2014}. Other convex methodologies include linear programming-based approaches, such as PhaseMax \cite{Phasemax} and other semidefinite programming approaches such as Phasecut \cite{Phasecut}.

Since the success of exploiting sparsity in linear compressed sensing, many works have attempted to leverage similar techniques to solve the phase retrieval problem in the compressive setting $m < n$. When the $n$-dimensional signal is $s$-sparse, the information theoretic lower bound of $m = O(s \log n)$ measurements was shown to be required for the injectivity of phaseless Gaussian measurements \cite{VladSRIP2014}. However, attempts at achieving this optimal sample complexity via a polynomial time algorithm have proven quite difficult and, in some cases, impossible. For example, the natural $\ell_1$-penalized variant of Phaselift was shown to be able to recover an $s$-sparse signal with $O(s^2 \log n)$ generic measurements, but this bound was also proven to be tight \cite{Voroninski2013, Oymaketal15}. Moreover, there are a number of results that show, if one were able to construct a sufficiently accurate initializer of the true solution, then recovery from $O(s \log n)$ Gaussian measurements is possible by a variety of methods \cite{HV2016,SPARTA,Mahdi2019}. Known initialization schemes to accomplish this, however, require $O(s^2 \log n)$ measurements \cite{Cai2016}. For a more complete discussion of prior methodologies for phase retrieval, we refer the reader to \cite{Numerics_of_PR}.

Some existing works in compressive phase retrieval establish optimal sample complexity recovery guarantees under non-generic measurements \cite{SparsePRoverview}.  For example, \cite{Romberg2015} showed that assuming the measurement vectors were chosen from an incoherent subspace, then recovery is possible with $O(s \log \frac{n}{s})$ measurements. Also, using the notion of polarization, \cite{Bandeira_Mixon} showed that $O(s \log n)$ measurements also suffices for recovery when the measurement vectors have an associated graph with sufficient connectivity properties. However, these results would be difficult to generalize to the experimental setting as their measurement design architectures are often unrealistic, with generic measurements offering a closer model to the goal of Fourier diffraction  measurements.

%\paragraph{Non-convex optimization:} With the variety of success of non-convex approaches across applied mathematics and computer science, there has been a recent push towards a deeper theoretical understanding of non-convex optimization problems. In this present work, we show that our non-convex formulation in fact exhibits global benign geometry that allows it to be sufficiently optimized with first order methods. A number of different problems have exhibited a similar type of global benign geometry such as Burer-Monteiro factorization approaches for semidefinite programming \cite{Boumaletal2018}, phase retrieval \cite{Wright2016} and synchronization and community detection \cite{BBV16}. Rigorous guarantees in non-convex settings have also been explored in subspace recovery \cite{Maunuetal17}, blind deconvolution \cite{Lietal2016}, convergence guarantees in deep learning \cite{AllenZhuetal22018, Duetal2018, OymakSoltan2018, OymakSoltan2019} and in many other areas.

%\subsubsection{Signal recovery with generative priors}

\paragraph{Signal recovery with generative priors:} In \cite{Price2017}, the authors studied  enforcing a generative prior in the linear compressive sensing regime. In particular, given $m$ linear measurements $Ay_*$ where $y_* \in \R^n$, the authors modelled natural signals as being in the range of a trained  generative model $G : \R^k \rightarrow \R^n$ where $k \ll n$. To solve the inverse problem, they proposed to find a latent code $x_* \in \R^k$ such that $G(x_*) \approx y_*$ by solving the following least squares objective \begin{align}
\min_{x \in \R^k} \frac{1}{2} \Big\|AG(x) - Ay_*\Big\|^2. \label{bora_obj}
\end{align} They provided empirical evidence showing that 5-10x fewer measurements were needed to achieve comparable reconstruction errors, compared to standard sparsity-based approaches such as Lasso in some parameter regimes. Based on the success of generative priors in compressive sensing, a number of followup works have considered a similar setup in a variety of inverse problems, ranging from compressed sensing \cite{HV2016, Huangetal2018, ShahDCS2018, ADMMGenPrior, Songetal19}, denoising \cite{Heckeletal2018}, phase retrieval \cite{HLV18, Shamshad2018}, low-rank matrix recovery \cite{CHV20, SpikedGenPrior19}, one-bit compressive sensing \cite{Qiuetal19, Liuetal20}, blind deconvolution \cite{Shamshad2018, JoshiHand19}, and more.
This framework, in the case of compressive sensing, enjoys multiple theoretical analyses. A subset of the authors in \cite{HV2016} presented the first global landscape analysis of the empirical risk minimization problem and showed that, in fact, when the network is sufficiently expansive with Gaussian weights and the number of measurements is proportional to $k$, there exists a descent direction everywhere outside of potentially two small neighborhoods of the minimizer and true solution. Followup work later established convergence guarantees of first order methods in compressed sensing \cite{Huangetal2018} and denoising \cite{Heckeletal2018} under similar statistical assumptions on the generator. In the present work, we consider precisely the same random model in the context of compressive phase retrieval. %For example, in \cite{Price2017} the authors establish that if \eqref{bora_obj} can be solved, then the signal is recovered up to the noise level in the measurements, representation error, and optimization error. Under an appropriate random model for trained generative priors, subgradient methods can provably converge to the signal of interest provided the network is sufficiently expansive at each layer and that the number of measurements is on the order of the latent dimensionality of the generator \cite{HV2016, Huangetal2018, Heckeletal2018}. In the present work, we consider precisely the same random model in the context of compressive phase retrieval.

\subsection{Compressive Phase Retrieval with Generative Priors} \label{intro_of_prob_section_w_detcond}

The compressive phase retrieval problem is as follows.  We consider the real-valued version out of simplicity.  Consider a signal $y_* \in \R^n$.  Given $m$ phaseless linear measurements of the form 
\begin{align}
b_* = |Ay_*| + \eta \label{eqn:measurements-with-noise}
\end{align} where $m<n$, $A \in \R^{m \times n}$ is a known linear operator, and $\eta \in \R^m$ denotes measurement noise, the goal is to recover $y_*$ from knowledge of $b_*$ and $A$.  As $m<n$, additional structure must be exploited to accurately estimate $y_*$.  In this work, we assume that $y_*$ belongs in or near the range of a trained generative model $\G : \R^k \rightarrow \R^{n}$.  That is, $y_* \approx \G(x_*)$ for some latent code $x_*$.  In order to recover an estimate of a signal $y_*$, it suffices to recover $x_*$ and then compute $\G(x_*)$. We propose to solve the following nonlinear least squares problem:\begin{align}
\min_{x \in \R^k} f(x) := \frac{1}{2} \Big\| |A\G(x)| - b_* \Big\|^2 \label{dpr_objective}.
\end{align} 

This formulation attempts to find the signal in the range of the generative model $\G$ that is most consistent with provided measurements in a particular sense.  It is motivated both by the non-convex generative modeling formulation for compressed sensing in \cite{Price2017} with an  Amplitude Flow \cite{AmpFlow} perspective from phase retrieval and was originally introduced by the present authors in \cite{HLV18}. As the underlying optimization problem is posed over an explicitly parameterized $k$-dimensional manifold where $k \ll n$, compressive phase retrieval may be possible from $m = \Omega(k) \ll n$ measurements. %It allows the possibility for compressive phase retrieval because the underlying optimization is over an explicitly parameterized $k$-dimensional manifold where $k \ll n$, potentially rendering signal recovery possible from $m = \Omega(k) \ll n$ measurements.  %When $k \ll n$, one would hope that signal recovery is possible with $m = \Omega(k) \ll n$ measurements.  

In this paper, we prove that \eqref{dpr_objective} can be solved with sample complexity proportional to $k$, under an appropriate model for $\G$ and a generic measurement model.  This theoretical result is in stark contrast to algorithms for compressive phase retrieval based on sparsity priors, where no known tractable algorithm achieves information theoretically optimal sample complexity under a generic measurement model. This result extends the work of \cite{Heckeletal2018, Huangetal2018} which established similar algorithmic guarantees for recovery with generative priors in the linear measurement regimes of denoising and compressed sensing.
 Additionally, we provide experimental results that \eqref{dpr_objective} can outperform sparsity-based compressive phase retrieval algorithms in the presence of a trained $\G$ from standardly available datasets.

%If $k \ll s$, where $s$ is the approximate sparsity of the signal $y_*$ with respect to a particular basis, this formulation may even outperform sparse phase retrieval methods.
%is possible under this formulation as the optimization \eqref{dpr_objective}  is over an explicit  $k$-dimensional space, where $k < n$.  
%
%Estimating $y_*$ from $m \ll n$ measurements provided $k \ll n$
%
% This formulation allows compressive phase retrieval
% 
% This formulation attempts to find the signal in the range of the generative model $\G$ that is most consistent with provided measurements in the provided sense.  As the range of $\G$ is a $k$-dimensional manifold for $k<n$, 
% 
%      Solving the compressive ph
% 
%     Formulation \eqref{dpr_objective} exploits the property that $\G$ provides an explicit parameterization of a $k$-dimensional  manifold in which the signal is presumed to belong.  It attempts to find the best signal in the range of the generative model in the sense of a $\ell_2$ loss on unsquared measurements.  In the case that $k \ll n$, there is hope that compressive phase retrieval is possible for $m$ sxx.  Furthermore, if $k \ll s$, where $s$ is the approximate sparsity of the signal $y_*$ with respect to a particular basis, there is hope that this formulation may outperform sparse phase retrieval methods.

The formulation assumes that the generative model $\G$ is already known.  In practice, it typically is a neural network whose parameters (weights) are learned from a large collection of training images belonging to a particular natural signal class.  The field of generative modeling has demonstrated multiple types of neural networks which can be effectively trained (e.g. Variational Autoencoders \cite{Kingma2014} and Generative Adversarial Networks \cite{Goodfellow2014}).  The dimensionality $k$ of the latent codes is fixed at training time and its value is selected in order to balance multiple effects; for example, the range of $\G$ should be large enough to approximately include all of the desired signal class, and the image representations should be as concise as possible. A particular image of interest may not be \textit{exactly} in the range of a trained model $\G$ because the model has \textit{representation error}, but this error is expected to become smaller as techniques for training generative models improve.

\subsection{Deterministic and Probabilistic Models for Generative Priors}

In order to establish recovery guarantees for phase retrieval with generative priors, we will assume a neural network architecture and a model for the weights of the network once trained.  Our intention is to analyze a model which is realistic enough to describe trained nets, yet tractable enough to permit rigorous analysis of sample complexity for a convergent optimization algorithm.  To achieve both of these objectives we consider the following models.  We assume that the generative model $\G : \R^k \rightarrow \R^n$ is given by a $d$-layer feedforward neural network with ReLU activation functions and no bias terms. Specifically, we assume that \begin{align}
\G(x) = \relu(W_d\dots\relu(W_2\relu(W_1x))\dots) \label{Generator_formula}
\end{align}
where $\relu(x) := \max(x,0)$ acts entrywise and each $W_i \in \R^{n_i \times n_{i-1}}$ for $i \in [d]$ with $k = n_0 < n_1 < n_2 < \dots < n_d = n.$  Each matrix $W_i$ corresponds to the neural network weights of the $i$-th layer, and the $j$-th row of $W_i$ are the weights of the $j$-th neuron in the $i$-th layer.

We will assume an \textit{expansive-Gaussian} probabilistic model for the weights of $\G$.  That is, $n_i$ increases sufficiently with $i$, and the weights within each layer are i.i.d. Gaussians.  This model was introduced by a subset of the authors in \cite{HV2017}.  We additionally assume a Gaussian model of the measurement matrix $A$.  The justification for these assumptions is as follows. The Gaussianicity of $A$ ensures that measurements are suitably generic, and, indeed, achieving optimal sample complexity in sparse phase retrieval has not been attained for this measurement model. Regarding the expansivity assumption, we note that generative models with low dimensional latent spaces are inherently expansive when considered as a whole.  In a sense, the network and each layer therein could be viewed as adding redundancy to a more compact representation, though in practice some successful network architectures do not have strict layerwise expansivity.  Regarding the Gaussianicity model of neural network weights, it has been shown  that neural networks, such as AlexNet, trained on real data have resulting weights with statistics similar to Gaussians \cite{GaussNets}. 
% The expansive-Gaussian model is justified as fol multifold.  As the generative models we consider exploit learned low-dimensional representations of data, their overall architecture has larger output dimension and input dimension and are inherently expansive as a whole.  While some networks can succeed without having strict expansivity in each layer, we assume layerwise expansivity.  Regarding Gaussianicity, 
%Moreover, there has been recent work in studying networks with Gaussian weights in classification tasks \cite{Giryesetal2016}. Random networks such as in \cite{Ulyanov2017,HH2018} have also been shown to be effective in a number of inverse problems. 
We emphasize that the use of generative models as priors in regularizing inverse problems is nascent, and we use this model  because it balances mathematical tractability with authenticity toward applications.  
%This line of investigation parallels that of the beginning of compressed sensing where initial results were established in the Gaussian measurement model, subsequently leading to improvements in more realistic models.

In order to establish a recovery guarantee for this random model, we establish it for models satisfying deterministic conditions on $\G$ and $A$. Then we show that an appropriate expansive-Gaussian model satisfies these deterministic conditions with high probability.  The first deterministic condition we consider roughly states that the neural network weights are approximately distributed uniformly on a sphere of a particular radius.  For $W \in \R^{n \times k}$ and $x \in \R^k$, define $W_{+,x} := \diag(Wx > 0) W$ where the $i$-th diagonal entry of $\diag(v > 0)$ is $1$ if $v_i > 0$ and $0$ otherwise. Note that $W_{+,x}x = \relu(Wx)$. The condition is stated as follows and was introduced in \cite{HV2017}:

\begin{defn}[Weight Distribution Condition]  We say that $W \in \R^{n \times k}$ satisfies the \textbf{Weight Distribution Condition (WDC)} with constant $\epsilon > 0$ if for all nonzero $x,y \in \R^k$, \begin{align}
    \left\| W_{+,x}^\top W_{+,y} - Q_{x,y} \right\| \leqslant \epsilon \nonumber
\end{align} where \begin{align}
    Q_{x,y} := \frac{\pi - \theta_{x,y}}{2\pi} I_k + \frac{\sin \theta_{x,y}}{2\pi} M_{\hat{x}\leftrightarrow \Hat{y}}. \label{Qdef}
\end{align} Here $\theta_{x,y} = \angle(x,y)$, $\Hat{x} = x/\|x\|$, $\Hat{y} = y/\|y\|$, $I_k$ is the $k \times k$ identity matrix, and $M_{\hat{x}\leftrightarrow \Hat{y}}$\footnote{A formula for this matrix is as follows: consider a rotation matrix $R$ that sends $\hat{x} \mapsto e_1$ and $\hat{y} \mapsto \cos\theta_{x,y} e_1 + \sin\theta_{x,y} e_2$ where $\theta_{x,y} = \angle(x,y)$. Then $M_{\hat{x} \leftrightarrow \hat{y}} = R^\top \left[\begin{array}{ccc}
    \cos\theta_{x,y} & \sin\theta_{x,y} & 0  \\
     \sin \theta_{x,y} & -\cos \theta_{x,y} & 0 \\
     0 & 0 & 0_{k-2}
\end{array}\right]R$ where $0_{k-2}$ is the $k-2 \times k-2$ matrix of zeros. Note that if $\theta_{x,y} = 0$ or $\pi$, $M_{\hat{x} \leftrightarrow \hat{y}} = \hat{x}\hat{x}^\top$ or $-\hat{x}\hat{x}^\top$, respectively.} is the matrix that sends $\Hat{x} \mapsto \Hat{y}$, $\Hat{y} \mapsto \Hat{x}$, and $z \mapsto 0$ for any $z \in \text{span}(\{x,y\})^{\perp}.$
\end{defn}  %Note that if $W_{ij} \sim \mathcal{N}(0,1/n)$, then $\E[W_{+,x}^\top W_{+,y}] = Q_{x,y}$ for any $x,y \in \R^k$. Hence this condition can be interpreted as determining whether the weights in the network are distributed approximately like a Gaussian or a uniform random variable on a sphere of a particular radius. In \cite{HV2017}, it was shown that Gaussian $W$ satisfies the WDC with high probability: \begin{lem}[WDC] \label{WDC_lemma}
%Fix $0 < \epsilon < 1$ and suppose $W \in \R^{n \times k}$ has i.i.d. $\mathcal{N}(0,1/n)$ entries. Then if $n \geqslant C_{\epsilon} k \log k$, then with probability at least $1 - 8n \exp(-\gamma_{\epsilon} k)$, $W$ satisfies the WDC with constant $\epsilon$. Here $C_{\epsilon}$ and $\gamma_{\epsilon}^{-1}$ depend polynomially on $\epsilon^{-1}$.
%\end{lem}

The second deterministic condition provides an RIP-like property for the measurement matrix $A$ when acting on pairs of secant directions within the range of $\G$. For $A \in \R^{m \times n}$ and $z \in \R^n$, define $A_z := \diag(\sgn(Az))A$ where $\sgn$ acts entrywise, and $\sgn(0) = 0$. Note that $A_zz = |Az|$. The condition is stated as follows and was introduced in a conference version of this work \cite{HLV18}: 

\begin{defn}[Range Restricted Concentration Property] \label{rrcp_def} We say that $A$ satisfies the \textbf{Range Restricted Concentration Property (RRCP)} with respect to $\G$ with constant $\epsilon > 0$ if for all $x,y,x_1,x_2,x_3,x_4 \in \R^k:$ 
%\red{[Make sure you deal with case where $\G(x)$ or $\G(y) = 0$?]}  \blue{[This case follows from the following: if either $\G(x) = 0$ or $\G(y)=0$, then both $A_{\G(x)}^{\top}A_{\G(y)}$ and $\Phi_{\G(x),\G(y)}$ are the $0$ matrix (by our def of $\sgn$) meaning that the inequality trivially holds. We could change the def of $\Phi$ to accomodate this as shown below.]}  
\begin{align}
    |\langle (A_{\G(x)}^\top A_{\G(y)} & - \Phi_{\G(x),\G(y)})(\G(x_1) - \G(x_2)), \G(x_3)-\G(x_4)\rangle | \nonumber \\
    & \leqslant L\epsilon\|\G(x_1) - \G(x_2)\|\|\G(x_3)-\G(x_4)\| \nonumber
\end{align} where \begin{align}
    \Phi_{z,w}:=  \begin{cases} \frac{\pi - 2\theta_{z,w}}{\pi}I_n + \frac{2\sin \theta_{z,w}}{\pi} M_{\hat{z} \leftrightarrow \hat{w}}  & \text{ if } z\neq 0, w \neq 0, \\
    0 & \text{ otherwise.}
\end{cases} \label{Phi_def}
\end{align} Here, $L$ is a universal constant and can be taken to be $33$.
\end{defn}

\subsection{Algorithm}
We provide a subgradient algorithm for optimizing \eqref{dpr_objective} under noisy measurements.  We show that this algorithm converges up to the noise level when the WDC and RRCP  properties are met.  
%These two deterministic conditions on the generative model and measurement matrix can be used to show that a variation of subgradient descent can provably recover the latent code $x_*$ up to a small amount of noise. 
In order to state the algorithm, we need some notation.  For a locally Lipschitz function $f : \mathcal{X} \rightarrow \R$ from a Hilbert space $\mathcal{X}$ to $\R$, the Clarke generalized directional derivative of $f$ at $x \in \mathcal{X}$ in the direction $u$, is defined by \begin{align*}
    f^o(x;u) := \limsup_{y \rightarrow x, t \downarrow 0} \frac{f(y+tu) - f(y)}{t}.
\end{align*} Then the generalized subdifferential of $f$ at $x$ is defined by \begin{align*}
    \partial f(x) = \{v \in \R^k : \langle v, u \rangle \leqslant f^o(x;u),\ \forall u \in \mathcal{X}\}.
\end{align*} Any element $v_x \in \partial f(x)$ is called a subgradient of $f$ at $x$. When $f$ is differentiable at $x$, $\partial f(x) = \{\nabla f(x)\}$.

\begin{algorithm}
\caption{Deep Phase Retrieval (DPR) Subgradient method}
\label{alg}
\begin{algorithmic}[1]
\REQUIRE Weights $W_i$, measurement matrix $A$, measurements $b_* = |Ay_*| + \eta$, \& step size $\alpha > 0$
\STATE Choose an arbitrary initial point $x_0 \in \mathbb{R}^k\setminus \{0\}$
%\ENSURE $x$ satisfying $\|x - x_*\| < \epsilon$
\FOR {$t = 0,1, 2, \ldots$}  \label{alg:st3}
\IF {$f(-x_{t}) < f(x_{t})$} \label{alg:cond1}
\STATE $\bar{x}_{t} \gets - x_{t}$; \label{alg:cond2}
\ELSE
\STATE $\bar{x}_t \gets x_t$; \label{alg:cond3}
\ENDIF \label{alg:endif}
\STATE Compute $v_{\bar{x}_{t}} \in \partial f(\bar{x}_t)$; %where for differentiable points $\bar{x}_t$, $v_{\bar{x}_t} = \Lambda_{\bar{x}_{t}}^\top A_{\G(x_{t})}^\top \left(A_{\G(\bar{x}_{t})}\Lambda_{\bar{x}_{t}} \bar{x}_{t} - b_*\right)$;
\STATE $x_{t+1} = \bar{x}_{t} - \alpha  v_{\bar{x}_{t}}$;
\ENDFOR
\end{algorithmic}
\end{algorithm}

We now introduce a subgradient descent scheme, given by Algorithm \ref{alg}, whose intuition is as follows. In expectation, the optimization landscape is characterized by Figure \ref{fig:expected_loss}. There exists two critical points away from the origin: the true minimizer and a negative multiple thereof. Moreover, the value of the objective function is higher near the negative multiple than near the global minimizer. At each iterate, we check the objective function value at the current latent code and its negative, choosing the point with smaller objective function value as our new iterate; see Steps \ref{alg:cond1}--\ref{alg:endif}. We then perform subgradient descent.

\begin{figure}[!htbp]
    \centering
    \includegraphics[scale=0.4]{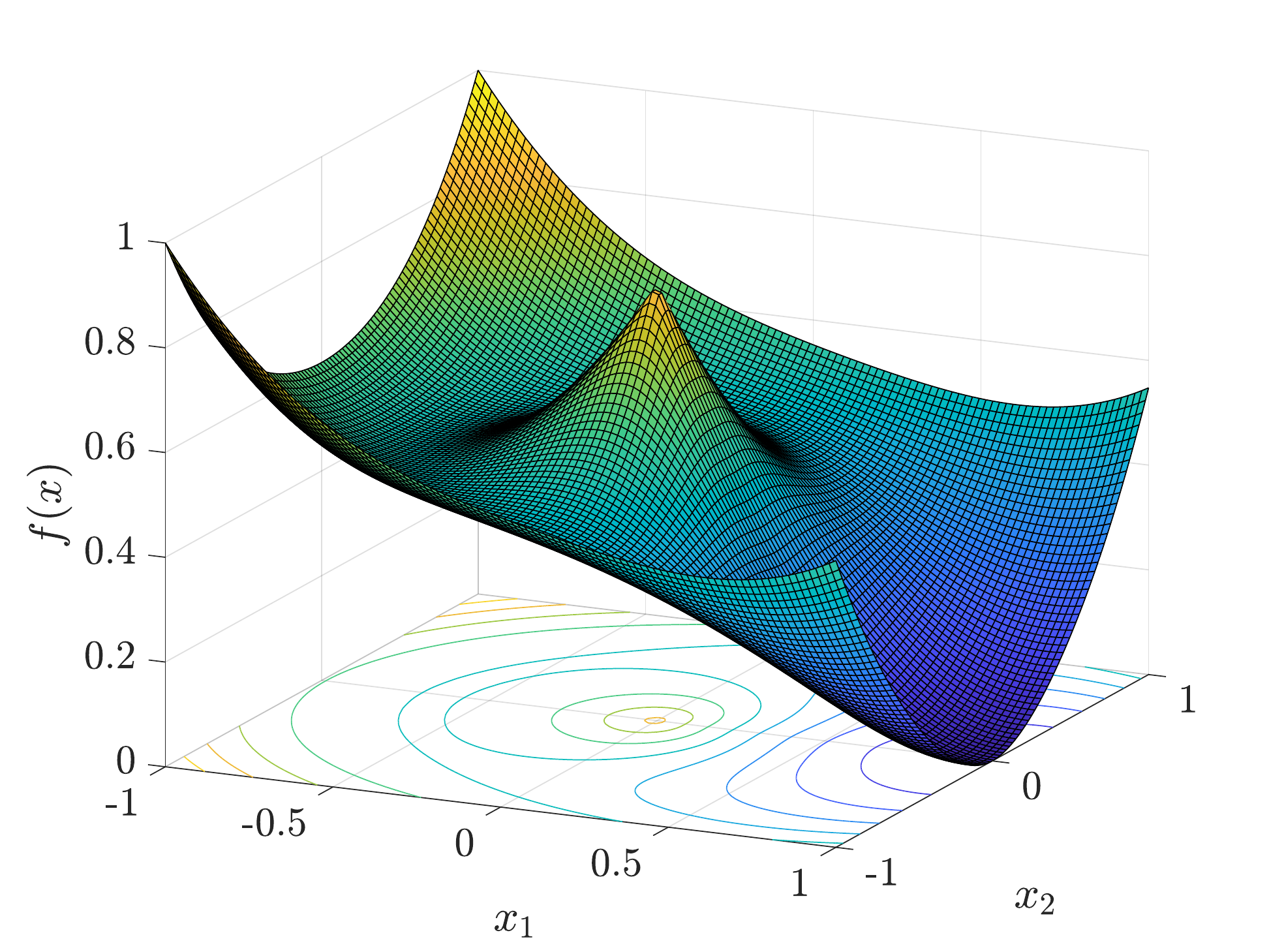}
    \includegraphics[scale=0.4]{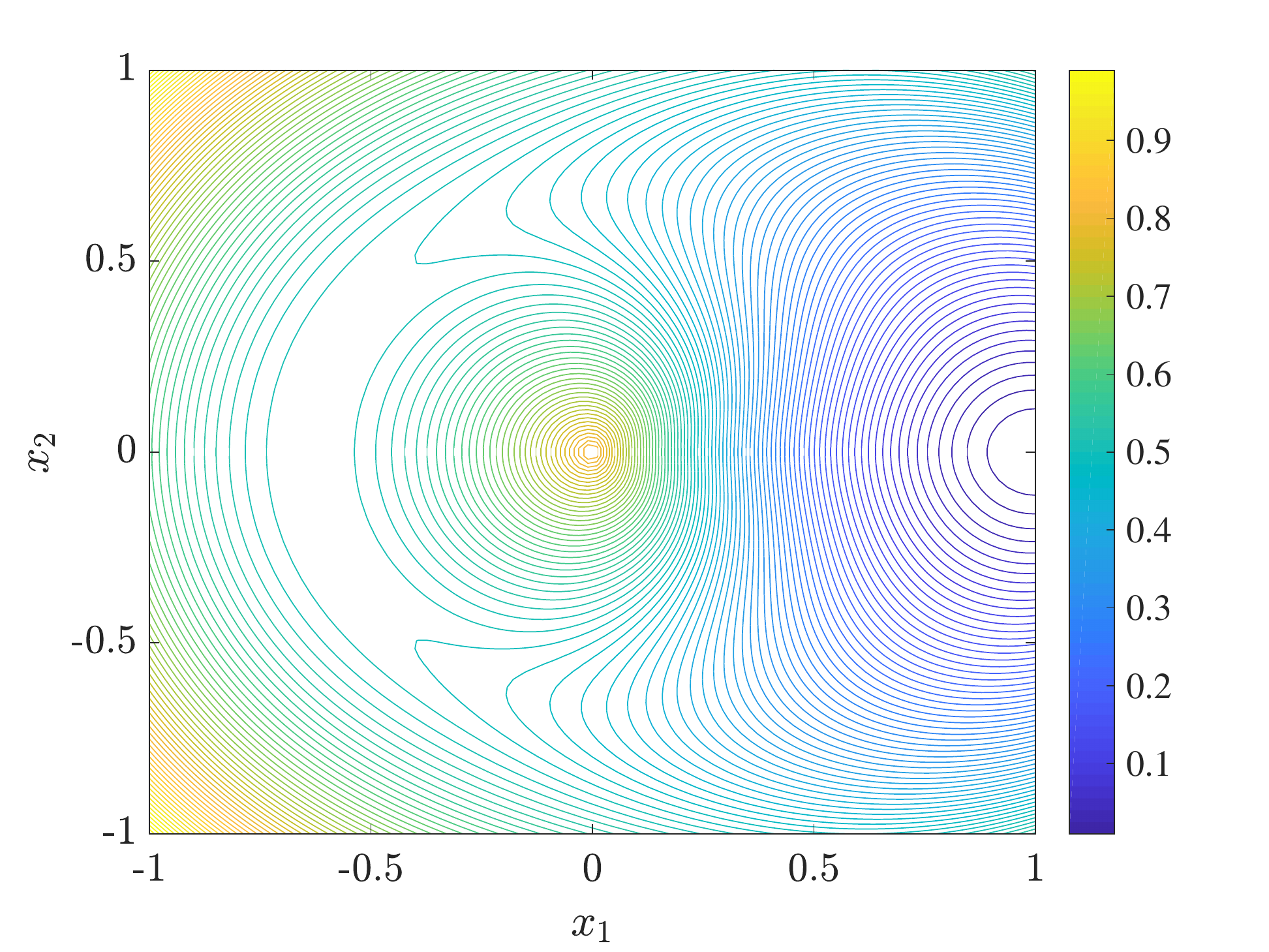}
    \caption{The landscape of \eqref{dpr_objective} where $y_* = \G(x_*)$ with $x_* = [1,\ 0]^\top \in \R^2$, $\G$ is a $1$-layer network, and the two determinstic conditions are satisfied with $\epsilon = 0$. The objective function's explicit form is given by $\mathcal{F}(x):= \frac{1}{4}(\|x\|^2 + \|x_*\|^2) - \left(\frac{\pi - 2g(\theta_{x,x_*})}{\pi}\langle x,Q_{x,x_*}x_*\rangle + \frac{2\sin g(\theta_{x,x_*})}{\pi}\|x_*\|\|x\|\right)$ where $\theta_{x,x_*} = \angle(x,x_*)$, $g $ is defined in \eqref{gdef}, $Q_{x,x_*}$ is defined in \eqref{Qdef} and whose $d$-layer form is given by equation \eqref{expected_loss_full}. We note that the idealized loss has only three critical points: the global minimizer $x_*$, a negative multiple thereof $-\rho_d x_*$ for some $\rho_d \in (0,1)$, and the origin.}
    \label{fig:expected_loss}
\end{figure}

\subsection{Main Results}

In this section, we outline our main results in both the probabilistic and deterministic settings. In particular, in Theorem \ref{main_deterministic_conv_result} we show that if the weights of $G$ satisfy the WDC and the measurement matrix $A$ satisfies the RRCP, then the iterates of Algorithm \ref{alg} converge to the true solution up to the noise level in the measurements. Then, in Corollary \ref{main_stochastic_conv_result} we show that the same conclusion holds when $W_i$ and $A$ are Gaussian with high probability as soon as $m = \Omega(dk\log(n_1n_2\dots n_d))$.

We consider the possibly noisy measurements \eqref{eqn:measurements-with-noise} and assume that the signal $y_*$ is in the range of $\G$ with latent code $x_*$; that is, $y_* = \G(x_*)$.
%Our formal convergence result will be stated with the two deterministic conditions outlined in Section \ref{intro_of_prob_section_w_detcond}. 
The following Theorem states that if the two deterministic conditions are satisfied with a sufficiently small parameter $\epsilon$ and the noise is sufficiently small, then the iterates of Algorithm \ref{alg} will converge to $x_*$ up to the noise level. 
\begin{thm}[Deterministic Convergence Guarantee] \label{main_deterministic_conv_result}
Let $d \geqslant 2$ and fix $0 < \epsilon < c_1\frac{1}{d^{102}}$. Suppose the noise  satisfies $\|\eta\| \leqslant c_{2}\frac{\|x_*\|}{2^{d/2}d^{48}}$. Suppose each $W_i$ of $\G$ satisfies the WDC with constant $\epsilon$, and suppose $A$ satisfies the RRCP with respect to $\G$ with constant $\epsilon$. Then the iterates $\{x_t\}_{t \geqslant 0}$ generated by Algorithm \ref{alg} with step size $\al \leqslant c_3\frac{2^d}{d^2}$ obey the following: \begin{enumerate}
    \item there exists an $N \in \N$ satisfying $N \leqslant C_4\frac{f(x_*)(2^{2d})}{d^6 \al \epsilon \|x_*\|^2}$ such that \begin{align}
        \|x_N - x_*\| \leqslant C_5 d^{12}\sqrt{\epsilon}\|x_*\| + C_6 d^92^{d/2}\|\eta\|; \label{bound_on_Nth_iterate}
        \end{align}
    \item for all $t \geqslant N$, we have \begin{align}
        \|x_{t+1} - x_*\| & \leqslant \tau^{t+1-N}\|x_N-x_*\| + \vartheta \frac{2^{d/2}}{d^2}\|\eta\|,\ \text{and} \label{final_iterate_bound} \\
        \|\G(x_{t+1}) - \G(x_*) \| & \leqslant \frac{1.2}{2^{d/2}}\tau^{t+1-N}\|x_{N}-x_*\| + \frac{1.2}{d^2}\vartheta \|\eta\| \label{final_output_through_genmodel_bound}
    \end{align} where $\tau := 1 - \frac{7}{8}\frac{\al}{2^d} \in (0,1)$ and $\vartheta := \frac{2c_3}{1-\tau}$.
\end{enumerate} Here $c_1, c_2, c_3, C_4, C_5,$ and $C_6$ are positive universal constants.
\end{thm}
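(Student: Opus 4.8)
The plan is to reduce the analysis of Algorithm~\ref{alg} to the dynamics of a single deterministic vector field, in the spirit of the compressed‑sensing analysis with generative priors. First I would obtain a workable description of the subdifferential. Since $\G$ is a bias‑free ReLU network it is piecewise linear: outside a measure‑zero set $\G$ is differentiable at $x$ with $\G(x)=\Lambda_x x$ where $\Lambda_x:=\Pi_{i=d}^{1}W_{i,+,x}$, and $|A\G(x)|=A_{\G(x)}\Lambda_x x$, so $f$ is locally quadratic and $\partial f(x)=\{v_x\}$ with $v_x=\Lambda_x^{\top}A_{\G(x)}^{\top}\bigl(|A\G(x)|-b_*\bigr)$. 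Using $A_{\G(x)}^{\top}A_{\G(x)}=A^{\top}A$ (no entry of $A\G(x)$ vanishing, which holds generically), $b_*=A_{\G(x_*)}\G(x_*)+\eta$, and the Clarke calculus to cover the nondifferentiable points, I would record that every element of $\partial f(x)$ has the form $\Lambda_x^{\top}\bigl(A^{\top}A\,\G(x)-A_{\G(x)}^{\top}A_{\G(x_*)}\G(x_*)-A_{\G(x)}^{\top}\eta\bigr)$ up to a discrepancy supported on the negligible nonsmooth set, which must be checked to be harmless because the iteration may land there.

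Second, and this is the technical core, I would show that $v_x$ concentrates around an explicit deterministic field $h_{x,x_*}$ obtained by the substitutions $W_{i,+,x}^{\top}W_{i,+,y}\mapsto Q_{x,y}$ (WDC) and $A_{\G(x)}^{\top}A_{\G(y)}\mapsto\Phi_{\G(x),\G(y)}$ on secants of the range of $\G$ (RRCP). Concretely I would set up the recursion describing how $\theta_{x,x_*}$, $\|\G(x)\|$ and $\langle\G(x),\G(x_*)\rangle$ transform layer by layer under $Q$‑type products, telescope the WDC errors through the $d$ layers (each layer contributes a spectral factor $\|W_{i,+,x}\|^2\le\tfrac12+\epsilon$, so $\|\Lambda_x\|\le(\tfrac12+\epsilon)^{d/2}\approx2^{-d/2}$, and the accumulated error, after the $\sqrt{\epsilon}$‑sharpening characteristic of this theory, is of order $\mathrm{poly}(d)\sqrt{\epsilon}$), combine this with the RRCP estimate on the cross term, and bound the noise contribution by $\mathrm{poly}(d)\,2^{d/2}\|\eta\|$ using $\|\Lambda_x\|\approx2^{-d/2}$ and the $(1+L\epsilon)$‑near‑isometry of $A_{\G(x)}$ on $\G$‑secants that follows from RRCP with $y=x$. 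The output is a uniform bound $\|v_x-h_{x,x_*}\|\lesssim d^{?}\sqrt{\epsilon}\,\tfrac{\|x\|+\|x_*\|}{2^d}+d^{?}\,2^{-d/2}\|\eta\|$, together with the crucial structural fact that near $x_*$ the deterministic part equals $\tfrac{1}{2^d}(x-x_*)$, so the $\sqrt{\epsilon}$ error there is proportional to $\|x-x_*\|$ rather than to $\|x_*\|$ (this is what lets the final error in \eqref{final_iterate_bound} depend only on $\|\eta\|$).

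Third, I would analyze $h_{x,x_*}$ itself: (i) its only zeros are $x_*$, $0$, and $-\rho_d x_*$ for some $\rho_d\in(0,1)$; (ii) for large $\|x\|$ it points radially inward, so with the concentration bound the iterates stay in a ball of radius $O(\|x_*\|)$; (iii) outside small balls around $x_*$ and $-\rho_d x_*$ there is a uniform descent estimate — either $\langle h_{x,x_*},x-x_*\rangle$ or $\langle h_{x,x_*},x+\rho_d x_*\rangle$ is at least a positive multiple of $\tfrac{\|x_*\|}{2^d}$ times the corresponding distance; and (iv) $f$ evaluated near $-\rho_d x_*$ exceeds $f$ evaluated near $x_*$ by a definite margin (the value gap visible in Figure~\ref{fig:expected_loss}). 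Property (iv) is what makes the flip (Steps~\ref{alg:cond1}--\ref{alg:endif}) work: since $f(\bar x_t)\le f(-\bar x_t)$, $\bar x_t$ can never sit in the bad ball around $-\rho_d x_*$, for then $-\bar x_t$ would lie near $\rho_d x_*$ with strictly smaller objective value. Then I would run the discrete dynamics by expanding $\|x_{t+1}-x_*\|^2=\|\bar x_t-x_*\|^2-2\alpha\langle v_{\bar x_t},\bar x_t-x_*\rangle+\alpha^2\|v_{\bar x_t}\|^2$, inserting the concentration bound and the descent estimate, and using $\alpha\lesssim2^d/d^2$ to dominate the quadratic term; this yields a fixed decrease of $\|\bar x_t-x_*\|$ at every step outside the ball of radius $C_5 d^{12}\sqrt{\epsilon}\|x_*\|+C_6 d^9 2^{d/2}\|\eta\|$, forcing $x_N$ into that ball within the stated $N$ (giving \eqref{bound_on_Nth_iterate}); once inside, the near‑$x_*$ form of the field gives $\|x_{t+1}-x_*\|\le\tau\|x_t-x_*\|+\vartheta\tfrac{2^{d/2}}{d^2}\|\eta\|$ with $\tau=1-\tfrac78\tfrac{\alpha}{2^d}$, which unrolls to \eqref{final_iterate_bound}, and \eqref{final_output_through_genmodel_bound} follows from $\|\Lambda_x\|\approx2^{-d/2}$ together with a matching lower bound on $\G$‑secant lengths, both consequences of the WDC.

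The main obstacle is the uniform subgradient concentration of the second step: carrying the WDC and RRCP errors through $d$ compositions while retaining the $\sqrt{\epsilon}$ sharpening and the correct $\mathrm{poly}(d)$ and $2^{\pm d/2}$ factors, and simultaneously controlling the nonsmoothness (arguing over all Clarke subgradients, i.e., over the sign pattern of $A\G(x)$ and the activation patterns of every layer at once). A secondary obstacle is ruling out that an iterate gets trapped near $0$ or near $-\rho_d x_*$, which is precisely where the flip step and the value‑gap property (iv) must be invoked.
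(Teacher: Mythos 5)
Most of your architecture matches the paper's proof: the explicit form of the Clarke subgradients of the piecewise-quadratic $f$, the uniform concentration $\|v_x-h_x\|\lesssim d^{3}\sqrt{\epsilon}\,2^{-d}\max(\|x\|,\|x_*\|)+2^{-d/2}\|\eta\|$ obtained by combining WDC and RRCP, the characterization of the zeros of $h_x$ as small neighborhoods of $x_*$ and $-\rho_d x_*$, the value gap $f<f(\cdot)$ near $x_*$ versus near $-\rho_d x_*$ that makes the negation step work, the escape from a ball around the origin, and the local contraction from $\bigl\|v_x-\tfrac{1}{2^d}(x-x_*)\bigr\|\leqslant\tfrac{1}{8}\tfrac{1}{2^d}\|x-x_*\|+\tfrac{2}{2^{d/2}}\|\eta\|$ with rate $\tau=1-\tfrac{7}{8}\tfrac{\al}{2^d}$.

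The genuine gap is in your first (global) phase. You claim that expanding $\|x_{t+1}-x_*\|^2$ and inserting the descent estimate "yields a fixed decrease of $\|\bar x_t-x_*\|$ at every step outside the ball" around $x_*$. This cannot be true, and it contradicts your own item (iii): outside the two small neighborhoods you only control \emph{either} $\langle h_x,x-x_*\rangle$ \emph{or} $\langle h_x,x+\rho_d x_*\rangle$, so for iterates in the basin of $-\rho_d x_*$ the distance to $x_*$ will typically \emph{increase} (the trajectory drifts toward $-\rho_d x_*$). The flip step does not rescue this: the inequality $f(x)<f(-x)$ is only established (and is only true) for points already in small balls around $\pm\phi_d x_*$, so it cannot prevent an iterate far from both critical points from entering and traversing the basin of $-\rho_d x_*$. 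The paper instead uses $f$ itself as the Lyapunov function for this phase: by the Clarke mean value theorem, $f(\bar x_t-\al v_{\bar x_t})-f(\bar x_t)\leqslant-\al\|v_{\bar x_t}\|(\|v_{\bar x_t}\|-\|\tilde v-v_{\bar x_t}\|)$, and outside $\Seps_\beta$ one has both the lower bound $\|v_{\bar x_t}\|\gtrsim d^{3}\sqrt{\epsilon}\,2^{-d}\|x_*\|$ and the control $\|\tilde v-v_{\bar x_t}\|\leqslant\tfrac{5}{6}\|v_{\bar x_t}\|$ along the step (via the Lipschitz continuity of $h$ away from the origin), giving a per-step decrease of $f$ of order $\al d^{6}\epsilon\,2^{-2d}\|x_*\|^2$ regardless of which attractor the iterate is heading toward; this is also what produces the stated form of $N$. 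Only after landing in $\Seps_\beta=\Seps_\beta^+\cup\Seps_\beta^-$ is the negation step invoked to move from $\Seps_\beta^-$ to $\Seps_\beta^+$. You would need to replace your monotone-distance argument with a function-value (or two-potential) argument of this kind for the first phase to close.
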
 

This result asserts that the iterates of Algorithm \ref{alg} will eventually be in a small neighborhood of the true solution whose size depends on $\epsilon$ and $\|\eta\|$ after $N = O(\epsilon^{-1})$ iterations. Furthermore, once in this neighborhood, the iterates will continue to converge linearly to the true solution up to the noise level. If no noise is present, then the true signal will be recovered. Note that the $2^d$ factors in the theorem are an artifact of the problem scaling. Roughly, the weights $W_i$ have spectral norm approximately $1$, and subsequent application of a ReLU will effectively zero out roughly half of the rows of $W_i$.  The resulting rows of $W_i$ will have spectral norm of roughly $1/2$.   Hence $\G(x)$ scales like $2^{-d/2}\|x\|$, $f(x)$ scales like $2^{-d}\|x\|^2$, and any subgradient $v_x$ scales like $2^d$. We also assume the noise scales like $2^{-d/2}$ to ensure it is on the order of the measurements. Doubling the variance of each entry of $W_i$ would eliminate these factors, but we consider the unscaled version because it is more convenient in the analysis.

We now address the expansive-Gaussian model.  We appeal to a result that shows that expansive (tall) Gaussian matrices satisfy the WDC with high probability.  
 \begin{lem}[Lemma 11 in  \cite{HV2017}] \label{WDC_lemma}
Fix $0 < \epsilon < 1$ and suppose $W \in \R^{n \times k}$ has i.i.d. $\mathcal{N}(0,1/n)$ entries. Then if $n \geqslant C_{\epsilon} k \log k$, then with probability at least $1 - 8n \exp(-\gamma_{\epsilon} k)$, $W$ satisfies the WDC with constant $\epsilon$. Here $C_{\epsilon}$ and $\gamma_{\epsilon}^{-1}$ depend polynomially on $\epsilon^{-1}$.
\end{lem}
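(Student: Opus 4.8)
\medskip
\noindent\textbf{Proof proposal.}
The plan, following \cite{HV2017}, is to prove the bound at a fixed pair $(x,y)$ by a concentration inequality, lift it to a uniform bound over a fine net of the sphere by a union bound, and then remove the discretization using that $x\mapsto W_{+,x}$ is \emph{almost} constant under small perturbations of $x$ in a high‑probability sense, even though it is not continuous. Write $w_1^\top,\dots,w_n^\top\in\R^k$ for the rows of $W$, so that $W_{+,x}^\top W_{+,y}=\sum_{i=1}^n\one[\inner{w_i}{x}>0]\,\one[\inner{w_i}{y}>0]\,w_iw_i^\top$, a sum of independent rank‑one positive semidefinite matrices. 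Since $W_{+,cx}=W_{+,x}$ for $c>0$ and $Q_{x,y}$ depends on $x,y$ only through $\hat x,\hat y$, it suffices to prove $\|W_{+,x}^\top W_{+,y}-Q_{x,y}\|\le\epsilon$ for all $x,y\in S^{k-1}$. For fixed such $x,y$, rotational invariance of the Gaussian reduces $\E[\one[\inner{w}{x}>0]\one[\inner{w}{y}>0]\,ww^\top]$ (with $w\sim\mathcal N(0,I_k/n)$) to an integral over the plane $\mathrm{span}(x,y)$, and carrying this out gives $\E[W_{+,x}^\top W_{+,y}]=Q_{x,y}$ exactly --- this is precisely why $Q_{x,y}$ has the form \eqref{Qdef}.

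For the fluctuation at a fixed $(x,y)$, I would reduce the operator norm to scalars via a $\tfrac14$‑net of test vectors $u,v\in S^{k-1}$: the operator norm of $W_{+,x}^\top W_{+,y}-Q_{x,y}$ is at most twice the supremum over the net of $\inner{u}{(W_{+,x}^\top W_{+,y}-Q_{x,y})v}$, and for fixed $u,v$ this is a sum of $n$ i.i.d.\ sub‑exponential scalars $\one[\inner{w_i}{x}>0]\one[\inner{w_i}{y}>0]\inner{w_i}{u}\inner{w_i}{v}$, each with sub‑exponential norm $O(1/n)$. Bernstein's inequality then gives $\Pro[\,\cdot\,>t]\le 2e^{-cnt^2}$ for $t\in(0,1)$, and a union bound over the $9^{2k}$ test pairs yields $\Pro[\|W_{+,x}^\top W_{+,y}-Q_{x,y}\|>t]\le 2\cdot 9^{2k}e^{-cnt^2/4}$.

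Next, fix a $\delta$‑net $\mathcal N$ of $S^{k-1}$ of cardinality $(C/\delta)^k$ and union‑bound the previous estimate over $\mathcal N\times\mathcal N$ with $t\asymp\epsilon$; taking $\delta$ a small enough $\mathrm{poly}(\epsilon/k)$ makes this hold with probability at least $1-e^{-\gamma_\epsilon k}$ once $n\ge C_\epsilon k\log k$. The remaining step is to pass from $\mathcal N$ to all of $S^{k-1}$. Because $x\mapsto W_{+,x}$ is discontinuous there is no Lipschitz estimate to invoke; instead, for $x'$ within $\delta$ of a net point $x$, the difference $W_{+,x}^\top W_{+,y}-W_{+,x'}^\top W_{+,y}$ is a signed sum of the rank‑one matrices $w_iw_i^\top$ over exactly those $i$ for which $w_i$ lies in the thin slab $\{w:|\inner{w}{\hat x}|\lesssim\delta\}$. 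Two estimates, \emph{uniform} over $\hat x\in S^{k-1}$, then finish the argument: (i) any such slab contains at most $O(\epsilon n)$ of the $w_i$ --- a binomial tail bound at net points, extended to all $\hat x$ by monotonicity of slab membership in the slab width; and (ii) $\big\|\sum_{i\in\mathrm{slab}}w_iw_i^\top\big\|=O(\epsilon)$ --- a restricted‑isometry‑type bound that exploits that the slab vectors concentrate near the $(k-1)$‑plane $\hat x^\perp$, so their outer products add nearly incoherently. The event $\max_i\|w_i\|^2\le Ck/n$, used in (ii) and in controlling truncation, fails with probability at most $ne^{-\gamma k}$ since $n\|w_i\|^2\sim\chi^2_k$; this, together with the net union bounds, accounts for the $8ne^{-\gamma_\epsilon k}$ term.

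The technical heart --- and the step I expect to be the main obstacle --- is the pair of uniform slab estimates (i)--(ii). The difficulty is exactly that the quantity to be controlled is discontinuous in $(x,y)$, and that the crude bound $(\#\,\mathrm{slab})\cdot\max_i\|w_i\|^2$ in (ii) loses a factor of $k$; a net fine enough to control the discontinuity through that crude bound would force $n=\Omega(k^2)$ rather than $\Omega(k\log k)$, so the near‑orthogonality of the slab vectors must be used to recover the optimal dependence on $k$.
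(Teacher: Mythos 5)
You should first note that the paper contains no proof of this statement: it is imported verbatim as Lemma 11 of \cite{HV2017}, so your sketch can only be measured against that reference's argument, which it in fact mirrors closely. The exact computation $\E[W_{+,x}^\top W_{+,y}]=Q_{x,y}$, the sub-exponential Bernstein bound with a test-vector net at fixed $(x,y)$, the union bound over a $\mathrm{poly}(\epsilon/k)$-net of the sphere (which is where $n\gtrsim \epsilon^{-2}k\log k$ and the failure probability come from), and the separate treatment of the discontinuity of $x\mapsto W_{+,x}$ via rows falling in a thin slab around $\hat x^{\perp}$ are all the right steps with the right parameter accounting, including attributing the factor $n$ in the probability to the event $\max_i\|w_i\|^2\lesssim k/n$.

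The genuine gap is exactly the step you flag, and your proposed mechanism for estimate (ii) is not the right tool. Conditioning a Gaussian row on $|\langle w_i,\hat x\rangle|\lesssim\delta\|w_i\|$ only shrinks its component along $\hat x$; its component in $\hat x^{\perp}$ remains a generic Gaussian, so slab rows are no more ``incoherent'' with one another than arbitrary Gaussian rows --- and in any case you cannot condition on membership in a slab whose direction $\hat x$ is itself being quantified over. What actually repairs the factor-of-$k$ loss in the crude bound is the rectangular spectral estimate $\|W_S\|\leqslant(\sqrt{|S|}+\sqrt{k}+t)/\sqrt{n}$, applied \emph{uniformly over all subsets} $S\subset[n]$ with $|S|\leqslant\delta' n$ via a union bound over the roughly $\exp\bigl(\delta'\log(e/\delta')\,n\bigr)$ such subsets; this gives $\bigl\|\sum_{i\in S}w_iw_i^\top\bigr\|\lesssim |S|/n+k/n\leqslant\epsilon$ with no appeal to the geometry of the slab, and, combined with your uniform count bound (i) (net point plus monotonicity in the slab width, or a VC argument), it decouples the randomness from the quantifier over $x$ and closes the proof. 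This is precisely the device used in \cite{HV2017}, and it is also the device this paper uses at the analogous point of its own RRCP argument (Lemma \ref{Atilde_small_lemma}); you should state it as a standalone lemma rather than hoping for near-orthogonality of the slab vectors.
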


 In this work, we establish that Gaussian matrices $A$ satisfies the RRCP with respect to an expansive-Gaussian $\G$ with high probability if they are sufficiently tall. This result is proven in Section \ref{RRCP_section}: \begin{lem}[RRCP] \label{RRCP_lemma}
Fix $0 < \epsilon < 1$ and suppose $A \in \R^{m \times n}$ has i.i.d. $\mathcal{N}(0,1/m)$ entries. Let $\G$ be a generative model of the form \eqref{Generator_formula} where each $W_i \in \R^{n_i \times n_{i-1}}$ has i.i.d. $\mathcal{N}(0,1/n_i)$ entries. If $m \geqslant C_{\epsilon} kd \log(n_1n_2\dots n_d)$, then with probability $1-\gamma m^{4k}\exp(-c_{\epsilon}m)$, $A$ satisfies the RRCP with respect to $\G$ with constant $\epsilon$. Here $\gamma$ is a universal constant and $C_{\epsilon}$ and $c_{\epsilon}^{-1}$ depend polynomially on $\epsilon^{-1}$. 
\end{lem}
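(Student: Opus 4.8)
The plan is to prove RRCP by combining a pointwise concentration estimate with a covering argument that exploits the piecewise-linear structure of ReLU generators, in the spirit of the WDC analysis in \cite{HV2017}. For the pointwise estimate, write the rows of $A$ as $a_i = g_i/\sqrt m$ with $g_i\sim\mathcal N(0,I_n)$, so that $A_z^\top A_w = \tfrac1m\sum_{i=1}^m \sgn(\langle g_i,z\rangle)\sgn(\langle g_i,w\rangle)\,g_ig_i^\top$. Using the pointwise identity $\sgn(s)\sgn(t) = 4\,\one(s>0)\one(t>0) - 2\,\one(s>0) - 2\,\one(t>0) + 1$ together with the elementary Gaussian moments $\E[\one(\langle g,\hat z\rangle>0)\,gg^\top] = \tfrac12 I_n$ and $\E[\one(\langle g,\hat z\rangle>0)\one(\langle g,\hat w\rangle>0)\,gg^\top] = Q_{z,w}$ (the integral behind \eqref{Qdef}), one obtains $\E[A_z^\top A_w] = 4Q_{z,w} - I_n = \Phi_{z,w}$, exactly \eqref{Phi_def}. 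Hence for fixed nonzero $z,w$ and fixed $u,v\in\R^n$, the scalar $\langle(A_z^\top A_w - \Phi_{z,w})u,v\rangle$ is a centered sum of $m$ i.i.d.\ terms $\sgn(\cdot)\sgn(\cdot)\langle g_i,u\rangle\langle g_i,v\rangle$, each of $\psi_1$-norm $\lesssim\|u\|\|v\|$, and Bernstein's inequality gives $\Pro(|\langle(A_z^\top A_w - \Phi_{z,w})u,v\rangle| > t\|u\|\|v\|)\le 2\exp(-c\,mt^2)$ for $0<t<1$. (Equivalently, the displayed identity reduces RRCP to ``range-restricted WDC''-type estimates for $A^\top\!\diag(Az>0)A$, $A^\top\!\diag(Az>0)\diag(Aw>0)A$ and $A^\top A$, concentrating on secant directions around $\tfrac12 I_n$, $Q_{z,w}$ and $I_n$.)

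Next I upgrade this to a uniform statement using a covering argument. On each activation region of $\G$ the map is linear, $\G(x)=\big(\PiWdix\big)x$, and a depth-$d$ ReLU network from $\R^k$ has at most $(n_1\cdots n_d)^{O(k)}$ distinct activation patterns (one hyperplane arrangement in $\R^k$ per layer). Fixing activation patterns for $x_1,\dots,x_4$ forces $\G(x_1)-\G(x_2)$ and $\G(x_3)-\G(x_4)$ into fixed subspaces $S_{12},S_{34}$ of dimension $\le 2k$; on such a pair the supremum of the pointwise bound over $u\in S_{12},v\in S_{34}$ follows from a $\tfrac14$-net on the unit spheres (size $9^{2k}$) and the standard operator-norm bounds $\|A\|,\|W_i\|\lesssim 1$, which hold with the usual Gaussian probability. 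The remaining dependence is on $z=\G(x),w=\G(y)$: since $A_z=\diag(\sgn(Az))A$ jumps in $z$, one cannot net $z,w$; instead one enumerates the realizable pairs $(\sgn(Az),\sgn(Aw))$, of which there are at most $(n_1\cdots n_d)^{O(k)}m^{O(k)}$ (the sign pattern of $x\mapsto A(\PiWdix)x$ contributes another $\R^k$ arrangement), and on each such cell $A_z^\top A_w$ is \emph{constant}. Finally, as a bilinear form on $S_{12}\times S_{34}$, $\Phi_{z,w}$ is a Lipschitz function only of $\theta_{z,w}$ and of the orthogonal projections of $\widehat{\G(x)},\widehat{\G(y)}$ onto $S_{12}\oplus S_{34}$ — data living in dimension $O(k)$ — so its residual variation within a sign cell is killed by a further $\epsilon$-net of size $\epsilon^{-O(k)}$ (degenerate configurations $\theta\in\{0,\pi\}$ are handled by continuity).

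A union bound over the $(n_1\cdots n_d)^{O(k)}\,m^{O(k)}\,\epsilon^{-O(k)}\,9^{O(k)}$ events above, with $t$ in Bernstein taken to be a small constant multiple of $\epsilon$, then gives a failure probability at most $(n_1\cdots n_d)^{O(k)}\,m^{O(k)}\,\epsilon^{-O(k)}\exp(-c\epsilon^2 m)$; the accumulated constants — Bernstein's constant, the net over-counting, the operator-norm bound, and the Lipschitz constant of $\Phi$ split between the net step and the continuity step — are what produce the explicit $L=33$. Choosing $C_\epsilon$ polynomially large in $1/\epsilon$ so that $m\ge C_\epsilon kd\log(n_1\cdots n_d)$, the exponential absorbs the $(n_1\cdots n_d)^{O(k)}$ and $\epsilon^{-O(k)}$ factors, and the bound collapses to the stated $\gamma\,m^{4k}\exp(-c_\epsilon m)$.

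The main obstacle is the discontinuity of the sign operator: a naive $\epsilon$-net on the directions $z=\G(x),w=\G(y)$ is useless because $A_z$ jumps, and would in any case have cardinality exponential in $n$, far beyond the $m=\Theta(kd\log(n_1\cdots n_d))$ budget. The resolution — enumerate the \emph{polynomially many} sign/activation cells, on each of which $A_z^\top A_w$ is constant, and observe that on a fixed pair of secant subspaces the only surviving $(z,w)$-dependence of $\Phi_{z,w}$ is through $O(k)$-dimensional data — is precisely what keeps the total cardinality at $(n_1\cdots n_d)^{O(k)}m^{O(k)}$ and hence makes the optimal sample complexity attainable. A secondary bookkeeping point is that the bilinear form involves six latent codes simultaneously, so one must check that fixing all their activation and sign patterns only multiplies, rather than exponentiates, the relevant counts.
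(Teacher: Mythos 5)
Your overall architecture follows the paper's: a pointwise subexponential/Bernstein estimate showing $\E[A_z^\top A_w]=\Phi_{z,w}$ and concentration for fixed $z,w,u,v$; counting the activation patterns of $\G$ to place secants $\G(x_1)-\G(x_2)$ in a union of $(n_1\cdots n_d)^{O(k)}$ subspaces of dimension at most $2k$ with a net over each; an $m^{O(k)}$ count of sign cells of $z\mapsto\sgn(Az)$ on a $k$-dimensional subspace (the paper's sphere-covering Lemma gives $10m^{2k}$); and Lipschitz continuity of $(z,w)\mapsto\Phi_{z,w}$. However, there is a genuine gap at the step where you pass from ``$A_z^\top A_w$ is constant on each sign cell'' to the uniform bound. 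First, the Bernstein bound is available only at pairs $(z,w)$ fixed in advance of $A$; your sign cells are determined by $A$ (and the $W_i$), so you cannot place net points \emph{inside} a cell and invoke the pointwise bound there, nor union bound over a data-dependent family of cells without an additional argument. Second, and more fundamentally, even granting concentration at one representative $(z_0,w_0)$ of a cell, to conclude $|\langle (M_C-\Phi_{z,w})u,v\rangle|\leqslant L\epsilon\|u\|\|v\|$ for \emph{every} $(z,w)$ in that cell you must know that $\theta_{z,w}$ and the relevant projections vary by at most $O(\epsilon)$ across the cell, i.e.\ that every tessellation cell cut out by the $m$ random hyperplanes on the $k$-dimensional sphere has Euclidean diameter $\lesssim\epsilon$. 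That is a genuinely probabilistic fact about random hyperplane tessellations — exactly Theorem 2.1 of Plan--Vershynin, which the paper invokes in Lemma \ref{Z_0lemma}, requiring $m\gtrsim \mathrm{poly}(\epsilon^{-1})k$ and feeding into $C_\epsilon$ — and it does not follow from an $\epsilon^{-O(k)}$ net on the low-dimensional parametrization of $\Phi$: a net tells you nothing about how far apart two points with identical sign patterns can be.

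Relatedly, you never treat the boundary rows where $\langle a_\ell,z\rangle=0$ or $\langle a_\ell,w\rangle=0$ (up to $2k$ of them on a $k$-dimensional subspace, since the representatives with all-nonzero signs need not realize the sign pattern of a boundary point exactly); the paper removes these rows and bounds their contribution by a uniform bound on $\|A_\Omega\|$ over all $|\Omega|\leqslant 2k$ (Lemma \ref{Atilde_small_lemma}). So the two concrete ingredients that make the ``constant on each cell'' idea rigorous — the small-cell-diameter tessellation theorem and the small-submatrix operator-norm bound — are missing from your outline; with them supplied, your plan essentially reduces to the paper's proof of Proposition \ref{realRRCP} via Propositions \ref{RCPproposition} and \ref{uniformrcp}.
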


Hence for Gaussian measurements and weight ensembles, we can combine Lemma \ref{WDC_lemma} and Lemma \ref{RRCP_lemma} with Theorem \ref{main_deterministic_conv_result} to obtain the following Corollary: 

\begin{cor}[Probabilistic Convergence Guarantee] \label{main_stochastic_conv_result}
Fix $0 < \epsilon < c_1\frac{1}{d^{102}}$ and suppose the noise satisfies $\|\eta\| \leqslant c_{2}\frac{\|x_*\|}{2^{d/2}d^{48}}$ for some universal constants $c_1$ and $c_2$. Suppose $\G$ is such that $W_i \in \R^{n_i \times n_{i-1}}$ has i.i.d. $\mathcal{N}(0,1/n_i)$ entries for $i = 1,\dots,d$. Suppose that $A \in \R^{m \times n_d}$ has i.i.d. $\mathcal{N}(0,1/m)$ entries independent from $\{W_i\}$. Then if $m \geqslant C_{\epsilon} dk \log (n_1n_2\dots n_d)$ and $n_i \geqslant C_{\epsilon} n_{i-1}\log n_{i-1}$ for $i =1,\dots,d$, then with probability at least $1 - \sum_{i=1}^d \gamma n_i\exp(-c_{\epsilon}n_{i-1}) - \gamma m^{4k} \exp(-c_{\epsilon} m)$, the same conclusion as Theorem \ref{main_deterministic_conv_result} holds. Here $C_{\epsilon}$ depends polynomially on $\epsilon^{-1}$, $c_{\epsilon}$ depends on $\epsilon$, and $\gamma$ is a universal constant.
\end{cor}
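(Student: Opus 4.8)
The plan is to derive Corollary~\ref{main_stochastic_conv_result} directly from Theorem~\ref{main_deterministic_conv_result} together with the two structural lemmas, Lemma~\ref{WDC_lemma} and Lemma~\ref{RRCP_lemma}, via a union bound over the relevant high-probability events. For each $i \in [d]$ let $E_i$ denote the event that $W_i$ satisfies the WDC with constant $\epsilon$, and let $E$ denote the event that $A$ satisfies the RRCP with respect to $\G$ with constant $\epsilon$. The goal is to show that on $E \cap \bigcap_{i=1}^d E_i$ all hypotheses of Theorem~\ref{main_deterministic_conv_result} are in force, and that this intersection event has the probability claimed.

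First I would fix the constant $C_\epsilon$ in the corollary to be the maximum of the constants appearing in Lemma~\ref{WDC_lemma} (applied with ``$n$'' $= n_i$ and ``$k$'' $= n_{i-1}$) and in Lemma~\ref{RRCP_lemma}, so that the corollary's hypotheses $n_i \geqslant C_\epsilon n_{i-1}\log n_{i-1}$ and $m \geqslant C_\epsilon dk\log(n_1\cdots n_d)$ imply the sample-size requirements of both lemmas simultaneously; since each lemma's constants depend polynomially on $\epsilon^{-1}$, so does this maximum. Note that the corollary operates in the regime $0 < \epsilon < c_1 d^{-102} < 1$, so in particular $\epsilon \in (0,1)$ and both lemmas are applicable with this $\epsilon$. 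Because $A$ is assumed independent of $\{W_i\}$, Lemma~\ref{RRCP_lemma} applies verbatim and gives $\Pro(E^c) \leqslant \gamma m^{4k}\exp(-c_\epsilon m)$, while Lemma~\ref{WDC_lemma} gives $\Pro(E_i^c) \leqslant 8 n_i \exp(-\gamma_\epsilon n_{i-1})$ for each $i \in [d]$.

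A union bound then yields
\[
\Pro\!\left(E \cap \bigcap_{i=1}^d E_i\right) \;\geqslant\; 1 - \sum_{i=1}^d \Pro(E_i^c) - \Pro(E^c) \;\geqslant\; 1 - \sum_{i=1}^d 8 n_i \exp(-\gamma_\epsilon n_{i-1}) - \gamma m^{4k}\exp(-c_\epsilon m),
\]
which, after absorbing the factor $8$ and renaming $\gamma_\epsilon$ into the universal constants $\gamma$ and $c_\epsilon$, is exactly the probability stated in the corollary. On the event $E \cap \bigcap_{i=1}^d E_i$, every hypothesis of Theorem~\ref{main_deterministic_conv_result} holds: $d \geqslant 2$ and $0 < \epsilon < c_1 d^{-102}$ are assumed, the noise bound $\|\eta\| \leqslant c_2 \|x_*\| 2^{-d/2} d^{-48}$ is assumed, each $W_i$ satisfies the WDC with constant $\epsilon$ on $E_i$, and $A$ satisfies the RRCP with respect to $\G$ with constant $\epsilon$ on $E$, while Algorithm~\ref{alg} is run with a step size in the admissible range $\alpha \leqslant c_3 2^d/d^2$. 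Hence Theorem~\ref{main_deterministic_conv_result} applies on this event and its conclusions \eqref{bound_on_Nth_iterate}--\eqref{final_output_through_genmodel_bound} hold, which is precisely the conclusion of the corollary.

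The substantive content of the corollary lies entirely in the two lemmas it invokes, and in particular in Lemma~\ref{RRCP_lemma}, whose proof occupies Section~\ref{RRCP_section}; the combination above is routine. The only points requiring care are bookkeeping ones: confirming that the claimed failure probability is non-vacuous, which follows because the lower bounds $m \geqslant C_\epsilon dk\log(n_1\cdots n_d)$ and $n_i \geqslant C_\epsilon n_{i-1}\log n_{i-1}$ force $c_\epsilon m$ to dominate $4k\log m + \log\gamma$ and $\gamma_\epsilon n_{i-1}$ to dominate $\log(8 n_i)$; and checking that the single constant $C_\epsilon$ in the corollary genuinely subsumes the finitely many constants from Lemmas~\ref{WDC_lemma} and \ref{RRCP_lemma} while preserving the stated polynomial-in-$\epsilon^{-1}$ dependence. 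I do not anticipate any real obstacle here beyond this matching of constants.
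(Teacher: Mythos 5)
Your proposal is correct and matches the paper's (implicit) argument exactly: the paper obtains the corollary precisely by combining Lemma \ref{WDC_lemma} and Lemma \ref{RRCP_lemma} with Theorem \ref{main_deterministic_conv_result} via a union bound over the WDC events for each layer and the RRCP event for $A$, absorbing the constants as you describe.
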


To the author's knowledge, this is the first result establishing provable signal recovery with a computationally efficient algorithm for undersampled generic phaseless linear measurements with optimal sample complexity. This sample complexity in our result scales with $k$, which can not be improved.  We made no attempt to obtain tight bounds on $d$, except to ensure that all dependences on $d$ are polynomial.  We remind the reader that any $2^d$ terms that appear are due to the problem scaling. We further note that subsequent developments since the original release of \cite{HLV18} relaxed the logarithmic growth factor on the sizes of each layer of the generative model \cite{Daskalaskisetal2020}.

%{ is optimal as it scales linearly with $k$ for fixed $d$. The fixed $d$ regime is realistic in contexts with real deep generative models as current state-of-the-art models have $d$ at most on the order of $10$. While the sample complexity seems to worsen with larger values of $d$, note that larger depth $d$ allows for smaller values of $k$. This is because the range of $\G$ lies in the union of a finite number of $k$-dimensional subspaces and the number of subspaces grows exponentially in $d$. Furthermore, while the dependencies on $\epsilon$, $d$, and $n_i$ could be improved, they are all polynomial and no attempt was made to optimize them. Similarly, the $2^d$ factors are simply an artifact of the problems scaling and normalizing the weights would eliminate such factors. Lastly, we note that this result in fact implies recovery with the same optimal sample complexity for compressive sensing under a generative prior. This is notoriously not the case for sparse phase retrieval methods as the best known computationally efficient algorithms all require sample complexity that is quadratic in the sparsity level.}

Lastly, we note that this result for compressive phase retrieval under optimal sample complexity implies recovery for linear compressive sensing under optimal sample complexity.  As such, this work subsumes the work of a subset of the authors in \cite{HV2017}.  This generalization of compressed sensing to compressive phase retrieval is conspicuously absent for structural priors based on sparsity, as the best known computationally efficient algorithms for sparsity priors require sample complexity that is quadratic in the sparsity level.

%%%%%%%%%%%%%%%%%%%

\subsection{Experiments on MNIST} \label{experiments section}

In this section, we compare the generative modeling approach for compressive phase retrieval with three sparse phase retrieval algorithms: the sparse truncated amplitude flow algorithm (SPARTA) \cite{SPARTA}, Thresholded Wirtinger Flow (TWF) \cite{Cai2016}, and the alternating minimization algorithm CoPRAM \cite{Hedge2017}. For the generative modeling approach, we used a modified version of Algorithm \ref{alg} as we empirically found the negation step (Steps \ref{alg:cond1}-\ref{alg:cond2}) only occurred at the first iterate. Hence we ran two gradient descents, one starting from a random initial iterate $x_0$ and another starting from its negation $-x_0$. We report results for the most successful reconstruction. Gradient descent was performed using the Adam optimizer \cite{ADAM}. For the remainder of this section, we will refer to the generative modeling approach as DPR. %\red{This can be more clearly presented.  Are we using ADAM? Or are we running subgradient descent?  This paragraph also mixes together things that are common to all of the problems, with things that only apply to the DPR.  Have a cleaner separation}  

In each task, the goal is to recover an image $y_*$ given $|Ay_*|$ where $A \in \R^{m \times n}$ has i.i.d. $\mathcal{N}(0,1/m)$ entries. The images were from the MNIST dataset \cite{MNIST}. This dataset consists of $60,000$ $28 \times 28$ images of handwritten digits. The generative model was a pretrained Variational Autoencoder (VAE) from \cite{Price2017}. The encoder network is of size $784 - 500 - 500 - 20$ while the generator network $\G$ is of size $20 - 500 - 500 - 784$. The latent code space dimension is $k = 20$.

For the sparse phase retrieval methods, we performed sparse recovery in the Daubechies-4 Wavelet domain.  We zero-padded the images to be of size $32 \times 32$. The resulting images generated by our algorithm were also uniformly padded with zeros around the border to obtain $32 \times 32$ images. For SPARTA and CoPRAM, we ran each algorithm with sparsity parameters ranging from $2$ to $212$ in increments of $15$, choosing the best reconstruction in terms of lowest reconstruction error. 

\begin{figure}[h]
  \centering
  \includegraphics[scale = 0.3]{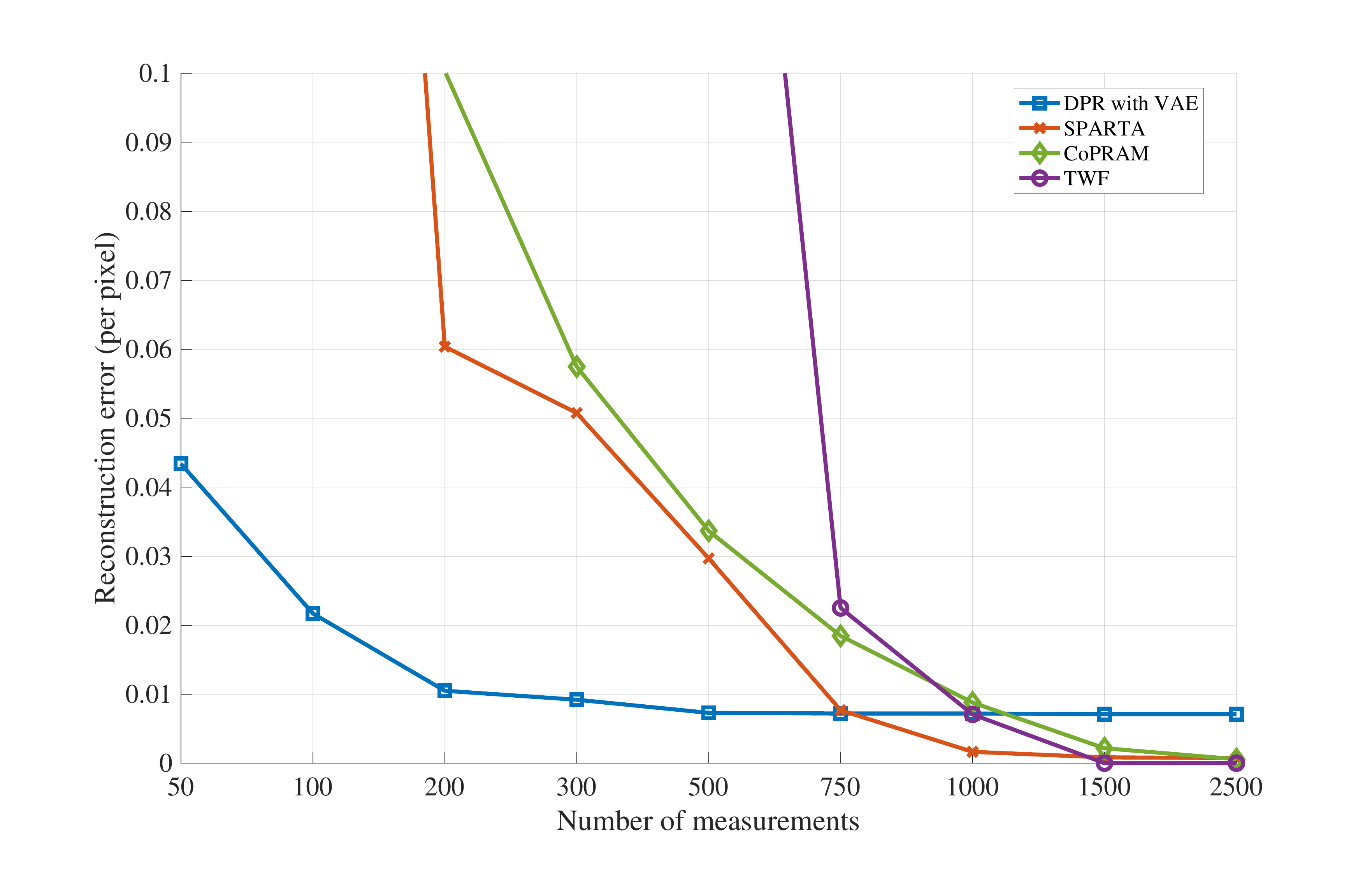}
  \includegraphics[scale = 0.3]{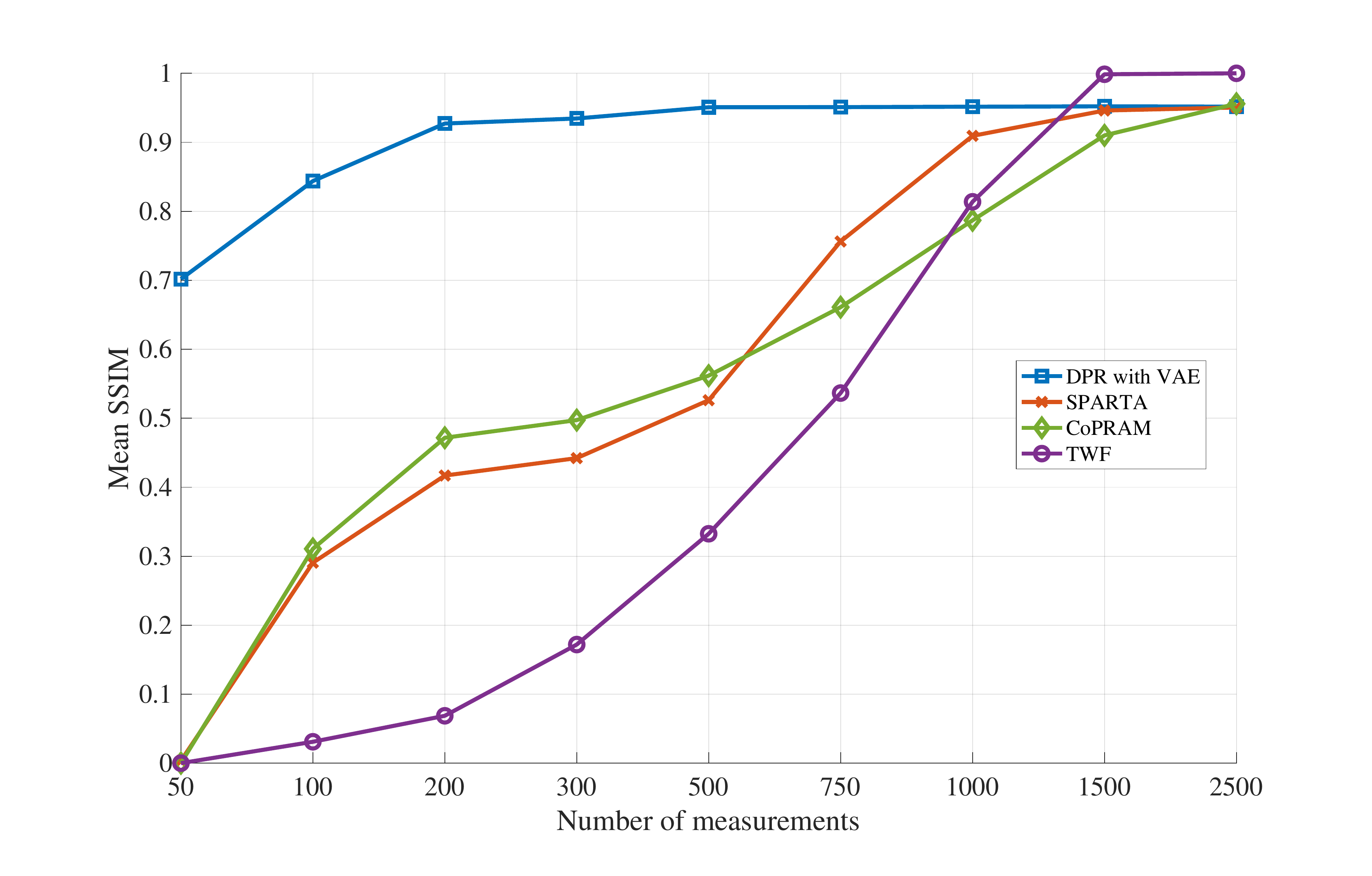}
  \caption{Each algorithm's average reconstruction error (top) and mean SSIM (bottom) over $10$ images from the MNIST test set for different numbers of measurements.}
  \label{fig:mnist_quant_results}
\end{figure}

We aimed to reconstruct $10$ images from the MNIST test set.  We allowed $5$ random restarts for each algorithm and chose the result with the least $\ell_2$ reconstruction error per pixel. We also report the Structural Similarity Index Measure (SSIM) \cite{Wang2004} for each reconstruction. The results in Figure \ref{fig:mnist_quant_results} demonstrate the success of our algorithm with very few measurements. For $200$ measurements, we can achieve accurate recovery with a mean SSIM value of over $0.9$ while other algorithms require $1000$ measurements or more. In terms of reconstruction error, our algorithm exhibits recovery with $200$ measurements comparable to the alternatives requiring $750$ measurements or more, which is where they begin to succeed.

We note that while our algorithm succeeds with fewer measurements than the other methods, our performance, as measured by per-pixel reconstruction error, saturates as the number of measurements increases since our reconstruction accuracy is ultimately bounded by the generative model's representational error. As generative models improve, their representational errors will decrease.  Nonetheless, as can be seen in the reconstructed digits in Figure \ref{fig:mnist_qualitative_results}, the recoveries are semantically correct (the correct digit is legibly recovered) even though the reconstruction error does not decay to zero.  In applications, such as MRI and molecular structure estimation via X-ray crystallography, semantic error measures would be a more informative estimates of recovery performance than per-pixel error measures.

\begin{figure}[h]
  \centering
  \includegraphics[scale = 0.275]{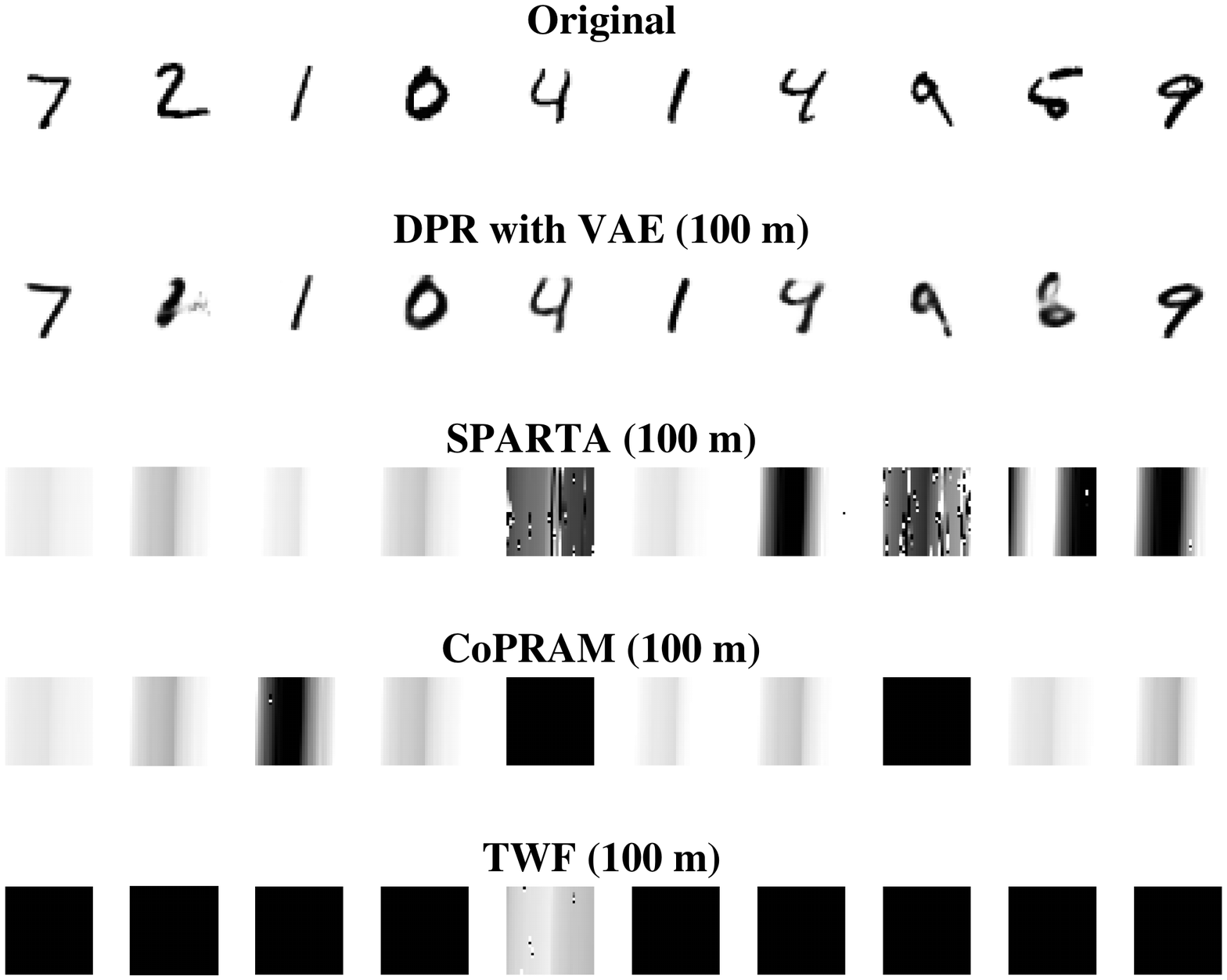}
  \includegraphics[scale = 0.275]{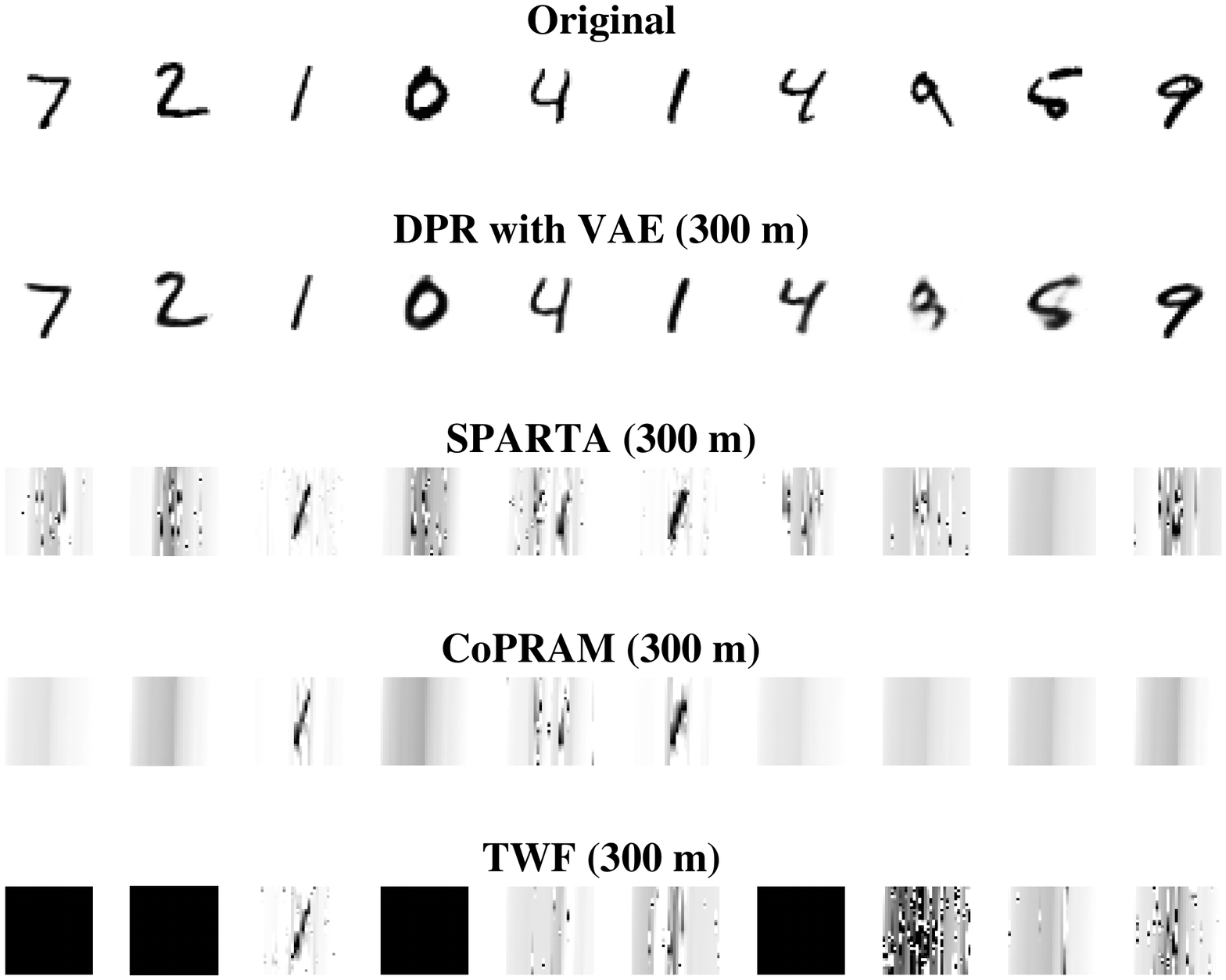}
  \caption{Each algorithm's reconstructed images with $100$ measurements (left) and $300$ measurements (right). If an image is blank, then the reconstruction error between the blank image and the original image was lower than that of the algorithm's reconstructed image and the original image. We note that even for as few as $100$ measurements, nearly all of DPR's reconstructions are semantically correct.}
  \label{fig:mnist_qualitative_results}
\end{figure}

%%%%%%%%%%%%%%%%%%%

\section{Proof of Convergence Result in Deterministic Setting} \label{proof_of_convergence_section}

In this section, we will formally prove Theorem \ref{main_deterministic_conv_result}. Section \ref{notation_section} outlines the notation we will use throughout the proofs. Section \ref{det_proof_sketch} provides a high-level sketch of our proof and outlines its central arguments while Section \ref{preliminaries} discusses preliminary results that are used throughout the proofs. Then Section \ref{formal_proof_section} presents the proof of Theorem \ref{main_deterministic_conv_result} which is broken down into four central results. Finally, Section \ref{additional_results_sec} presents supplementary results and their proofs that aid in establishing Theorem \ref{main_deterministic_conv_result}.

\subsection{Notation} \label{notation_section}

Let $ (\cdot)^\top$ denote the real transpose. Let $[n] = \{1,\dots,n\}$. Let $\mathcal{B}(x,r)$ denote the closed Euclidean ball centered at $x$ with radius $r$. Let $\|\cdot\|$ denote the $\ell_2$ norm for vectors and spectral norm for matrices. For any non-zero $x \in \R^n$, let $\hat{x} = x/\|x\|$. For non-zero $x,y \in \R^n$, let $\theta_{x,y} = \angle(x,y)$ Let $\relu(x) := \max(x,0)$. Define $\sgn : \R \rightarrow \R$ to be $\sgn(x) = x/|x|$ for non-zero $x \in \R$ and $\sgn(0) = 0$ otherwise. Let $\one(E)$ be the indicator function on the event $E$. For a vector $v \in \R^n$, $\text{diag}(\sgn(v))$ is $\sgn(v_i)$ in the $i$-th diagonal entry and $\text{diag}(v > 0)$ is $1$ in the $i$-th diagonal entry if $v_i > 0$ and $0$ otherwise. Let $\Pi_{i=d}^1 W_i = W_d W_{d-1} \dots W_1$. For any $x \in \R^k$ and $i \in [d]$, define $W_{i,+,x} : = \text{diag}(W_{i-1,+,x}\dots W_{2,+,x}W_{1,+,x}x > 0)W_i.$ Set $\Lambda_x : = \PiWdix$ and $x_d := \Lambda_x x$. Note that we have the following string of equalities: $\G(x) = \PiWdix x = \Lambda_x x = x_d$. Let $I_n$ be the $n \times n$ identity matrix. Let $\mathcal{S}^{k-1}$ denote the unit sphere in $\R^k$. We write $\gamma = \Omega(\delta)$ when $\gamma \geqslant C\delta$ for some positive constant $C$. Similarly, we write $\gamma = O(\delta)$ when $\gamma \leqslant C \delta$ for some positive constant $C$. When we say that a constant depends polynomially on $\epsilon^{-1}$, this means that it is at least $C \epsilon^{-k}$ for some positive $C$ and positive integer $k$. Positive numerical constants will be denoted using $C$ or $K$ with various subscripts. In general, numerical constants larger than $1$ will be denoted by capital letters and constants smaller than $1$ with lower case letters. For notational convenience, we write $a = b + O_1(\epsilon)$ if $\|a - b\| \leqslant \epsilon$ where $\|\cdot\|$ denotes $|\cdot|$ for scalars, $\ell_2$ norm for vectors, and spectral norm for matrices.

\subsection{Sketch of Proof for Theorem \ref{main_deterministic_conv_result}} \label{det_proof_sketch}

Theorem \ref{main_deterministic_conv_result} is proven by showing that for all $x \in \R^k$, any subgradient $v_x \in \partial f(x)$ is approximated by $h_x \in \R^k$ which has an analytical expression and that does not vanish outside of neighborhoods of the true solution $x_*$ and a negative multiple $-\rho_d x_*$ for some $\rho_d  \in (0,1)$. Thus any $v_x \in \partial f(x)$ is bounded away from zero for $x$ outside of these two neighborhoods, leading to convergence towards one of these regions. Then we ensure that the negation step of our algorithm (Steps \ref{alg:cond1}-\ref{alg:endif}) will update any iterate near $-\rho_d x_*$ to be in a neighborhood of $x_*$. Finally, we ensure convergence to $x_*$ up to the noise level by showing that the objective function exhibits a convexity-like property in a neighborhood of $x_*$.

To provide our sketch, we define some quantities. Define the function $g : [0,2\pi] \rightarrow \R$ by \begin{align}
    g(\theta) := \cos^{-1}\left(\frac{(\pi-\theta)\cos\theta + \sin\theta}{\pi}\right) .\label{gdef}
\end{align} For any $x \in \R^k \setminus \{0\}$, let $h_x \in \R^k$ be defined as \begin{align*}
h_{x} & : =  \frac{\|x_*\|}{2^d}\left(\frac{\pi - 2\overline{\theta}_{d,x}}{\pi}\right)\left(\prod_{i=0}^{d-1}\frac{\pi - \overline{\theta}_{i,x}}{\pi}\right)\hat{x}_* \\
& + \frac{1}{2^d}\left[\|x\| - \|x_*\|\left( \frac{2\sin \overline{\theta}_{d,x}}{\pi} + \left(\frac{\pi - 2\overline{\theta}_{d,x}}{\pi}\right) \sum_{i=0}^{d-1} \frac{\sin \overline{\theta}_{i,x}}{\pi}\left(\prod_{j=i+1}^{d-1} \frac{\pi - \overline{\theta}_{i,x}}{\pi}\right)\right)\right]\hat{x}
\end{align*} where $\overline{\theta}_{0,x} = \angle(x,x_*)$ and $\overline{\theta}_{i,x} = g(\overline{\theta}_{i-1,x})$ for $i \in [d]$. We further define \begin{align*}
\rho_d : = \frac{2 \sin \breve{\theta}_d}{\pi} + \left(\frac{\pi - 2 \breve{\theta}_d}{\pi}\right) \sum_{i=0}^{d-1}\frac{\sin \breve{\theta}_i}{\pi} \left(\prod_{j=i+1}^{d-1} \frac{\pi - \breve{\theta}_j}{\pi}\right)
\end{align*} where $\breve{\theta}_0 = \pi$ and $\breve{\theta}_i = g(\breve{\theta}_{i-1})$ for $i \in [d]$. For a parameter $\beta > 0$, define \begin{align}
    \Seps_{\beta} := \left\{x \in \R^k \setminus \{0\} : \|h_x\| \leqslant \frac{1}{2^d}\beta\max(\|x\|,\|x_*\|)\right\}. \label{Sbeta_def}
\end{align} A direct analysis in Lemma \ref{Seps_lemma} shows that for appropriate values of $\beta$, $\mathcal{S}_{\beta}$ is contained in the union of neighborhoods of $x_*$ and $-\rho_d x_*$: \begin{align*}
    \Seps_{\beta} & \subset  \mathcal{B}(x_*,70000\pi^2 d^9\beta\|x_*\|) \cup \mathcal{B}(-\rho_d x_*, 77422\pi^2d^{12}\sqrt{\beta}\|x_*\|). %\\
    %& =: \Seps_{\beta}^+ \cup \Seps_{\beta}^{-}.
\end{align*} Set $\Seps_{\beta}^+:= \Seps_{\beta} \cap \mathcal{B}(x_*,70000\pi^2 d^9\beta\|x_*\|)$ and $\Seps_{\beta}^{-}:= \Seps_{\beta} \cap \mathcal{B}(-\rho_d x_*, 77422\pi^2d^{12}\sqrt{\beta}\|x_*\|).$

A sketch of our proof is outlined as follows:

\begin{itemize}
    \item First, we establish that all subgradients are bounded away from zero for iterates outside of $\Seps_{\beta}$. Specifically, we show that when the WDC and RRCP are satisfied with constant $\epsilon$, any $v_{x_t} \in \partial f(x_t)$ satisfies $v_{x_t} \approx h_{x_t}$ and $h_{x_t}$ is bounded away from $0$ by the definition of $\Seps_{\beta}$. Thus $\|v_{x_t}\|$ must be bounded away from zero for points $x_t \notin \mathcal{S}_{\beta}$. See Section \ref{conv_proof_zero_subsection}.
    \item Next, we establish convergence to $\Seps_{\beta}$. In particular, we show that the previous result implies that subgradient descent at each iteration makes progress in the sense that for each non-zero $x_t \notin \mathcal{S}_{\beta}$ $$f(x_{t+1}) - f(x_t) \leqslant -C\epsilon$$ for some $C > 0$. Thus after $\Omega(\epsilon^{-1})$ iterations, the iterates will eventually belong to $\mathcal{S}_{\beta}.$ See Section \ref{conv_proof_first_subsection}.
    \item Third, we show that the negation step of our algorithm ensures that the iterates converge to $\Seps_{\beta}^+$. Specifically, we prove that for points $x \approx x_*$ and $y \approx -\rho_d x_*$, $f(x) < f(y)$. Thus if an iterate $x_t \in \mathcal{S}_{\beta}^-$, $f(-x_t) < f(x_t)$ so the negation step of our algorithm (Steps \ref{alg:cond1}--\ref{alg:endif}) ensures $\bar{x}_t = -x_t$ and $\bar{x}_t \in \mathcal{S}_{\beta}^+$. See Section \ref{conv_proof_second_subsection}.
    \item Finally, we establish convergence to $x_*$ up to the noise level. Specifically, we prove that once in $\mathcal{S}_{\beta}^+$, a convexity-like property near $x_*$ implies that the iterates converge to $x_*$ up to the noise level in the measurements. See Section \ref{conv_proof_third_subsection}.
\end{itemize} 

\subsection{Preliminaries for Proofs} \label{preliminaries}

We will make use of the following fact concerning the Clarke subdifferential of the objective function $f$. Since $f$ is piecewise quadratic, Theorem 9.6 from \cite{Clason2017} asserts that for any $x \in \R^k$, the Clarke subdifferential $\partial f(x)$ can be written equivalently as \begin{align}
    \partial f(x) = \text{conv}(v_1,v_2,\dots,v_{s}) = \left\{\sum_{\ell=1}^s c_{\ell}v_{\ell} :  \sum_{\ell=1}^s c_{\ell} = 1\ \text{and}\ c_{\ell} \geqslant 0\ \text{for}\ \ell \in [s]\right\} \label{subdifferential_definition}
\end{align} where $\text{conv}(\cdot) $ denotes the convex hull, $s$ is the number of quadratic functions adjoint to $x$, and $v_{\ell}$ is the gradient of the $\ell$-th quadratic function of $f$ at $x$. Moreover, for each $v_{\ell}$, there exists a direction $w_{\ell}$ and a sufficiently small $\delta_{\ell} > 0$ such that $f$ is differentiable at $x + \delta_{\ell} w_{\ell}$ and $v_{\ell} = \lim_{\delta_{\ell} \rightarrow 0^+} \nabla f(x + \delta_{\ell}w_{\ell})$.

\subsection{Proof of Theorem \ref{main_deterministic_conv_result}} \label{formal_proof_section}

We now set out to prove Theorem \ref{main_deterministic_conv_result}. In Sections \ref{conv_proof_zero_subsection} - \ref{conv_proof_third_subsection}, we establish four main lemmas, each of which pertain to one of the items in the sketch of our proof from Section \ref{det_proof_sketch}. Theorem \ref{main_deterministic_conv_result} is then proven in Section \ref{final_proof_subsection}. Prior to beginning the proof, we state the necessary assumptions we will make: \begin{customasmp}{A} \label{assumptions} We assume the following hold for some numerical constants $c_1$, $c_2$, and $c_3$:
\begin{enumerate}
    \item[\textbf{A1.}] $0 < \epsilon < c_1 d^{-102}$,
    \item[\textbf{A2.}] the noise $\eta$ satsifes $\|\eta\| \leqslant \frac{c_2\|x_*\|}{2^{d/2}d^{48}}$, and
    \item[\textbf{A3.}] the step size $\al > 0$ satisfies $\al \leqslant c_3\frac{2^d}{d^2}$.
\end{enumerate}
\end{customasmp} \noindent We note that Proposition \ref{bounded_iterates} shows that after a polynomial number of steps, the iterates of our algorithm stay outside of a ball of the origin. Hence we assume throughout that the norm of our iterates are bounded away from zero. This result is proven in Section \ref{bounded_iterates_subsection}.

\subsubsection{Uniform control over subgradients} \label{conv_proof_zero_subsection}

We first show that the descent direction does not vanish for points outside of $\Seps_{\beta}$. The main idea of this result is that for points $x$ such that $\|h_x\|$ is sufficiently bounded away from zero, any $v_x \in \partial f(x)$ is also bounded away from zero. 

To prove this, we require the following three lemmas. The first gives a simple upper bound on the norm of our descent direction.

\begin{lem}\label{bound_on_norm_of_v}
Fix $\epsilon > 0$ such that $K d^3 \sqrt{\epsilon} \leqslant 1$ where $K$ is a universal constant. Suppose $A \in \R^{m \times n_d}$ satisfies the RRCP with respect to $\G$ with constant $\epsilon$ and $\G$ is such that each $W_i \in \R^{n_i \times n_{i-1}}$ satisfies the WDC with constant $\epsilon$ for $i \in [d]$. Then for any $x \in \R^k \setminus \{0\}$ and $v_x \in \partial f(x)$, \begin{align*}
    \|v_x\| \leqslant \frac{Cd}{2^d}\max(\|x\|,\|x_*\|) + \frac{2}{2^{d/2}}\|\eta\|
\end{align*} where $C$ is a numerical constant.
\end{lem}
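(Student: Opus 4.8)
The goal is to bound $\|v_x\|$ for an arbitrary subgradient $v_x \in \partial f(x)$. The objective is $f(x) = \frac12 \||A\G(x)| - b_*\|^2$ with $b_* = |Ay_*| + \eta = A_{\G(x_*)}\G(x_*) + \eta$ where I use the notation $A_z = \diag(\sgn(Az))A$ so that $A_z z = |Az|$. The first step is to write down an explicit formula for the subgradient. Away from the measure-zero set where some entry of $A\G(x)$ vanishes or where $\G$ is non-differentiable, $f$ is differentiable and, using $\Lambda_x = \Pi_{i=d}^1 W_{i,+,x}$ so that $\G(x) = \Lambda_x x$ and $|A\G(x)| = A_{\G(x)}\G(x) = A_{\G(x)}\Lambda_x x$, one gets
\begin{align*}
\nabla f(x) = \Lambda_x^\top A_{\G(x)}^\top\big(A_{\G(x)}\Lambda_x x - A_{\G(x_*)}\Lambda_{x_*} x_* - \eta\big).
\end{align*}
By the preliminary fact \eqref{subdifferential_definition}, every $v_x \in \partial f(x)$ is a convex combination of limits of such gradients at nearby differentiability points, so it suffices to bound $\|\nabla f(y)\|$ uniformly for $y$ in a small neighborhood of $x$; a convex combination of vectors of norm $\le R$ has norm $\le R$. (One should be a little careful that the diagonal sign matrices $A_{\G(y)}$ and the ReLU patterns $\Lambda_y$ can jump near $x$, but they remain of the same structural form — diagonal $\pm1/0$ times $A$, and products of $\diag(\cdot\ge0)W_i$ — so the bounds below apply uniformly.)

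The second step is to bound each factor. The WDC gives, via the standard telescoping argument from \cite{HV2017} (this is exactly the kind of estimate used repeatedly in that paper and reproduced in Section \ref{preliminaries}/\ref{additional_results_sec} here), that $\|\Lambda_x\| \le \frac{1}{2^{d/2}}(1 + \text{something}\cdot d\sqrt{\epsilon}) \le \frac{C'}{2^{d/2}}$ and $\|\Lambda_x x\| = \|\G(x)\| \le \frac{C'}{2^{d/2}}\|x\|$ provided $Kd^3\sqrt\epsilon \le 1$; similarly $\|\G(x_*)\| \le \frac{C'}{2^{d/2}}\|x_*\|$. For the measurement operator, $A_{\G(x)}$ is a row-restriction-with-signs of $A$, so $\|A_{\G(x)}\| \le \|A\|$. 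The RRCP does not directly bound $\|A\|$ or $\|A_z\|$, but it does control $A_{\G(x)}^\top A_{\G(y)}$ against $\Phi_{\G(x),\G(y)}$ on secant directions in the range of $\G$; in particular taking $x=y$ and $x_1=x_3$, $x_2=x_4$ (so $\Phi_{z,z} = I_n$) gives $\big|\|A_{\G(x)}(\G(x_1)-\G(x_2))\|^2 - \|\G(x_1)-\G(x_2)\|^2\big| \le L\epsilon\|\G(x_1)-\G(x_2)\|^2$, i.e. $\|A_{\G(x)}(\G(x_1)-\G(x_2))\| \le \sqrt{1+L\epsilon}\,\|\G(x_1)-\G(x_2)\|$ for all secants. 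This is what lets me control the "signal" part: write $A_{\G(x)}\Lambda_x x - A_{\G(x_*)}\Lambda_{x_*}x_* = A_{\G(x)}(\G(x)-\G(x_*)) + (A_{\G(x)} - A_{\G(x_*)})\G(x_*)$. The first piece is $\le \sqrt{1+L\epsilon}\|\G(x)-\G(x_*)\| \le \frac{C''}{2^{d/2}}\max(\|x\|,\|x_*\|)$ by WDC-type bounds on $\|\G(x)-\G(x_*)\|$. The second piece needs $\|(A_{\G(x)} - A_{\G(x_*)})\G(x_*)\|$, which is where one applies the RRCP more cleverly (e.g. expanding the squared norm as $\|A_{\G(x)}\G(x_*)\|^2 + \|A_{\G(x_*)}\G(x_*)\|^2 - 2\langle A_{\G(x)}\G(x_*), A_{\G(x_*)}\G(x_*)\rangle$ and comparing each term to the corresponding $\Phi$-expression, all of which are $O(1)$ multiples of $\|\G(x_*)\|^2$). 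Finally $\|\Lambda_x^\top A_{\G(x)}^\top\| = \|A_{\G(x)}\Lambda_x\| \le \sqrt{1+L\epsilon}\cdot$(something controlling the operator norm of $A_{\G(x)}$ composed with $\Lambda_x$) — again extractable from RRCP by testing on a basis of the (at most $2$-dimensional, or rather range-of-$\G$) relevant subspace, or more simply from the fact that $A_{\G(x)}\Lambda_x$ maps into measurements of range elements and the RRCP with $x_1,x_2$ ranging gives $\|A_{\G(x)}\Lambda_x u\| \lesssim \|\Lambda_x u\|$ for $u$ in the relevant cone. Collecting: $\|\nabla f(y)\| \le \|A_{\G(y)}\Lambda_y\|\cdot\big(\text{signal term} + \|\eta\|\big) \le \frac{C''}{2^{d/2}}\cdot\big(\frac{C''}{2^{d/2}}\max(\|x\|,\|x_*\|) + \|\eta\|\big) = \frac{Cd}{2^d}\max(\|x\|,\|x_*\|) + \frac{2}{2^{d/2}}\|\eta\|$ after absorbing the $\sqrt{1+L\epsilon}$ factors (which are $\le 2$, say, for $\epsilon$ small) and tracking the $d$-dependence through the telescoping WDC bounds.

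The main obstacle is getting clean, honest operator-norm control of the composed map $A_{\G(x)}\Lambda_x$ (and the analogous cross term $(A_{\G(x)} - A_{\G(x_*)})\G(x_*)$) purely from the RRCP, since the RRCP is stated only as a concentration inequality against $\Phi$ on secant \emph{pairs} $\G(x_1)-\G(x_2)$, not as a bound on $\|A_z\|$. The trick is that every vector that actually appears — $\G(x)$, $\G(x_*)$, $\G(x)-\G(x_*)$, and images under $\Lambda_x$ — lies in (or can be written via) such secants (e.g. $\G(x) = \G(x) - \G(0)$ since $\G(0)=0$), so by polarization each inner product and squared norm we need is exactly of the form the RRCP controls, and all the comparison quantities $\langle \Phi_{z,w} a, b\rangle$ are bounded by $O(1)\|a\|\|b\|$ because $\|\Phi_{z,w}\| = O(1)$ (read off from \eqref{Phi_def}: $\frac{\pi - 2\theta}{\pi}$ and $\frac{2\sin\theta}{\pi}$ are bounded and $\|M_{\hat z\leftrightarrow\hat w}\| \le 1$). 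So the plan is: (i) reduce to bounding $\|\nabla f(y)\|$ at differentiability points via \eqref{subdifferential_definition}; (ii) expand $\nabla f$ and split into the $\Lambda^\top A^\top$ outer factor times (signal residual $-$ noise); (iii) bound the outer factor and the signal residual by reducing every scalar quantity to an RRCP-controlled secant inner product plus an $O(1)\cdot 2^{-d}$ bound on the $\Phi$-side, using WDC to convert back-and-forth between $\|\G(u)-\G(v)\|$ and $2^{-d/2}\|u-v\|$-type quantities; (iv) bound the noise contribution by $\|A_{\G(y)}\Lambda_y\|\cdot\|\eta\| \le \frac{2}{2^{d/2}}\|\eta\|$; and (v) combine, absorbing constants, to get the stated bound.
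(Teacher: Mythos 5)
Your proposal is correct, but it takes a genuinely different route from the paper. The paper proves this lemma indirectly through the vector $h_x$: it bounds $\|h_x\| \leqslant \frac{4+d/\pi}{2^d}\max(\|x\|,\|x_*\|)$ straight from the closed-form expression for $h_x$ (using the elementary angle bounds \eqref{piprod_upperbound}, \eqref{sum_prod_upperbound}, \eqref{pi_twotheta_upperbound}), then invokes Lemma \ref{conc_v_to_h} to write $\|v_x\| \leqslant \|h_x\| + \|v_x - h_x\|$ and uses $Kd^3\sqrt{\epsilon}\leqslant 1$; the $d$ in the final constant comes from the layer sum inside $h_x$. You instead bypass $h_x$ entirely: you expand the gradient as $\Lambda_x^\top A_{\G(x)}^\top(A_{\G(x)}\Lambda_x x - A_{\G(x_*)}\Lambda_{x_*}x_* - \eta)$ and control each factor by operator-norm estimates — $\|\Lambda_x\|\lesssim 2^{-d/2}$ from the WDC, the near-isometry $\|A_{\G(x)}v\|\leqslant\sqrt{1+L\epsilon}\,\|v\|$ on range secants (using $\G(0)=0$ so that $\G(x_*)$ itself is a secant and $\Phi_{z,z}=I_n$), and $\|A_{\G(x)}\Lambda_x\|\leqslant\sqrt{1+L\epsilon}\,\|\Lambda_x\|$ via local linearity of $\G$ — and then reduce general subgradients to limits of gradients through \eqref{subdifferential_definition} exactly as the paper does. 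The two ingredients you extract from the RRCP are precisely the paper's Lemma \ref{upper_bounds_on_lambda_Az} (equations \eqref{bound_on_lambda_energy}--\eqref{bound_on_AxdLambdax}) and the bound \eqref{bound_on_q_norm} on the noise term, so your use of local linearity is at the same level of rigor as the paper's own supplementary arguments. What your route buys is self-containedness (no dependence on the $h_x$ concentration machinery, which the paper needs anyway for other lemmas) and in fact a slightly sharper bound of the form $\frac{C}{2^d}\max(\|x\|,\|x_*\|)$ with no factor of $d$; what the paper's route buys is brevity once Lemma \ref{conc_v_to_h} is available. The only loose ends in your sketch are bookkeeping: the cross term $(A_{\G(x)}-A_{\G(x_*)})\G(x_*)$ can simply be bounded by the triangle inequality and two applications of the diagonal RRCP estimate (your polarization expansion is more than needed), and the coefficient $\frac{2}{2^{d/2}}$ on $\|\eta\|$ requires absorbing $\sqrt{\tfrac{13}{12}(1+L\epsilon)}\leqslant 2$, which the hypothesis $Kd^3\sqrt{\epsilon}\leqslant 1$ permits, as you note.
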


\noindent The second shows that $h_x$ is Lipschitz with respect to $x$ for points away from the origin.

\begin{lem} \label{h_lipschitz_lemma}
For all $x,y\neq 0$, we have that \begin{align*}
\|h_x - h_y\| \leqslant \left(\frac{(2d^2 + (10\pi+8)d+20\pi)\|x_*\|}{\pi^2 2^d}\max\left(\frac{1}{\|x\|},\frac{1}{\|y\|}\right) + \frac{1}{2^d}\right) \|x-y\|.
\end{align*} In particular, if $x,y\notin \mathcal{B}(0,r\|x_*\|)$ for some $r > 0$, then \begin{align*}
\|h_x-h_y\| \leqslant \left(\frac{2d^2 + (10\pi+8)d+20\pi}{r\pi^2 2^d} + \frac{1}{2^d}\right)\|x-y\|.
\end{align*} 
\end{lem}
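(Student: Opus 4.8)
The plan is to bound $\|h_x - h_y\|$ by writing $h_x$ as a sum of two terms — a multiple of $\hat x_*$ and a multiple of $\hat x$ — and controlling the variation of each scalar coefficient and each unit vector separately. I would first recall the elementary facts that for nonzero $x, y$, $\|\hat x - \hat y\| \leqslant \frac{2}{\max(\|x\|,\|y\|)}\|x - y\|$ (a standard estimate; one can prove it by the identity $\hat x - \hat y = \frac{x-y}{\|x\|} + y\left(\frac{1}{\|x\|} - \frac{1}{\|y\|}\right)$ and $|\|x\| - \|y\|| \leqslant \|x-y\|$), and that the angle map $x \mapsto \angle(x, x_*)$ is Lipschitz in $x$ with constant $\frac{1}{\max(\|x\|,\|y\|)}$ up to a constant, hence $|\overline\theta_{0,x} - \overline\theta_{0,y}| \leqslant \frac{C}{\|x\|}\|x-y\|$ on the relevant region.

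The key step is then to propagate these bounds through the recursion $\overline\theta_{i,x} = g(\overline\theta_{i-1,x})$. Since $g$ defined in \eqref{gdef} has derivative bounded by $1$ on $[0,\pi]$ (this is the standard fact that $g$ is a contraction, used throughout the expansive-network literature), each $\overline\theta_{i,\cdot}$ is Lipschitz in $x$ with the same constant as $\overline\theta_{0,\cdot}$, i.e. $|\overline\theta_{i,x} - \overline\theta_{i,y}| \leqslant \frac{C}{\|x\|}\|x-y\|$ uniformly in $i$. Next I would observe that the two scalar coefficients multiplying $\hat x_*$ and $\hat x$ in the definition of $h_x$ are fixed smooth functions (polynomials in $\sin\overline\theta_{i,x}$, $\cos\overline\theta_{i,x}$, and $\frac{\pi - \overline\theta_{i,x}}{\pi}$, each of these factors lying in $[-1,1]$ or $[0,1]$) of the vector $(\overline\theta_{0,x}, \dots, \overline\theta_{d,x})$ and of $\|x\|$; by the product rule, the Lipschitz constant of such a product of $d$-ish bounded factors picks up a factor that is polynomial in $d$ — this is where the $2d^2 + (10\pi+8)d + 20\pi$ in the statement comes from, after carefully counting: the $\hat x_*$-coefficient contains a product of $d+1$ factors each bounded by $1$, so its gradient in $\theta$ has norm $O(d)$; the $\hat x$-coefficient contains a double sum of such products, giving $O(d^2)$; and the explicit $\frac{1}{2^d}\|x\|$ term contributes the lone $\frac{1}{2^d}\|x-y\|$.

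Finally I would assemble: $\|h_x - h_y\| \leqslant (\text{coeff}_{x_*}) \|\hat x_* - \hat y_*\| + \|\hat x_*\| |\text{coeff}_{x_*,x} - \text{coeff}_{x_*,y}| + (\text{coeff}_x)\|\hat x - \hat y\| + \|\hat x\| |\text{coeff}_{x,x} - \text{coeff}_{x,y}|$; here $\hat x_* = \hat y_*$ since $x_*$ is fixed, so the first term drops, the coefficients are $O(\|x_*\|/2^d)$ in magnitude, and the angle-variation and unit-vector-variation estimates both carry the factor $\frac{1}{\max(\|x\|,\|y\|)}$. Collecting constants yields the stated bound, and the second claim follows by substituting $\max(1/\|x\|, 1/\|y\|) \leqslant 1/(r\|x_*\|)$. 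I expect the main obstacle to be bookkeeping: getting the polynomial-in-$d$ constant exactly as $2d^2 + (10\pi+8)d + 20\pi$ requires a disciplined product-rule expansion of the two coefficient functions and tracking which factors contribute linear versus quadratic growth, rather than any conceptual difficulty — the contraction property of $g$ and the $1/\|x\|$ Lipschitz bound for the angle do all the real work.
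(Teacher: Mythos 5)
Your proposal follows essentially the same route as the paper's proof: the paper bounds $|\overline{\theta}_{0,x}-\overline{\theta}_{0,y}|\leqslant 4\max(1/\|x\|,1/\|y\|)\|x-y\|$, propagates this through the recursion using $g'\in[0,1]$, uses $\|\hat{x}-\hat{y}\|\leqslant 2\max(1/\|x\|,1/\|y\|)\|x-y\|$, and then performs exactly the coefficient-by-coefficient product-rule bookkeeping you describe (splitting into the $\hat{x}_*$-coefficient term, the $\frac{2\sin\overline{\theta}_{d,\cdot}}{\pi}\hat{\cdot}$ term, the double-sum term, and the lone $\frac{1}{2^d}\|x-y\|$ contribution). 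Your identification of where the linear-in-$d$ and quadratic-in-$d$ pieces of the constant arise matches the paper's accounting, so the proposal is correct and essentially identical in approach.
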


\noindent The third states that for any non-zero $x \in \R^k$ and any $v_x \in \partial f(x)$, $h_x$ approximates $v_x$ well.

\begin{lem} \label{conc_v_to_h}
Fix $\epsilon > 0$ such that $\epsilon < d^{-4}(1/16\pi)^2$ If $A \in \R^{m \times n_d}$ satisfies the RRCP with respect to $\G$ with constant $\epsilon$ and $\G$ is such that each $W_i \in \R^{n_i \times n_{i-1}}$ satisfies the WDC with constant $\epsilon$ for $i \in [d]$, then for any $x \neq 0$ and $v_x \in \partial f(x)$ \begin{align*}
\|v_x - h_x\| \leqslant  \frac{Kd^3\sqrt{\epsilon}}{2^d}\max(\|x\|,\|x_*\|) + \frac{2}{2^{d/2}}\|\eta\|
\end{align*} where $K$ is a universal constant. 
\end{lem}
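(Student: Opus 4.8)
The plan is to compute a subgradient $v_x$ explicitly from the piecewise-quadratic structure of $f$, produce a candidate analytic expression for it, and then bound the deviation from $h_x$ in two stages: first replacing the measurement operator $A_{\G(x)}^\top A_{\G(y)}$ by its expectation $\Phi_{\G(x),\G(y)}$ via the RRCP, and then replacing the weight products $W_{i,+,x}^\top W_{i,+,y}$ by their expectations $Q_{x,y}$ via the WDC and collapsing the resulting telescoping products into the closed form defining $h_x$. By \eqref{subdifferential_definition} every $v_x \in \partial f(x)$ is a convex combination of gradients of the quadratic pieces of $f$; since $f(x) = \frac12\||A\G(x)| - b_*\|^2$ and $|A\G(x)| = A_{\G(x)}\G(x) = \diag(\sgn(A\Lambda_x x))A\Lambda_x x$, differentiating on a region where neither the ReLU patterns $W_{i,+,x}$ nor the sign pattern $\diag(\sgn(A\G(x)))$ change gives a gradient of the form $v_x = \Lambda_x^\top A_{\G(x)}^\top\big(A_{\G(x)}\Lambda_x x - A_{\G(x_*)}\Lambda_{x_*}x_* - \eta\big)$, and at boundary points any subgradient is a convex combination of such expressions (with possibly different patterns). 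The point is that all such combinations agree, up to terms controlled by $\epsilon$, because the WDC/RRCP force $\Lambda_x^\top A_{\G(x)}^\top A_{\G(y)}\Lambda_y$ to be close to a pattern-independent quantity.

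The first step is therefore to handle the noise term $\Lambda_x^\top A_{\G(x)}^\top \eta$ separately: bound $\|\Lambda_x^\top A_{\G(x)}^\top\eta\| \le \|\Lambda_x\|\,\|A_{\G(x)}^\top\eta\|$, use the WDC to get $\|\Lambda_x\| \lesssim 2^{-d/2}$ (this follows from $\|W_{i,+,x}^\top W_{i,+,x} - Q_{x,x}\| \le \epsilon$ with $Q_{x,x} = \frac12 I$, iterated over $d$ layers as in \cite{HV2017}), and use the RRCP (or a direct norm bound on $A_{\G(x)}$) to get $\|A_{\G(x)}\| \lesssim 1$, yielding the $\frac{2}{2^{d/2}}\|\eta\|$ term. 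The second, main step is to show that the noiseless part $\Lambda_x^\top A_{\G(x)}^\top A_{\G(x)}\Lambda_x x - \Lambda_x^\top A_{\G(x)}^\top A_{\G(x_*)}\Lambda_{x_*}x_*$ is within $\frac{Kd^3\sqrt\epsilon}{2^d}\max(\|x\|,\|x_*\|)$ of the corresponding ``expected'' vector. Here one applies the RRCP to replace $A_{\G(x)}^\top A_{\G(y)}$ acting between secants $\G(x_i) - \G(x_j)$ by $\Phi_{\G(x),\G(y)}$, picking up an $L\epsilon$ error scaled by $\|\G(x_i)-\G(x_j)\|\|\G(x_k)-\G(x_\ell)\| \lesssim 2^{-d}\max(\|x\|,\|x_*\|)^2$; then one uses the WDC iteratively to pass from $\Lambda_x^\top \Phi_{\G(x),\G(y)} \Lambda_y$ to a product of $Q$-type matrices and angle factors. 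The iterative WDC step is where the $d^3$ and the $\sqrt\epsilon$ (rather than $\epsilon$) arise: composing $d$ layers, each WDC error $\epsilon$ compounds, and the angle recursion $\overline\theta_{i,x} = g(\overline\theta_{i-1,x})$ introduces $\sqrt\epsilon$-size perturbations in the angles because $g$ has infinite derivative near $0$ and $\pi$ (the standard phenomenon already present in \cite{HV2017}); tracking how these angle errors propagate through the $d$ products and summing the resulting $d$ terms gives the $d^3\sqrt\epsilon$ factor. The closed form one lands on after all substitutions is exactly $h_x$, by the definition of $g$, $Q_{x,y}$, $\Phi_{z,w}$, and the telescoping structure — this is essentially a bookkeeping identity, matching the $1$-layer case displayed in Figure~\ref{fig:expected_loss} and its $d$-layer generalization \eqref{expected_loss_full}.

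The main obstacle is the iterated WDC estimate with its angle-perturbation analysis: one must show that replacing each $W_{i,+,x}^\top W_{i,+,y}$ by $Q_{x,y}$ inside a depth-$d$ product, while simultaneously the effective angle at layer $i$ drifts from the idealized $\overline\theta_{i,x}$ by $O(d^{?}\sqrt\epsilon)$, produces only a total error of the claimed order. This requires a careful induction on the layer index tracking both the matrix-norm error and the angle error, using Lipschitz bounds on $Q_{x,y}$ and on $g$ (away from its bad endpoints, with a separate argument near the endpoints), and is the analogue of the core technical lemma of \cite{HV2017} adapted to the phase-retrieval operator $A_{\G(x)}$ rather than $A$. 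I would organize it as: (i) a lemma bounding $\|\Lambda_x^\top\Lambda_y - \prod \text{(angle factors)}\,\widehat{\cdot}\|$-type quantities from the WDC alone; (ii) a lemma inserting $A_{\G(x)}$ via the RRCP; (iii) combining with the noise bound and the condition $\epsilon < d^{-4}(1/16\pi)^2$, which guarantees the angle recursion stays in a regime where $g$ is well-behaved and the $\sqrt\epsilon$ accounting is valid. The constant $K$ is universal because every constant that appears ($L = 33$ from the RRCP, the $2\pi$'s from $Q$ and $\Phi$, the Lipschitz constants of $g$) is universal.
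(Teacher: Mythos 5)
Your proposal follows essentially the same route as the paper: write the gradient at differentiable points as $\Lambda_x^\top A_{x_d}^\top(A_{x_d}\Lambda_x x - A_{x_{*,d}}\Lambda_{x_*}x_* - \eta)$, bound the noise term by $\|\Lambda_x\|\lesssim 2^{-d/2}$ and $\|A_{x_d}\Lambda_x\|\lesssim 2^{-d/2}$, use the RRCP to replace $A_{x_d}^\top A_{x_{*,d}}$ by $\Phi_{x_d,x_{*,d}}$ (the paper's intermediate vector $w_x$), then invoke the WDC-based angle/product estimates from \cite{HV2017} to collapse to $h_x$ with the $d^3\sqrt{\epsilon}$ accounting, and finally extend to nondifferentiable points via convex combinations of limits of gradients of the adjacent quadratic pieces together with continuity of $h_x$. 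This matches the paper's proof in structure and in all key estimates.
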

\noindent Each of these results are proven in Section \ref{proofs_for_controlling_v_subsection}. We are now ready to state and prove the main result of this section.

\begin{lem} \label{closeness_of_subgrads} Suppose Assumptions A1-A3 are satisfied and set $\beta:= 4Kd^3\sqrt{\epsilon} + 11\|\eta\|2^{d/2}/\|x_*\|$ where $K$ is a numerical constant. Let $A \in \R^{m \times n}$ satisfy the RRCP with respect to $\G$ with constant $\epsilon$. Let $\G$ be such that $W_{i} \in \R^{n_i \times n_{i-1}}$ satisfy the WDC with constant $\epsilon$ for all $i \in [d]$. Suppose that $x \notin \Seps_{\beta}$ and $x \notin \mathcal{B}(0,c_0\|x_*\|)$ for some numerical constant $c_0$. Then for any $v_x \in \partial f(x)$, we have \begin{align}
    \frac{1}{3}\|v_{x}\| \geqslant \frac{Kd^3\sqrt{\epsilon}}{2^d}\|x_*\|. \label{norm_v_lower_bound}
\end{align} Moreover, we have that for any $\lambda \in [0,1]$, \begin{align}
    \|v_{\tilde{x}} - v_{x}\| \leqslant \frac{5}{6}\|v_{x}\| \label{v_tilde_lipschitz}
\end{align} where $\tilde{x} = x - \lambda\al v_{x}$, $v_{x} \in \partial f(x)$ and $v_{\tilde{x}} \in \partial f(\tilde{x})$.
\end{lem}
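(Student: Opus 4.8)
The plan is to combine the three preliminary lemmas of this subsection in a simple chain of triangle inequalities. For the first claim \eqref{norm_v_lower_bound}, I would start from the definition of $\Seps_\beta$: since $x \notin \Seps_\beta$ we have $\|h_x\| > \frac{1}{2^d}\beta \max(\|x\|,\|x_*\|) \geqslant \frac{1}{2^d}\beta\|x_*\|$. Plugging in $\beta = 4Kd^3\sqrt\epsilon + 11\|\eta\|2^{d/2}/\|x_*\|$ gives $\|h_x\| > \frac{4Kd^3\sqrt\epsilon}{2^d}\|x_*\| + \frac{11}{2^{d/2}}\|\eta\|$. Then I would apply Lemma \ref{conc_v_to_h} (whose hypotheses are met since $x \neq 0$, using the constant $c_0$ to guarantee $x$ is bounded away from the origin, and the smallness of $\epsilon$ from Assumption A1): $\|v_x\| \geqslant \|h_x\| - \|v_x - h_x\| \geqslant \|h_x\| - \frac{Kd^3\sqrt\epsilon}{2^d}\max(\|x\|,\|x_*\|) - \frac{2}{2^{d/2}}\|\eta\|$. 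The one subtlety is that the error term in Lemma \ref{conc_v_to_h} involves $\max(\|x\|,\|x_*\|)$, which could be much larger than $\|x_*\|$; however, the lower bound on $\|h_x\|$ from $x \notin \Seps_\beta$ also scales with $\max(\|x\|,\|x_*\|)$, so I would instead use $\|h_x\| > \frac{\beta}{2^d}\max(\|x\|,\|x_*\|)$ directly, obtaining $\|v_x\| \geqslant \frac{(4K-K)d^3\sqrt\epsilon}{2^d}\max(\|x\|,\|x_*\|) + \frac{(11-2)}{2^{d/2}}\|\eta\| \geqslant \frac{3Kd^3\sqrt\epsilon}{2^d}\|x_*\|$, which rearranges to \eqref{norm_v_lower_bound}.

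For the second claim \eqref{v_tilde_lipschitz}, I would bound $\|v_{\tilde x} - v_x\|$ via the triangle inequality through $h_x$ and $h_{\tilde x}$: $\|v_{\tilde x} - v_x\| \leqslant \|v_{\tilde x} - h_{\tilde x}\| + \|h_{\tilde x} - h_x\| + \|h_x - v_x\|$. The first and third terms are each controlled by Lemma \ref{conc_v_to_h} (applied at $\tilde x$ and $x$ respectively — here I need $\tilde x$ also bounded away from the origin, which follows because $\lambda\al\|v_x\|$ is small relative to $\|x\|$ given the step-size bound in Assumption A3 and the norm bound on $v_x$ from Lemma \ref{bound_on_norm_of_v}). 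The middle term is controlled by the Lipschitz estimate of Lemma \ref{h_lipschitz_lemma} with $\|x - \tilde x\| = \lambda\al\|v_x\| \leqslant \al\|v_x\|$; using the second (uniform) form of that lemma with $r = \min(c_0, \text{something})$, I get $\|h_{\tilde x} - h_x\| \leqslant \frac{Cd^2}{r 2^d}\al\|v_x\|$, and then $\al \leqslant c_3 2^d/d^2$ from A3 makes this at most a small constant times $\|v_x\|$. Assembling, $\|v_{\tilde x} - v_x\| \leqslant \frac{2Kd^3\sqrt\epsilon}{2^d}\max(\|x\|,\|x_*\|) + \frac{4}{2^{d/2}}\|\eta\| + (\text{small const})\|v_x\|$, and the first two terms are at most a constant multiple of $\|v_x\|$ by \eqref{norm_v_lower_bound} (or rather its unrearranged form), so the whole thing is $\leqslant \frac{5}{6}\|v_x\|$ once the universal constants $c_1, c_3$ are chosen small enough.

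The main obstacle I anticipate is bookkeeping of the $d$-dependence and the universal constants so that everything genuinely closes at $\frac{5}{6}$ (and at the factor $\frac13$). In particular, one must verify that the $\max(\|x\|,\|x_*\|)$ factors, which appear in both Lemma \ref{bound_on_norm_of_v} and Lemma \ref{conc_v_to_h}, are all dominated by $\|v_x\|$ via the $x \notin \Seps_\beta$ hypothesis — this is why $\beta$ was chosen with the specific coefficient $4K$ rather than just $K$: the slack $3K$ absorbs the error term from $v \approx h$ plus the perturbation from the Lipschitz step. A careful choice of $c_0$ (so that $\tilde x$ does not leave the region where $h$ is Lipschitz) and of $c_3$ (so the step does not move $x$ too far) is needed, and one should also invoke Assumption A2 to ensure $\|\eta\|$-terms are of the same order as the $\sqrt\epsilon$-terms. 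None of these steps is deep; the work is entirely in tracking constants, which I would relegate to the detailed proof in Section \ref{proofs_for_controlling_v_subsection}.
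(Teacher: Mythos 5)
Your proposal is correct and follows essentially the same route as the paper: lower-bound $\|v_x\|$ via $\|h_x\| \geqslant \frac{\beta}{2^d}\max(\|x\|,\|x_*\|)$ minus the Lemma~\ref{conc_v_to_h} error, and control $\|v_{\tilde x}-v_x\|$ by the triangle inequality through $h_x,h_{\tilde x}$ using Lemmas~\ref{conc_v_to_h}, \ref{h_lipschitz_lemma}, and \ref{bound_on_norm_of_v} together with the step-size bound, exactly as in the paper (with $\tilde x$ kept away from the origin via Proposition~\ref{bounded_iterates}). The constant bookkeeping you defer is precisely what the paper carries out, closing at $\frac{2}{3}+\frac{1}{6}=\frac{5}{6}$.
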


\begin{proof}[Proof of Lemma \ref{closeness_of_subgrads}]

 By Lemma \ref{h_lipschitz_lemma}, we have that $h_x$ is Lipschitz for all $x \notin \mathcal{B}(0,c_0\|x_*\|)$, i.e. there exists a numerical constant $L_{c_0}$ such that for any $x,y \notin \mathcal{B}(0,c_0\|x_*\|)$ $$\|h_x - h_y\| \leqslant \frac{L_{c_0} d^2}{2^d}\|x-y\|.$$ Moreover, Lemma \ref{conc_v_to_h} implies for any $x \neq 0$, we have \begin{align*}
    \|v_x - h_x\| \leqslant  \frac{Kd^3\sqrt{\epsilon}}{2^d}\max(\|x\|,\|x_*\|) + \frac{2}{2^{d/2}}\|\eta\|
\end{align*} for some numerical constant $K$. Hence  \begin{align}
    \|v_{\tilde{x}} - v_{x}\| & \leqslant \|v_{\tilde{x}} - h_{\tilde{x}}\| + \|h_{\tilde{x}} - h_{x}\| + \|h_{x} - v_{x}\| \nonumber\\
    & \leqslant \frac{Kd^3\sqrt{\epsilon}}{2^d}\max(\|\tilde{x}\|,\|x_*\|) + \frac{L_{c_0} d^2}{2^d}\|\tilde{x} - x\| + \frac{Kd^3\sqrt{\epsilon}}{2^d}\max(\|x\|,\|x_*\|) + \frac{4}{2^{d/2}}\|\eta\|\nonumber\\
    & \leqslant \frac{Kd^3\sqrt{\epsilon}}{2^d}\max(\|x - \lambda\al v_{x}\|,\|x_*\|) + \al\frac{L_{c_0} d^2}{2^d}\|v_{x}\| + \frac{Kd^3\sqrt{\epsilon}}{2^d}\max(\|x\|,\|x_*\|) + \frac{4}{2^{d/2}}\|\eta\| \nonumber\\
    & \leqslant \frac{Kd^3\sqrt{\epsilon}}{2^d}\max(\|x\| + \al \| v_{x}\|,\|x_*\|) + \al\frac{L_{c_0} d^2}{2^d}\|v_{x}\| + \frac{Kd^3\sqrt{\epsilon}}{2^d}\max(\|x\|,\|x_*\|) + \frac{4}{2^{d/2}}\|\eta\|\nonumber \\
    & \leqslant \frac{Kd^3\sqrt{\epsilon}}{2^d}\left(2 + \al\frac{Cd}{2^d}\right)\max(\|x\|,\|x_*\|) + \al \frac{L_{c_0} d^2}{2^d}\|v_{x}\| + \frac{6}{2^{d/2}}\|\eta\| \label{first_bound_onsubgrad_diff} %\\
    %& \leqslant \frac{K\sqrt{\epsilon}d^3}{2^d}\left(2 + \al\frac{Cd}{2^d}\right)\max(\|x_t\|,\|x_*\|) + \al \frac{L_{c_0} d^2}{2^d}\|v_{x_t}\| + \frac{4C_2\|x_*\|}{2^{d}d^{48}} 
\end{align} where we used Lemma \ref{h_lipschitz_lemma} and Lemma \ref{conc_v_to_h} in the second inequality, the definition of $\tilde{x}$ in the third inequality and Lemma \ref{bound_on_norm_of_v} in the last inequality for some numerical constant $C$. Now, we lower bound $\|v_{x}\|$. Since $x \notin \Seps_{\beta}$ we have that \begin{align}
    \|v_{x}\| & \geqslant \|h_{x}\| - \|h_{x} - v_{x}\| \nonumber \\
    & \geqslant \frac{1}{2^d}\max(\|x\|,\|x_*\|)\left(\beta - Kd^3\sqrt{\epsilon} - 2\|\eta\|\frac{2^{d/2}}{\|x_*\|}\right) \nonumber\\
    & = \frac{1}{2^d}\max(\|x\|,\|x_*\|)\left(3Kd^3\sqrt{\epsilon} + 9\|\eta\|\frac{2^{d/2}}{\|x_*\|}\right) \label{bd_w_noise_on_vtilde_diff}\\
    & \geqslant \frac{3Kd^3\sqrt{\epsilon}}{2^d}\max(\|x\|,\|x_*\|) \label{one_third_lower_bound_on_subgrad}
\end{align} where we used the definition of $\beta$ and Lemma \ref{conc_v_to_h} in the second inequality. Note that this proves \eqref{norm_v_lower_bound}. Applying \eqref{bd_w_noise_on_vtilde_diff} to equation \eqref{first_bound_onsubgrad_diff}, we attain \begin{align*}
    \|v_{\tilde{x}} - v_{x}\| & \leqslant \frac{2}{3}\|v_x\| +  \al\frac{Cd}{2^d}\cdot\frac{Kd^3\sqrt{\epsilon}}{2^d}\max(\|x\|,\|x_*\|) + \al \frac{L_{c_0} d^2}{2^d}\|v_{x}\| \\
    & \leqslant \frac{1}{3}\left(2 + \al \frac{Cd}{2^d}\right)\|v_{x}\| + \al \frac{L_{c_0} d^2}{2^d}\|v_{x}\| \\
    & \leqslant \left(\frac{2}{3} +  \frac{\al}{3}\cdot\frac{\tilde{C}d^2}{2^d}\right)\|v_{x}\| \\
    & \leqslant \frac{5}{6}\|v_{x}\|
\end{align*} where $\tilde{C} = C + L_{c_0}$. In the first inequality, we used \eqref{bd_w_noise_on_vtilde_diff}. In the second inequality, we used \eqref{one_third_lower_bound_on_subgrad}. The last inequality follows by choosing $c_3$ in the upper bound $\al \leqslant c_3\frac{2^d}{d^2}$ small enough so that $\frac{\al}{3}\frac{\tilde{C}d^2}{2^d} \leqslant \frac{1}{6}$. 
    
\end{proof}

\subsubsection{Convergence to neighborhoods of $x_*$ and $-\rho_d x_*$} \label{conv_proof_first_subsection}

Using Lemma \ref{closeness_of_subgrads}, we can now show that the iterates of our algorithm make sufficient progress at each step so that they eventually are in $\Seps_{\beta}$ after a polynomial number of iterations.

\begin{lem} \label{conv_to_neighborhoods_lemma} Suppose Assumptions A1-A3 are satisfied and set $\beta:= 4Kd^3\sqrt{\epsilon} + 11\|\eta\|2^{d/2}/\|x_*\|$ where $K$ is a numerical constant . Let $A \in \R^{m \times n}$ satisfy the RRCP with respect to $\G$ with constant $\epsilon$. Let $\G$ be such that $W_{i} \in \R^{n_i \times n_{i-1}}$ satisfy the WDC with constant $\epsilon$ for all $i \in [d]$. For $x_t \notin \mathcal{S}_{\beta}$, we have \begin{align*}
    f(x_{t+1}) - f(x_t)  \leqslant -\al \frac{9K^2d^6\epsilon}{6 (2^{2d})}\|x_*\|^2.
\end{align*} Moreover, there exists an $N \leqslant \frac{6f(x_0)(2^{2d})}{9K^2d^6 \al \epsilon \|x_*\|^2}$ such that $x_N \in \mathcal{S}_{\beta}$ where $x_0$ is the initial iterate of our algorithm and $\al > 0$ is the step size.
\end{lem}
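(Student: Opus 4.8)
The plan is to combine the lower bound on subgradients from Lemma \ref{closeness_of_subgrads} with a standard descent inequality for subgradient methods on piecewise-quadratic functions. First I would establish a one-step descent estimate. Fix an iterate $x_t \notin \Seps_\beta$ (and, by Proposition \ref{bounded_iterates}, bounded away from the origin), write $x_{t+1} = \bar{x}_t - \al v_{\bar{x}_t}$, and note that the negation step only decreases $f$, so it suffices to bound $f(x_{t+1}) - f(\bar{x}_t)$. Since $f$ is piecewise quadratic, along the segment $[\bar{x}_t, x_{t+1}]$ the function $t \mapsto f(\bar{x}_t - t\al v_{\bar{x}_t})$ is piecewise quadratic; using the fundamental theorem of calculus and the description \eqref{subdifferential_definition} of the Clarke subdifferential as a convex hull of gradients, one gets
\begin{align*}
f(x_{t+1}) - f(\bar{x}_t) = -\al \int_0^1 \langle v_{\tilde{x}(\lambda)}, v_{\bar{x}_t}\rangle\, d\lambda
\end{align*}
for appropriate $v_{\tilde{x}(\lambda)} \in \partial f(\tilde{x}(\lambda))$ where $\tilde{x}(\lambda) = \bar{x}_t - \lambda \al v_{\bar{x}_t}$. (Some care is needed at the finitely many breakpoints, but these form a measure-zero set.)

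Next I would control the integrand from below. Writing $\langle v_{\tilde{x}(\lambda)}, v_{\bar{x}_t}\rangle = \|v_{\bar{x}_t}\|^2 - \langle v_{\bar{x}_t} - v_{\tilde{x}(\lambda)}, v_{\bar{x}_t}\rangle \geqslant \|v_{\bar{x}_t}\|^2 - \|v_{\tilde{x}(\lambda)} - v_{\bar{x}_t}\|\,\|v_{\bar{x}_t}\|$, and invoking the Lipschitz-type bound \eqref{v_tilde_lipschitz} from Lemma \ref{closeness_of_subgrads}, namely $\|v_{\tilde{x}(\lambda)} - v_{\bar{x}_t}\| \leqslant \tfrac{5}{6}\|v_{\bar{x}_t}\|$, I obtain $\langle v_{\tilde{x}(\lambda)}, v_{\bar{x}_t}\rangle \geqslant \tfrac{1}{6}\|v_{\bar{x}_t}\|^2$ for every $\lambda \in [0,1]$. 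Therefore
\begin{align*}
f(x_{t+1}) - f(x_t) \leqslant f(x_{t+1}) - f(\bar{x}_t) \leqslant -\frac{\al}{6}\|v_{\bar{x}_t}\|^2.
\end{align*}
Now I apply the lower bound \eqref{norm_v_lower_bound}, $\|v_{\bar{x}_t}\| \geqslant \tfrac{3Kd^3\sqrt{\epsilon}}{2^d}\|x_*\|$ — valid because $\bar{x}_t = \pm x_t$ is also outside $\Seps_\beta$ (the set $\Seps_\beta$ is symmetric under negation only up to the $-\rho_d$ shift, so here I should instead note that whichever of $x_t, -x_t$ is selected still satisfies the hypotheses of Lemma \ref{closeness_of_subgrads}, or alternatively run the descent estimate directly with $x_t$ when $\bar x_t = x_t$ and observe the negation step cannot hurt) — giving $f(x_{t+1}) - f(x_t) \leqslant -\tfrac{\al}{6} \cdot \tfrac{9K^2 d^6 \epsilon}{2^{2d}}\|x_*\|^2$, which is the claimed per-step decrease.

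Finally I would telescope. Since $f \geqslant 0$ and $f$ decreases by at least $\tfrac{9K^2 d^6 \al\epsilon}{6\cdot 2^{2d}}\|x_*\|^2$ at every step for which $x_t \notin \Seps_\beta$, the number of such steps is at most $f(x_0)$ divided by this quantity, i.e. $N \leqslant \tfrac{6 f(x_0) 2^{2d}}{9K^2 d^6 \al \epsilon \|x_*\|^2}$; hence some $x_N \in \Seps_\beta$. The main obstacle I anticipate is the bookkeeping around the negation step and the precise handling of the piecewise-quadratic structure: I must make sure the descent inequality is valid at the breakpoints of $f$ along the descent ray, and that applying Lemma \ref{closeness_of_subgrads} to $\bar{x}_t$ (rather than $x_t$) is legitimate — this requires checking that $\bar x_t$ still lies outside both $\Seps_\beta$ and the ball $\mathcal B(0, c_0\|x_*\|)$, which should follow since negation preserves norm and $\Seps_\beta$'s membership is governed by $\|h_x\|$, with $\|h_{-x}\|$ controlled by the same analysis. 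A secondary subtlety is that Lemma \ref{closeness_of_subgrads} already builds in the step-size smallness $\al \leqslant c_3 2^d/d^2$ needed to make the $\tfrac56$ constant work, so I would simply carry that assumption through.
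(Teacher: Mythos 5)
Your proposal is correct and follows essentially the same route as the paper: reduce to bounding $f(\bar{x}_t - \al v_{\bar{x}_t}) - f(\bar{x}_t)$ since the negation step only helps, control the change along the descent segment using \eqref{v_tilde_lipschitz} and \eqref{norm_v_lower_bound} from Lemma \ref{closeness_of_subgrads}, and telescope against $f(x_0)\geqslant f \geqslant 0$. The only (immaterial) difference is that you integrate the directional derivative along the segment, whereas the paper invokes the Clarke generalized mean value theorem to get a single intermediate subgradient $\tilde{v}_{\tilde{x}_t}$ with $\tilde{x}_t=\bar{x}_t-\lambda\al v_{\bar{x}_t}$; both devices are covered by the uniform-in-$\lambda$ bound \eqref{v_tilde_lipschitz}, and your flagged concern about applying Lemma \ref{closeness_of_subgrads} at $\bar{x}_t=-x_t$ is shared by (and handled no more explicitly in) the paper's own proof.
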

\begin{proof}[Proof of Lemma \ref{conv_to_neighborhoods_lemma}]
Recall that by Proposition \ref{bounded_iterates}, we may assume $x_t \notin \mathcal{B}(0,c_0\|x_*\|)$ where $c_0$ is a constant. We first consider the case when $\bar{x}_t = -x_t$. Then we must have that $f(\bar{x}_t) < f(x_t)$. Hence for any $v_{\bar{x}_t} \in \partial f(\bar{x}_t)$, we have \begin{align*}
    f(x_{t+1}) - f(x_t) & = f(x_{t+1}) - f(\bar{x}_t) + f(\bar{x}_t) - f(x_t) < f(\bar{x}_t - \al v_{\bar{x}_t}) - f(\bar{x}_t)
\end{align*} where we used $f(\bar{x}_t) < f(x_t)$ and the definition of $x_{t+1}$ in the first inequality. Thus observe that it suffices to establish the inequality for $f(\bar{x}_t - \al v_{\bar{x}_t}) - f(\bar{x}_t)$ since this will also establish the case when $\bar{x}_t = x_t$.

Now, choose $v_{\bar{x}_t} \in \partial f(\bar{x}_t)$. By the generalized mean value theorem for the Clarke subdifferential (Theorem 8.13 in \cite{Clason2017}), there exists a $\lambda \in [0,1]$ and $\tilde{v}_{\tilde{x}_t} \in \partial f(\tilde{x}_t)$ where $\tilde{x}_t = \bar{x}_t - \lambda \al v_{\bar{x}_t}$ such that we have \begin{align*}
     f(\bar{x}_{t} - \alpha v_{\bar{x}_t}) - f(\bar{x}_t) 
    & = \langle \tilde{v}_{\tilde{x}_t}, -\al v_{\bar{x}_t}\rangle \nonumber \nonumber\\
    & = \langle v_{\bar{x}_t},-\al v_{\bar{x}_t} \rangle + \langle \tilde{v}_{\tilde{x}_t}-v_{\bar{x}_t},-\al v_{\bar{x}_t} \rangle \nonumber\\
    & \leqslant -\al\|v_{\bar{x}_t}\|^2 + \al\|\tilde{v}_{\tilde{x}_t} - v_{\bar{x}_t}\|\|v_{\bar{x}_t}\| \\
    & = - \al \|v_{\bar{x}_t}\|\left(\|v_{\bar{x}_t}\| - \|\tilde{v}_{\tilde{x}_t} - v_{\bar{x}_t}\|\right)
\end{align*} where we used the mean value theorem in the first equality.

We can now use our result from Section \ref{conv_proof_zero_subsection} to bound $\|\tilde{v}_{\tilde{x}_t} - v_{\bar{x}_t}\|$ and $\|v_{\bar{x}_t}\|$ from above and below, respectively. Observe that by Lemma \ref{closeness_of_subgrads}, we have \begin{align*}
    f(\bar{x}_t - \al v_{\bar{x}_t}) - f(\bar{x}_t) &  \leqslant -\al \|v_{\bar{x}_t}\|(\|v_{\bar{x}_t}\|-\|\tilde{v}_{\tilde{x}_t} - v_{\bar{x}_t}\|) \leqslant -\al\left(1 - \frac{5}{6}\right)\|v_{\bar{x}_t}\|^2 = -\frac{1}{6}\al\|v_{\bar{x}_t}\|^2
\end{align*} where we used \eqref{v_tilde_lipschitz} in the second inequality. But by our lower bound on $\|v_{\bar{x}_t}\|$, we have \begin{align*}
    f(\bar{x}_t - \al v_{\bar{x}_t}) - f(\bar{x}_t) & \leqslant -\frac{1}{6}\al\|v_{\bar{x}_t}\|^2 \leqslant -\al \frac{9K^2d^6\epsilon}{6 (2^{2d})}\|x_*\|^2
\end{align*} where we used \eqref{norm_v_lower_bound} in the second inequality. Hence there are at most $\frac{6f(x_0)(2^{2d})}{9K^2d^6 \al \epsilon \|x_*\|^2}$ iterations for which $x_t \notin \Seps_{\beta}$ where $x_0$ is the initial iterate of our algorithm. Thus there exists a natural number $N$ such that $N \leqslant \frac{6f(x_0)(2^{2d})}{9K^2d^6 \al \epsilon \|x_*\|^2}$ and $x_N \in \Seps_{\beta}$.

\end{proof}

\subsubsection{Convergence to neighborhood of $x_*$} \label{conv_proof_second_subsection}

We now show that if any iterate is in $\Seps_{\beta}$, then the negation step of the algorithm (Steps \ref{alg:cond1}--\ref{alg:endif}) ensures that our iterates will now be in a neighborhood of $x_*$ as opposed to a neighborhood of $-\rho_d x_*$. We will use the following result that $\Seps_{\beta}$ is contained in the union of neighborhoods of the true solution and a negative multiple thereof if $\beta$ is sufficiently small:

\begin{lem} \label{Seps_lemma}
If  $0 < 24\pi^2d^6\sqrt{\beta} \leqslant 1$, then \begin{align*}
    S_{\beta} \subset \mathcal{B}(x_*,70000\pi^2 d^9\beta\|x_*\|) \cup \mathcal{B}(-\rho_d x_*, 77422\pi^2d^{12}\sqrt{\beta}\|x_*\|).
 \end{align*}
\end{lem}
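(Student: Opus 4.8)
The plan is to regard $\|h_x\|$ as a function of the two scalars $\theta := \angle(x,x_*) = \overline{\theta}_{0,x}$ and $\|x\|$, to exploit the fact (read off directly from the definition of $h_x$) that $h_x \in \mathrm{span}\{\hat{x}_*,\hat{x}\}$, and to show the sublevel set in \eqref{Sbeta_def} collapses onto two small neighborhoods. Writing $h_x = c_1(\theta)\,\hat{x}_* + c_2(\theta,\|x\|)\,\hat{x}$, the definition gives $|c_1| = \tfrac{\|x_*\|}{2^d}\,\tfrac{\pi - 2\overline{\theta}_{d,x}}{\pi}\prod_{i=0}^{d-1}\tfrac{\pi - \overline{\theta}_{i,x}}{\pi}$ and $c_2 = \tfrac{1}{2^d}\big(\|x\| - \|x_*\|\,R(\theta)\big)$ with $R(\pi) = \rho_d$ (compare the definition of $\rho_d$). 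Projecting $h_x$ onto the unit vector of $\mathrm{span}\{\hat{x}_*,\hat{x}\}$ orthogonal to $\hat{x}$, and onto $\hat{x}$ itself, yields the two clean lower bounds $\|h_x\| \ge |c_1|\,\sin\theta$ and $\|h_x\| \ge |c_2 \pm c_1\cos\theta| \ge |c_2| - |c_1|$. The idea is to use the first to pin $\theta$ near $0$ or $\pi$, the second to control $\|x\|$, and then to linearize in each regime.

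First I would establish the a priori bound $\|x\| \le 3\|x_*\|$ on $\mathcal{S}_\beta$: since $R(\theta) \in [0,1]$ (a short telescoping identity for $\prod_{i}\tfrac{\pi-\overline{\theta}_{i,x}}{\pi}$ gives $\sum_{i}\tfrac{\sin\overline{\theta}_{i,x}}{\pi}\prod_{j>i}\tfrac{\pi-\overline{\theta}_{j,x}}{\pi}\le 1$) and $|c_1| \le \tfrac{\|x_*\|}{2^d}$, the bound $\|h_x\| \ge |c_2|-|c_1| \ge \tfrac{1}{2^d}(\|x\|-2\|x_*\|)$ together with $\|h_x\| \le \tfrac{\beta}{2^d}\max(\|x\|,\|x_*\|)$ and small $\beta$ forces $\|x\|\le 3\|x_*\|$; hence $\max(\|x\|,\|x_*\|)\le 3\|x_*\|$ everywhere on $\mathcal{S}_\beta$. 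Next I would lower-bound $|c_1|$ uniformly in $\theta\in[0,\pi]$: each factor $\tfrac{\pi-\overline{\theta}_{i,x}}{\pi}$ with $i\ge 1$ lies in $[\tfrac12,1]$ because $\overline{\theta}_{i,x} = g(\overline{\theta}_{i-1,x})\in[0,\pi/2]$; the factor $\tfrac{\pi-2\overline{\theta}_{d,x}}{\pi}$ is bounded below by a universal constant for $d\ge 2$ since $\overline{\theta}_{d,x}\le g(\pi/2) = \cos^{-1}(1/\pi) < \pi/2$; and using the monotone decay of $\overline{\theta}_{i,x}$ toward $0$ one gets $\prod_{i=1}^{d-1}\tfrac{\pi-\overline{\theta}_{i,x}}{\pi}\ge \exp\!\big(-\tfrac{2}{\pi}\sum_{i\ge 1}\overline{\theta}_{i,x}\big) \ge d^{-C}$. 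Thus $|c_1|\ge \tfrac{\|x_*\|}{2^d}\,d^{-C}\,\tfrac{\pi-\theta}{\pi}$, and combining with $\|h_x\|\ge|c_1|\sin\theta$, $x\in\mathcal{S}_\beta$, and the a priori bound gives $(\pi-\theta)\sin\theta \le C'd^{C}\beta$, which forces either $\theta \le C''d^{C}\beta$ or $\pi-\theta \le C''d^{C/2}\sqrt{\beta}$ — exactly the two scales ($\beta$ and $\sqrt{\beta}$) in the statement.

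It remains to convert each angular regime into the claimed Euclidean ball. If $\theta$ is small then all $\overline{\theta}_{i,x} = O(\theta)$, so $c_1 = \tfrac{\|x_*\|}{2^d}(1 + O(d\theta))$ and $R(\theta) = O(d\theta)$; feeding this into the $\hat{x}$- and $\hat{x}_*$-projections of $h_x$ for $x\in\mathcal{S}_\beta$ forces $\big|\,\|x\|-\|x_*\|\,\big|\le \mathrm{poly}(d)\,\beta\|x_*\|$, and since $\|x-x_*\|\le \|x\|\sin\theta + \big|\,\|x\|-\|x_*\|\,\big|$ with $\theta\le C''d^{C}\beta$ we land in $\mathcal{B}(x_*,70000\pi^2 d^9\beta\|x_*\|)$. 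If instead $\pi-\theta$ is small I would compare the sequence $(\overline{\theta}_{i,x})_i$ with $(\breve{\theta}_i)_i$: since $g'(\pi)=0$ one has $|\overline{\theta}_{1,x}-\breve{\theta}_1| = O((\pi-\theta)^2)$, and since $g$ is non-expansive on $[0,\pi/2]$ this persists for every $i$, giving $|R(\theta)-\rho_d|\le \mathrm{poly}(d)(\pi-\theta)$ and $|c_1|\le \tfrac{\|x_*\|}{2^d}\tfrac{\pi-\theta}{\pi}$; the $\hat{x}$-projection then yields $\big|\,\|x\|-\rho_d\|x_*\|\,\big|\le \mathrm{poly}(d)(\pi-\theta)\|x_*\|$, and with $\hat{x} = -\hat{x}_* + O(\pi-\theta)$ and $\pi-\theta\le C''d^{C/2}\sqrt\beta$ we land in $\mathcal{B}(-\rho_d x_*,77422\pi^2 d^{12}\sqrt\beta\|x_*\|)$. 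The union of the two regimes is the assertion.

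The logical skeleton above is short; the real work — and the main obstacle — is the quantitative bookkeeping that produces the precise exponents $d^9$ and $d^{12}$. One must extract a sharp enough uniform-in-$\theta$ polynomial lower bound on $\prod_{i=0}^{d-1}\tfrac{\pi-\overline{\theta}_{i,x}}{\pi}$ and on $\tfrac{\pi-2\overline{\theta}_{d,x}}{\pi}$, and, in the $\theta\to\pi$ regime, an explicit polynomial bound on how fast $R(\theta)$ (and hence $\rho_d$) responds to a perturbation of $\theta$ away from $\pi$ — equivalently on the separation rate of the perturbed angle sequence from $(\breve{\theta}_i)_i$ as it flows through $g$. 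The hypothesis $24\pi^2 d^6\sqrt{\beta}\le 1$ is precisely the smallness that keeps all of these first-order expansions valid and guarantees that the sublevel set really does break into the two advertised pieces rather than remaining connected.
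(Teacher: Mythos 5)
Your proposal is correct and follows essentially the same route as the paper's proof (Proposition \ref{Seps}): decompose $h_x$ in the plane $\mathrm{span}\{\hat{x}_*,\hat{x}\}$, use the sublevel condition to get two scalar inequalities, establish an a priori bound on $\|x\|$ over $\Seps_\beta$, derive the dichotomy $\theta = O(\mathrm{poly}(d)\beta)$ or $\pi-\theta = O(\mathrm{poly}(d)\sqrt{\beta})$, pin $\|x\|$ near $\|x_*\|$ resp.\ $\rho_d\|x_*\|$ in each regime, and convert polar estimates to the two Euclidean balls. The only deviations are minor — you drive the dichotomy from the component of $h_x$ orthogonal to $\hat{x}$ (isolating the $\hat{x}_*$-coefficient, whose angular lower bound is exactly the paper's \eqref{piprod_lowerbound} and \eqref{pi_2thetad_geq_half}) instead of the paper's three-case analysis, you get a sharper a priori bound $\|x\|\leqslant 3\|x_*\|$ than the paper's $6d$, and the deferred $\mathrm{poly}(d)$ bookkeeping you flag is precisely what the paper carries out (via \eqref{pi_Oone_bound}, \eqref{sum_prod_upperbound}, and the $\Gamma_d$ perturbation bound) to land on the stated exponents $d^9$ and $d^{12}$.
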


\noindent We also need the following lemma which shows that the objective function is smaller near $x_*$ than near $-\rho_d x_*$:

\begin{lem} \label{obj_function_bigger_near_negmult} Fix $0 < \epsilon < 1/(16\pi d^2)^2$ and suppose Assumption A3 is satisfied. Suppose that $A \in \R^{m \times n_d}$ satisfies the RRCP with respect to $\G$ with constant $\epsilon$ and $\G$ is such that each $W_i \in \R^{n_i \times n_{i-1}}$ satisfies the WDC with constant $\epsilon$. Then for any $\phi_d \in [\rho_d,1]$, we have that \begin{align}
    f(x) < f(y) \label{tweakbound}
\end{align} for all $x \in \mathcal{B}(\phi_d x_*, r_1d^{-12}\|x_*\|)$ and $y \in \mathcal{B}(-\phi_d x_*, r_1d^{-12}\|x_*\|)$ where $r_1$ is a universal constant.
\end{lem}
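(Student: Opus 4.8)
The plan is to reduce the claim to a comparison of $f$ at the two ball centers $\phi_d x_*$ and $-\phi_d x_*$, evaluate those two values exactly using the positive homogeneity of the bias-free network $\G$, estimate the resulting bilinear forms with the WDC and RRCP, and finally check that the separation of the two critical values beats all the approximation and oscillation errors. \emph{Reduction.} Since $\phi_d\in[\rho_d,1]\subset(0,1]$, every point $z$ in either ball has $\|z\|\le \|x_*\|+r_1 d^{-12}\|x_*\|\le 2\|x_*\|$, so by Lemma~\ref{bound_on_norm_of_v} (which applies since $Kd^3\sqrt\epsilon\le1$ under the hypothesis on $\epsilon$) together with Assumption A2, every subgradient of $f$ on either ball has norm at most $Cd\,2^{-d}\|x_*\|$ for a universal $C$. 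By the generalized mean value theorem (Theorem~8.13 in \cite{Clason2017}), $f$ therefore oscillates by at most $2Cr_1\,2^{-d}d^{-11}\|x_*\|^2$ over each ball, so it suffices to prove $f(-\phi_d x_*)-f(\phi_d x_*)\ge 5Cr_1\,2^{-d}d^{-11}\|x_*\|^2$ for every $\phi_d\in[\rho_d,1]$.

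\emph{Exact values and the gap.} Because $\G(tz)=t\,\G(z)$ for $t>0$, we have $|A\G(\pm\phi_d x_*)|=\phi_d|A\G(\pm x_*)|$, and expanding the squared norms gives exactly
\begin{align*}
f(\phi_d x_*) &= \tfrac12(\phi_d-1)^2\||A\G(x_*)|\|^2 - (\phi_d-1)\langle|A\G(x_*)|,\eta\rangle + \tfrac12\|\eta\|^2,\\
f(-\phi_d x_*) &= \tfrac12\phi_d^2\||A\G(-x_*)|\|^2 - \phi_d\langle|A\G(-x_*)|,|A\G(x_*)|+\eta\rangle + \tfrac12\||A\G(x_*)|+\eta\|^2 .
\end{align*}
Subtracting and collecting the pure-$\eta$ terms into $\mathcal{E}_\eta$, with $|\mathcal{E}_\eta| = O(2^{-d/2}\|\eta\|\,\|x_*\| + \|\eta\|^2) = O(2^{-d}d^{-48}\|x_*\|^2)$ by A2 and $\||A\G(\pm x_*)|\| = O(2^{-d/2}\|x_*\|)$ (a consequence of the WDC), one is left with
\begin{align*}
f(-\phi_d x_*)-f(\phi_d x_*) = \tfrac12\phi_d^2\big(\||A\G(-x_*)|\|^2-\||A\G(x_*)|\|^2\big) + \phi_d\big(\||A\G(x_*)|\|^2-\langle|A\G(-x_*)|,|A\G(x_*)|\rangle\big) + \mathcal{E}_\eta .
\end{align*}

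\emph{Evaluating the bilinear forms.} Here the deterministic conditions enter. Applying the RRCP with $x_2=x_4=0$ (so that $\G(0)=0$) and $\Phi_{z,z}=I_n$ gives $\||A\G(\pm x_*)|\|^2=\|\G(\pm x_*)\|^2 + O_1(L\epsilon\|\G(\pm x_*)\|^2)$, while iterating the WDC gives $\|\G(\pm x_*)\|^2 = 2^{-d}\|x_*\|^2(1+O_1(d\epsilon))$; in particular the $\phi_d^2$ difference above is $O(\epsilon d\,2^{-d}\|x_*\|^2)$. For the cross term, the RRCP (again with $x_2=x_4=0$) gives $\langle|A\G(-x_*)|,|A\G(x_*)|\rangle = \langle\G(-x_*),\Phi_{\G(-x_*),\G(x_*)}\G(x_*)\rangle + O_1(L\epsilon\|\G(-x_*)\|\|\G(x_*)\|)$, and since $M_{\hat z\leftrightarrow\hat w}w=\|w\|\hat z$ one computes $\langle z,\Phi_{z,w}w\rangle = \|z\|\|w\|\big(\tfrac{\pi-2\theta_{z,w}}{\pi}\cos\theta_{z,w}+\tfrac{2\sin\theta_{z,w}}{\pi}\big)$. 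The angle $\theta_{\G(-x_*),\G(x_*)}$ follows the layerwise recursion $\theta\mapsto g(\theta)$ started from $\angle(-x_*,x_*)=\pi$, hence equals $\breve\theta_d$ up to the standard WDC angle-propagation error; and telescoping the identity $\cos g(\theta)=\frac{(\pi-\theta)\cos\theta+\sin\theta}{\pi}$ — in which the factor $\frac{\pi-\breve\theta_0}{\pi}=0$ kills the boundary term — yields $\cos\breve\theta_d=\sum_{i=0}^{d-1}\frac{\sin\breve\theta_i}{\pi}\prod_{j=i+1}^{d-1}\frac{\pi-\breve\theta_j}{\pi}$, so that $\frac{\pi-2\breve\theta_d}{\pi}\cos\breve\theta_d+\frac{2\sin\breve\theta_d}{\pi}=\rho_d$. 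Therefore $\||A\G(x_*)|\|^2-\langle|A\G(-x_*)|,|A\G(x_*)|\rangle = 2^{-d}\|x_*\|^2(1-\rho_d) + (\text{WDC/RRCP and angle errors})$, and combining everything,
\begin{align*}
f(-\phi_d x_*)-f(\phi_d x_*) \ge \frac{\phi_d(1-\rho_d)}{2^d}\|x_*\|^2 - (\text{errors}).
\end{align*}

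\emph{Finishing, and the main obstacle.} A direct estimate on $g$ shows $\breve\theta_d\in(0,\pi/2]$ with $\breve\theta_d\ge c/d$, and that $t\mapsto 1-\frac{\pi-2t}{\pi}\cos t-\frac{2\sin t}{\pi}$ is increasing on $(0,\pi/2]$ with derivative $\frac{(\pi-2t)\sin t}{\pi}$, so $1-\rho_d\ge c' d^{-2}$ for a universal $c'$; moreover $\phi_d\ge\rho_d\ge\rho_2>1/2$, since $\rho_d=1-h(\breve\theta_d)$ is increasing in $d$. The main term is thus at least $c''\,2^{-d}d^{-2}\|x_*\|^2$, while the accumulated errors are $O(2^{-d}d^{-11}\|x_*\|^2)$ (ball oscillation), $O(2^{-d}d^{-48}\|x_*\|^2)$ (noise), and $O(\epsilon d\,2^{-d}\|x_*\|^2)$ plus the angle-propagation error (WDC/RRCP). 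Choosing $r_1$ a sufficiently small universal constant makes the $d^{-11}$ term lose to the $d^{-2}$ term for all $d\ge2$; this, with the reduction step, proves the lemma. The delicate point — and the reason the statement carries such strong smallness hypotheses — is precisely this budgeting: the separation of the two critical values is only of order $2^{-d}d^{-2}\|x_*\|^2$, so one must verify that the WDC/RRCP errors, and especially the angle-propagation error (which does not contract well near the fixed point $\theta=0$ of $g$ and hence may scale like $d^{p}\sqrt\epsilon$), are genuinely of strictly smaller polynomial order in $d$; this is what the exponents in $r_1 d^{-12}$, in A2's $d^{-48}$, and in the ambient smallness $\epsilon<c_1 d^{-102}$ are there to secure.
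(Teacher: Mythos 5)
Your proposal is correct in substance but follows a genuinely different route from the paper. The paper proves the lemma by (i) uniformly approximating the noiseless objective $f_0$ by the idealized loss $\mathcal{F}$ of \eqref{expected_loss_full} (Lemma~\ref{obj_close_to_expectation}), (ii) establishing quadratic upper and lower bounds for $\mathcal{F}$ over the two balls (Lemma~\ref{poly_bounds_on_conc_objective}), and (iii) comparing, using $1-\rho_d\gtrsim (d+2)^{-2}$ from Lemma~\ref{bounds_on_rho_lemma}. You instead exploit positive homogeneity of the bias-free ReLU network to compute $f$ exactly at the two ball centers, estimate the resulting quadratic forms with the RRCP (with $x_2=x_4=0$) and the WDC, and then pass from the centers to the whole balls by an oscillation bound combining Lemma~\ref{bound_on_norm_of_v} with Lebourg's mean value theorem; your telescoping identity $\cos\breve{\theta}_d=\sum_{i=0}^{d-1}\tfrac{\sin\breve{\theta}_i}{\pi}\prod_{j=i+1}^{d-1}\tfrac{\pi-\breve{\theta}_j}{\pi}$, which makes $\tfrac{\pi-2\breve{\theta}_d}{\pi}\cos\breve{\theta}_d+\tfrac{2\sin\breve{\theta}_d}{\pi}=\rho_d$, is correct and is a neat shortcut that makes the cross term at angle $\breve{\theta}_d$ produce $\rho_d$ directly, with the angle transported through the layers via \eqref{conc_angle}. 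What each approach buys: yours avoids the two auxiliary lemmas about $\mathcal{F}$ entirely and works at the stated radius $r_1d^{-12}\|x_*\|$ rather than the paper's radius $\varphi=\epsilon$, at the price of re-using the subgradient machinery (Lemmas~\ref{conc_v_to_h}, \ref{bound_on_norm_of_v}); the paper's route yields reusable quantitative bounds \eqref{lemma5upperbound_quantity}--\eqref{lemma5lowerbound_quantity} valid at every point of the balls. Two caveats, neither fatal relative to the paper's own standard: your parenthetical claim that $\epsilon<1/(16\pi d^2)^2$ already ensures $Kd^3\sqrt{\epsilon}\leqslant 1$ is false for large $d$ (it only gives $d^3\sqrt{\epsilon}\lesssim d$), and your angle-propagation error $O(d\sqrt{\epsilon})$ beats the gap $\phi_d(1-\rho_d)\gtrsim d^{-2}$ only under smallness of $\epsilon$ stronger than the lemma's literal hypothesis; but the paper's proof has exactly the same slack (it invokes A2 though only A3 is stated, and its condition \eqref{boundonepsilontogetfunctionbound} effectively requires $\epsilon\lesssim d^{-12}$), and you flag this budgeting issue explicitly, with A1 covering it in every application.
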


\noindent These results are proven in Section \ref{proofs_for_second_subsection}. The main result of this section is as follows.

\begin{lem} \label{conv_to_Sbetaplus_lemma} Suppose Assumptions A1-A3 are satisfied and set $\beta:= 4Kd^3\sqrt{\epsilon} + 11\|\eta\|2^{d/2}/\|x_*\|$ where $K$ is a numerical constant. Let $A \in \R^{m \times n}$ satisfy the RRCP with respect to $\G$ with constant $\epsilon$. Let $\G$ be such that $W_{i} \in \R^{n_i \times n_{i-1}}$ satisfy the WDC with constant $\epsilon$ for all $i \in [d]$. If $x_t \in \mathcal{S}_{\beta}$, then $\bar{x}_t \in \mathcal{S}_{\beta}^+$, i.e., \begin{align*}
    \|\bar{x}_t - x_*\| \leqslant C_5 d^{12}\sqrt{\epsilon}\|x_*\| + C_6 d^92^{d/2}\|\eta\|
\end{align*} where $C_5$ and $C_6$ are numerical constants.
\end{lem}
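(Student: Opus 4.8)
The plan is to combine the two structural lemmas just stated---Lemma \ref{Seps_lemma}, which confines $\Seps_\beta$ to the union of a tiny ball around $x_*$ and a (still small) ball around $-\rho_d x_*$, and Lemma \ref{obj_function_bigger_near_negmult}, which says the objective is strictly smaller on a ball around $\phi_d x_*$ than on the mirror-image ball around $-\phi_d x_*$ for any $\phi_d \in [\rho_d,1]$---with Assumptions A1--A2, which force $\beta$ (hence every ball radius below) to be polynomially small in $d$. The argument splits on which of the two balls of Lemma \ref{Seps_lemma} contains $x_t$.

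First I would do the bookkeeping needed to invoke both lemmas. Substituting A1 ($\epsilon < c_1 d^{-102}$) and A2 ($\|\eta\| \le c_2\|x_*\|/(2^{d/2}d^{48})$) into $\beta = 4Kd^3\sqrt\epsilon + 11\|\eta\|2^{d/2}/\|x_*\|$ gives $\beta \le (4K\sqrt{c_1}+11c_2)d^{-48}$, so $\sqrt\beta = O(d^{-24})$, the hypothesis $24\pi^2 d^6\sqrt\beta \le 1$ of Lemma \ref{Seps_lemma} holds (taking $c_1,c_2$ small), and---crucially---both radii $70000\pi^2 d^9\beta\|x_*\|$ and $77422\pi^2 d^{12}\sqrt\beta\|x_*\|$ are at most $r_1 d^{-12}\|x_*\|$ for $c_1,c_2$ small. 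Hence $x_t \in \Seps_\beta = \Seps_\beta^+ \cup \Seps_\beta^-$, and whichever ball contains $x_t$ also lies (together with its reflection, which contains $-x_t$) inside the radius-$r_1 d^{-12}\|x_*\|$ regime in which Lemma \ref{obj_function_bigger_near_negmult} is applicable.

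If $x_t \in \Seps_\beta^+$, then $x_t \in \mathcal{B}(x_*, r_1 d^{-12}\|x_*\|)$ and $-x_t \in \mathcal{B}(-x_*, r_1 d^{-12}\|x_*\|)$, so Lemma \ref{obj_function_bigger_near_negmult} with $\phi_d = 1$ gives $f(x_t) < f(-x_t)$; the test on line \ref{alg:cond1} of Algorithm \ref{alg} therefore fails, $\bar x_t = x_t \in \Seps_\beta^+$, and reading off the radius of $\Seps_\beta^+$ and substituting $\beta$ gives $\|\bar x_t - x_*\| \le 70000\pi^2 d^9\beta\|x_*\| = 280000K\pi^2 d^{12}\sqrt\epsilon\|x_*\| + 770000\pi^2 d^9 2^{d/2}\|\eta\|$, i.e.\ the claim with $C_5 = 280000K\pi^2$ and $C_6 = 770000\pi^2$. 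If instead $x_t \in \Seps_\beta^-$, then $x_t \in \mathcal{B}(-\rho_d x_*, r_1 d^{-12}\|x_*\|)$ and $-x_t \in \mathcal{B}(\rho_d x_*, r_1 d^{-12}\|x_*\|)$, and Lemma \ref{obj_function_bigger_near_negmult} with $\phi_d = \rho_d$ (legitimate since $\rho_d \in [\rho_d,1]$ and A3 holds) gives $f(-x_t) < f(x_t)$; now the test on line \ref{alg:cond1} succeeds, so $\bar x_t = -x_t$, and the remaining task is to certify $\bar x_t \in \Seps_\beta^+$---that is, that negating the ``$-\rho_d x_*$ component'' of $\Seps_\beta$ lands in its ``$x_*$ component''---which I would do by tracking $-x_t$ through the definition of $\Seps_\beta$ via the closeness of any subgradient to $h$ (Lemma \ref{conc_v_to_h}) and the explicit form of $h$ near the positive ray through $x_*$.

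The main obstacle is this last point together with the parameter bookkeeping that supports it: everything hinges on the radii produced by Lemma \ref{Seps_lemma} being small enough---relative to the threshold $r_1 d^{-12}\|x_*\|$ of Lemma \ref{obj_function_bigger_near_negmult}---to push both $x_t$ and $-x_t$ into that lemma, which is precisely where the polynomial-in-$d$ smallness of $\epsilon$ in A1 and the $2^{-d/2}$-scaling of the noise in A2 get used; one must also keep careful track of which idealized critical point each iterate is near so the correct value of $\phi_d$ is selected. Once the case $x_t \in \Seps_\beta^-$ is shown to be sent into $\Seps_\beta^+$ by the negation step, the stated distance bound follows immediately from the definition of $\Seps_\beta^+$ and the substitution of $\beta$ carried out in the $\Seps_\beta^+$ case.
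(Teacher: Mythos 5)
Your bookkeeping and your first case track the paper's proof: the same substitution of A1--A2 gives $\beta=O(d^{-48})$, so Lemma \ref{Seps_lemma} applies and both components of $\Seps_{\beta}$ sit inside the radius-$r_1d^{-12}\|x_*\|$ balls in which Lemma \ref{obj_function_bigger_near_negmult} operates; when $x_t\in\Seps_{\beta}^+$, your application with $\phi_d=1$ gives $f(x_t)<f(-x_t)$, hence $\bar{x}_t=x_t\in\Seps_{\beta}^+$, and reading off the radius of $\Seps_{\beta}^+$ yields the stated bound (your explicit $C_5,C_6$ are consistent with the paper's unstated constants). This part is fine, and if anything your use of $\phi_d=1$ is cleaner than the paper's phrasing of the same step.

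The gap is the case $x_t\in\Seps_{\beta}^-$. You correctly obtain $\bar{x}_t=-x_t$ from Lemma \ref{obj_function_bigger_near_negmult} with $\phi_d=\rho_d$, but you leave the certification $-x_t\in\Seps_{\beta}^+$ as a ``remaining task,'' and the route you sketch --- tracking $-x_t$ through the definition of $\Seps_{\beta}$ via Lemma \ref{conc_v_to_h} and the explicit form of $h$ --- cannot close it. Here $-x_t$ lies within $77422\pi^2d^{12}\sqrt{\beta}\|x_*\|$ of $\rho_d x_*$, and in that small-angle regime the explicit formula (as written in the proof of Proposition \ref{Seps}) gives $\|h_{-x_t}\|\approx 2^{-d}(1-\rho_d)\|x_*\|$, while Lemma \ref{bounds_on_rho_lemma} gives $1-\rho_d\gtrsim (d+2)^{-2}$, which dwarfs $\beta=O(d^{-48})$; so $-x_t\notin\Seps_{\beta}$ at all. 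Likewise $\|-x_t-x_*\|\approx(1-\rho_d)\|x_*\|$ vastly exceeds the radius $70000\pi^2d^9\beta\|x_*\|$ of the ball defining $\Seps_{\beta}^+$, so no argument based on the definition of $\Seps_{\beta}$ can place $-x_t$ in $\Seps_{\beta}^+$, nor deliver the claimed distance bound for $\bar{x}_t$ in this case. For comparison, this is exactly the point the paper's own proof dispatches with the bare assertion ``$x_t\in\Seps_{\beta}^-$ and $-x_t\in\Seps_{\beta}^+$,'' so your instinct that something remains to be proved there is sound; but as written your proposal neither closes this step nor proposes a workable way to do so (closing it honestly requires a different argument, e.g.\ following the subsequent descent iterations from the vicinity of $\rho_d x_*$ using the monotone decrease of $f$, rather than membership in $\Seps_{\beta}^+$), so the attempt is incomplete precisely at the lemma's only delicate point.
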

\begin{proof}[Proof of Lemma \ref{conv_to_Sbetaplus_lemma}]
Suppose $x_t \in \Seps_{\beta}$. We require $\beta$ to satisfy the assumption of Lemma \ref{Seps_lemma} and for $\Seps_{\beta}$ to be contained in the balls of radius $r_1d^{-12}\|x_*\|$ from Lemma \ref{obj_function_bigger_near_negmult}. Recall that by assumption $0 < \epsilon < c_1 d^{-102}$ and $\|\eta\| \leqslant \frac{c_2\|x_*\|}{2^{d/2}d^{48}}$ for some constants $c_1$ and $c_2$. Choosing $c_1$ and $c_2$ sufficiently small enough, we can have that \begin{align}
    \beta  = 4K d^3 \sqrt{\epsilon} + \frac{11\|\eta\|2^{d/2}}{\|x_*\|} & \leqslant \frac{4K\sqrt{c_1}}{d^{48}}+ \frac{11c_2}{d^{48}}  \leqslant\frac{r_1^2}{(77422\pi^2)^2d^{48}}. \nonumber
\end{align} Hence $\beta$ satisfies the assumptions of Lemma \ref{Seps_lemma} and $77422\pi^2d^{12}\sqrt{\beta}\|x_*\| \leqslant r_1d^{-12}\|x_*\|.$ Note that this implies $\Seps_{\beta}^+ \subset \mathcal{B}(x_*,r_1d^{-12}\|x_*\|)$ while $\Seps_{\beta}^- \subset \mathcal{B}(-\rho_d x_*,r_1d^{-12}\|x_*\|)$. Therefore, we can apply equation \eqref{tweakbound} in Lemma \ref{obj_function_bigger_near_negmult} so that for any $y \in \Seps_{\beta}^-$ and $x \in \Seps_{\beta}^+$, $f(x) < f(y)$. Since $x_t \in \Seps_{\beta}$, either $x_t \in \Seps_{\beta}^+$ or $x_t \in \Seps_{\beta}^-$. If $x_t \in \Seps_{\beta}^+$, then $f(x_t) < f(-x_t)$ so $\bar{x}_t = x_t \in \Seps_{\beta}^+$. Otherwise, $x_t \in \Seps_{\beta}^-$ and $-x_t \in \Seps_{\beta}^+$ so that $f(-x_t) < f(x_t)$ meaning $\bar{x}_t = -x_t \in \Seps_{\beta}^+$. In either case, we must have that $\bar{x}_t \in \Seps_{\beta}^+$. By the definition of $\Seps_{\beta}^+$, this establishes the inequality \begin{align*}
    \|\bar{x}_t - x_*\| \leqslant C_5 d^{12}\sqrt{\epsilon}\|x_*\| + C_6 d^92^{d/2}\|\eta\|
\end{align*} for some numerical constants $C_5$ and $C_6$.

%We further remark that if $\bar{x}_t \in \mathcal{B}(x_*,r)$, then the iterates for $t' \geqslant t$ will continue to stay in a ball of radius $2r$ around $x_*$. To see this, note that by Lemma \ref{neg_inner_prod_lower_bd_grad} and our assumptions on our step size and noise, our descent direction is sufficiently bounded by \begin{align*}
    %\al \|v_{x_{t'}}\| \leqslant \frac{C}{d}\max(\|x_{t'}\|,\|x_*\|)
%\end{align*} where $C$ is a numerical constant.
\end{proof}
\subsubsection{Convergence to $x_*$ up to noise} \label{conv_proof_third_subsection}

Finally, we show that once in a neighborhood of $x_*$, the iterates of our algorithm will converge to $x_*$ up to the noise level in the measurements. We will use the following convexity-like property around the minimizer: 
\begin{lem} \label{convexity_lemma} Fix $0 < \epsilon < 1/(200^4d^6 )$. Suppose that $A \in \R^{m \times n_d}$ satisfies the RRCP with respect to $\G$ with constant $\epsilon$ and $\G$ is such that each $W_i \in \R^{n_i \times n_{i-1}}$ satisfies the WDC with constant $\epsilon$ for $i \in [d]$. Then for all $x \in \mathcal{B}(x_*,d\sqrt{\epsilon}\|x_*\|)$ and any $v_x \in \partial f(x)$, we have \begin{align*}
\left\| v_x - \frac{1}{2^d} (x - x_*) \right\| \leqslant \frac{1}{8}\frac{1}{2^d}\|x- x_*\| + \frac{2}{2^{d/2}}\|\eta\|.
\end{align*}
\end{lem}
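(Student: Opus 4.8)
The plan is to establish the estimate at points where $f$ is differentiable and then lift it to arbitrary subgradients: by the characterization of $\partial f$ recalled in the preliminaries, every $v_x\in\partial f(x)$ is a convex combination of limiting gradients $v_\ell=\lim_{\delta\downarrow0}\nabla f(x+\delta w_\ell)$, each of which is a limit of gradients at generic points arbitrarily close to $x$; since the asserted conclusion places $v_x$ in a Euclidean ball centered at $\tfrac1{2^d}(x-x_*)$ whose center and radius are continuous in $x$, it suffices to prove the bound for generic $x$ near $x_*$ and pass to the limit. At such an $x$, $A\G(x)$ has no zero coordinate, $\Lambda_x$ is the Jacobian of $\G$ at $x$, and
\[
\nabla f(x)=\Lambda_x^\top A_{\G(x)}^\top\big(|A\G(x)|-|A\G(x_*)|\big)-\Lambda_x^\top A_{\G(x)}^\top\eta .
\]
The noise term is immediate: for a unit $u$, $\|A_{\G(x)}\Lambda_x u\|^2=\langle A_{\G(x)}^\top A_{\G(x)}\Lambda_x u,\Lambda_x u\rangle$, which by the RRCP applied to the secant direction $\Lambda_x u$ (a limit of rescaled secants $\delta^{-1}(\G(x+\delta u)-\G(x))$) together with $\Phi_{\G(x),\G(x)}=I_n$ is at most $(1+L\epsilon)\|\Lambda_x\|^2\le 2\cdot2^{-d}$, using the WDC bound $\|\Lambda_x\|\le 1.2\cdot2^{-d/2}$. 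Hence $\|\Lambda_x^\top A_{\G(x)}^\top\eta\|\le 2\cdot2^{-d/2}\|\eta\|$, matching the noise term in the statement.

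Write $T:=\Lambda_x^\top A_{\G(x)}^\top\big(|A\G(x)|-|A\G(x_*)|\big)$; it remains to prove $\|T-\tfrac1{2^d}(x-x_*)\|\le\tfrac18\cdot2^{-d}\|x-x_*\|$. The central difficulty — and the reason this cannot just be read off from Lemma~\ref{conc_v_to_h} — is that every error must be controlled by $\|x-x_*\|$, not by $\|x_*\|$: a term of size $\epsilon\,\mathrm{poly}(d)\,2^{-d}\|x_*\|$, which is what a naive application of RRCP or WDC to $\G(x)$ and $\G(x_*)$ individually produces, is fatal when $x$ is extremely close to $x_*$ (note $T=0$ at $x=x_*$). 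I therefore split $|A\G(x)|-|A\G(x_*)|=A_{\G(x)}(\G(x)-\G(x_*))+(A_{\G(x)}-A_{\G(x_*)})\G(x_*)$, giving $T=T_1+T_2$. For $T_1=\Lambda_x^\top A_{\G(x)}^\top A_{\G(x)}(\G(x)-\G(x_*))$ the point is that $\G(x)-\G(x_*)$ is a \emph{short} secant — the WDC gives $\|\G(x)-\G(x_*)\|\le 1.2\cdot2^{-d/2}\|x-x_*\|$ — so the RRCP applied to the secant pair $\big(\G(x)-\G(x_*),\ \Lambda_x u\big)$ with $\Phi_{\G(x),\G(x)}=I_n$ yields
\[
T_1=\Lambda_x^\top(\G(x)-\G(x_*))+O_1\!\big(2L\epsilon\cdot2^{-d}\|x-x_*\|\big).
\]

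The sign-flip term $T_2=\Lambda_x^\top A_{\G(x)}^\top(A_{\G(x)}-A_{\G(x_*)})\G(x_*)$ is handled through the identity, valid at a generic $x$,
\[
A_{\G(x)}^\top A_{\G(x)}-A_{\G(x)}^\top A_{\G(x_*)}=2A_S^\top A_S,\qquad S:=\{\,i:\sgn((A\G(x))_i)\neq\sgn((A\G(x_*))_i)\,\},
\]
with $A_S$ the row-submatrix of $A$ indexed by $S$; thus $T_2=2(A_S\Lambda_x)^\top(A_S\G(x_*))$ and $\|T_2\|\le 2\|A_S\Lambda_x\|\,\|A_S\G(x_*)\|$. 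On the rows of $S$ the two images have opposite signs, so $|(A\G(x_*))_i|\le|(A(\G(x)-\G(x_*)))_i|$, whence $\|A_S\G(x_*)\|\le\|A(\G(x)-\G(x_*))\|\le 1.4\cdot2^{-d/2}\|x-x_*\|$ (the last step using the RRCP estimate $\|Av\|^2=\langle A_{\G(x)}^\top A_{\G(x)}v,v\rangle\le(1+L\epsilon)\|v\|^2$ on the secant $v=\G(x)-\G(x_*)$ and the WDC Lipschitz bound). For $\|A_S\Lambda_x\|$ the same identity gives $\|A_S\Lambda_x u\|^2=\tfrac12\langle(A_{\G(x)}^\top A_{\G(x)}-A_{\G(x)}^\top A_{\G(x_*)})\Lambda_x u,\Lambda_x u\rangle$, and since $\Phi_{\G(x),\G(x)}-\Phi_{\G(x),\G(x_*)}=I_n-\Phi_{\G(x),\G(x_*)}$ has operator norm $\le 4\theta_{\G(x),\G(x_*)}/\pi$, the RRCP yields $\|A_S\Lambda_x\|^2\le\big(2\theta_{\G(x),\G(x_*)}/\pi+L\epsilon\big)\|\Lambda_x\|^2$. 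The WDC propagates the small latent angle through the layers, $\theta_{\G(x),\G(x_*)}\le\theta_{x,x_*}+O(d\sqrt\epsilon)\lesssim d\sqrt\epsilon$ on $\mathcal B(x_*,d\sqrt\epsilon\|x_*\|)$, so $\|A_S\Lambda_x\|^2\lesssim d\sqrt\epsilon\cdot2^{-d}$; multiplying, $\|T_2\|\lesssim\mathrm{poly}(d)\,\epsilon^{1/4}\cdot2^{-d}\|x-x_*\|$, which for $\epsilon<1/(200^4d^6)$ is well below $\tfrac1{16}\cdot2^{-d}\|x-x_*\|$.

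Finally one reduces the deterministic quantity $\Lambda_x^\top(\G(x)-\G(x_*))$ to $\tfrac1{2^d}(x-x_*)$. Keeping things relative, $\Lambda_x^\top(\G(x)-\G(x_*))=\Lambda_x^\top\Lambda_x(x-x_*)+\Lambda_x^\top(\Lambda_x-\Lambda_{x_*})x_*$. The WDC estimate $\|\Lambda_x^\top\Lambda_x-2^{-d}I_k\|\le\mathrm{poly}(d)\sqrt\epsilon\cdot2^{-d}$ (the $\theta=0$ specialization of the estimate for $\Lambda_x^\top\Lambda_y$) turns the first summand into $\tfrac1{2^d}(x-x_*)+O_1\!\big(\mathrm{poly}(d)\sqrt\epsilon\cdot2^{-d}\|x-x_*\|\big)$. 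For the second summand, $\Lambda_x x_*$ and $\G(x_*)=\Lambda_{x_*}x_*$ differ only through the ReLU masks at the $d$ layers, and a layer-by-layer estimate shows the $i$-th mask discrepancy between $x$ and $x_*$ touches only an $O(\theta_i/\pi)$-fraction of neurons, where $\theta_i\lesssim\theta_{x,x_*}$ is the angle between the $i$-th layer images; tracking these contributions and using $\theta_{x,x_*}\|x_*\|\lesssim\|x-x_*\|$ together with $\|x-x_*\|\le d\sqrt\epsilon\|x_*\|$ gives $\|\Lambda_x^\top(\Lambda_x-\Lambda_{x_*})x_*\|\lesssim\mathrm{poly}(d)\,\epsilon^{1/4}\cdot2^{-d}\|x-x_*\|$. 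Collecting $T=T_1+T_2$ with these bounds, choosing the absolute constants in $\epsilon<1/(200^4d^6)$ so the accumulated error is $\le\tfrac18\cdot2^{-d}\|x-x_*\|$, and adding back the noise term from the first step, completes the proof. I expect this last reduction — getting an $\|x-x_*\|$-proportional, not $\|x_*\|$-proportional, bound for $\Lambda_x^\top(\Lambda_x-\Lambda_{x_*})x_*$ out of the WDC — to be the most delicate point, precisely because a careless triangle-inequality split of the two $\Lambda$'s reintroduces a $\|x_*\|$-scaled term and one must exploit that both $\Lambda$'s are built from the same weight matrices.
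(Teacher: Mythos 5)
Your proposal is correct in outline and, for everything the RRCP touches, it is essentially the paper's argument: the same separation of the noise term (bounded by $\tfrac{2}{2^{d/2}}\|\eta\|$ via $\|A_{\G(x)}\Lambda_x\|\lesssim 2^{-d/2}$), the same split of $|A\G(x)|-|A\G(x_*)|$ into the same-sign piece $A_{\G(x)}(\G(x)-\G(x_*))$ (controlled by RRCP plus the short-secant bound $\|\G(x)-\G(x_*)\|\leqslant 1.2\cdot 2^{-d/2}\|x-x_*\|$) and the sign-flip piece, and the same two key facts for the latter: on flip rows $|\langle a_i,\G(x_*)\rangle|\leqslant|\langle a_i,\G(x)-\G(x_*)\rangle|$, and the angle $\theta_{\G(x),\G(x_*)}\lesssim d\sqrt{\epsilon}$ so that $\Phi_{\G(x),\G(x_*)}\approx I$. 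You merely repackage the flip-term bound (distributing the angle-smallness into $\|A_S\Lambda_x\|$ and the sign-flip inequality into $\|A_S\G(x_*)\|$, rather than putting both into $\|(A_{\G(x)}-A_{\G(x_*)})\G(x_*)\|$ as the paper does), which works; note only that the identity $A_{\G(x)}^\top A_{\G(x)}-A_{\G(x)}^\top A_{\G(x_*)}=2A_S^\top A_S$ needs $A\G(x_*)$ to have no zero entries (rows with $\langle a_i,\G(x_*)\rangle=0$ enter with coefficient $1$), a harmless adjustment. The genuine divergence is the last, purely WDC step: the paper simply invokes Lemma A.9 of \cite{Huangetal2018}, which is exactly the statement $\|\Lambda_x^\top(\Lambda_x x-\Lambda_{x_*}x_*)-2^{-d}(x-x_*)\|\leqslant\tfrac{1}{16}2^{-d}\|x-x_*\|$, whereas you propose to rederive it via $\Lambda_x^\top\Lambda_x(x-x_*)+\Lambda_x^\top(\Lambda_x-\Lambda_{x_*})x_*$ and a layerwise mask argument. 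Your plan is sound, but this is the one place your sketch stops short of a proof: the phrase ``an $O(\theta_i/\pi)$-fraction of neurons flip'' must be made quantitative as a WDC estimate on quadratic forms over the flip set, namely $\sum_{i\in S_j}\langle w_i,u\rangle^2\leqslant(\tfrac{2\theta_{j-1}}{\pi}+4\epsilon)\|u\|^2$ obtained from $Q_{z,z}+Q_{w,w}-2Q_{z,w}$, applied together with the sign-flip inequality and the layerwise Lipschitz bound; a cruder route that bounds $\|(W_{j,+,x}-W_{j,+,x_*})z^*_{j-1}\|$ only by the per-row Lipschitz observation loses the $\sqrt{\theta+\epsilon}$ factor and yields $O(d)\cdot2^{-d}\|x-x_*\|$, which does not fit the $\tfrac18$ budget, and any bound of the form $\sqrt{\theta}\,\|x_*\|$ is not proportional to $\|x-x_*\|$ at all — you flag precisely this pitfall, so the gap is one of execution rather than of idea, and it can also be closed simply by citing the same lemma the paper does.
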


\noindent We now prove the following lemma.

\begin{lem} \label{conv_to_xstar_up_to_noise_lem}
Suppose Assumptions A1-A3 are satisfied and set $\beta:= 4Kd^3\sqrt{\epsilon} + 11\|\eta\|2^{d/2}/\|x_*\|$ where $K$ is a numerical constant. Let $A \in \R^{m \times n}$ satisfy the RRCP with respect to $\G$ with constant $\epsilon$. Let $\G$ be such that $W_{i} \in \R^{n_i \times n_{i-1}}$ satisfy the WDC with constant $\epsilon$ for all $i \in [d]$. Suppose $x_N \in \Seps_{\beta}^+$ for some $N \in \N$. Then for all $t \geqslant N$, we have that $\bar{x}_t \in \mathcal{B}(x_*, r_1d^{-12}\|x_*\|)$, $\bar{x}_t = x_t$, and \begin{align*}
    \|x_{t+1} - x_*\| \leqslant \tau^{t+1-N}\|x_N - x_*\| + \vartheta\frac{2^{d/2}}{d^2}\|\eta\|
\end{align*} where $\tau := 1 - \frac{7}{8}\frac{\al}{2^d} \in (0,1)$ and $\vartheta := \frac{2c_3}{1-\tau}$.
\end{lem}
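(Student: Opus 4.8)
The plan is to argue by induction on $t \geq N$, maintaining three coupled invariants at each step: (I1) $x_t \in \mathcal{B}(x_*, r_1 d^{-12}\|x_*\|)$; (I2) $\bar{x}_t = x_t$, i.e.\ the negation test of Steps \ref{alg:cond1}--\ref{alg:endif} is inactive; and (I3) the estimate $\|x_t - x_*\| \leq \tau^{t-N}\|x_N - x_*\| + \vartheta \tfrac{2^{d/2}}{d^2}\|\eta\|$. For the base case $t = N$: the hypothesis $x_N \in \Seps_\beta^+$ together with the containment $\Seps_\beta^+ \subset \mathcal{B}(x_*, r_1 d^{-12}\|x_*\|)$ already established inside the proof of Lemma \ref{conv_to_Sbetaplus_lemma} (which uses Assumptions A1--A2 to force $\beta$ small) gives (I1); then $-x_N \in \mathcal{B}(-x_*, r_1 d^{-12}\|x_*\|)$, so Lemma \ref{obj_function_bigger_near_negmult} with $\phi_d = 1$ yields $f(x_N) < f(-x_N)$, hence the algorithm sets $\bar{x}_N = x_N$, which is (I2); and (I3) is vacuous at $t = N$.

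For the inductive step, assume (I1)--(I3) at step $t$. By (I2) the update is simply $x_{t+1} = x_t - \al v_{x_t}$ with $v_{x_t} \in \partial f(x_t)$. After checking that $x_t$ lies in the neighborhood of $x_*$ on which the convexity-like estimate of Lemma \ref{convexity_lemma} is valid (see the obstacle below), I would use the identity
\[
x_{t+1} - x_* = \Big(1 - \tfrac{\al}{2^d}\Big)(x_t - x_*) - \al\Big(v_{x_t} - \tfrac{1}{2^d}(x_t - x_*)\Big),
\]
the bound $\al \leq 2^d$ from A3 (so that $|1 - \al/2^d| = 1 - \al/2^d$), and Lemma \ref{convexity_lemma} to control the last term by $\tfrac{\al}{8\cdot 2^d}\|x_t - x_*\| + \tfrac{2\al}{2^{d/2}}\|\eta\|$, obtaining the one-step contraction
\[
\|x_{t+1} - x_*\| \leq \Big(1 - \tfrac{7}{8}\tfrac{\al}{2^d}\Big)\|x_t - x_*\| + \tfrac{2\al}{2^{d/2}}\|\eta\| = \tau\,\|x_t - x_*\| + \tfrac{2\al}{2^{d/2}}\|\eta\|.
\]
Iterating from step $N$ and summing $\sum_{j \geq 0}\tau^j = (1-\tau)^{-1}$ gives $\|x_{t+1} - x_*\| \leq \tau^{t+1-N}\|x_N - x_*\| + \tfrac{2\al}{2^{d/2}(1-\tau)}\|\eta\|$; since $1-\tau = \tfrac{7}{8}\tfrac{\al}{2^d}$ the additive term equals $\tfrac{16}{7}2^{d/2}\|\eta\|$, independent of $\al$, and A3 ($\al \leq c_3 2^d/d^2$) yields $\tfrac{2\al}{2^{d/2}(1-\tau)} \leq \tfrac{2c_3}{1-\tau}\cdot\tfrac{2^{d/2}}{d^2} = \vartheta\tfrac{2^{d/2}}{d^2}$, which is (I3) at $t+1$. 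To recover (I1), the one-step bound gives $\|x_{t+1} - x_*\| \leq \max\!\big(\|x_t - x_*\|,\ \tfrac{16}{7}2^{d/2}\|\eta\|\big)$, and A2 forces $\tfrac{16}{7}2^{d/2}\|\eta\| \leq r_1 d^{-12}\|x_*\|$, so (I1) at step $t$ gives $x_{t+1} \in \mathcal{B}(x_*, r_1 d^{-12}\|x_*\|)$. Finally (I1) at $t+1$ together with Lemma \ref{obj_function_bigger_near_negmult} (again $\phi_d = 1$) gives $f(x_{t+1}) < f(-x_{t+1})$, hence $\bar{x}_{t+1} = x_{t+1}$, which is (I2). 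Then (I1)--(I3) for all $t \geq N$ is exactly the assertion of the lemma.

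The main obstacle is the parenthetical verification in the inductive step: the convexity-like control of Lemma \ref{convexity_lemma} holds only on a neighborhood of $x_*$, and one must ensure that this neighborhood contains $\Seps_\beta^+$, where $x_N$ sits, and --- crucially --- that no later iterate ever escapes it, since a single update could a priori move an iterate outside. The resolution should combine the contraction derived above, whose additive floor $\tfrac{16}{7}2^{d/2}\|\eta\|$ is independent of the step size $\al$, with the quantitative smallness of $\epsilon$ and $\|\eta\|$ from A1--A2, to show that $\mathcal{B}(x_*, r_1 d^{-12}\|x_*\|)$ is forward invariant under the iteration and lies inside the region where the convexity estimate is valid; if the convexity estimate as stated is too weak on the whole of that ball, one would first insert a short preliminary argument driving iterates into the smaller region, using the subgradient-to-$h_x$ comparison of Lemmas \ref{conc_v_to_h} and \ref{closeness_of_subgrads}. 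Propagating the various polynomial-in-$d$ constants consistently through this forward-invariance step is the principal bookkeeping burden; once it is in place, the remainder is the elementary linear-algebra and geometric-series computation sketched above.
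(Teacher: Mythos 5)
Your proposal follows essentially the same route as the paper's proof: the same decomposition $x_{t+1}-x_* = (1-\tfrac{\al}{2^d})(\bar x_t - x_*) - \al\big(v_{\bar x_t} - \tfrac{1}{2^d}(\bar x_t - x_*)\big)$, the same appeal to Lemma \ref{convexity_lemma} for the $\tfrac{7}{8}$-contraction, the same use of Lemma \ref{obj_function_bigger_near_negmult} to maintain $\bar x_t = x_t$, and the same geometric-series summation. Your explicit forward-invariance step via the convex combination $\tau\|x_t-x_*\| + (1-\tau)\tfrac{16}{7}2^{d/2}\|\eta\|$ is simply a more careful rendering of what the paper asserts from the smallness of $c_2$ and $c_3$.
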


\begin{proof}[Proof of Lemma \ref{conv_to_xstar_up_to_noise_lem}]

Suppose $t = N$ so we have that $\bar{x}_t = x_t \in \Seps_{\beta}^+ \subset \mathcal{B}(x_*,r_1d^{-12}\|x_*\|)$. As shown in Lemma \ref{conv_to_Sbetaplus_lemma}, this inclusion holds by our assumptions on $\epsilon$ and $\eta$. By Assumption A1, the requirements of Lemma \ref{convexity_lemma} are met. Observe that for any $v_{\bar{x}_t} \in \partial f(\bar{x}_t)$, we have \begin{align}
    \|x_{t+1}-x_*\| & = \left\|\bar{x}_t - \al v_{\bar{x}_t} - x_* + \frac{\al}{2^{d}}(\bar{x}_t-x_*) - \frac{\al}{2^{d}}(\bar{x}_t-x_*)\right\| \nonumber \\
    & \leqslant \left(1- \frac{\al}{2^d}\right)\|\bar{x}_t-x_*\| + \al\left\|v_{\bar{x}_t} - \frac{1}{2^d}(\bar{x}_t- x_*)\right\| \nonumber \\
    & \leqslant \left(1 - \frac{\al}{2^d}\right)\|\bar{x}_t-x_*\| + \left(\frac{\al}{8}\right)\frac{1}{2^d}\|\bar{x}_t-x_*\| + \al\frac{2}{2^{d/2}}\|\eta\| \nonumber\\
    & = \left(1 - \frac{7}{8}\frac{\al}{2^d}\right)\|\bar{x}_t-x_*\|+ \al\frac{2}{2^{d/2}}\|\eta\| \label{xtp1_bound}
\end{align} where we used Lemma \ref{convexity_lemma} in the second inequality. Using $\al \leqslant c_3\frac{2^d}{d^2}$ and $\|\eta\| \leqslant \frac{c_2}{2^dd^{48}}\|x_*\|$ for sufficiently small constants $c_2$ and $c_3$, we have that if $\bar{x}_t \in \mathcal{B}(x_*,r_1d^{-12}\|x_*\|)$, then $x_{t+1} \in \mathcal{B}(x_*,r_1d^{-12}\|x_*\|)$ so the iterates stay within a small ball around the minimizer. Hence Lemma \ref{obj_function_bigger_near_negmult} yields $\bar{x}_{t+1} = x_{t+1}$. Repeatedly applying the above logic shows that for all $t \geqslant N$, $x_t \in \mathcal{B}(x_*, r_1d^{-12}\|x_*\|)$ and $\bar{x}_t = x_t$.

Finally, using $\al \leqslant c_3\frac{2^d}{d^2}$ in the second half of equation \eqref{xtp1_bound} yields\begin{align*}
    \|x_{t+1} - x_*\| \leqslant \left(1 - \frac{7}{8}\frac{\al}{2^d}\right)\|x_t-x_*\| + 2c_3\frac{2^{d/2}}{d^2}\|\eta\| =: \tau\|x_t-x_*\|+ 2c_3\frac{2^{d/2}}{d^2}\|\eta\|
\end{align*} where $\tau := 1 - \frac{7}{8}\frac{\al}{2^d}$. Choosing $c_3$ so that $c_3 < \frac{8}{7}$ implies $\tau \in (0,1)$. Starting at $t = N$ and repeatedly applying this inequality, we attain \begin{align*}
   \|x_{t+1} - x_*\| & \leqslant \tau^{t+1-N}\|x_N-x_*\| + (\tau^{t-N} + \tau^{t-N-1} + \dots + 1)2c_3 \frac{2^{d/2}}{d^2}\|\eta\|\\
    & \leqslant \tau^{t+1-N}\|x_N-x_*\| + \frac{2c_3}{1-\tau}\frac{2^{d/2}}{d^2}\|\eta\| \\
    & =: \tau^{t+1-N}\|x_N-x_*\| + \vartheta\frac{2^{d/2}}{d^2}\|\eta\|
\end{align*} where $\vartheta := \frac{2c_3}{1 - \tau}.$ This completes the proof. \end{proof}

\subsubsection{Final proof of Theorem \ref{main_deterministic_conv_result}} \label{final_proof_subsection}

With all of the necessary lemmas proven, we bring them together to prove Theorem \ref{main_deterministic_conv_result}.

\begin{proof}[Proof of Theorem \ref{main_deterministic_conv_result}]

Set $\beta:= 4Kd^3\sqrt{\epsilon} + 11\|\eta\|2^{d/2}/\|x_*\|$ where $K$ is a numerical constant. By Proposition \ref{bounded_iterates}, we may assume that our initial iterate $x_0 \notin \mathcal{B}(0,c_0\|x_*\|)$ for some numerical constant $c_0$. Then by Lemma \ref{conv_to_neighborhoods_lemma}, there exists an $N \in \N$ such that $N \leqslant \frac{6f(x_0)(2^{2d})}{9K^2d^6 \al \epsilon \|x_*\|^2}$ and $x_N \in \Seps_{\beta}$. Then Lemma \ref{conv_to_Sbetaplus_lemma} implies $\bar{x}_N \in \Seps_{\beta}^+$ which establishes inequality \eqref{bound_on_Nth_iterate}. Finally, Lemma \ref{conv_to_xstar_up_to_noise_lem} establishes inequality \eqref{final_iterate_bound} for any $t \geqslant N$. Inequality \eqref{final_output_through_genmodel_bound} follows by using \eqref{final_iterate_bound} and the following result with $j=d$ which established Lipschitz continuity of $\G$ for $x$ within a neighborhood of $x_*$: \begin{lem}[Lemma A.8 in \cite{Huangetal2018}] \label{G_lipschitz}
Suppose $0 < \epsilon < 1/(200^4 d^6)$, $x \in \mathcal{B}(x_*,d\sqrt{\epsilon}\|x_*\|)$, and $\G$ is such that each $W_i \in \R^{n_i \times n_{i-1}}$ satisfies the WDC with constant $\epsilon$ for $i \in [d]$. Then we have that for all $j \in [d]$, \begin{align*}
    \left\| \Pi_{i=j}^1 W_{i,+,x}x - \Pi_{i=j}^1 W_{i,+,x_*}x_*\right\| \leqslant \frac{1.2}{2^{j/2}}\|x-x_*\|.
\end{align*}
\end{lem}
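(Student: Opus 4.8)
The plan is to prove, by induction on $j$, the slightly stronger claim
\[
\Big\| \Pi_{i=j}^1 W_{i,+,x}x - \Pi_{i=j}^1 W_{i,+,x_*}x_* \Big\| \;\le\; \frac{1.2}{2^{j/2}}\|x-x_*\|
\qquad\text{for } j = 0,1,\dots,d,
\]
where for $j=0$ both products are empty and the bound reads $\|x-x_*\|\le 1.2\|x-x_*\|$. Write $x_j := \Pi_{i=j}^1 W_{i,+,x}x = \relu(W_j x_{j-1})$ and $x_{*,j} := \Pi_{i=j}^1 W_{i,+,x_*}x_* = \relu(W_j x_{*,j-1})$, with $x_0 = x$, $x_{*,0}=x_*$. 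The per-layer factor $2^{-1/2}$ comes from the WDC specialized to $x=y$: then $Q_{x,x}=\tfrac12 I$, so $\|W_{i,+,u}\|^2 = \|W_{i,+,u}^\top W_{i,+,u}\| \le \tfrac12+\epsilon$ for every nonzero $u$. I will also use two further consequences of the WDC, both by direct expansion: first, $\|x_{*,j}\|^2 = x_{*,j-1}^\top W_{j,+,x_{*,j-1}}^\top W_{j,+,x_{*,j-1}} x_{*,j-1} \ge (\tfrac12-\epsilon)\|x_{*,j-1}\|^2$, which iterates to $\|x_{*,j}\|\ge (\tfrac12-\epsilon)^{j/2}\|x_*\| \ge \tfrac{9}{10}\,2^{-j/2}\|x_*\|$; and second, for nonzero $u,w$, expanding $(W_{i,+,u}-W_{i,+,w})^\top(W_{i,+,u}-W_{i,+,w})$ and applying the WDC four times (together with $Q_{u,w}=Q_{w,u}$) gives $\|W_{i,+,u}-W_{i,+,w}\|^2 \le \|I-2Q_{u,w}\| + 4\epsilon \le \tfrac{2\theta_{u,w}}{\pi} + 4\epsilon$, the last step by diagonalizing $I-2Q_{u,w}$ in the plane $\mathrm{span}(u,w)$.

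For the inductive step I would use the decomposition
\[
x_j - x_{*,j} \;=\; W_{j,+,x_{j-1}}\big(x_{j-1}-x_{*,j-1}\big) \;+\; \big(W_{j,+,x_{j-1}}-W_{j,+,x_{*,j-1}}\big)x_{*,j-1}.
\]
The first term is at most $\sqrt{1/2+\epsilon}\,\|x_{j-1}-x_{*,j-1}\| \le \tfrac{1}{\sqrt2}(1+\epsilon)\,\|x_{j-1}-x_{*,j-1}\|$. The delicate term is the second. Its only nonzero coordinates are the neurons $\ell$ whose sign disagrees between $W_j x_{j-1}$ and $W_j x_{*,j-1}$; for such $\ell$ the scalars $w_\ell^\top x_{j-1}$ and $w_\ell^\top x_{*,j-1}$ lie on opposite sides of $0$ (with $w_\ell$ the $\ell$-th row of $W_j$), so $|w_\ell^\top x_{*,j-1}| \le |w_\ell^\top(x_{j-1}-x_{*,j-1})|$. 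Comparing the two coordinate vectors entrywise yields
\[
\big\|(W_{j,+,x_{j-1}}-W_{j,+,x_{*,j-1}})x_{*,j-1}\big\| \le \big\|(W_{j,+,x_{j-1}}-W_{j,+,x_{*,j-1}})(x_{j-1}-x_{*,j-1})\big\| \le \big\|W_{j,+,x_{j-1}}-W_{j,+,x_{*,j-1}}\big\|\,\|x_{j-1}-x_{*,j-1}\|,
\]
so the sign mismatch costs only the operator norm of the pattern difference, which the second WDC consequence bounds by $\sqrt{2\theta_{j-1}/\pi + 4\epsilon}$, where $\theta_{j-1} := \theta_{x_{j-1},x_{*,j-1}}$.

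To close the loop, $\theta_{j-1}$ must be shown to be small, and this uses the induction hypothesis at level $j-1$ itself: combining it with the norm lower bound gives $\sin\theta_{j-1} \le \|x_{j-1}-x_{*,j-1}\|/\|x_{*,j-1}\| \le \tfrac{4}{3}\,\|x-x_*\|/\|x_*\| \le \tfrac43 d\sqrt\epsilon$, hence $\theta_{j-1} = O(d\sqrt\epsilon)$, and therefore $\sqrt{2\theta_{j-1}/\pi + 4\epsilon} = O(\sqrt{d\sqrt\epsilon}) = O(d^{1/2}\epsilon^{1/4}) = O(1/d)$ by $\epsilon < (200^4 d^6)^{-1}$. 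Adding the two term bounds, $\|x_j-x_{*,j}\| \le \tfrac{1}{\sqrt2}\big(1 + \epsilon + O(1/d)\big)\|x_{j-1}-x_{*,j-1}\|$, so the induction accumulates a factor $\tfrac{1}{\sqrt2}\big(1+\epsilon+O(1/d)\big)$ per layer; compounded over $j\le d$ layers this is $2^{-j/2}$ times $\exp\!\big(d\epsilon + O(1)\big)$, and the numerical hypothesis $\epsilon < (200^4 d^6)^{-1}$ is calibrated so that this accumulated excess never exceeds $1.2$. (Formally this is a strong induction carrying along the a priori bound that the compounded factor stays below $1.2$, so that the $\theta_{j-1}$-estimate may invoke $\|x_i-x_{*,i}\|\le 2^{-i/2}\cdot 1.2\,\|x-x_*\|$ for $i\le j-1$.)

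The main obstacle is the second (sign-mismatch) term: the crude bound $\|W_{j,+,x_{j-1}}-W_{j,+,x_{*,j-1}}\|\cdot\|x_{*,j-1}\|$ only gives something of size $\asymp 2^{-j/2}\|x_*\|$, which does not vanish as $x\to x_*$ and would wreck the statement; the fix is the entrywise ``straddling'' observation that lets $x_{*,j-1}$ be replaced by $x_{j-1}-x_{*,j-1}$ inside the difference operator. A related subtlety is that $\|W_{j,+,u}-W_{j,+,w}\|$ scales only like $\theta_{u,w}^{1/2}$, not like $\theta_{u,w}$, so one needs $\theta_{j-1}$ as small as $O(d\sqrt\epsilon)$; this is why $\epsilon$ must be controlled at the $\epsilon^{1/4}$ scale (hence the fourth power and the $d^6$ in $\epsilon<(200^4 d^6)^{-1}$) rather than merely at a linear scale. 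Beyond these two points the argument is routine bookkeeping to ensure the geometric $2^{-j/2}$ decay is not degraded by more than the slack factor $1.2$.
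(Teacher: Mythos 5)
Your proposed proof is essentially correct. Note first that the paper itself does not prove Lemma \ref{G_lipschitz}: it is imported as Lemma A.8 of \cite{Huangetal2018}, so there is no internal argument to compare against line by line; yours is a self-contained derivation from the WDC alone, in the same spirit as the cited reference. The three WDC consequences you invoke are valid: $\|W_{+,u}\|^2\le \tfrac12+\epsilon$ (take $x=y$, so $Q_{u,u}=\tfrac12 I$); the norm lower bound $\|x_{*,j}\|\ge(\tfrac12-\epsilon)^{j/2}\|x_*\|$; and $\|(W_{+,u}-W_{+,w})^\top(W_{+,u}-W_{+,w})\|\le\|I-2Q_{u,w}\|+4\epsilon\le\tfrac{2\theta_{u,w}}{\pi}+4\epsilon$, since the eigenvalues of $I-2Q_{u,w}$ are $\tfrac{\theta\pm\sin\theta}{\pi}$ on $\mathrm{span}(u,w)$ and $\tfrac{\theta}{\pi}$ on its complement. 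The decomposition into $W_{j,+,x_{j-1}}(x_{j-1}-x_{*,j-1})$ plus the sign-mismatch term, together with the entrywise straddling comparison that lets you replace $x_{*,j-1}$ by $x_{j-1}-x_{*,j-1}$ inside the pattern difference, is exactly what is needed to keep the mismatch contribution proportional to $\|x_{j-1}-x_{*,j-1}\|$ rather than to $\|x_{*,j-1}\|$; and your bootstrap (carrying the accumulated factor $\le 1.2$ through the induction so that the bound $\theta_{j-1}\lesssim d\sqrt{\epsilon}$ is available at every layer) closes with ample slack under $\epsilon<(200^4 d^6)^{-1}$, the per-layer factor being $\tfrac{1}{\sqrt2}\bigl(1+\epsilon+O(1/d)\bigr)$ and its $d$-fold compounding staying far below $1.2$. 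One small point you should make explicit: $\sin\theta_{j-1}\le\tfrac43 d\sqrt{\epsilon}$ by itself does not rule out $\theta_{j-1}$ near $\pi$; add that $\|x_{j-1}-x_{*,j-1}\|<\|x_{*,j-1}\|$ forces $\langle x_{j-1},x_{*,j-1}\rangle>0$, hence $\theta_{j-1}<\pi/2$ and $\theta_{j-1}\le\tfrac{\pi}{2}\sin\theta_{j-1}=O(d\sqrt{\epsilon})$. With that sentence added, the argument is complete.
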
 \end{proof}

\subsection{Supplementary Results} \label{additional_results_sec}

In the following sections, we provide proofs for auxillary results that were in used in the four main lemmas used to establish Theorem \ref{main_deterministic_conv_result}. Section \ref{bounded_iterates_subsection} focuses on proving that after a polynomial number of iterations, the iterates of our algorithm are all bounded away from zero. Section \ref{proofs_for_controlling_v_subsection} establishes supplementary results about controlling subgradients in Section \ref{conv_proof_zero_subsection}. Then Section \ref{proofs_for_second_subsection} establishes results concerning the zeros of $h_x$ and properties of the objective function used in Section \ref{conv_proof_second_subsection}. Lastly, Section \ref{proof_of_convexity_lemma_subsection} focuses on establishing the convexity-like property near the minimizer which is formalized in Section \ref{conv_proof_third_subsection}.

\subsubsection{Iterates are eventually bounded away from zero} \label{bounded_iterates_subsection}

 We focus on proving the following proposition:
 \begin{prop}  \label{bounded_iterates} Fix $\epsilon > 0$ such that $Kd^3 \sqrt{\epsilon} \leqslant 1$ where $K$ is a universal constant. Suppose that $A \in \R^{m \times n_d}$ satisfies the RRCP with respect to $\G$ with constant $\epsilon$ and $\G$ is such that each $W_i \in \R^{n_i \times n_{i-1}}$ satisfies the WDC with constant $\epsilon$ for $i \in [d]$. Suppose that the step size $\al$ and noise $\eta$ satisfy $0 < \al < \frac{2^d}{104\pi (Cd+2c_2)}$ and $\|\eta\| \leqslant \frac{c_2\|x_*\|}{2^{d/2}}$ where $C$ and $c_2$ are numerical constants. If $x_t \in \mathcal{B}(0,\frac{1}{52\pi}\|x_*\|)$, then after at most $N_0 = \left\lceil\left(\frac{2^d24\pi}{\al 52\pi}\right)^2\right\rceil$ iterations, we have that for all $t > N_0$ and $\lambda \in [0,1]$, $\lambda \bar{x}_t + (1-\lambda)x_{t+1} \notin \mathcal{B}(0, \frac{1}{104\pi}\|x_*\|).$
\end{prop}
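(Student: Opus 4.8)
The plan is to show that, throughout the ball $\mathcal{B}(0,\tfrac{1}{52\pi}\|x_*\|)$, every subgradient $v_x$ is bounded away from zero and never has an inward radial component, so that $\|x_t\|^2$ drifts upward and the iterate is ejected from the ball; and then to use the fact that each step is short relative to $\|x_*\|$ to rule out any return to the smaller ball, including along the interpolating segments.

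First I would record the two quantitative ingredients on $\mathcal{B}(0,\tfrac{1}{52\pi}\|x_*\|)$. By Lemma~\ref{conc_v_to_h}, for every $x\neq 0$ and $v_x\in\partial f(x)$, $\|v_x-h_x\|\le \tfrac{Kd^3\sqrt\epsilon}{2^d}\|x_*\|+\tfrac{2}{2^{d/2}}\|\eta\|\le (Kd^3\sqrt\epsilon+2c_2)\tfrac{\|x_*\|}{2^d}$, which is a small multiple of $\tfrac{\|x_*\|}{2^d}$ under the operative smallness of $\epsilon$ and $c_2$; and by Lemma~\ref{bound_on_norm_of_v}, $\|v_x\|\le (Cd+2c_2)\tfrac{\|x_*\|}{2^d}$ there, so the hypothesis $\alpha<\tfrac{2^d}{104\pi(Cd+2c_2)}$ gives the short-step bound $\alpha\|v_x\|<\tfrac{1}{104\pi}\|x_*\|$ (and likewise for $v_{\bar x_t}$). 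I would then analyze $h_x$ directly: decomposing $h_x=-a(x)\hat{x}_*+b(x)\hat{x}$ with $a(x)=\tfrac{\|x_*\|}{2^d}\big(\tfrac{\pi-2\overline\theta_{d,x}}{\pi}\big)\prod_{i=0}^{d-1}\tfrac{\pi-\overline\theta_{i,x}}{\pi}\ge 0$ and $b(x)=\tfrac{1}{2^d}\big(\|x\|-\|x_*\|\zeta(x)\big)$, where $\zeta(x)\ge 0$ is the bracketed sum in the definition of $h_x$, one computes
\[
\langle x,-h_x\rangle=\|x\|\big(a(x)\cos\overline\theta_{0,x}-b(x)\big)=\tfrac{\|x\|\,\|x_*\|}{2^d}\Big(\zeta(x)+\big(\tfrac{\pi-2\overline\theta_{d,x}}{\pi}\big)\textstyle\prod_{i=0}^{d-1}\tfrac{\pi-\overline\theta_{i,x}}{\pi}\cos\overline\theta_{0,x}-\tfrac{\|x\|}{\|x_*\|}\Big).
\]
The claim I would establish is that the quantity $\zeta(x)+\big(\tfrac{\pi-2\overline\theta_{d,x}}{\pi}\big)\prod_{i=0}^{d-1}\tfrac{\pi-\overline\theta_{i,x}}{\pi}\cos\overline\theta_{0,x}$ is bounded below by a universal constant strictly exceeding $\tfrac{1}{52\pi}$, uniformly in $d$ and $\overline\theta_{0,x}$ (this is exactly why that radius appears), together with the non-vanishing bound $\|h_x\|\ge c\,\tfrac{\|x_*\|}{2^d}$. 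Both are calculus facts about the iterated map $\overline\theta_{i,x}=g^{(i)}(\overline\theta_{0,x})$: the angles decay to $0$ so the products $\prod_i\tfrac{\pi-\overline\theta_{i,x}}{\pi}$ stay bounded below, $\overline\theta_{d,x}\le\pi/2$ so $\tfrac{\pi-2\overline\theta_{d,x}}{\pi}\ge0$, and $\rho_d$ (the value of $\zeta$ at $\overline\theta_{0,x}=\pi$) is bounded below by a universal constant $>\tfrac{1}{52\pi}$; I would verify the bound by splitting into $\overline\theta_{0,x}$ near $0$ (where $a\approx\tfrac{\|x_*\|}{2^d}$, $\zeta\approx 0$, $\cos\approx 1$), near $\pi$ (where $a\approx 0$, $\zeta\approx\rho_d$, $\cos\approx-1$), and the intermediate range. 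Combining with $\|v_x-h_x\|\le(\text{small})\tfrac{\|x_*\|}{2^d}$ and taking $c_2$ small then gives, on the ball, $\langle x,-v_x\rangle\ge0$ and $\|v_x\|\ge\tfrac{c}{2}\tfrac{\|x_*\|}{2^d}$; since these hold at every point of the ball, they apply to $\bar x_t$ (which has the same norm as $x_t$), so the negation step of Algorithm~\ref{alg} is harmless here.

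With these in hand, the ejection is a drift on $\|x_t\|^2$: while $x_t\in\mathcal{B}(0,\tfrac{1}{52\pi}\|x_*\|)$ (hence also $\bar x_t$),
\[
\|x_{t+1}\|^2=\|\bar x_t\|^2+2\alpha\langle\bar x_t,-v_{\bar x_t}\rangle+\alpha^2\|v_{\bar x_t}\|^2\ \ge\ \|x_t\|^2+\tfrac{c^2}{4}\,\alpha^2\,\tfrac{\|x_*\|^2}{2^{2d}},
\]
so $\|x_t\|^2$ increases by a fixed positive amount each step as long as it stays $\le\tfrac{1}{(52\pi)^2}\|x_*\|^2$; hence the iterate leaves $\mathcal{B}(0,\tfrac{1}{52\pi}\|x_*\|)$ after $O(2^{2d}/\alpha^2)$ steps, and an explicit accounting of the constants shows $N_0=\lceil(\tfrac{2^d\,24\pi}{\alpha\,52\pi})^2\rceil$ is a valid bound. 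It then remains to propagate persistence: I would show by induction on $s$ that, once $\|x_t\|\ge\tfrac{1}{52\pi}\|x_*\|$, one has $\|\bar x_s\|\ge\tfrac{1}{104\pi}\|x_*\|$ and $\lambda\bar x_s+(1-\lambda)x_{s+1}=\bar x_s-(1-\lambda)\alpha v_{\bar x_s}\notin\mathcal{B}(0,\tfrac{1}{104\pi}\|x_*\|)$ for all $\lambda\in[0,1]$ and $s\ge t$. If $\|\bar x_s\|\ge\tfrac{1}{52\pi}\|x_*\|$, the short-step bound gives $\|\bar x_s-(1-\lambda)\alpha v_{\bar x_s}\|\ge\|\bar x_s\|-\alpha\|v_{\bar x_s}\|\ge\tfrac{1}{104\pi}\|x_*\|$, and $\|x_{s+1}\|\ge\tfrac{1}{104\pi}\|x_*\|$. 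If instead $\tfrac{1}{104\pi}\|x_*\|\le\|\bar x_s\|<\tfrac{1}{52\pi}\|x_*\|$, then $\bar x_s$ lies in the ball where $\langle\bar x_s,-v_{\bar x_s}\rangle\ge0$, so $\|\bar x_s-(1-\lambda)\alpha v_{\bar x_s}\|^2=\|\bar x_s\|^2+2(1-\lambda)\alpha\langle\bar x_s,-v_{\bar x_s}\rangle+(1-\lambda)^2\alpha^2\|v_{\bar x_s}\|^2\ge\|\bar x_s\|^2\ge\tfrac{1}{(104\pi)^2}\|x_*\|^2$, and taking $\lambda=0$ gives $\|x_{s+1}\|\ge\|\bar x_s\|\ge\tfrac{1}{104\pi}\|x_*\|$; since $\|\bar x_{s+1}\|=\|x_{s+1}\|$ the induction closes.

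I expect the main obstacle to be the uniform-in-$d$ lower bound on $\zeta(x)+\big(\tfrac{\pi-2\overline\theta_{d,x}}{\pi}\big)\prod_{i}\tfrac{\pi-\overline\theta_{i,x}}{\pi}\cos\overline\theta_{0,x}$ by the explicit constant tied to $\tfrac{1}{52\pi}$: it is a somewhat delicate estimate on iterated compositions of $g$ (and on $\rho_d$), and it is also where the smallness of $\epsilon$ must be invoked, in order to absorb the $\|v_x-h_x\|$ error without reversing the sign of the radial drift. Everything else — the short-step bound, the $\|x_t\|^2$ drift, and the two-case induction for the segments — is then routine bookkeeping with the constants $52\pi$ and $104\pi$.
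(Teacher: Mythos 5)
The outer mechanics of your argument match the paper's proof and are sound: the obtuse-angle identity giving a monotone increase of $\|x_t\|^2$ while the iterate stays in $\mathcal{B}(0,\tfrac{1}{52\pi}\|x_*\|)$, the short-step bound $\al\|v_{\bar x_t}\|\leqslant\tfrac12\|\bar x_t\|$ from Lemma \ref{bound_on_norm_of_v} outside that ball, and your two-case induction for the interpolating segments (which is, if anything, slightly more careful than the paper's treatment of the annulus). The gap is in how you obtain the two key inputs on the ball, namely $\langle x,v_x\rangle\leqslant 0$ and $\|v_x\|\gtrsim 2^{-d}\|x_*\|$. You route these through $h_x$ and Lemma \ref{conc_v_to_h}, and this fails on two counts. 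First, the decisive estimate — a uniform-in-$d$, uniform-in-angle lower bound on $\zeta(x)+\psi_{d,x}\bigl(\prod_{i=0}^{d-1}\tfrac{\pi-\overline{\theta}_{i,x}}{\pi}\bigr)\cos\overline{\theta}_{0,x}$ exceeding $\tfrac{1}{52\pi}$ by a margin large enough to absorb the approximation error — is precisely the crux, and you leave it as a ``calculus fact''; it amounts to extending the $\rho_d$/$\Gamma_d$ analysis (Lemma \ref{bounds_on_rho_lemma} and the iterated-$g$ bounds) from $\overline{\theta}_0=\pi$ to all angles in $[\pi/2,\pi]$, and it is proved nowhere in the paper. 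Second, and more seriously, the error budget does not close under the stated hypotheses: Proposition \ref{bounded_iterates} assumes only $Kd^3\sqrt{\epsilon}\leqslant 1$ and $\|\eta\|\leqslant c_2\|x_*\|/2^{d/2}$ with $c_2$ a fixed numerical constant, so Lemma \ref{conc_v_to_h} gives only $\|v_x-h_x\|\leqslant\tfrac{\|x_*\|}{2^d}(Kd^3\sqrt{\epsilon}+2c_2)$, which may be of order $\tfrac{\|x_*\|}{2^d}$ — the same order as $\|h_x\|$ itself and as your radial margin $\tfrac{\|x\|\|x_*\|}{2^d}\bigl(F(\overline{\theta}_{0,x})-\tfrac{1}{52\pi}\bigr)$ once the $\|x\|$ factors cancel. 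So neither the sign of $\langle x,v_x\rangle$ nor the lower bound on $\|v_x\|$ follows from the $h_x$ comparison unless you strengthen the hypothesis to $Kd^3\sqrt{\epsilon}$ (and $c_2$) below an explicit small constant tied to your unproved bound on $F$.

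The paper avoids both issues by never invoking $h_x$ here: in Lemma \ref{neg_inner_prod_lower_bd_grad} it bounds $\langle x,v_x\rangle$ directly, splitting $v_x$ into $\Lambda_x^\top A_{x_d}^\top A_{x_d}\Lambda_x x$, the cross term $\Lambda_x^\top A_{x_d}^\top A_{x_{*,d}}\Lambda_{x_*}x_*$, and the noise term, and lower-bounding the cross term via the angle concentration of Lemma \ref{angle_distortion_property} together with the elementary inequality $\cos\varphi(\theta)\geqslant 2/\pi$; the resulting errors are only $O(L\epsilon)$ and $O(d\epsilon)$, so the hypotheses as stated suffice, and the quantitative bound $\|v_x\|\geqslant\tfrac{1}{24\pi}\tfrac{\|x_*\|}{2^d}$ (which is what produces the specific $N_0=\lceil(\tfrac{2^d24\pi}{\al 52\pi})^2\rceil$, rather than your unspecified constant $c$) follows by Cauchy--Schwarz from $\langle x,v_x\rangle\leqslant-\tfrac{1}{24\pi}\tfrac{1}{2^d}\|x\|\|x_*\|$. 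To salvage your route you would need both the uniform angular lower bound and tightened smallness assumptions on $\epsilon$ and $c_2$; the direct RRCP argument is the simpler repair, after which the remainder of your proof goes through.
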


 This result asserts that if an iterate of our algorithm lies within a ball of the origin, then after a polynomial number of steps, it will leave this region. To prove it, we require the following lemma that establishes certain properties of any subgradient $v_x \in \partial f(x)$ for points $x$ near the origin: \begin{lem} \label{neg_inner_prod_lower_bd_grad}
Fix $\epsilon > 0$ such that $Kd^3 \sqrt{\epsilon} \leqslant 1$ where $K$ is a universal constant. Suppose that $A \in \R^{m \times n_d}$ satisfies the RRCP with respect to $\G$ with constant $\epsilon$ and $\G$ is such that each $W_i \in \R^{n_i \times n_{i-1}}$ satisfies the WDC with constant $\epsilon$ for $i \in [d]$. Then for all $x \in \mathcal{B}(0,\frac{1}{52\pi}\|x_*\|)$ and any $v_x \in \partial f(x)$, we have that \begin{align}
\langle x,v_x \rangle < 0\ \text{and}\ \|v_x\| \geqslant \frac{1}{2^d}\frac{1}{24\pi}\|x_*\|. \nonumber
\end{align} 
\end{lem}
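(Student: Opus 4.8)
The plan is to transfer both claims from an arbitrary subgradient $v_x$ to the explicit vector $h_x$ via Lemma \ref{conc_v_to_h}, and then to read them off from the closed form of $h_x$. For nonzero $x\in\mathcal{B}(0,\tfrac{1}{52\pi}\|x_*\|)$ one has $\max(\|x\|,\|x_*\|)=\|x_*\|$, so Lemma \ref{conc_v_to_h} gives $v_x = h_x + O_1\!\big(\tfrac{Kd^3\sqrt{\epsilon}}{2^d}\|x_*\|+\tfrac{2}{2^{d/2}}\|\eta\|\big)$ for every $v_x\in\partial f(x)$; taking $K$ a large enough universal constant, the hypothesis $Kd^3\sqrt{\epsilon}\leqslant 1$ together with $\|\eta\|\leqslant c_2\|x_*\|2^{-d/2}$ makes this perturbation small compared with the natural scale $2^{-d}\|x_*\|$. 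It therefore suffices to show, with room to spare, that $\langle x,h_x\rangle<0$ and $\|h_x\|\gtrsim 2^{-d}\|x_*\|$ for all such $x$.

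For this I would compute the component of $h_x$ along $\hat x$. Writing $\theta=\angle(x,x_*)$ and $\overline{\theta}_{i,x}=g^{(i)}(\theta)$ with $g$ as in \eqref{gdef}, a direct rearrangement of the definition of $h_x$ shows $\langle\hat x,h_x\rangle = \tfrac{1}{2^d}\big(\|x\|-\|x_*\|\,\psi_d(\theta)\big)$, where $\psi_d(\theta)$ is the scalar assembled from the trigonometric prefactors in $h_x$; its boundary values are $\psi_d(0)=1$ and $\psi_d(\pi)=\rho_d$. Hence $\langle x,h_x\rangle = \tfrac{\|x\|}{2^d}\big(\|x\|-\|x_*\|\psi_d(\theta)\big)$, and $\|h_x\|\geqslant|\langle\hat x,h_x\rangle| = \tfrac{1}{2^d}\big(\|x_*\|\psi_d(\theta)-\|x\|\big)$ whenever $\psi_d(\theta)>\|x\|/\|x_*\|$. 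Both desired conclusions thus follow from a uniform lower bound on the scalar $\psi_d$.

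The key estimate, and the part I expect to be the main obstacle, is to prove $\psi_d(\theta)\geqslant c_\star$ for an absolute constant $c_\star$ comfortably larger than $\tfrac{1}{52\pi}$, uniformly over $\theta\in[0,\pi]$ and all $d\geqslant 1$. This is exactly the statement that the idealized objective (the function whose gradient is $h_x$) is strictly radially decreasing at every angle throughout the ball $\mathcal{B}(0,\tfrac{1}{52\pi}\|x_*\|)$, equivalently that the origin is a strict radial maximum there. I would establish it by induction on the number of layers $d$, using the structural properties of the angle map $g$ from \eqref{gdef}: it maps $[0,\pi]$ into $[0,\tfrac{\pi}{2}]$, is increasing, and satisfies $g(\theta)\leqslant\theta$ with $g(0)=0$. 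These facts control the nested products $\prod_i(\pi-\overline{\theta}_{i,x})/\pi$ and the sine sums defining $\psi_d$, give a recursion bounding $\psi_d$ below by $\psi_{d-1}$ plus a nonnegative increment, and in particular pin down a uniform lower bound on $\rho_d$ (the $\theta=\pi$ trajectory), the base case $d=1$ being a direct one-variable trigonometric inequality. Pushing this trigonometric recursion cleanly through all $d$ layers is where the real work lies.

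Finally, since $\|x\|/\|x_*\|\leqslant\tfrac{1}{52\pi}<c_\star$, the identities above give $\langle x,h_x\rangle\leqslant\tfrac{\|x\|}{2^d}\big(\|x\|-c_\star\|x_*\|\big)<0$ with margin at least $\tfrac{\|x\|}{2^d}\big(c_\star-\tfrac{1}{52\pi}\big)\|x_*\|$, which dominates $\|x\|\cdot\|v_x-h_x\|$ under the assumed smallness of $\epsilon$ and $\eta$; hence $\langle x,v_x\rangle=\langle x,h_x\rangle+\langle x,v_x-h_x\rangle<0$. Likewise $\|v_x\|\geqslant\|h_x\|-\|v_x-h_x\|\geqslant\tfrac{1}{2^d}\|x_*\|\big(c_\star-\tfrac{1}{52\pi}-Kd^3\sqrt{\epsilon}-2c_2\big)$, and choosing the universal constants so that the parenthesized factor is at least $\tfrac{1}{24\pi}$ yields $\|v_x\|\geqslant\tfrac{1}{2^d}\tfrac{1}{24\pi}\|x_*\|$, as required.
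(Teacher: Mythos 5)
Your route is sound and does differ from the paper's in mechanics. The paper never passes through $h_x$ here: it splits $\langle x,v_x\rangle$ into the three inner products $\langle A_{x_d}\Lambda_x x,A_{x_d}\Lambda_x x\rangle - \langle A_{x_d}\Lambda_x x,A_{x_{*,d}}\Lambda_{x_*}x_*\rangle - \langle A_{x_d}\Lambda_x x,\eta\rangle$, bounds the cross term from below using the RRCP-based angle-distortion statement (Lemma \ref{angle_distortion_property}) together with the one-variable inequality $\cos\varphi(\theta)\geqslant 2/\pi$ from \eqref{bound_on_angle}, and then gets $\|v_x\|\geqslant\langle -\hat{x},v_x\rangle$ by Cauchy--Schwarz. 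You instead reduce everything to $h_x$ via Lemma \ref{conc_v_to_h} and to a lower bound on the radial coefficient $\psi_d$. Both work; the paper's route incurs only an $O(\epsilon)$ error on the cosine rather than the $O(d^3\sqrt{\epsilon})$ error of Lemma \ref{conc_v_to_h}, but under the hypothesis (with the universal constant $K$ chosen large enough, as you note) this is immaterial. The one point worth fixing: the step you flag as ``where the real work lies'' is not actually hard and does not need an induction on $d$. The recursion $\cos\overline{\theta}_{i+1,x}=\frac{(\pi-\overline{\theta}_{i,x})\cos\overline{\theta}_{i,x}+\sin\overline{\theta}_{i,x}}{\pi}$ telescopes to
\begin{align*}
\cos\overline{\theta}_{d,x}=\Big(\prod_{i=0}^{d-1}\tfrac{\pi-\overline{\theta}_{i,x}}{\pi}\Big)\cos\overline{\theta}_{0,x}+\sum_{i=0}^{d-1}\tfrac{\sin\overline{\theta}_{i,x}}{\pi}\prod_{j=i+1}^{d-1}\tfrac{\pi-\overline{\theta}_{j,x}}{\pi},
\end{align*}
so your $\psi_d(\theta)$ collapses exactly to $\frac{(\pi-2\overline{\theta}_{d,x})\cos\overline{\theta}_{d,x}+2\sin\overline{\theta}_{d,x}}{\pi}=\cos\varphi(\overline{\theta}_{d,x})\geqslant 2/\pi$, which is the same inequality \eqref{bound_on_angle} the paper uses; so take $c_\star=2/\pi$ and you are done. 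Two housekeeping remarks: your boundary values $\psi_d(0)=1$, $\psi_d(\pi)=\rho_d$ implicitly use the correct sign $-\frac{\|x_*\|}{2^d}\xi\,\hat{x}_*$ on the first term of $h_x$ (consistent with $h_{x_*}=0$ and with the form used in Proposition \ref{Seps}, though the display in Section \ref{det_proof_sketch} shows a $+$); and, like the paper's proof, your argument needs the noise bound $\|\eta\|\leqslant c_2 2^{-d/2}\|x_*\|$ (not stated in the lemma but supplied by Assumption A2) and tacitly excludes $x=0$, where $\hat{x}$ and $h_x$ are undefined.
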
 \noindent We are now ready to proceed with a proof of Proposition \ref{bounded_iterates}.

\begin{proof}[Proof of Proposition \ref{bounded_iterates}]
Suppose that $x_t \in \mathcal{B}(0,\frac{1}{52\pi}\|x_*\|)$. By Lemma \ref{neg_inner_prod_lower_bd_grad}, we have that $\bar{x}_t$ and the next iterate $x_{t+1} = \bar{x}_t - \al v_{\bar{x}_t}$ form an obtuse triangle for any $v_{\bar{x}_t} \in \partial f(\bar{x}_t)$. Thus \begin{align*}
   \|\bar{x}_{t+1}\|^2 = \|x_{t+1}\|^2 & \geqslant \|\bar{x}_t\|^2 + \al^2 \|v_{\bar{x}_t}\|^2 \\
    & \geqslant \|\bar{x}_t\|^2 + \al^2 \frac{1}{(2^d24\pi)^2}\|x_*\|^2
\end{align*} where the last inequality follows from Lemma \ref{neg_inner_prod_lower_bd_grad}. Thus the norm of the iterates will increase until after $N_0 = \left\lceil\left(\frac{2^d24\pi}{\al 52\pi}\right)^2\right\rceil$ iterations we have $x_{t+N_0} \notin \mathcal{B}(0,\frac{1}{52\pi}\|x_*\|)$. 

Now consider $x_t \notin \mathcal{B}(0, \frac{1}{52\pi}\|x_*\|)$. We will show that for any $\lambda \in [0,1]$, $\lambda \bar{x}_t + (1-\lambda)x_{t+1} \notin \mathcal{B}(0, \frac{1}{104\pi}\|x_*\|)$. Note that for $x_t \notin \mathcal{B}(0, \frac{1}{52\pi}\|x_*\|)$, we have $\|\bar{x}_t\| = \|x_t\| \geqslant \frac{1}{52\pi}\|x_*\|$. Then observe that for any $v_{\bar{x}_t} \in \partial f(\bar{x}_t)$, we have \begin{align}
    \al \|v_{\bar{x}_t}\| & \leqslant \al \frac{1}{2^d}\max(\|\bar{x}_t\|,\|x_*\|)\left(Cd + \frac{2}{\|x_*\|}2^{d/2}\|\eta\|\right) \nonumber \\
    & \leqslant \al \frac{1}{2^d}\max(\|\bar{x}_t\|,\|x_*\|)\left(Cd + 2c_2\right) \nonumber \\
    & \leqslant \frac{\al}{2^d}52\pi\|\bar{x}_t\|\left(Cd + 2c_2\right) \nonumber \\
    & \leqslant \frac{1}{2}\|\bar{x}_t\| \nonumber
\end{align} where the first inequality follows by Lemma \ref{bound_on_norm_of_v}, the second by the assumption on the noise energy $\|\eta\| \leqslant \frac{c_2\|x_*\|}{2^{d/2}}$, the third due to $x_t \notin \mathcal{B}(0,\frac{1}{52\pi}\|x_*\|)$, and the last inequality follows by the assumption on $\al$. Thus since $x_{t+1} = \bar{x}_t - \al v_{\bar{x}_t}$, we have that $\lambda \bar{x}_t + (1-\lambda) x_{t+1} \notin \mathcal{B}(0,\frac{1}{104\pi}\|x_*\|)$ for any $\lambda \in [0,1].$
\end{proof}

We now focus on proving Lemma \ref{neg_inner_prod_lower_bd_grad}. To show this, we first require the following angle concentration property of the map $z \mapsto A_zz$ for $z$ in the range of $\G$.  \begin{lem} \label{angle_distortion_property}
Fix $0 < \epsilon < 1/(4L)$ where $L$ is the universal constant specified in the RRCP. Let $A \in \R^{m \times n}$ satisfy the RRCP with respect to $\G$ with constant $\epsilon$. Let $\G$ be such that $W_{i} \in \R^{n_i \times n_{i-1}}$ satisfy the WDC with constant $\epsilon$ for all $i \in [d]$. Then for all $x,y \in \R^k \setminus \{0\}$, the angle $\theta_1 : = \angle(A_{\G(x)}\G(x) ,A_{\G(y)}\G(y))$ is well-defined and $$|\cos \theta_1 - \cos \varphi(\theta_d)| \leqslant 4L\epsilon$$ where $\theta_d = \angle(\G(x),\G(y))$ and $\varphi : \R \rightarrow \R$ defined by \begin{align*}
    \varphi(\theta):= \cos^{-1}\left(\frac{(\pi-2\theta)\cos\theta + 2\sin\theta}{\pi}\right).
\end{align*}
\end{lem}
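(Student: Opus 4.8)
The plan is to reduce the claim to two instances of the RRCP together with one elementary matrix identity. Throughout write $\theta_d=\angle(\G(x),\G(y))$, $c:=\cos\varphi(\theta_d)$, and $P:=\|\G(x)\|\,\|\G(y)\|$, and recall that $A_z z=|Az|$ entrywise, so $A_{\G(x)}\G(x)=|A\G(x)|$, $A_{\G(y)}\G(y)=|A\G(y)|$, and $\theta_1=\angle(|A\G(x)|,|A\G(y)|)$. The key structural point is that the RRCP, applied with latent codes set to $0$ where convenient (so that the corresponding $\G(\cdot)$ vanishes, since $\G(0)=\relu(0)=0$), directly controls $\langle|A\G(x)|,|A\G(y)|\rangle$ and $\||A\G(x)|\|^2$ by their idealized values coming from $\Phi$; the lemma then follows by a short perturbation computation.

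First, the numerator. Taking $x_1=y$, $x_3=x$, $x_2=x_4=0$ in Definition \ref{rrcp_def} (with outer factors $A_{\G(x)},A_{\G(y)}$, using the symmetry $\Phi_{z,w}=\Phi_{w,z}$ and $\langle A_{\G(x)}^\top A_{\G(y)}\G(y),\G(x)\rangle=\langle A_{\G(y)}\G(y),A_{\G(x)}\G(x)\rangle=\langle|A\G(x)|,|A\G(y)|\rangle$) gives
\[
\bigl|\,\langle|A\G(x)|,|A\G(y)|\rangle-\langle\Phi_{\G(x),\G(y)}\G(x),\G(y)\rangle\,\bigr|\le L\epsilon\,P .
\]
Then one computes $\langle\Phi_{z,w}z,w\rangle$ from \eqref{Phi_def}: since $M_{\hat z\leftrightarrow\hat w}$ sends $\hat z\mapsto\hat w$ we have $M_{\hat z\leftrightarrow\hat w}z=\|z\|\hat w$, hence
\[
\langle\Phi_{z,w}z,w\rangle=\|z\|\|w\|\left(\frac{(\pi-2\theta_{z,w})\cos\theta_{z,w}+2\sin\theta_{z,w}}{\pi}\right)=\|z\|\|w\|\cos\varphi(\theta_{z,w}),
\]
which also holds in the degenerate cases $\theta_{z,w}\in\{0,\pi\}$, where $\Phi$ reduces to $\pm I_n$, $M$ to $\pm\hat z\hat z^\top$, and $\cos\varphi=1$, consistent with $|A\G(x)|=|A\G(-x)|$. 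With $z=\G(x),w=\G(y)$ this yields $\langle|A\G(x)|,|A\G(y)|\rangle=cP+O_1(L\epsilon P)$.

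Second, the norms and well-definedness. Taking $x=y$ and $x_1=x_3=x$, $x_2=x_4=0$ in Definition \ref{rrcp_def}, and using $\Phi_{\G(x),\G(x)}=I_n$ (the $\theta=0$ case), gives $\bigl|\,\||A\G(x)|\|^2-\|\G(x)\|^2\,\bigr|\le L\epsilon\|\G(x)\|^2$, so $\||A\G(x)|\|=\|\G(x)\|\sqrt{1+s_x}$ with $|s_x|\le L\epsilon$, and similarly for $y$. The WDC with $\epsilon<1/2$ forces $\|\G(x)\|>0$ for $x\ne0$, since at each layer $\|W_{i,+,v}v\|^2\ge(\tfrac12-\epsilon)\|v\|^2$ by the WDC with $x=y=v$ ($Q_{v,v}=\tfrac12 I_k$). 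As $\epsilon<1/(4L)$ gives $L\epsilon<1/4$, both $|A\G(x)|$ and $|A\G(y)|$ are nonzero, so $\theta_1$ is well-defined, and
\[
\cos\theta_1-c=\frac{\langle|A\G(x)|,|A\G(y)|\rangle-c\,\||A\G(x)|\|\,\||A\G(y)|\|}{\||A\G(x)|\|\,\||A\G(y)|\|}=\frac{\bigl(cP+O_1(L\epsilon P)\bigr)-cP\sqrt{(1+s_x)(1+s_y)}}{P\sqrt{(1+s_x)(1+s_y)}} .
\]
The numerator has absolute value at most $L\epsilon P+|c|\bigl|1-\sqrt{(1+s_x)(1+s_y)}\bigr|P\le 2L\epsilon P$, using $|c|\le1$ and $\bigl|1-\sqrt{(1+s_x)(1+s_y)}\bigr|\le L\epsilon$ (from $(1-L\epsilon)^2\le(1+s_x)(1+s_y)\le(1+L\epsilon)^2$); the denominator is at least $(1-L\epsilon)P>\tfrac34 P$. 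Hence $|\cos\theta_1-\cos\varphi(\theta_d)|\le 2L\epsilon/(1-L\epsilon)<4L\epsilon$.

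There is no substantial obstacle; the argument is short. The two places that need care are (i) verifying the identity $\langle\Phi_{z,w}z,w\rangle=\|z\|\|w\|\cos\varphi(\theta_{z,w})$, including the degenerate angles where $M_{\hat z\leftrightarrow\hat w}$ collapses, and (ii) the perturbation bookkeeping in the last display, which must use the sharp bound $\bigl|1-\sqrt{(1+s_x)(1+s_y)}\bigr|\le L\epsilon$ together with the hypothesis $\epsilon<1/(4L)$ in order to reach the stated constant $4L\epsilon$ rather than a larger multiple.
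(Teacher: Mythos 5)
Your proposal is correct and follows essentially the same route as the paper: apply the RRCP three times (once to the cross term, once to each squared norm), evaluate $\langle\Phi_{z,w}z,w\rangle=\|z\|\|w\|\cos\varphi(\theta_{z,w})$ explicitly, and finish with a perturbation of the cosine quotient. The only differences are cosmetic — the paper normalizes $\|\G(x)\|=\|\G(y)\|=1$ and bounds the error via $\|\Phi\|\leqslant 2$ to get $3L\epsilon/(1-L\epsilon)\leqslant 4L\epsilon$, while you keep the product $P$ and use $|\cos\varphi|\leqslant 1$ to land at the slightly sharper $2L\epsilon/(1-L\epsilon)$ — and your explicit use of $x_2=x_4=0$ (so $\G(0)=0$) to extract the needed single-vector inequalities from the RRCP is a correct reading of the definition that the paper leaves implicit.
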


\begin{proof}[Proof of Lemma \ref{angle_distortion_property}] Fix $x,y \in \R^k \setminus \{0\}$. We use the shorthand notation $\Lambda_x: = \PiWdix$ and $x_d:=\Lambda_x x$. Note that the WDC implies that for sufficiently small $\epsilon$, we have that $\Lambda_x x, \Lambda_y y \neq 0$. Hence we may assume, without loss of generality, that $\|\Lambda_x x\| = \|\Lambda_y y\| = 1.$ Now define the following quantities: \begin{align*}
\delta_1 & : = \langle \Lambda_x x, (A_{x_d}^\top A_{y_d} - \Phi_{x_d,y_d})\Lambda_y y \rangle, \\
\delta_2 & := \langle \Lambda_x x, (A_{x_d}^\top A_{x_d} - I)\Lambda_x x \rangle, \\
\delta_3 & : = \langle \Lambda_y y ,(A_{y_d}^\top A_{y_d} - I)\Lambda_y y \rangle.
\end{align*} Observe that by the RRCP, we have that $\max_{i = 1,2,3} |\delta_i | \leqslant L\epsilon.$ Hence if $0 <\epsilon < 1/L$, \begin{align*}
0 < 1-L\epsilon \leqslant \|A_{x_d}\Lambda_x x\|^2
\end{align*} so $\|A_{x_d}\Lambda_x x\|\neq 0$. The same conclusion holds for $\|A_{y_d}\Lambda_y y\|$ so $\theta_1$ is well-defined.
Furthermore, note that \begin{align*}
\cos \theta_1 & = \frac{\langle \Lambda_x x, A_{x_d}^\top A_{y_d}\Lambda_y y \rangle}{\|A_{x_d}\Lambda_x x\|\|A_{y_d}\Lambda_y y\|} \\
& = \frac{\langle \Lambda_x x, A_{x_d}^\top A_{y_d}\Lambda_y y \rangle}{\sqrt{\langle A_{x_d}\Lambda_x x,A_{x_d}\Lambda_x x\rangle \langle A_{y_d}\Lambda_y y,A_{y_d}\Lambda_y y \rangle}} \\
& = \frac{\langle \Lambda_x x, \Phi_{x_d,
y_d}\Lambda_y y \rangle + \delta_1}{\sqrt{\left(\langle \Lambda_x x, \Lambda_x x\rangle + \delta_2\right)\left(\langle \Lambda_y y, \Lambda_y y\rangle + \delta_3\right)}} \\
& = \frac{\langle \Lambda_x x, \Phi_{x_d,y_d}\Lambda_y y \rangle + \delta_1}{\sqrt{\left(1 + \delta_2\right)\left( 1 + \delta_3\right)}}.
\end{align*} Thus \begin{align*}
\left|\cos \theta_1 - \langle \Lambda_x x, \Phi_{x_d,y_d}\Lambda_y y \rangle \right| & \leqslant \left| \frac{\langle \Lambda_x x, \Phi_{x_d,y_d}\Lambda_y y \rangle + \delta_1}{\sqrt{\left(1 + \delta_2\right)\left( 1 + \delta_3\right)}} - \langle \Lambda_x x, \Phi_{x_d,
y_d}\Lambda_y y \rangle \right| \\
& \leqslant \left|\langle \Lambda_x x, \Phi_{x_d,y_d}\Lambda_y y \rangle\right| \left|1 - \frac{1}{\sqrt{\left(1 + \delta_2\right)\left( 1 + \delta_3\right)}}\right| \\
& + \frac{|\delta_1|}{\sqrt{\left(1 + \delta_2\right)\left( 1 + \delta_3\right)}} \\
& \leqslant 2\left|1 - \frac{1}{1 - L\epsilon}\right| + \frac{L\epsilon}{1 - L\epsilon} \\
& \leqslant \frac{3L\epsilon}{1 - L\epsilon} \leqslant 4L\epsilon
\end{align*} where we used $\|\Phi_{x_d,y_d}\| \leqslant 2$ in the third inequality and $L\epsilon < 1/4$ in the last inequality. The proof concludes by noting that $\langle \Lambda_x x, \Phi_{x_d,y_d}\Lambda_y y \rangle = \frac{1}{\pi}[(\pi - 2\theta_d)\cos \theta_d + 2\sin\theta_d].
$
\end{proof}

We also require upper bounds on quantities that will be useful throughout the remaining proofs.

\begin{lem} \label{upper_bounds_on_lambda_Az} Fix $0 < \epsilon < 1/(48d)$. Let $A \in \R^{m \times n}$ satisfy the RRCP with respect to $\G$ with constant $\epsilon$. Let $\G$ be such that $W_{i} \in \R^{n_i \times n_{i-1}}$ satisfy the WDC with constant $\epsilon$ for all $i \in [d]$. Then for any $x \in \R^k$, we have \begin{align}
    \|\Lambda_x\|^2 & \leqslant \frac{13}{12}2^{-d}, \label{bound_on_lambda_energy} \\
    \|A_{x_d}\Lambda_x\|^2 & \leqslant (1 + L\epsilon)\|\Lambda_x\|^2 \label{bound_on_AxdLambdax}
\end{align}

\end{lem}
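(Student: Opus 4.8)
The plan is to prove the two estimates separately: \eqref{bound_on_lambda_energy} by telescoping the per-layer spectral bound coming from the WDC, and \eqref{bound_on_AxdLambdax} by a single application of the RRCP together with the local linearity of $\G$.

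For \eqref{bound_on_lambda_energy}: if $x=0$ then $W_{1,+,0} = \diag(0>0)W_1 = 0$, so $\Lambda_0 = 0$ and the bound holds trivially; assume $x \neq 0$. Write $y^{(0)} := x$ and $y^{(i)} := W_{i,+,x}y^{(i-1)}$, so that $W_{i,+,x} = (W_i)_{+,y^{(i-1)}}$, and proceed by induction on $i$ to show $y^{(i)} \neq 0$. Applying the WDC for $W_i$ with both slots equal to the nonzero vector $y^{(i-1)}$ and using $Q_{y^{(i-1)},y^{(i-1)}} = \tfrac12 I$ gives $\big\|W_{i,+,x}^\top W_{i,+,x} - \tfrac12 I\big\| \leqslant \epsilon$; this simultaneously forces $\|y^{(i)}\|^2 \geqslant (\tfrac12 - \epsilon)\|y^{(i-1)}\|^2 > 0$ and $\|W_{i,+,x}\|^2 = \|W_{i,+,x}^\top W_{i,+,x}\| \leqslant \tfrac12 + \epsilon$. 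By submultiplicativity of the spectral norm, $\|\Lambda_x\|^2 \leqslant \prod_{i=1}^d \|W_{i,+,x}\|^2 \leqslant (\tfrac12 + \epsilon)^d = 2^{-d}(1+2\epsilon)^d \leqslant 2^{-d}e^{2\epsilon d}$, and since $\epsilon < 1/(48d)$ we get $2\epsilon d < 1/24$ and $e^{1/24} < 13/12$, proving \eqref{bound_on_lambda_energy}.

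For \eqref{bound_on_AxdLambdax}: if $\G(x) = 0$ then $x_d = 0$ and $A_{x_d} = \diag(\sgn 0)A = 0$, so the bound is trivial; assume $\G(x) \neq 0$. The idea is that $A_{x_d}$ acts as a near-isometry on $\operatorname{range}(\Lambda_x)$. Invoking the RRCP with its first two arguments both equal to $x$ and with $x_3 = x_1$, $x_4 = x_2$, and noting that $\theta_{\G(x),\G(x)} = 0$ makes $\Phi_{\G(x),\G(x)} = I_n$, yields
\begin{align*}
\Big|\,\big\|A_{x_d}(\G(x_1) - \G(x_2))\big\|^2 - \big\|\G(x_1) - \G(x_2)\big\|^2\,\Big| \leqslant L\epsilon\,\big\|\G(x_1) - \G(x_2)\big\|^2
\end{align*}
for all $x_1, x_2 \in \R^k$, hence $\|A_{x_d}(\G(x_1) - \G(x_2))\|^2 \leqslant (1+L\epsilon)\|\G(x_1)-\G(x_2)\|^2$. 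Next, when $x$ is interior to its activation region, $\G$ coincides with the fixed linear map $\Lambda_x$ on the full-dimensional closed convex cone $\overline{R}_x$ of inputs sharing the activation pattern of $x$, and since $\overline{R}_x - \overline{R}_x = \R^k$ every $u \in \operatorname{range}(\Lambda_x)$ can be written $u = \Lambda_x(v_1 - v_2) = \G(v_1) - \G(v_2)$ with $v_1, v_2 \in \overline{R}_x$. Thus $\|A_{x_d}u\|^2 \leqslant (1+L\epsilon)\|u\|^2$ for every $u \in \operatorname{range}(\Lambda_x)$; applying this to $u = \Lambda_x v$ and using $\|A_{x_d}\Lambda_x v\|^2 \leqslant (1+L\epsilon)\|\Lambda_x\|^2\|v\|^2$ for all $v$ gives \eqref{bound_on_AxdLambdax}.

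The step I expect to require the most care is the identification of $\operatorname{range}(\Lambda_x)$ with a family of $\G$-secants: it is transparent for $x$ interior to its activation cell, but when $x$ lies on an activation boundary $\Lambda_x$ zeroes the tied neurons and $\overline{R}_x$ may collapse in dimension, so one instead perturbs $x$ slightly within the linear region selected by $\Lambda_x$ (pushing the tied pre-activations strictly negative) to obtain an interior point $x'$ with $\Lambda_{x'} = \Lambda_x$ and $A_{x'_d} = A_{x_d}$, and then invokes the interior case; this concerns only a measure-zero set of $x$. The estimate \eqref{bound_on_lambda_energy}, on the other hand, is entirely routine once the per-layer bound $\|W_{i,+,x}\|^2 \leqslant \tfrac12 + \epsilon$ is recorded.
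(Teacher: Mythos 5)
Your proof is correct and, in substance, follows the paper's argument: the first bound is obtained exactly as in the paper (per-layer WDC bound $\|W_{i,+,x}\|^2 \leqslant \tfrac12+\epsilon$, telescoping, and $\epsilon < 1/(48d)$), and the second bound uses the same two ingredients the paper uses, namely the RRCP specialized to $\Phi_{\G(x),\G(x)} = I_n$ and the piecewise linearity of $\G$. The only real difference is how the linearity is packaged: the paper invokes local linearity of $\G$ at $x$ to get $\|A_{x_d}\Lambda_x z\|^2 \leqslant (1+L\epsilon)\|\Lambda_x\|^2\|z\|^2$ for all sufficiently small $z$, and then extends to all $z$ simply because both sides are homogeneous of degree two in $z$; your route instead shows that all of $\operatorname{range}(\Lambda_x)$ consists of secants, via the full-dimensional activation cone $\overline{R}_x$ and $\overline{R}_x - \overline{R}_x = \R^k$. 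That buys you a cleaner global statement but forces the extra discussion of activation-boundary points, which the paper silently ignores (its "local linearity" also fails there). Two small caveats about your boundary fix: the claim $A_{x'_d} = A_{x_d}$ is not literally correct, since rows with $\langle a_i, \G(x)\rangle = 0$ are zeroed in $A_{x_d}$ but may not be in $A_{x'_d}$ — though this only helps, because dropping rows can only decrease $\|A_{x_d}w\|$, so it suffices to note $\|A_{x_d}w\| \leqslant \|A_{x'_d}w\|$; and the existence of a perturbation making all tied pre-activations (across all layers) simultaneously strictly negative is a Gordan-type condition you assert rather than prove, so the worst-case deterministic boundary point is not fully settled — but this is a corner case the paper's own proof does not address either, and it is irrelevant wherever the lemma is applied through limits of gradients at differentiable points.
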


\begin{proof}[Proof of Lemma \ref{upper_bounds_on_lambda_Az}]

For equation \eqref{bound_on_lambda_energy}, note that the WDC implies that $\|W_{i,+,x}\|^2 \leqslant \frac{1}{2}+\epsilon$ for each $i \in [d]$ so \begin{align*}
    \|\Lambda_x\|^2 \leqslant \prod_{i=1}^d \|W_{i,+,x}\|^2 \leqslant \left(\frac{1}{2} + \epsilon\right)^d = \frac{1}{2^d}(1 + 2\epsilon)^d = \frac{1}{2^d}e^{d\log(1+2\epsilon)}\leqslant \frac{1 + 4\epsilon d}{2^d} \leqslant \frac{13}{12}2^{-d}
\end{align*} where we used the fact that $\log(1 + u) \leqslant u$ and $e^{u} \leqslant 1 + 2u$ for $u < 1$ while the last inequality follows by our assumption on $\epsilon$: $\epsilon < 1/(48d)$.

For equation \eqref{bound_on_AxdLambdax}, observe that by the RRCP and the local linearity of $\G$, we have that for sufficiently small $z \in \R^k$, \begin{align*}
    \left|\langle A_{x_d}\Lambda_x z, A_{x_d} \Lambda_x z\rangle - \langle \Lambda_x z, \Lambda_x z \rangle\right| \leqslant L\epsilon \|\Lambda_x \|^2\|z\|^2
\end{align*} which implies that $$\left|\langle A_{x_d}\Lambda_x z, A_{x_d} \Lambda_x z\rangle\right| \leqslant (1 + L\epsilon) \|\Lambda_x \|^2\|z\|^2.$$ Since this holds for any $z \in \R^k$, we conclude $\|A_{x_d}\Lambda_x\|^2 \leqslant (1+L\epsilon)\|\Lambda_x\|^2$.
\end{proof}

Now we set out to prove Lemma \ref{neg_inner_prod_lower_bd_grad}.

\begin{proof}[Proof of Lemma \ref{neg_inner_prod_lower_bd_grad}] Suppose $f$ is differentiable at $x$ so that $v_x$ is precisely the gradient of $f$. We first show that $\langle x, v_x\rangle < 0$. Note that \begin{align}
\langle x,v_x \rangle & = \underbrace{\langle \Lambda_x^\top A_{x_d}^\top A_{x_d} \Lambda_x x,x \rangle}_{(I)} - \underbrace{\langle \Lambda_x^\top A_{x_d}^\top A_{x_{*,d}}\Lambda_{x_*}x_*,x \rangle}_{(II)} - \underbrace{\langle \Lambda_x^\top A_{x_d}^\top \eta,x\rangle}_{(III)}. \nonumber
\end{align} We will bound the first and third term from above and the second from below. We first focus on the second term as its proof will give us a result for the first term. 

\paragraph{(II):}For the second term, note that we can write it as \begin{align}
\langle \Lambda_x^\top A_{x_d}^\top A_{x_{*,d}}\Lambda_{x_*}x_*,x \rangle = \cos(\angle(A_{x_d}x_d,A_{x_{*,d}}x_{*,d}))\|A_{x_d}x_d\|\|A_{x_{*,d}}x_{*,d}\|. \nonumber
\end{align} By Lemma \ref{angle_distortion_property}, we have that \begin{align}
\cos(\varphi(\theta_d)) - 4L\epsilon \leqslant \cos(\angle(A_{x_d}x_d,A_{x_{*,d}}x_{*,d})) \leqslant \cos(\varphi(\theta_d)) + 4L\epsilon \nonumber
\end{align} where $\theta_d = \angle(x_d,x_{*,d})$. Thus \begin{align}
\langle \Lambda_x^\top A_{x_d}^\top A_{x_{*,d}}\Lambda_{x_*}x_*,x \rangle & \geqslant (\cos(\varphi(\theta_d))-4L \epsilon)\|A_{x_d}x_d\|\|A_{x_{*,d}}x_{*,d}\|. \label{bound_on_inner_prod}
\end{align} However, note that \begin{align}
\cos(\varphi(\theta)) = \frac{(\pi - 2\theta)\cos \theta + 2\sin \theta}{\pi} \geqslant \frac{2}{\pi}\ \forall\ \theta \in [0,\pi]. \label{bound_on_angle}
\end{align} Hence if $\epsilon < 1/(4L\pi)$, applying \eqref{bound_on_angle} to \eqref{bound_on_inner_prod} we have that
\begin{align}
\langle \Lambda_x^\top A_{x_d}^\top A_{x_{*,d}}\Lambda_{x_*}x_*,x \rangle \geqslant \frac{1}{\pi}\|A_{x_d}x_d\|\|A_{x_{*,d}}x_{*,d}\|. \label{inner_prod_lower_bound_w_pi}
\end{align} We now bound $\|A_{x_d}x_d\|$: observe that by the RRCP, \begin{align}
|\langle (A_{x_d}^\top A_{x_d} - I)x_d,x_d\rangle| \leqslant L\epsilon\|x_d\|^2 \Longrightarrow 
(1-L\epsilon)\|x_d\|^2 & \leqslant \|A_{x_d}x_d\|^2 \leqslant (1 + L\epsilon)\|x_d\|^2 \nonumber
\end{align} which gives \begin{align}
\sqrt{1-L\epsilon}\|x_d\| & \leqslant \|A_{x_d}x_d\| \leqslant \sqrt{1 + L\epsilon}\|x_d\|. \nonumber
\end{align} By equation (11) of \cite{HV2017}, we have that \begin{align}
\left(\frac{1}{2}-\epsilon\right)^{d/2}\|x\| \leqslant \|x_d\| \leqslant \left(\frac{1}{2} + \epsilon\right)^{d/2}\|x\|. \nonumber
\end{align} Hence we attain \begin{align}
\sqrt{1 - L\epsilon}\left(\frac{1}{2} - \epsilon\right)^{d/2}\|x\| \leqslant \|A_{x_d}x_d\| \leqslant \sqrt{1 + L\epsilon}\left(\frac{1}{2} + \epsilon\right)^{d/2}\|x\|. \label{bound_on_norm_of_Axd}
\end{align} Analogous bounds hold for $\|A_{x_*,d}x_{*,d}\|$. Applying \eqref{bound_on_norm_of_Axd} to equation \eqref{inner_prod_lower_bound_w_pi}, we conclude that
\begin{align}
\langle \Lambda_x^\top A_{x_d}^\top A_{x_{*,d}}\Lambda_{x_*}x_*,x \rangle & \geqslant \frac{1}{\pi}\|A_{x_d}x_d\|\|A_{x_{*,d}}x_{*,d}\|  \nonumber \\
& \geqslant \frac{1}{\pi}(1-L\epsilon)\left(\frac{1}{2}-\epsilon\right)^d\|x\|\|x_*\|. \nonumber
\end{align} If $2d\epsilon < 2/3$, we further have $(1/2-\epsilon)^d \geqslant (1-2d\epsilon)/2^d\geqslant 1/3(1/2^d).$ Then if $\epsilon$ is chosen such that $1 - L\epsilon \geqslant 1/2$, then we get \begin{align}
\langle \Lambda_x^\top A_{x_d}^\top A_{x_{*,d}}\Lambda_{x_*}x_*,x \rangle & \geqslant \frac{1}{6\pi}\frac{1}{2^d}\|x\|\|x_*\|. \label{final_inner_prod_lower_bound}
\end{align} This concludes the bound of the second term. We then proceed to bounding (I) and (III).

\paragraph{(I):} Observe that by equation \eqref{bound_on_norm_of_Axd} and our choice of $\epsilon$, we get \begin{align}
\langle \Lambda_x^\top A_{x_d}^\top A_{x_d} \Lambda_x x,x \rangle = \|A_{x_d}x_d\|^2 \leqslant (1+L\epsilon)\left(\frac{1}{2}+\epsilon\right)^d\|x\|^2 \leqslant 2 \cdot\frac{13}{12}\frac{1}{2^d}\|x\|^2 = \frac{13}{6}\frac{1}{2^d}\|x\|^2. \nonumber
\end{align} 

\paragraph{(III):}  Observe that \begin{align}
    \|A_{x_d}\Lambda_x\| \leqslant \sqrt{1 + L\epsilon}\|\Lambda_x\| \leqslant \sqrt{\frac{13}{12}(1+ L\epsilon)}\frac{1}{2^{d/2}} \leqslant \frac{2}{2^{d/2}} \label{bound_on_AxdLambdax_with_2}
\end{align} where we used \eqref{bound_on_AxdLambdax} in the first inequality,  \eqref{bound_on_lambda_energy} in the second inequality and our assumption on $\epsilon$ in the last inequality. Thus we attain \begin{align*}
    |\langle x, \Lambda_x^\top A_{x_d}^\top \eta \rangle | \leqslant \|x\|\|A_{x_d}\Lambda_x\|\|\eta\|\leqslant \frac{2}{2^{d/2}}\|\eta\|\|x\| \leqslant \frac{2c_2}{2^d}\|x\|\|x_*\| \leqslant \frac{1}{2^d}\frac{1}{12\pi}\|x\|\|x_*\|
\end{align*} where the third inequality follows by $\|\eta\| \leqslant \frac{c_2}{2^{d/2}}\|x_*\|$ and the last inequality is due to $c_2 < \frac{1}{24\pi}$.

Using our results for (I), (II), and (III), we conclude that \begin{align}
\langle x,v_x\rangle & = \langle \Lambda_x^\top A_{x_d}^\top A_{x_d} \Lambda_x x,x \rangle - \langle \Lambda_x^\top A_{x_d}^\top A_{x_{*,d}}\Lambda_{x_*}x_*,x \rangle - \langle \Lambda_x^\top A_{x_d}^\top \eta,x\rangle \nonumber \\
& \leqslant \frac{1}{2^d}\|x\|\left(\frac{13}{6}\|x\| + \frac{1}{12\pi}\|x_*\|- \frac{1}{6\pi}\|x_*\|\right) \nonumber \\
& \leqslant \frac{1}{2^d}\|x\|\left(\frac{13}{6}\|x\| - \frac{1}{12\pi}\|x_*\|\right) \nonumber .
\end{align} Thus if $\|x\| < \frac{1}{52\pi}\|x_*\|$, i.e. $x \in \mathcal{B}(0,\frac{1}{52\pi}\|x_*\|)$, then \begin{align}
\langle x,v_x \rangle \leqslant - \frac{1}{2^d}\frac{1}{24\pi}\|x\|\|x_*\| < 0. \label{inner_prod_neg_bound_w_constant}
\end{align} Lastly, observe that this gives \begin{align*}
    \left\langle -\frac{x}{\|x\|}, v_x \right\rangle \geqslant \frac{1}{2^d}\frac{1}{24\pi}\|x_*\|.
\end{align*} But by the Cauchy-Schwarz inequality, $\left\langle -\frac{x}{\|x\|},v_x\right\rangle \leqslant \|v_x\|$ so we obtain \begin{align}
    \|v_x\| \geqslant \frac{1}{2^d}\frac{1}{24\pi}\|x_*\|. \label{lower_bound_on_v_norm}
\end{align}

When $f$ is not differentiable at $x$, we have that by equation \eqref{subdifferential_definition}, we can write $v_x = \sum_{\ell=1}^{s}c_{\ell}v_{\ell}$ where $c_{\ell} \geqslant 0$ for $\ell \in [s]$ and $\sum_{\ell=1}^s c_{\ell} = 1$.  Applying our result for differentiable points $x$, we have that  \begin{align*}
     \langle x,v_x \rangle & = \sum_{\ell=1}^{s}c_{\ell}\langle x,v_{\ell}\rangle \leqslant - \frac{1}{2^d}\frac{1}{24\pi}\|x\|\|x_*\| \sum_{\ell=1}^{s}c_{\ell}  = - \frac{1}{2^d}\frac{1}{24\pi}\|x\|\|x_*\| < 0.
 \end{align*} For the lower bound on the norm of $v_x$, note that by the Cauchy-Schwarz inequality and equation \eqref{inner_prod_neg_bound_w_constant}, we have that \begin{align*}
     \|v_x\| = \max_{\|u\| = 1}\langle v_x,u\rangle & \geqslant \left\langle v_x, -\frac{x}{\|x\|} \right\rangle = \sum_{\ell=1}^s c_{\ell}\left\langle v_{\ell}, - \frac{x}{\|x\|}\right\rangle \geqslant \frac{1}{2^d}\frac{1}{24\pi}\|x_*\|\sum_{\ell=1}^s c_{\ell}  = \frac{1}{2^d}\frac{1}{24\pi}\|x_*\|
 \end{align*} as desired.
\end{proof}

\subsubsection{Proofs for Section \ref{conv_proof_zero_subsection}} \label{proofs_for_controlling_v_subsection}

In this section, we focus on results that aided in establishing Lemma \ref{closeness_of_subgrads} in Section \ref{conv_proof_zero_subsection}.  The first result concerns a bound on the norm of our descent direction (Lemma \ref{bound_on_norm_of_v}). The second is that $h_x$ is Lipschitz with respect to $x \in \R^k$ outside of a ball of the origin (Lemma \ref{h_lipschitz_lemma}) and the third is that for all $x \in \R^k$, $h_x$ approximates any $v_x \in \partial f(x)$ (Lemma \ref{conc_v_to_h}). Prior to beginning the proof of Lemma \ref{bound_on_norm_of_v}, we outline some notation. For $x \neq 0$, set $\psi_{d,x} : = \frac{\pi - 2\overline{\theta}_{d,x}}{\pi}$, and $\zeta_{j+1,x} : = \prod_{i=j}^{d-1}\frac{\pi - \overline{\theta}_{j+1,x}}{\pi}$.  Based on this notation, $h_x$ can be written as $$ h_x = \frac{1}{2^d}\left[\psi_{d,x}\zeta_{0,x} \|x_*\|\hat{x}_* + \left(\|x\|-\|x_*\|\left(\frac{2\sin \overline{\theta}_{d,x}}{\pi} + \psi_{d,x}\sum_{i=0}^{d-1}\frac{\sin \overline{\theta}_{i,x}}{\pi}\zeta_{i+1,x}\right)\right)\Hat{x}\right].$$

In the remaining proofs, a number of results concerning properties of $\overline{\theta}_{i,x}$ and $\breve{\theta}_{i}$ will be useful. The following lemma records these results: \begin{lem}[Bounds from Lemma 10 in \cite{HV2017}] For $x \neq 0$, let $\overline{\theta}_{0,x} := \angle(x,x_*)$ and  $\overline{\theta}_{i,x} := g(\overline{\theta}_{i-1,x})$ for $i \in [d]$. Let $\breve{\theta}_0 := \pi$ and $\breve{\theta}_i = g(\breve{\theta}_{i-1})$ for $i \in [d]$. Then the following all hold: \begin{align}
\left|\prod_{i=0}^{d-1}\frac{\pi - \overline{\theta}_{i,x}}{\pi}\right| & \leqslant 1, \label{piprod_upperbound}\\
\prod_{i=0}^{d-1}\frac{\pi - \overline{\theta}_{i,x}}{\pi} & \geqslant \frac{\pi - \overline{\theta}_{0,x}}{\pi d^3}, \label{piprod_lowerbound}\\
\left|\sum_{i=0}^{d-1} \frac{\sin \overline{\theta}_{i,x}}{\pi}\left(\prod_{j=i+1}^{d-1} \frac{\pi - \overline{\theta}_{j,x}}{\pi}\right)\right| & \leqslant \frac{d}{\pi}\sin \overline{\theta}_{0,x}, \label{sum_prod_upperbound}\\
\overline{\theta}_{0,x} = \pi + O_1(\delta) & \Longrightarrow \overline{\theta}_{i,x} = \breve{\theta}_i + O_1(i\delta), \label{pi_Oone_bound}\\
\overline{\theta}_{0,x} = \pi + O_1(\delta) & \Longrightarrow \left|\prod_{i=0}^{d-1}\frac{\pi - \overline{\theta}_{i,x}}{\pi}\right| \leqslant \frac{\delta}{\pi}, \label{beta_pi_upperbound} \\
\left|\frac{\pi - 2\overline{\theta}_{i,x}}{\pi}\right| & \leqslant 1\ \forall\ i \geqslant 0, \label{pi_twotheta_upperbound} \\
\overline{\theta}_{d,x}& \leqslant \cos^{-1}\left(\frac{1}{\pi}\right)\ \forall\ d \geqslant 2, \label{cos_inv_bound} \\
\breve{\theta}_i & \leqslant \frac{3\pi}{i + 3}\ \forall\ i \geqslant 0,  \label{breve_theta_bound} \\
\breve{\theta}_i &\geqslant \frac{\pi}{i + 1}\ \forall\ i \geqslant 0. \label{breve_theta_lower_bound}
\end{align}

\end{lem}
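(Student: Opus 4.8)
These bounds are collected from \cite{HV2017}; the plan for a self-contained proof is to reduce all nine inequalities to one–variable properties of the map $g$ from \eqref{gdef}, exploiting that $\overline{\theta}_{i,x}$ and $\breve{\theta}_i$ are produced by iterating $g$ starting from $\overline{\theta}_{0,x}=\angle(x,x_*)\in[0,\pi]$ and from $\breve{\theta}_0=\pi$ respectively. First I would record two structural facts: $g$ is nondecreasing on $[0,\pi]$ — because $\cos(g(\theta))=\tfrac1\pi\big((\pi-\theta)\cos\theta+\sin\theta\big)$ has derivative $-\tfrac1\pi(\pi-\theta)\sin\theta\leqslant0$ there — and hence, by induction on $i$ using $\overline{\theta}_{0,x}\leqslant\pi=\breve{\theta}_0$, one has $\overline{\theta}_{i,x}\leqslant\breve{\theta}_i$ for all $i$. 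Combined with the trivial bound that each factor $\tfrac{\pi-\overline{\theta}_{i,x}}{\pi}$ lies in $[0,1]$, this already yields \eqref{piprod_upperbound} and \eqref{pi_twotheta_upperbound}; \eqref{beta_pi_upperbound} follows by discarding all but the $i=0$ factor, which is $\leqslant\delta/\pi$ once $\overline{\theta}_{0,x}\geqslant\pi-\delta$; and \eqref{sum_prod_upperbound} by bounding every inner product by $1$ and controlling the summand sines via monotonicity of the angle sequence.

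The analytic core is a pair of estimates on $g$ near the diagonal: the elementary bound $0\leqslant g(\theta)\leqslant\theta$ on $[0,\pi]$ (equivalent to $\sin\theta\geqslant\theta\cos\theta$), and the quantitative sandwich
\begin{align*}
\theta-\frac{\theta^2}{2\pi}\;\leqslant\;g(\theta)\;\leqslant\;\theta-\frac{\theta^2}{3\pi},\qquad\theta\in[0,\pi],
\end{align*}
which is consistent with the Taylor expansion $g(\theta)=\theta-\tfrac{\theta^2}{3\pi}+O(\theta^3)$ at the origin. I would also verify $0\leqslant g'(\theta)\leqslant1$ on $[0,\pi]$, so that $g$ is nonexpansive; this gives \eqref{pi_Oone_bound} immediately, since iterating a nonexpansive map keeps $|\overline{\theta}_{i,x}-\breve{\theta}_i|\leqslant|\overline{\theta}_{0,x}-\pi|\leqslant\delta\leqslant i\delta$.

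With the sandwich in hand, \eqref{breve_theta_bound} and \eqref{breve_theta_lower_bound} are clean inductions. Using that $t\mapsto t-\tfrac{t^2}{3\pi}$ is increasing on $[0,\pi]$, the upper estimate propagates $\breve{\theta}_i\leqslant\tfrac{3\pi}{i+3}$ because the inductive step reduces to $(i+2)(i+4)\leqslant(i+3)^2$; symmetrically the lower estimate propagates $\breve{\theta}_i\geqslant\tfrac{\pi}{i+1}$, and the base cases $\breve{\theta}_0=\pi$, $\breve{\theta}_1=g(\pi)=\pi/2$ are checked directly. Then $\breve{\theta}_d\leqslant\breve{\theta}_2=g(\pi/2)=\cos^{-1}(1/\pi)$ for $d\geqslant2$ (the sequence is nonincreasing by $g(\theta)\leqslant\theta$), which with $\overline{\theta}_{d,x}\leqslant\breve{\theta}_d$ gives \eqref{cos_inv_bound}. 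For \eqref{piprod_lowerbound} I would use \eqref{breve_theta_bound} to get $\tfrac{\pi-\overline{\theta}_{i,x}}{\pi}\geqslant\tfrac{i}{i+3}$ for $i\geqslant1$, multiply to obtain $\prod_{i=1}^{d-1}\tfrac{i}{i+3}=\tfrac{6}{d(d+1)(d+2)}\geqslant d^{-3}$, and keep the $i=0$ factor $\tfrac{\pi-\overline{\theta}_{0,x}}{\pi}$ out front. The main obstacle is pinning down the sandwich estimate on all of $[0,\pi]$ rather than just asymptotically at $0$: $g$ is transcendental and the bounds are essentially tight near the endpoints, so the cleanest route is a direct sign analysis of $\theta-\tfrac{\theta^2}{2\pi}-g(\theta)$ and of $g(\theta)-\theta+\tfrac{\theta^2}{3\pi}$ via their derivatives on $[0,\pi]$.
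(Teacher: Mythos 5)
The paper itself gives no proof of this lemma (it is quoted from Lemma 10 of \cite{HV2017}), and your plan of reducing everything to one-variable properties of $g$ from \eqref{gdef} is essentially the route of that source. For \eqref{piprod_upperbound}, \eqref{piprod_lowerbound}, \eqref{pi_Oone_bound}, \eqref{beta_pi_upperbound}, \eqref{pi_twotheta_upperbound}, \eqref{cos_inv_bound}, \eqref{breve_theta_bound}, and \eqref{breve_theta_lower_bound} your steps check out: $\cos g(\theta)$ has derivative $-\tfrac{1}{\pi}(\pi-\theta)\sin\theta\leqslant 0$, $g$ maps $[0,\pi]$ into $[0,\pi/2]$ with $g(\theta)\leqslant\theta$, $g'(\theta)=\tfrac{(\pi-\theta)\sin\theta}{\pi\sin g(\theta)}\in[0,1]$, the two inductions with $t\mapsto t-\tfrac{t^2}{3\pi}$ and $t\mapsto t-\tfrac{t^2}{2\pi}$ close exactly as you indicate, and $\prod_{i=1}^{d-1}\tfrac{i}{i+3}=\tfrac{6}{d(d+1)(d+2)}\geqslant d^{-3}$; the quadratic sandwich does hold on all of $[0,\pi]$ (with equality in the lower bound at $\theta=\pi$), so the sign analysis you defer is the only remaining bookkeeping there.

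The genuine gap is \eqref{sum_prod_upperbound}. Your step ``controlling the summand sines via monotonicity of the angle sequence'' presupposes $\sin\overline{\theta}_{i,x}\leqslant\sin\overline{\theta}_{0,x}$, which fails because $\sin$ is not monotone on $[0,\pi]$: when $\overline{\theta}_{0,x}$ is near $\pi$, $\overline{\theta}_{1,x}=g(\overline{\theta}_{0,x})$ is near $\pi/2$, so $\sin\overline{\theta}_{1,x}\approx 1$ while $\sin\overline{\theta}_{0,x}\approx 0$. In fact the inequality as transcribed cannot be proved at all: at $\overline{\theta}_{0,x}=\pi$ and $d\geqslant 2$ the left side equals $\Gamma_d=\sum_{i=0}^{d-1}\tfrac{\sin\breve{\theta}_i}{\pi}\prod_{j=i+1}^{d-1}\tfrac{\pi-\breve{\theta}_j}{\pi}$ (for $d=2$ it is $1/\pi$), which is bounded away from zero — the paper itself relies on $\min_{d\geqslant 2}\Gamma_d>0$ in the proof of Lemma \ref{bounds_on_rho_lemma} — whereas the right side is $\tfrac{d}{\pi}\sin\pi=0$. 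Your argument is valid exactly when $\overline{\theta}_{0,x}\leqslant\pi/2$, since then all iterates remain in $[0,\pi/2]$ where $\sin$ is increasing; and the only forms the paper actually invokes are the cruder bound $d/\pi$ obtained by bounding each sine by $1$ (proofs of Lemma \ref{bound_on_norm_of_v} and Proposition \ref{Seps}) and the small-angle regime, where every $\sin\overline{\theta}_{i,x}\leqslant\overline{\theta}_{0,x}$. So you should either prove one of those corrected statements or add the restriction $\overline{\theta}_{0,x}\leqslant\pi/2$; as written, that item of the lemma is a misquotation that no argument will close.
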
We first focus on proving Lemma \ref{bound_on_norm_of_v}.

\begin{proof}[Proof of Lemma \ref{bound_on_norm_of_v}]
    
     Suppose $f$ is differentiable at $x$. By \eqref{piprod_upperbound} and \eqref{pi_twotheta_upperbound}, we have that  $\max(|\psi_{d,x}|,|\zeta_{i,x}|) \leqslant 1$ for any $i = 0,\dots,d$. Hence we have the bound \begin{align}
    \|h_x\| %& = \left\|\frac{1}{2^d}\left[\psi_{d}\zeta_{0} x_* + x - \|x_*\|2\al_{d}\hat{x} - \|x_*\|\psi_{d}\sum_{i=0}^{d-1}\al_{i}\zeta_{i+1}\hat{x}\right]\right\| \nonumber \\
    & \leqslant \frac{1}{2^d}\left(|\psi_{d,x}||\zeta_{0,x}|\|x_*\| + \|x\| + \frac{2|\sin \overline{\theta}_{d,x}|}{\pi}\|x_*\| + |\psi_{d,x}| \sum_{i=0}^{d-1}\frac{|\sin \overline{\theta}_{i,x}|}{\pi}|\zeta_{i+1,x}|\|x_*\|\right) \nonumber \\
    & \leqslant \frac{4 + d/\pi}{2^d}\max(\|x\|,\|x_*\|). \label{bound_on_h_norm}
\end{align}  Combining equation \eqref{bound_on_h_norm} and Lemma \ref{conc_v_to_h}, we attain
\begin{align}
\|v_x\| & \leqslant \|h_x \| + \|h_x - v_x\| \nonumber\\
& \leqslant \frac{4+d/\pi}{2^d}\max(\|x\|,\|x_*\|) + \frac{Kd^3\sqrt{\epsilon}}{2^d}\max(\|x\|,\|x_*\|)
+ \frac{2}{2^{d/2}}\|\eta\| \nonumber\\
& \leqslant \frac{Cd}{2^d}\max(\|x\|,\|x_*\|) + \frac{2}{2^{d/2}}\|\eta\| \label{upper_bound_on_v_norm}
\end{align}
 where in the last inequality we used $Kd^3\sqrt{\epsilon} \leqslant 1$ and set $C = 5 + 1/\pi$.
 
 When $f$ is not differentiable at $x$, we have that by equation \eqref{subdifferential_definition}, we can write $v_x = \sum_{\ell=1}^{s}c_{\ell}v_{\ell}$ where each $c_{\ell} \geqslant 0$ and $\sum_{\ell=1}^s c_{\ell} = 1$. Applying \eqref{upper_bound_on_v_norm} for differentiable points, we have that \begin{align*}
     \|v_{x}\| & \leqslant \sum_{\ell=1}^s c_{\ell}\|v_{\ell}\| \leqslant \sum_{\ell=1}^s c_{\ell} \left( \frac{Cd}{2^d}\max(\|x\|,\|x_*\|) + \frac{2}{2^{d/2}}\|\eta\|\right)  = \frac{Cd}{2^d}\max(\|x\|,\|x_*\|) + \frac{2}{2^{d/2}}\|\eta\|.
 \end{align*}
    
\end{proof}

We now show that $h_x$ is Lipschitz for $x$ outside of a ball of the origin. 

\begin{proof}[Proof of Lemma \ref{h_lipschitz_lemma}] Throughout the proof, we will use the following result from Lemma 5.1 in \cite{Huangetal2018}: \begin{align}
    |\overline{\theta}_{0,x} - \overline{\theta}_{0,y}| \leqslant 4 \max\left(\frac{1}{\|x\|},\frac{1}{\|y\|}\right)\|x - y \| \label{theta_0xy_bound}.
\end{align}
For any $x,y \neq 0$, we have that \begin{align*}
\|h_x - h_y\| & \leqslant \underbrace{\frac{1}{2^d}|\psi_{d,x}\zeta_{0,x} - \psi_{d,y}\zeta_{0,y}|\|x_*\|}_{(I)} + \frac{1}{2^d}\|x-y\| + \underbrace{\frac{\|x_*\|}{2^d}\left\|\frac{2\sin \overline{\theta}_{d,x}}{\pi}\hat{x} -  \frac{2\sin \overline{\theta}_{d,y}}{\pi}\hat{y}\right\|}_{(II)} \\
& + \underbrace{\frac{\|x_*\|}{2^d}\left\|\psi_{d,x}\sum_{i=0}^{d-1}\frac{\sin \overline{\theta}_{i,x}}{\pi}\zeta_{i+1,x}\hat{x} - \psi_{d,y}\sum_{i=0}^{d-1}\frac{\sin \overline{\theta}_{i,y}}{\pi}\zeta_{i+1,y}\hat{y}\right\|}_{(III)}.
\end{align*} We will focus on bounding each of the individual quantities.

\paragraph{(I):} The triangle inequality gives $
|\psi_{d,x}\zeta_{d,x} - \psi_{d,y}\zeta_{d,y}| \leqslant |\psi_{d,x}||\zeta_{d,x} - \zeta_{d,y}| + |\zeta_{d,y}||\psi_{d,x} - \psi_{d,y}|.$ By \eqref{piprod_upperbound} and \eqref{pi_twotheta_upperbound}, we have $\max\{|\psi_{d,x}|,|\zeta_{d,x}|\} \leqslant 1$ for all $x \neq 0$. In addition, \begin{align}
|\psi_{d,x} - \psi_{d,y}| = \frac{2}{\pi}|\overline{\theta}_{d,x} - \overline{\theta}_{d,y}|. \label{psi_dxy_bound}
\end{align} Since $g'(\theta) \in [0,1]$ for all $\theta \in [0,\pi]$ and $\overline{\theta}_{i,x} = g(\overline{\theta}_{i-1,x})$, we have that 
$|\overline{\theta}_{i,x} - \overline{\theta}_{i,y}|  \leqslant |\overline{\theta}_{i-1,x} - \overline{\theta}_{i-1,y}|.$
Repeatedly applying this inequality for each $i \in [d]$, we attain
\begin{align}
|\overline{\theta}_{d,x} - \overline{\theta}_{d,y}| & \leqslant |\overline{\theta}_{0,x} - \overline{\theta}_{0,y}| \leqslant 4 \max\left(\frac{1}{\|x\|},\frac{1}{\|y\|}\right)\|x - y \| \label{theta_dxy_bound}
\end{align} where we used \eqref{theta_0xy_bound} in the last inequality. Hence combining \eqref{psi_dxy_bound} and \eqref{theta_dxy_bound}, we get \begin{align*}
|\psi_{d,x} - \psi_{d,y}| = \frac{2}{\pi}|\overline{\theta}_{d,x} - \overline{\theta}_{d,y}| \leqslant \frac{8}{\pi} \max\left(\frac{1}{\|x\|},\frac{1}{\|y\|}\right)\|x - y \|.
\end{align*} Using the definition of $\zeta_{0,x}$, another application of \eqref{theta_0xy_bound} gives \begin{align*}
|\zeta_{0,x} - \zeta_{0,y}| \leqslant \frac{d}{\pi}|\overline{\theta}_{0,x} - \overline{\theta}_{0,y}| \leqslant \frac{4d}{\pi} \max\left(\frac{1}{\|x\|},\frac{1}{\|y\|}\right)\|x - y \|. 
\end{align*} Combining our results, if $K_1 := \frac{8 + 4d}{\pi}$ then
\begin{align}
(I) \leqslant \frac{\|x_*\|}{2^d}K_1\max\left(\frac{1}{\|x\|},\frac{1}{\|y\|}\right)\|x - y\| \label{I_bound}
\end{align} 

\paragraph{(II):} Observe that we have \begin{align*}
    \frac{2\sin \overline{\theta}_{d,x}}{\pi}\hat{x} & = \frac{2\sin \overline{\theta}_{d,x}}{\pi}\hat{y} + O_1\left(\frac{2}{\pi}\|\hat{x}-\hat{y}\|\right) \\
    & = \left(\frac{2\sin \overline{\theta}_{d,y}}{\pi} + O_1\left(\frac{2}{\pi}|\overline{\theta}_{d,x}-\overline{\theta}_{d,y}|\right)\right)\hat{y} + O_1\left(\frac{2}{\pi}\|\hat{x}-\hat{y}\|\right) \\
    & = \frac{2\sin \overline{\theta}_{d,y}}{\pi}\hat{y} + O_1\left(\frac{2}{\pi}\cdot4 + \frac{4}{\pi}\right)\max\left(\frac{1}{\|x\|},\frac{1}{\|y\|}\right)\|x-y\|
\end{align*}where the second line follows from $|\sin \theta_1 - \sin \theta_2| \leqslant|\theta_1-\theta_2|$ and the third from \eqref{theta_dxy_bound} and $\|\hat{x}-\hat{y}\| \leqslant 2\max\left(\frac{1}{\|x\|},\frac{1}{\|y\|}\right)\|x-y\|$. Thus if $K_2 := 12/\pi$, \begin{align}
(II) \leqslant \frac{\|x_*\|}{2^d}K_2\max\left(\frac{1}{\|x\|},\frac{1}{\|y\|}\right)\|x-y\| \label{II_bound}
\end{align}

\paragraph{(III):} The final term follows from \begin{align*}
    \psi_{d,x}\sum_{i=0}^{d-1}\frac{\sin \overline{\theta}_{i,x}}{\pi}\zeta_{i+1,x}\hat{x} & = \psi_{d,y}\sum_{i=0}^{d-1}\frac{\sin \overline{\theta}_{i,x}}{\pi}\zeta_{i+1,x}\hat{x} + O_1\left(\frac{8d}{\pi^2}\max\left(\frac{1}{\|x\|},\frac{1}{\|y\|}\right)\|x-y\|\right) \\
    & = \psi_{d,y}\sum_{i=0}^{d-1}\frac{\sin \overline{\theta}_{i,y}}{\pi}\zeta_{i+1,x}\hat{x} + O_1\left(\frac{4d}{\pi}+\frac{8d}{\pi^2}\right)\max\left(\frac{1}{\|x\|},\frac{1}{\|y\|}\right)\|x-y\| \\
     & = \psi_{d,y}\sum_{i=0}^{d-1}\frac{\sin \overline{\theta}_{i,y}}{\pi}\zeta_{i+1,y}\hat{x}  + \frac{1}{\pi}\sum_{i=0}^{d-1}O_1\left(\frac{d-i-1}{\pi}|\overline{\theta}_{i-1,x}-\overline{\theta}_{i-1,y}|\right)\\
     & + O_1\left(\frac{4d}{\pi}+\frac{8d}{\pi^2}\right)\max\left(\frac{1}{\|x\|},\frac{1}{\|y\|}\right)\|x-y\| \\
    & = \psi_{d,y}\sum_{i=0}^{d-1}\frac{\sin \overline{\theta}_{i,y}}{\pi}\zeta_{i+1,y}\hat{x} + O_1\left(\frac{2d^2}{\pi^2}+\frac{4d}{\pi}+\frac{8d}{\pi^2}\right)\max\left(\frac{1}{\|x\|},\frac{1}{\|y\|}\right)\|x-y\| \\
    & =\psi_{d,y}\sum_{i=0}^{d-1}\frac{\sin \overline{\theta}_{i,y}}{\pi}\zeta_{i+1,y}\hat{y} + O_1\left(\frac{2d}{\pi}+\frac{2d^2}{\pi^2}+\frac{4d}{\pi}+\frac{8d}{\pi^2}\right)\max\left(\frac{1}{\|x\|},\frac{1}{\|y\|}\right)\|x-y\| 
\end{align*} where the first line follows from equations \eqref{psi_dxy_bound} and \eqref{theta_dxy_bound} and using $|\sin \overline{\theta}_{i,x} \zeta_{i+1,x}| \leqslant 1$ for any $i = 0,\dots,d-1$ and $x$; the second line from \eqref{theta_dxy_bound}; the third from $|\zeta_{i+1,x}-\zeta_{i+1,y}|\leqslant \frac{d-i-1}{\pi}|\overline{\theta}_{i-1,x}-\overline{\theta}_{i-1,y}|$; the fourth from $|\overline{\theta}_{i-1,x}-\overline{\theta}_{i-1,y}|\leqslant |\overline{\theta}_{0,x}-\overline{\theta}_{0,y}|$, \eqref{theta_0xy_bound}, and $\sum_{i=0}^{d-1}(d-i-1)=\frac{1}{2}(d-1)d$; and the fifth from $|\sin \overline{\theta}_{i,y}\zeta_{i+1,y}|\leqslant 1$ for all $i = 0,\dots,d-1$ and $\|\hat{x}-\hat{y}\|\leqslant 2\max\left(\frac{1}{\|x\|},\frac{1}{\|y\|}\right)\|x-y\|$.% We can bound the following quantity by isolating each term in the following way: \begin{align*}
  Combining our results, we have that if $
K_3 : = \frac{8d+2d^2}{\pi^2} + \frac{6d}{\pi}$ then\begin{align}
(III) \leqslant \frac{\|x_*\|}{2^d}K_3 \max\left(\frac{1}{\|x\|},\frac{1}{\|y\|}\right)\|x-y\|. \label{III_bound}
\end{align}

Thus for all $x,y\neq 0$, using equations \eqref{I_bound}, \eqref{II_bound}, and \eqref{III_bound}, we conclude that  \begin{align*}
\|h_x - h_y\|  &\leqslant (I) + \frac{1}{2^d}\|x-y\| + (II) + (III) \\
&\leqslant \frac{1}{2^d}\left(\|x_*\|(K_1 + K_2 + K_3)\max\left(\frac{1}{\|x\|},\frac{1}{\|y\|}\right) + 1\right)\|x-y\| \\
& = \left(\frac{(2d^2 + (10\pi+8)d+20\pi)\|x_*\|}{\pi^22^d}\max\left(\frac{1}{\|x\|},\frac{1}{\|y\|}\right) + \frac{1}{2^d}\right)\|x-y\|.
%& =: L_{x,y}\|x-y\|
\end{align*} %where \begin{align*}
%L_{x,y} 
%& = \frac{(2d^2 + (10\pi+8)d+20\pi)\|x_*\|}{\pi^2 2^d}\max\left(\frac{1}{\|x\|},\frac{1}{\|y\|}\right) + \frac{1}{2^d}.
%\end{align*} 
Then if $x,y \notin \mathcal{B}(0,r)$, we can further conclude that \begin{align*}
\|h_{x} - h_y\| & \leqslant \left(\frac{(2d^2 + (10\pi+8)d+20\pi)\|x_*\|}{r\pi^2 2^d} + \frac{1}{2^d}\right)\|x-y\|.
\end{align*}
\end{proof}

We can now show that $h_x$ approximates any $v_x \in \partial f(x)$, which is formalized in Lemma \ref{conc_v_to_h}. Prior to this proof, we define \begin{align}
    w_x := \Lambda_x^{\top}(\Lambda_x x - \Phi_{x_d,x_{*,d}}\Lambda_{x_*}x_*). \label{wx_def}
\end{align} The key idea is that the RRCP and WDC together imply $v_x \approx w_x$ and the WDC further implies $w_x \approx h_x$ which is shown in Lemma \ref{conc_wx_to_hx}.

\begin{proof}[Proof of Lemma \ref{conc_v_to_h}] Suppose $f$ is differentiable at $x$ so that $v_x = \overline{v}_x - q_x$ where
    $\overline{v}_x = \Lambda_x^\top A_{x_d}^\top (A_{x_d}\Lambda_x x - A_{x_{*,d}}\Lambda_{x_*}x_*)$ and $q_x = \Lambda_x^\top A_{x_d}^\top \eta$.  Observe that \begin{align*}
\left\|\overline{v}_x - w_x\right\| & \leqslant \left\|\Lambda_x^\top(A_{x_d}^\top A_{x_d} - I_{n_d})\Lambda_x\right\| \|x\| + \left\|\Lambda_x^\top(A_{x_d}^\top A_{x_{*,d}} - \Phi_{x_d, x_{*,d}})\Lambda_{x_*}\right\|\|x_*\|.
\end{align*} By the local linearity of $\G$, for sufficiently small $z \in \R^k$, we have $\G(x+z)-\G(x) = \Lambda_x z$. Hence by the RRCP, we have for sufficiently small $z,\tilde{z} \in \R^k$, \begin{align*}
    |\langle (A_{x_d}^{\top}A_{x_d}-I_{n_d})\Lambda_x z,\Lambda_xz\rangle| \leqslant L\epsilon \|\Lambda_x\|^2\|z\|^2
\end{align*} and \begin{align*}
    |\langle (A_{x_d}^{\top}A_{x_{*,d}}-\Phi_{x_d,x_{*,d}})\Lambda_x z,\Lambda_{x_*}\tilde{z}\rangle| \leqslant L\epsilon \|\Lambda_x\|\|\Lambda_{x_*}\|\|z\|\|\tilde{z}\|.
\end{align*} Since this holds for any $z,\tilde{z} \in \R^k$, we conclude that \begin{align*}
    \left\|\Lambda_x^\top(A_{x_d}^\top A_{x_d} - I_{n_d})\Lambda_x\right\| \leqslant L\epsilon\|\Lambda_x\|^2\ \text{and}\ \left\|\Lambda_x^\top(A_{x_d}^\top A_{x_{*,d}} - \Phi_{x_d, x_{*,d}})\Lambda_{x_*}\right\|\leqslant L\epsilon\|\Lambda_x\|\|\Lambda_{x_*}\|.
\end{align*} This implies \begin{align*}
\left\|\overline{v}_x - w_x\right\| & \leqslant L\epsilon\left(\|\Lambda_x\|^2 + \|\Lambda_x\|\|\Lambda_{x_*}\|\right) \max(\|x\|,\|x_*\|) \\
& \leqslant 2L \epsilon\left(\frac{1}{2} + \epsilon\right)^d \max(\|x\|,\|x_*\|) \end{align*} where the last inequality follows by the WDC. Furthermore, by Lemma \ref{conc_wx_to_hx}, we have that \begin{align*}
    \|w_x - h_x\| \leqslant \frac{90d^3\sqrt{\epsilon}}{2^d}\max(\|x\|,\|x_*\|).
\end{align*} Combining these two bounds, we have \begin{align}
\|\overline{v}_{x} - h_{x}\| & \leqslant \|\overline{v}_x - w_x\| + \|w_x - h_{x}\| \nonumber\\
& \leqslant \sqrt{\epsilon}\left(2L \frac{(1 + 2\epsilon)^d}{2^d} + 90\frac{d^3}{2^d}\right)\max(\|x\|,\|x_*\|) \nonumber\\
& \leqslant \sqrt{\epsilon}K \frac{d^3}{2^d}\max(\|x\|,\|x_*\|) \label{bound_on_vbar_to_h}
\end{align} for some universal constant $K$ where the third inequality follows since $2\epsilon d \leqslant 1 \Longrightarrow (1 + 2\epsilon)^d \leqslant e^{2\epsilon d} \leqslant 1 + 4\epsilon d$ so choosing $\epsilon < 1/(4d)$ implies $(1+2\epsilon)^d\leqslant2$. Lastly, to bound $\|q_x\|$, observe that \begin{align}
    \|q_x\| \leqslant \|A_{x_d}\Lambda_x\|\|\eta\| \leqslant \sqrt{1 + L\epsilon}\|\Lambda_x\|\|\eta\| \leqslant \sqrt{\frac{13}{12}(1+ L\epsilon)}\frac{1}{2^{d/2}} \|\eta\| \leqslant \frac{2}{2^{d/2}}\|\eta\|. \label{bound_on_q_norm}
\end{align} where in the second inequality we used \eqref{bound_on_AxdLambdax} and in the third inequality we used \eqref{bound_on_lambda_energy}. The last inequality follows by choosing $\epsilon$ such that $\sqrt{\frac{13}{12}(1+L\epsilon)}\leqslant2$. Then we can combine \eqref{bound_on_vbar_to_h} and \eqref{bound_on_q_norm} to obtain \begin{align*}
    \|v_x - h_x\| \leqslant \|\overline{v}_x - h_x\| + \|q_x\| \leqslant K \frac{d^3\sqrt{\epsilon}}{2^d}\max(\|x\|,\|x_*\|) + \frac{2}{2^{d/2}}\|\eta\|.
\end{align*} 

When $f$ is not differentiable at $x$, we can use \eqref{subdifferential_definition} to write $v_x = \sum_{\ell=1}^s c_{\ell}v_{\ell}$ where $c_{\ell} \geqslant 0$ for $\ell \in [s]$. Moreover, note that for each $v_{\ell}$, there exists a direction $w_{\ell}$ such that $v_{\ell} = \lim_{\delta_{\ell} \rightarrow 0^+} \nabla f(x+\delta_{\ell}w_{\ell})$ and $f$ is differentiable at $x+\delta_{\ell}w_{\ell}$ for sufficiently small $\delta_{\ell} > 0$. Appealing to the continuity of $h_x$ for $x \neq 0$, we obtain \begin{align*}
    \|v_x - h_x\| \leqslant \sum_{\ell=1}^s c_{\ell}\|v_{\ell} - h_x \|  & = \sum_{\ell=1}^s c_{\ell} \Big\|\lim_{\delta_{\ell} \rightarrow 0^+} \nabla f(x+\delta_{\ell}w_{\ell}) - h_x\Big\| \\
    & = \sum_{\ell=1}^s c_{\ell} \lim_{\delta_{\ell} \rightarrow 0^+}\| \nabla f(x+\delta_{\ell}w_{\ell}) - h_{x+\delta_{\ell}w_{\ell}}\| \\
    & = \sum_{\ell=1}^s c_{\ell} \lim_{\delta_{\ell} \rightarrow 0^+}\| v_{x+\delta_{\ell}w_{\ell}} - h_{x+\delta_{\ell}w_{\ell}}\| \\
    & \leqslant \sum_{\ell=1}^s c_{\ell} \left(K \frac{d^3\sqrt{\epsilon}}{2^d}\max(\|x\|,\|x_*\|) + \frac{2}{2^{d/2}}\|\eta\|\right)\\
    & = K \frac{d^3\sqrt{\epsilon}}{2^d}\max(\|x\|,\|x_*\|) + \frac{2}{2^{d/2}}\|\eta\|.
\end{align*} 
\end{proof}

We now establish a technical result that shows $w_x$ is approximated by $h_x$. Prior to this proof, we highlight the following result that summarizes some useful bounds from \cite{HV2017}:

\begin{lem}[Results from Lemma 5 in \cite{HV2017}] \label{boundsforvtoh}
Fix $0 < \epsilon < d^{-4}(1/16\pi)^2$ and let $d \geqslant 2$. Let $W_i$ satisfy the WDC with constant $\epsilon$ for $i = 1,\dots d$. Then for any non-zero $x,y \in \R^k$, the following hold: \begin{align}
\left\|\Lambda_x^\top\Lambda_yy - \tilde{h}_{x,y}\right\| & \leqslant 24 \frac{d^3\sqrt{\epsilon}}{2^d}\|y\|, \label{conc_htilde}\\
\left\langle \Lambda_x x, \Lambda_y y \right\rangle & \geqslant \frac{1}{4\pi}\frac{1}{2^d}\|x\|\|y\|, \label{conc_inn_prod}\\
\left| \frac{\|y_d\|}{\|x_d\|} - \frac{\|y\|}{\|x\|} \right| & \leqslant 8d\epsilon \frac{\|y\|}{\|x\|}, \label{conc_norm}\\
|\theta_d - \overline{\theta}_d| & \leqslant 4d\sqrt{\epsilon} \label{conc_angle}
\end{align} where  $\theta_d : = \angle(x_d,y_d)$, $\overline{\theta}_d : = g^{\circ d}(\angle(x,y))$, and the vector $\tilde{h}_{x,y}$ is defined as \begin{align}
\tilde{h}_{x,y} : = \frac{1}{2^d}\left[ \left(\prod_{i=0}^{d-1} \frac{\pi - \overline{\theta}_i}{\pi}\right)y + \sum_{i=0}^{d-1}\frac{\sin\overline{\theta}_i}{\pi} \left(\prod_{j=i+1}^{d-1} \frac{\pi - \overline{\theta}_j}{\pi}\right)\frac{\|y\|}{\|x\|}x\right] \label{htilde_def}
\end{align} with $\overline{\theta}_0 := \angle(x,y)$ and $\overline{\theta}_i := g(\overline{\theta}_{i-1})$ for $i \in [d]$.
\end{lem}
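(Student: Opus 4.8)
The four estimates are, in essence, results of Lemma~5 of \cite{HV2017} together with the layerwise spectral bounds that the WDC already supplies, so my plan is to assemble each from those ingredients and to verify that the smallness hypothesis $0<\epsilon<d^{-4}(1/16\pi)^2$ (with $d\geqslant 2$) is strong enough to license every step. I would handle the bounds in increasing order of difficulty: the norm ratio \eqref{conc_norm} first, then the angle tracking \eqref{conc_angle}, then the product concentration \eqref{conc_htilde}, and finally the inner-product lower bound \eqref{conc_inn_prod}.

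For \eqref{conc_norm}: specializing the WDC to $x=y$ gives $\|W_{i,+,u}^\top W_{i,+,u}-\frac{1}{2}I\|\leqslant\epsilon$ for every nonzero $u$ and every $i\in[d]$, hence $(1/2-\epsilon)\|u\|^2\leqslant\|W_{i,+,u}u\|^2\leqslant(1/2+\epsilon)\|u\|^2$; iterating over the $d$ layers gives $(1/2-\epsilon)^{d/2}\|x\|\leqslant\|x_d\|\leqslant(1/2+\epsilon)^{d/2}\|x\|$, which is equation~(11) of \cite{HV2017}, already used in the proof of Lemma~\ref{neg_inner_prod_lower_bd_grad}. Combining the bounds for $x_d$ and $y_d$ and using $\left(\frac{1+2\epsilon}{1-2\epsilon}\right)^{d/2}\leqslant\exp\left(\frac{2d\epsilon}{1-2\epsilon}\right)\leqslant 1+8d\epsilon$, and symmetrically $\left(\frac{1-2\epsilon}{1+2\epsilon}\right)^{d/2}\geqslant 1-8d\epsilon$, throughout the allowed range of $\epsilon$, yields \eqref{conc_norm}. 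For \eqref{conc_angle}: with $\theta_i:=\angle(x_i,y_i)$ I would show by induction that $\theta_i=g(\theta_{i-1})+O_1(c\sqrt{\epsilon})$. The WDC gives $\inner{x_{i-1}}{W_{i,+,x}^\top W_{i,+,y}y_{i-1}}=\inner{x_{i-1}}{Q_{x_{i-1},y_{i-1}}y_{i-1}}+O_1(\epsilon\|x_{i-1}\|\|y_{i-1}\|)$, while a direct computation from \eqref{Qdef} gives $\inner{x_{i-1}}{Q_{x_{i-1},y_{i-1}}y_{i-1}}=\frac{1}{2}\|x_{i-1}\|\|y_{i-1}\|\cos g(\theta_{i-1})$; combined with the norm-distortion bounds for the two denominator norms this shows $\cos\theta_i=\cos g(\theta_{i-1})+O_1(c_1\epsilon)$, and converting an $O(\epsilon)$ error on the cosine to an error on the angle costs a square root near the endpoint $\theta=0$, so $|\theta_i-g(\theta_{i-1})|\leqslant c\sqrt{\epsilon}$. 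Since $g'\in[0,1]$ on $[0,\pi]$, the single-layer errors accumulate only additively over the $d$ layers, giving $|\theta_d-\overline{\theta}_d|\leqslant dc\sqrt{\epsilon}\leqslant 4d\sqrt{\epsilon}$ for the appropriate constant.

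The heart of the lemma is \eqref{conc_htilde}, which I would invoke from \cite{HV2017} rather than reprove; its mechanism is a telescoping of $\Lambda_x^\top\Lambda_y=W_{1,+,x}^\top\cdots(W_{d,+,x}^\top W_{d,+,y})\cdots W_{1,+,y}$. One replaces the innermost factor $W_{d,+,x}^\top W_{d,+,y}$ by $Q_{x_{d-1},y_{d-1}}$ at a cost of at most $\epsilon$ times the product of the operator norms of the flanking factors (each at most $(1/2+\epsilon)^{1/2}$) and $\|y\|$, then uses that $Q_{x_{d-1},y_{d-1}}$ is supported on $\text{span}\{x_{d-1},y_{d-1}\}$, with $W_{d-1,+,x}x_{d-2}=x_{d-1}$ and $W_{d-1,+,y}y_{d-2}=y_{d-1}$, to reduce $W_{d-1,+,x}^\top Q_{x_{d-1},y_{d-1}}W_{d-1,+,y}$ to the analogous object one layer down, and iterates. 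Collecting the $d$ layerwise errors, together with the angle errors from \eqref{conc_angle} feeding into the $\sin\overline{\theta}_i$ and $M_{\hat{x}\leftrightarrow\hat{y}}$ terms and the lower bound \eqref{piprod_lowerbound} on the products $\prod(\pi-\overline{\theta}_j)/\pi$, produces the stated bound $24 d^3\sqrt{\epsilon}\,2^{-d}\|y\|$; the $\sqrt{\epsilon}$ and the extra polynomial factors in $d$ come exactly from this angle propagation. The only thing to check on our side is that $\epsilon<d^{-4}(1/16\pi)^2$ meets the smallness condition under which \cite{HV2017} establishes this, which it does, being strictly stronger.

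Finally, \eqref{conc_inn_prod} does not need \eqref{conc_htilde} at all: write $\inner{\Lambda_x x}{\Lambda_y y}=\|x_d\|\,\|y_d\|\cos\theta_d$ with $\theta_d=\angle(x_d,y_d)$; by \eqref{conc_angle} and $|\cos\theta_d-\cos\overline{\theta}_d|\leqslant|\theta_d-\overline{\theta}_d|\leqslant 4d\sqrt{\epsilon}$, together with $\cos\overline{\theta}_d\geqslant 1/\pi$ (as in \eqref{cos_inv_bound}, since $g$ maps $[0,\pi]$ into $[0,\pi/2]$ and then into $[0,\cos^{-1}(1/\pi)]$, so $\overline{\theta}_d\leqslant\cos^{-1}(1/\pi)$ for $d\geqslant 2$), one gets $\cos\theta_d\geqslant\frac{1}{\pi}-4d\sqrt{\epsilon}\geqslant\frac{7}{8\pi}$ in the allowed range; combining with $\|x_d\|\,\|y_d\|\geqslant(1/2-\epsilon)^d\|x\|\|y\|\geqslant\frac{1}{2}2^{-d}\|x\|\|y\|$ gives $\inner{\Lambda_x x}{\Lambda_y y}\geqslant\frac{7}{16\pi}2^{-d}\|x\|\|y\|\geqslant\frac{1}{4\pi}2^{-d}\|x\|\|y\|$. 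The main obstacle is the telescoping behind \eqref{conc_htilde} --- keeping track of how the rank-$\leqslant 2$ operators $Q_{x_i,y_i}$ compose across layers and how the $O(\sqrt{\epsilon})$ angle perturbations of \eqref{conc_angle} propagate through that composition, which is where the $\sqrt{\epsilon}$ and the $d^3$ factor of \eqref{conc_htilde} originate --- so I would lean on \cite{HV2017} for that step and confine the new work to bookkeeping with the WDC and the $\overline{\theta}$-bounds lemma.
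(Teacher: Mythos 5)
Your proposal is correct, and it is consistent with how the paper handles this statement: the paper gives no proof at all, importing all four bounds verbatim from Lemma 5 of \cite{HV2017}, exactly as you do for the core estimate \eqref{conc_htilde}. What you add beyond the paper is a self-contained reconstruction of the three easier bounds, and those derivations check out: specializing the WDC to $x=y$ gives $Q_{x,x}=\tfrac12 I_k$ and hence the layerwise norm distortion $(1/2-\epsilon)^{d/2}\|x\|\leqslant\|x_d\|\leqslant(1/2+\epsilon)^{d/2}\|x\|$, from which your $(1+2\epsilon)^{d/2}(1-2\epsilon)^{-d/2}\leqslant 1+8d\epsilon$ bookkeeping yields \eqref{conc_norm}; the identity $\langle x,Q_{x,y}y\rangle=\tfrac12\|x\|\|y\|\cos g(\theta_{x,y})$ plus the $1$-Lipschitzness of $g$ gives the additive accumulation of the per-layer $O(\sqrt{\epsilon})$ angle errors in \eqref{conc_angle} (the square root indeed arising from inverting the cosine near $0$, which is also where HV2017's $\sqrt{\epsilon}$ comes from); and your chain $\cos\theta_d\geqslant 1/\pi-4d\sqrt{\epsilon}\geqslant 7/(8\pi)$, $(1/2-\epsilon)^d\geqslant\tfrac12 2^{-d}$ under $\epsilon<d^{-4}(16\pi)^{-2}$, $d\geqslant 2$, correctly delivers \eqref{conc_inn_prod} with margin ($7/16\pi>1/4\pi$). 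The only caveat is that \eqref{conc_angle} and \eqref{conc_norm} are themselves stated in \cite{HV2017} with specific constants ($4d\sqrt{\epsilon}$, $8d\epsilon$), so if you rederive them you must track the constants as you sketch rather than absorb them into an unnamed $c$; your outline does this plausibly but would need the explicit single-layer constant to certify the factor $4$.
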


We now establish that $w_x$ is approximated by $h_x$.

\begin{lem} \label{conc_wx_to_hx}
Fix $0 < \epsilon < d^{-4}(1/16\pi)^2$. Let $W_i$ satisfy the WDC with constant $\epsilon$ for $i = 1,\dots d$. For any non-zero $x \in \R^k$, we have \begin{align*}
\|w_{x} - h_{x}\| \leqslant \frac{90d^3}{2^d}\sqrt{\epsilon}\max(\|x\|,\|x_*\|)
\end{align*} where $w_x$ is defined by \eqref{wx_def}.
\end{lem}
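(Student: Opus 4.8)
The plan is to expand $w_x$ using the closed form of $\Phi$ and then reduce everything to the quantities already controlled by Lemma~\ref{boundsforvtoh}. Write $x_d := \Lambda_x x$, $x_{*,d} := \Lambda_{x_*}x_*$, and $\theta_d := \angle(x_d,x_{*,d})$. First I would record the elementary identity $M_{\hat x_d \leftrightarrow \hat x_{*,d}}\,x_{*,d} = \|x_{*,d}\|\,\hat x_d = \tfrac{\|x_{*,d}\|}{\|x_d\|}x_d$, which holds because $M_{\hat x_d\leftrightarrow\hat x_{*,d}}$ fixes $\mathrm{span}\{x_d,x_{*,d}\}$ and maps $\hat x_{*,d}\mapsto\hat x_d$; this stays valid in the degenerate cases $\theta_d\in\{0,\pi\}$, where $M_{\hat x_d\leftrightarrow\hat x_{*,d}} = \pm\hat x_d\hat x_d^\top$ and $\hat x_{*,d}=\pm\hat x_d$ with matching signs. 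Plugging \eqref{Phi_def} into the definition \eqref{wx_def} of $w_x$ and using $\Lambda_x x = x_d$, $\Lambda_{x_*}x_* = x_{*,d}$ then gives the exact identity
\[
  w_x \;=\; \Lambda_x^\top\Lambda_x x \;-\; \frac{\pi - 2\theta_d}{\pi}\,\Lambda_x^\top\Lambda_{x_*}x_* \;-\; \frac{2\sin\theta_d}{\pi}\cdot\frac{\|x_{*,d}\|}{\|x_d\|}\,\Lambda_x^\top\Lambda_x x.
\]

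Next I would invoke \eqref{conc_htilde} twice. Taking $y=x$ in \eqref{htilde_def} forces $\overline\theta_i=0$ for all $i$, so $\tilde h_{x,x} = 2^{-d}x$ and $\|\Lambda_x^\top\Lambda_x x - 2^{-d}x\|\leqslant 24\,d^3\sqrt\epsilon\,2^{-d}\|x\|$; taking $y=x_*$ gives $\|\Lambda_x^\top\Lambda_{x_*}x_* - \tilde h_{x,x_*}\|\leqslant 24\,d^3\sqrt\epsilon\,2^{-d}\|x_*\|$, where $\tilde h_{x,x_*}$ is \eqref{htilde_def} with $\overline\theta_0=\angle(x,x_*)=\overline\theta_{0,x}$. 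I would substitute these idealized values and, by a term-by-term triangle inequality, simultaneously replace $\theta_d$ by $\overline\theta_{d,x}$ (permissible up to error via \eqref{conc_angle}, $|\theta_d-\overline\theta_{d,x}|\leqslant 4d\sqrt\epsilon$) and $\|x_{*,d}\|/\|x_d\|$ by $\|x_*\|/\|x\|$ (error via \eqref{conc_norm}, the discrepancy being $\leqslant 8d\epsilon\,\|x_*\|/\|x\|$) in the three scalar coefficients. Expanding $\tilde h_{x,x_*}$ with the notation $\psi_{d,x}=\tfrac{\pi-2\overline\theta_{d,x}}{\pi}$, $\zeta_{0,x}$, $\zeta_{i+1,x}$ and using $\|x_*\|\hat x_* = x_*$, $\|x_*\|\hat x = (\|x_*\|/\|x\|)x$, one checks that the resulting leading term is precisely $h_x$; this is a routine regrouping of the $\hat x_*$-- and $\hat x$--components.

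What remains is to sum the perturbation errors. There are five of them: the two $24\,d^3\sqrt\epsilon\,2^{-d}$ terms from \eqref{conc_htilde}; the term $|c_1 - \bar c_1|\,\|\tilde h_{x,x_*}\|$ from perturbing $\theta_d$ in $\tfrac{\pi-2\theta_d}{\pi}$, for which I would bound $|c_1 - \bar c_1|\leqslant\tfrac{2}{\pi}|\theta_d-\overline\theta_{d,x}|\leqslant\tfrac{8d\sqrt\epsilon}{\pi}$ and $\|\tilde h_{x,x_*}\|\leqslant \tfrac{1+d/\pi}{2^d}\|x_*\|$ via \eqref{piprod_upperbound} and \eqref{sum_prod_upperbound}; the term $|c_2|\,\|\Lambda_x^\top\Lambda_x x - 2^{-d}x\|$ from the $M$-piece, using $|c_2|\leqslant\tfrac{2}{\pi}(1+8d\epsilon)\tfrac{\|x_*\|}{\|x\|}$; and the term $|c_2-\bar c_2|\,2^{-d}\|x\|$ from perturbing the norm ratio and $\sin\theta_d$, bounded via $|c_2-\bar c_2|\leqslant\tfrac{2}{\pi}\big(8d\epsilon + |\sin\theta_d-\sin\overline\theta_{d,x}|\big)\tfrac{\|x_*\|}{\|x\|}\leqslant\tfrac{2}{\pi}(8d\epsilon+4d\sqrt\epsilon)\tfrac{\|x_*\|}{\|x\|}$. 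Here $c_1=\tfrac{\pi-2\theta_d}{\pi}$, $c_2=\tfrac{2\sin\theta_d}{\pi}\tfrac{\|x_{*,d}\|}{\|x_d\|}$, the $\bar c_i$ are their idealized counterparts, and $|c_1|\leqslant 1$ by \eqref{pi_twotheta_upperbound}. Since $\epsilon < d^{-4}(1/16\pi)^2$, the factors $(1+8d\epsilon)$ lie within $1.001$ and $\epsilon\leqslant\sqrt\epsilon$, so bounding every power of $d$ above by $d^3$ and each $1/\pi$-weighted piece crudely, the sum of the five contributions is at most $(24+24+2+16+1)\,d^3\sqrt\epsilon\,2^{-d}\max(\|x\|,\|x_*\|)\leqslant 90\,d^3\sqrt\epsilon\,2^{-d}\max(\|x\|,\|x_*\|)$, as claimed.

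The only obstacle is quantitative, not conceptual: to land below the stated constant $90$ one must use \eqref{conc_norm} and \eqref{conc_angle} in their sharp forms (exploiting the smallness of $8d\epsilon$) rather than crude estimates such as $\|x_{*,d}\|/\|x_d\|\leqslant 2\|x_*\|/\|x\|$, and one must track which coefficient multiplies $\|x\|$ versus $\|x_*\|$ before passing to $\max(\|x\|,\|x_*\|)$. A minor point worth writing out is the $M$-vector identity in the degenerate-angle cases, together with the observation that Lemma~\ref{boundsforvtoh} (i.e.\ the WDC) is exactly what licenses replacing $\Lambda_x^\top\Lambda_x x$, $\Lambda_x^\top\Lambda_{x_*}x_*$, the angle $\theta_d$, and the norm ratio by their first-layer expected values.
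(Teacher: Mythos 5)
Your proposal follows essentially the same route as the paper's own proof: the same exact expansion of $w_x$ via $M_{\hat{x}_d\leftrightarrow\hat{x}_{*,d}}x_{*,d}=\|x_{*,d}\|\hat{x}_d$, the same replacement of $\Lambda_x^\top\Lambda_x x$, $\Lambda_x^\top\Lambda_{x_*}x_*$, $\theta_d$, and $\|x_{*,d}\|/\|x_d\|$ by their idealized values using \eqref{conc_htilde}, \eqref{conc_angle}, and \eqref{conc_norm}, and a comparable term-by-term error budget that lands under the constant $90$. The argument and bookkeeping are correct, so no changes are needed.
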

\begin{proof} Fix $x \in \R^k\setminus\{0\}$ and set $\theta_d := \angle(x_d,x_{*,d})$. Note that by the definition of $\Phi_{z,w}$ and $M_{\hat{z}\leftrightarrow \hat{w}}$, $w_x$ can be written as \begin{align*}
    w_x
    = \Lambda_x^{\top}\Lambda_x x - \frac{\pi-2\theta_d}{\pi}\Lambda_x^{\top}\Lambda_{x_*}x_* - \frac{2\sin\theta_d}{\pi}\frac{\|\Lambda_{x_*} x_*\|}{\|\Lambda_x x\|}\Lambda_x^{\top}\Lambda_x x
\end{align*} where $\theta_d := \angle (x_d,x_{*,d}).$ Observe that \begin{align*}
\|w_x- h_{x}\| & \leqslant \left\|\Lambda_x^\top\Lambda_x x - \frac{1}{2^d}x\right\| + \left\|\frac{\pi - 2\theta_d}{\pi}\Lambda_x^\top\Lambda_{x_*}x_* - \frac{\pi - 2\overline{\theta}_d}{\pi}\tilde{h}_{x,x_*}\right\|\ \\
& + \left\|\frac{2\sin \theta_d}{\pi}\frac{\|\Lambda_{x_*}x_*\|}{\|\Lambda_x x\|}\Lambda_x^\top\Lambda_x x - \frac{2\sin \overline{\theta}_d}{\pi} \frac{\|x_*\|}{\|x\|}\frac{1}{2^d}x\right\|
\end{align*} where $\overline{\theta}_d:=g^{\circ d}(\angle(x,x_*))$ and $\tilde{h}_{x,x_*}$ is defined in \eqref{htilde_def}.

We focus on bounding each individual quantity separately. For the first term, we have that by (\ref{conc_htilde}) in Lemma \ref{boundsforvtoh}, \begin{align}
\Lambda_x^\top\Lambda_x x  = \frac{1}{2^d}x + O_1 \left(  \frac{24d^3 }{2^d}\right)\sqrt{\epsilon}\max(\|x\|,\|x_*\|). \label{wx_lemma_t1}
\end{align} For the second term, observe that \begin{align}
    \frac{\pi - 2\theta_d}{\pi}\Lambda_x^\top\Lambda_{x_*}x_* & = \frac{\pi - 2\theta_d}{\pi}\tilde{h}_{x,x_*} + O_1\left(  \frac{24d^3 \sqrt{\epsilon}}{2^d}\|x\|\right) \nonumber \\
    & = \left(\frac{\pi - 2\overline{\theta}_d}{\pi} + O_1\left(\frac{2}{\pi}\cdot 4d\sqrt{\epsilon}\right)\right)\tilde{h}_{x,x_*} + O_1\left(  \frac{24d^3 \sqrt{\epsilon}}{2^d}\|x\|\right) \nonumber \\
    & = \frac{\pi - 2\overline{\theta}_d}{\pi}\tilde{h}_{x,x_*} + O_1\left(\frac{8d}{\pi}\cdot\frac{(1 + d/\pi)}{2^d} + \frac{24d^3}{2^d}\right)\sqrt{\epsilon}\max(\|x\|,\|x_*\|) \label{wx_lemma_t2}
\end{align}where in the first equality we used \eqref{conc_htilde} and in the second we used \eqref{conc_angle} and the fact that $\|\tilde{h}_{x,x_*}\|\leqslant 2^{-d}(1+d/\pi)\|x_*\|.$ For the final term, observe that \begin{align}
    \frac{2\sin \theta_d}{\pi}\frac{\|\Lambda_{x_*}x_*\|}{\|\Lambda_x x\|}\Lambda_x^\top\Lambda_x x & = \frac{2\sin \theta_d}{\pi} \frac{\|\Lambda_{x_*}x_*\|}{\|\Lambda_x x\|}\left(\frac{1}{2^d}x + O_1\left(\frac{24d^3\sqrt{\epsilon}}{2^d}\|x\|\right)\right) \nonumber\\
    & = \frac{2\sin \theta_d}{\pi}\frac{\|\Lambda_{x_*}x_*\|}{\|\Lambda_x x\|}\frac{1}{2^d}x + O_1\left(\frac{24d^3\sqrt{\epsilon}}{2^d}\frac{2}{\pi}\frac{\|\Lambda_{x_*}x_*\|}{\|\Lambda_x x\|}\|x\|\right) \nonumber\\
    & = \frac{2\sin \theta_d}{\pi}\left(\frac{\|x_*\|}{\|x\|} + O_1\left(8d\epsilon\frac{\|x_*\|}{\|x\|}\right)\right)\frac{1}{2^d}x + O_1\left(\frac{4\cdot 24d^3\sqrt{\epsilon}}{\pi2^d}\|x_*\|\right)\nonumber \\
    & = \frac{2\sin \theta_d}{\pi}\frac{\|x_*\|}{\|x\|}\frac{1}{2^d}x + O_1\left(\frac{16d\epsilon}{\pi 2^d}\|x\| + \frac{4\cdot 24 d^3\sqrt{\epsilon}}{\pi 2^d}\|x_*\|\right) \nonumber\\
    & = \left(\frac{2\sin \overline{\theta}_d}{\pi} + O_1\left(\frac{8d\sqrt{\epsilon}}{\pi}\right)\right)\frac{\|x_*\|}{\|x\|}\frac{1}{2^d}x + O_1\left(\frac{16d\epsilon}{\pi 2^d}\|x\| + \frac{4\cdot 24 d^3\sqrt{\epsilon}}{\pi 2^d}\|x_*\|\right) \nonumber\\
    & = \frac{2\sin \overline{\theta}_d}{\pi}\frac{\|x_*\|}{\|x\|}\frac{1}{2^d}x + O_1\left(\frac{1}{2^d}\left(\frac{24d + 4\cdot 24d^3}{\pi}\right)\right)\sqrt{\epsilon}\max(\|x\|,\|x_*\|). \label{wx_lemma_t3}
\end{align} where the first line follows from \eqref{conc_htilde}; the second line from $|\sin \theta| \leqslant 1$; the third line from \eqref{conc_norm}; and the fifth line from \eqref{conc_angle}. Combining equations \eqref{wx_lemma_t1}, \eqref{wx_lemma_t2}, and \eqref{wx_lemma_t3} achieves the desired result.%note that it is upper bounded by \begin{align*}
\end{proof}

\subsubsection{Proofs for Section \ref{conv_proof_second_subsection}} \label{proofs_for_second_subsection}

We first establish Lemma \ref{Seps_lemma} which shows that the zeros of $h_x$ occur near $x_*$ and a particular negative multiple $-\rho_d x_*$. Here the lemma is stated more precisely. 

\begin{prop} \label{Seps} Suppose $\beta > 0$ obeys $24\pi d^6\sqrt{\beta}\leqslant 1$ and define $\Seps_{\beta}$ as in \eqref{Sbeta_def}.  If $x \in \Seps_{\beta}$, then either \begin{align*}
|\overline{\theta}_{0,x}| \leqslant 82 \pi d^4\beta\ \text{and}\ |\|x\| - \|x_*\|| \leqslant 838\pi d^5\beta\|x_*\|
\end{align*} or \begin{align*}
|\overline{\theta}_{0,x} - \pi|\leqslant 24\pi^2d^4\sqrt{\beta}\ \text{and}\ |\|x\| - \rho_d\|x_*\|| \leqslant 3517d^8\sqrt{\beta}\|x_*\|.
\end{align*} In particular, we have 
\begin{align*}
\Seps_{\beta} \subset \mathcal{B}(x_*,70000\pi^2 d^9\beta\|x_*\|) \cup \mathcal{B}(-\rho_d x_*, 77422\pi^2d^{12}\sqrt{\beta}\|x_*\|).
\end{align*} Additionally, $\rho_d \rightarrow 1$ as $d \rightarrow \infty$.
\end{prop}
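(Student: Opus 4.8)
The plan is to exploit the two-term structure of $h_x$. Write $\theta:=\angle(x,x_*)=\overline\theta_{0,x}$, so that, with the sign of the first term chosen so that $h_{x_*}=0$ and in the notation $\psi_{d,x}=\tfrac{\pi-2\overline\theta_{d,x}}{\pi}$, $\zeta_{i+1,x}=\prod_{j=i+1}^{d-1}\tfrac{\pi-\overline\theta_{j,x}}{\pi}$ introduced before the statement,
\[
 h_x=\frac1{2^d}\Bigl(-\psi_{d,x}\zeta_{0,x}\|x_*\|\,\hat x_*+\bigl(\|x\|-\|x_*\|\,\xi(\theta)\bigr)\hat x\Bigr),\qquad \xi(\theta):=\tfrac{2\sin\overline\theta_{d,x}}{\pi}+\psi_{d,x}\!\sum_{i=0}^{d-1}\tfrac{\sin\overline\theta_{i,x}}{\pi}\zeta_{i+1,x},
\]
where $\xi$ and $\psi_{d,x}\zeta_{0,x}$ are functions of $\theta$ alone through $\overline\theta_{i,x}=g(\overline\theta_{i-1,x})$. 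Projecting onto $\hat x$ and onto the unit vector $u\perp\hat x$ in $\mathrm{span}(x,x_*)$ via $\hat x_*=\cos\theta\,\hat x+\sin\theta\,u$ gives the Pythagorean identity
\[
 (2^d\|h_x\|)^2=\psi_{d,x}^2\zeta_{0,x}^2\|x_*\|^2\sin^2\theta+\bigl(\|x\|-\|x_*\|\xi(\theta)-\psi_{d,x}\zeta_{0,x}\|x_*\|\cos\theta\bigr)^2 ,
\]
so $x\in\Seps_\beta$ forces each summand to be at most $\beta^2\max(\|x\|,\|x_*\|)^2$; the first controls $\theta$, the second controls $\|x\|$.

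First I would get a crude norm bound: since $|\xi(\theta)|\le\tfrac{d+2}{\pi}$ by \eqref{sum_prod_upperbound} and $0\le\psi_{d,x}\zeta_{0,x}\le1$ by \eqref{piprod_upperbound}, \eqref{pi_twotheta_upperbound}, the second summand being $\le\beta\max(\|x\|,\|x_*\|)$ yields $\|x\|\le(1-\beta)^{-1}(\tfrac{d+2}{\pi}+1)\|x_*\|$, hence $\max(\|x\|,\|x_*\|)\le Cd\|x_*\|$ for $\beta\le\tfrac12$. Then comes the angle dichotomy: $\psi_{d,x}\ge c_0:=\tfrac{\pi-2\cos^{-1}(1/\pi)}{\pi}>0$ by \eqref{cos_inv_bound} and $\zeta_{0,x}\ge\tfrac{\pi-\theta}{\pi d^3}$ by \eqref{piprod_lowerbound}, so the first summand gives $(\pi-\theta)\sin\theta\le C'd^4\beta$. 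As $(\pi-\theta)\sin\theta$ is bounded below by an absolute constant on $[\pi/4,3\pi/4]$ and $24\pi d^6\sqrt\beta\le1$ makes the right side genuinely small, $\theta$ is forced into $[0,\pi/4)$ or $(3\pi/4,\pi]$; on the former, $\sin\theta\ge\tfrac2\pi\theta$ and $\pi-\theta\ge\tfrac{3\pi}4$ give $\theta\le 82\pi d^4\beta$, on the latter, $\sin\theta\ge\tfrac2\pi(\pi-\theta)$ gives $(\pi-\theta)^2\le\tfrac\pi2(\pi-\theta)\sin\theta$, i.e. $\pi-\theta\le 24\pi^2d^4\sqrt\beta$.

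It remains to turn each case into a ball. If $\theta\le82\pi d^4\beta$: since $g$ is non-decreasing with $g(\theta)\le\theta$ (from $g(0)=0$, $g'\in[0,1]$), every $\overline\theta_{i,x}\le\theta$, so $\xi(\theta)=O(d\theta)$ and $1-\psi_{d,x}\zeta_{0,x}\cos\theta=O(d\theta)$; feeding this and $\max(\|x\|,\|x_*\|)\le Cd\|x_*\|$ into the second summand gives $\bigl|\,\|x\|-\|x_*\|\,\bigr|\le838\pi d^5\beta\|x_*\|$, and the chord estimate $\|x-x_*\|\le\bigl|\,\|x\|-\|x_*\|\,\bigr|+\|x\|\theta$ lands $x$ in $\mathcal B(x_*,70000\pi^2d^9\beta\|x_*\|)$. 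If $\pi-\theta\le24\pi^2d^4\sqrt\beta$: then $\zeta_{0,x}\le\tfrac{\pi-\theta}{\pi}$ makes the whole $\hat x_*$-contribution $O(d^2\sqrt\beta)\|x_*\|$, while \eqref{pi_Oone_bound} gives $\overline\theta_{i,x}=\breve\theta_i+O_1\!\bigl(i(\pi-\theta)\bigr)$; propagating this through the at-most-$d$ factors of each $\zeta_{i+1,x}$ and the outer sum defining $\xi$ gives $|\xi(\theta)-\rho_d|=O(\mathrm{poly}(d)\sqrt\beta)$, so the second summand yields $\bigl|\,\|x\|-\rho_d\|x_*\|\,\bigr|\le3517d^8\sqrt\beta\|x_*\|$, and $\|x+\rho_d x_*\|\le\bigl|\,\|x\|-\rho_d\|x_*\|\,\bigr|+\rho_d\|x_*\|(\pi-\theta)$ lands $x$ in $\mathcal B(-\rho_d x_*,77422\pi^2d^{12}\sqrt\beta\|x_*\|)$. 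For $\rho_d\to1$ I would write $\rho_d=\tfrac{2\sin\breve\theta_d}{\pi}+\tfrac{\pi-2\breve\theta_d}{\pi}S_d$ with $S_d=\tfrac{\sin\breve\theta_{d-1}}{\pi}+(1-\tfrac{\breve\theta_{d-1}}{\pi})S_{d-1}$, $S_1=0$; the bounds $\breve\theta_i\le\tfrac{3\pi}{i+3}\to0$ and $\breve\theta_i\ge\tfrac\pi{i+1}$ (whence $\prod_j(1-\tfrac{\breve\theta_j}{\pi})\to0$) together with $\breve\theta_\ell-\sin\breve\theta_\ell=O(\ell^{-3})$ summable give $1-S_d=\prod_{j<d}(1-\tfrac{\breve\theta_j}{\pi})+\sum_{\ell<d}\bigl[\prod_{\ell<j<d}(1-\tfrac{\breve\theta_j}{\pi})\bigr]\tfrac{\breve\theta_\ell-\sin\breve\theta_\ell}{\pi}\to0$, hence $\rho_d\to1$; this is also recorded in \cite{HV2017}.

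The step I expect to be the main obstacle is the error bookkeeping in the near-$\pi$ case: one must carry the $O_1\!\bigl(i(\pi-\theta)\bigr)$ perturbations of the $\overline\theta_{i,x}$ through the nested products $\prod_{j=i+1}^{d-1}\tfrac{\pi-\overline\theta_{j,x}}{\pi}$ and the outer sum defining $\xi$, tracking exactly how the powers of $d$ accumulate, while verifying that the hypothesis $24\pi d^6\sqrt\beta\le1$ is precisely what keeps every ``small'' quantity in the range where the elementary inequalities ($\sin\theta\ge\tfrac2\pi\theta$, the lower bound on $(\pi-\theta)\sin\theta$ off $[\pi/4,3\pi/4]$, etc.) and the clean two-way dichotomy remain valid.
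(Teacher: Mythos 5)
Your proposal is correct, and from the dichotomy onward it coincides with the paper's proof (norm bound on $\Seps_{\beta}$, small-angle and large-angle perturbation analysis via the bounds \eqref{piprod_lowerbound}--\eqref{breve_theta_lower_bound}, chord estimate, and the limit $\rho_d\to 1$); the interesting difference is where the dichotomy comes from. The paper writes $h_x=\frac{1}{2^d}(-\xi\hat x_*+(r-\al)\hat x)$ and projects onto $\hat x_*$ and its orthogonal complement, so its transverse condition is $|\sin\overline\theta_0\,(r-\al)|\leqslant\beta M$, which can be small either because the angle is extreme or because $r-\al$ is small; this forces the paper into a three-case ``sufficiency'' analysis (small $\sin\overline\theta_0$; large $\sin\overline\theta_0$ with $|r-\al|\geqslant\sqrt\beta M$; $|r-\al|\leqslant\sqrt\beta M$). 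You project instead onto $\hat x$ and its orthogonal complement, so the transverse component is $\psi_{d,x}\zeta_{0,x}\|x_*\|\sin\theta$, a function of the angle alone; combined with $\psi_{d,x}\geqslant\frac{1}{2\pi}$ (from \eqref{cos_inv_bound}) and $\zeta_{0,x}\geqslant\frac{\pi-\theta}{\pi d^3}$ (from \eqref{piprod_lowerbound}) this gives $(\pi-\theta)\sin\theta\lesssim d^4\beta$ directly and yields the clean two-way dichotomy with no case analysis on $r-\al$; your radial equation also gives the a priori bound $\max(\|x\|,\|x_*\|)\lesssim d\,\|x_*\|$ without the paper's $\sin/\cos$ split. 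Two small caveats: in the large-angle case the paper outsources the estimate $\sum_{i}\frac{\sin\overline\theta_i}{\pi}\zeta_{i+1}=\Gamma_d+O_1(3d^3\delta)$ to \cite{HV2017}, so the bookkeeping you flag as the main obstacle is exactly the step you would need to re-derive (your observation that $24\pi d^6\sqrt\beta\leqslant 1$ gives $d^2\delta/\pi\leqslant 1$ is precisely the hypothesis used there); and for $\rho_d\to1$, summability of $\breve\theta_\ell-\sin\breve\theta_\ell$ alone does not suffice — you also need the tail products $\prod_{\ell<j<d}(1-\breve\theta_j/\pi)\leqslant\frac{\ell+1}{d}$ (or dominated convergence) to kill each term, which your telescoping identity does supply, so the argument closes. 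You do not track the constants $82\pi$, $838\pi$, $3517$, etc.\ line by line, but your intermediate bounds are at least as strong as the paper's, so the stated radii go through.
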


\begin{proof}[Proof of Proposition \ref{Seps}]
Without loss of generality, let $x_* = e_1$ and $\|x_*\| = 1$ where $e_1$ is the first standard basis vector in $\R^{k}$. We also let $x = r\cos \overline{\theta}_0 e_1 + r\sin \overline{\theta}_0 e_2$ where $\overline{\theta}_0 = \angle(x,x_*)$. For simplicity, we use the shorthand notation $\overline{\theta}_{i} = \overline{\theta}_{i,x}$ for $i \in [d]$. Set \begin{align*}
\xi = \left(\frac{\pi - 2\overline{\theta}_d}{\pi}\right)\left(\prod_{i=0}^{d-1}\frac{\pi - \overline{\theta}_i}{\pi}\right)\ \text{and}\ \al = \frac{2\sin \overline{\theta}_d}{\pi} + \left(\frac{\pi - 2\overline{\theta}_d}{\pi}\right) \sum_{i=0}^{d-1} \frac{\sin \overline{\theta}_i}{\pi}\left(\prod_{j=i+1}^{d-1} \frac{\pi - \overline{\theta}_j}{\pi}\right).
\end{align*} Note that we can write \begin{align*}
h_{x} = \frac{1}{2^d}\left(-\xi \hat{x}_* + (r - \al)\hat{x}\right)
\end{align*} Then if $x \in \Seps_{\beta}$, we have that \begin{align}
|-\xi + \cos \overline{\theta}_0 (r - \al)| & \leqslant \beta M \label{cosleqM} \\
|\sin \overline{\theta}_0(r - \al)| & \leqslant \beta M \label{sinleqM}
\end{align} where $M:=\max(r,1)$.

To prove the Proposition, we first show that it is sufficient to only consider the small and large angle case. Then, we show that in the small and large angle case, $x \approx x_*$ and $x \approx -\rho_d x_*$, respectively. We begin by proving that $\max(\|x\|,\|x_*\|) \leqslant 6d$ for any $x \in \Seps_{\beta}$.

\textbf{Bound on maximal norm in $\Seps_{\beta}$:} It suffices to show that $r \leqslant 6d$. Suppose $r > 1$ since if $r \leqslant 1$, the result is immediate. Then either $|\sin \overline{\theta}_0| \geqslant 1/\sqrt{2}$ or $|\cos \overline{\theta}_0| \geqslant 1/\sqrt{2}$. If $|\sin \overline{\theta}_0| \geqslant 1/\sqrt{2}$ then \eqref{sinleqM} gives \begin{align*}
|r - \al| \leqslant \sqrt{2}\beta r \Longrightarrow (1-\sqrt{2}\beta)r \leqslant |\al|.
\end{align*} But \begin{align*}
|\al| & \leqslant \frac{2}{\pi}|\sin \overline{\theta}_d| + \left|\left(\frac{\pi - 2\overline{\theta}_d}{\pi}\right) \sum_{i=0}^{d-1} \frac{\sin \overline{\theta}_i}{\pi}\left(\prod_{j=i+1}^{d-1} \frac{\pi - \overline{\theta}_i}{\pi}\right)\right| \leqslant 1 + \frac{d}{\pi}
\end{align*} where the second inequality used equations \eqref{sum_prod_upperbound} and \eqref{pi_twotheta_upperbound}. Thus \begin{align*}
r \leqslant \frac{1+ \frac{d}{\pi}}{1 - \sqrt{2}\beta} \leqslant 2\left(1 + \frac{d}{\pi}\right) \leqslant 2 + d \leqslant 2d
\end{align*} provided $\beta < 1/4$ and $d \geqslant 2$. If $|\cos \overline{\theta}_0 | \geqslant 1/\sqrt{2}$, then \eqref{cosleqM} gives \begin{align*}
|r - \al| \leqslant \sqrt{2}(\beta r + |\xi|) \Longrightarrow (1 - \sqrt{2}\beta)r \leqslant \sqrt{2}|\xi| + \al.
\end{align*} But by \eqref{piprod_upperbound},  \begin{align*}
|\xi| = \left|\left(\frac{\pi - 2\overline{\theta}_d}{\pi}\right) \left(\prod_{i=0}^{d-1} \frac{\pi - \overline{\theta}_i}{\pi}\right)\right| \leqslant 1\ \text{since}\ \overline{\theta}_i  \in [0,\pi/2]\ \forall\ i\ \geqslant 1.
\end{align*} Hence if $\beta < 1/4$, \begin{align*}
r \leqslant \frac{\sqrt{2} + 2d}{1 - \sqrt{2}\beta} \leqslant 2\sqrt{2} + 4d \leqslant \sqrt{2}d + 4d \leqslant 6d.
\end{align*} Thus in any case, $r \leqslant 6d \Longrightarrow M \leqslant 6d$. 

We now show that it is sufficient to only consider the small angle case $\overline{\theta}_0 \approx 0$ and the large angle case $\overline{\theta}_0 \approx \pi$.

\textbf{Sufficiency:} We have three possible cases:
\begin{itemize}
\item $\sin \overline{\theta}_0 \leqslant 48\pi d^4\beta$: Then we have that $\overline{\theta}_0 = O_1(82\pi d^4\beta)$ or $\overline{\theta}_0 = \pi + O_1(82\pi d^4\beta)$.
\item $\sin \overline{\theta}_0 > 48 \pi d^4 \beta$ and $|r - \al| \geqslant \sqrt{\beta}M$: Observe that due to equation \eqref{sinleqM}, we have that $|r-\al| \leqslant \frac{\beta M}{\sin \overline{\theta}_0}.$ Thus using this inequality in equation \eqref{cosleqM}, we have that \begin{align}
    |\xi| \leqslant \beta M + \frac{\beta M}{\sin \overline{\theta}_0} \leqslant \frac{2\beta M}{\sin \overline{\theta}_0} \leqslant \frac{2\beta M}{48\pi d^4\beta} \leqslant \frac{12d}{48\pi d^4} = \frac{1}{4\pi}d^{-3} \label{upperbound_on_xi_forcaseII}
\end{align} where we used the assumption $\sin \overline{\theta}_0 > 48\pi d^4\beta$ in the second to last inequality and $M \leqslant 6d$ in the last inequality. In addition, \eqref{cos_inv_bound} implies \begin{align}
|\pi - 2\overline{\theta}_d| \geqslant \left|\pi - 2\cos^{-1}\left(\frac{1}{\pi}\right)\right|  \geqslant \frac{1}{2}. \label{pi_2thetad_geq_half}
\end{align} Combining this inequality with \eqref{piprod_lowerbound} and \eqref{upperbound_on_xi_forcaseII}, we obtain \begin{align*}
\frac{1}{2\pi}\left(\frac{\pi - \overline{\theta}_0}{\pi}\right)d^{-3}\leqslant |\xi|\leqslant \frac{1}{4\pi}d^{-3}.
\end{align*} From this, we can conclude that $\overline{\theta}_0 \geqslant \frac{\pi}{2}.$ Moreover, since $|r - \al| \geqslant \sqrt{\beta}M$, then \eqref{sinleqM} implies that $|\sin \overline{\theta}_0| \leqslant \sqrt{\beta}$ so we must have that $\overline{\theta}_0 = \pi + O_1(2\sqrt{\beta})$ since $\overline{\theta}_0 \geqslant \frac{\pi}{2}$ and $\beta < 1$.
\item $|r - \al| \leqslant \sqrt{\beta}M:$ Then \eqref{cosleqM} implies \begin{align*}
|\xi| \leqslant 2 \sqrt{\beta}M.
\end{align*} But note that by \eqref{piprod_lowerbound}, \begin{align*}
\xi = \left(\frac{\pi - 2\overline{\theta}_d}{\pi}\right)\left(\prod_{i=0}^{d-1}\frac{\pi - \overline{\theta}_i}{\pi}\right) \geqslant \frac{(\pi - 2 \overline{\theta}_d)(\pi - \overline{\theta}_0)}{d^3\pi^2}.
\end{align*} In addition, since
$|\pi - 2\overline{\theta}_d| \geqslant \frac{1}{2}$ by \eqref{pi_2thetad_geq_half}, we have  \begin{align*}
|\xi| \geqslant \frac{|(\pi - 2\overline{\theta}_d)(\pi - \overline{\theta}_0)|}{d^3 \pi^2} \geqslant \frac{|\pi - \overline{\theta}_0|}{2d^3\pi^2}
\end{align*} which implies \begin{align*}
|\pi - \overline{\theta}_0| \leqslant 4d^3\pi^2\sqrt{\beta}M \leqslant 24d^4\pi^2 \sqrt{\beta}.
\end{align*} Thus $\overline{\theta}_0 = \pi + O_1(24d^4\pi^2 \sqrt{\beta})$.
\end{itemize}

Since only one of these situations can hold, it suffices to consider either the small angle case $\overline{\theta}_0 = O_1(82\pi d^4 \beta)$ or the large angle case $\overline{\theta}_0 = \pi + O_1(24d^4\pi^2\sqrt{\beta})$. Now, we show that in the small angle case, $x \approx x_*$, while in the large angle case, $x \approx -\rho_d x_*$.

\textbf{Small Angle Case:} Assume $\overline{\theta}_0 = O_1(\delta)$ where we set $\delta := 82\pi d^4\beta$. Note that since $\overline{\theta}_i \leqslant \overline{\theta}_0 \leqslant \delta$ for each $i$, we have that \begin{align*}
\prod_{i=0}^{d-1} \frac{\pi - \overline{\theta}_i}{\pi} 
\geqslant \left(1 - \frac{\delta}{\pi} \right)^d = 1 + O_1 \left(\frac{2d\delta}{\pi}\right)
\end{align*} provided $d\delta/\pi \leqslant 1/2$. Hence \begin{align*}
\xi  & = \left(\frac{\pi - 2\overline{\theta}_d}{\pi}\right) \left(\prod_{i=0}^{d-1} \frac{\pi - \overline{\theta}_i}{\pi}\right) \geqslant \left(1 + O_1\left(\frac{2\delta}{\pi}\right)\right)\left(1 + O_1\left(\frac{2d\delta}{\pi}\right)\right) \\
\end{align*} where we used \eqref{pi_Oone_bound} in the second inequality. In addition, $|\sin \overline{\theta}_d| \leqslant |\overline{\theta}_d| \leqslant \delta$ and \eqref{sum_prod_upperbound} imply that \begin{align*}
\left|\sum_{i=0}^{d-1} \frac{\sin \overline{\theta}_i}{\pi}\left(\prod_{j=i+1}^{d-1} \frac{\pi - \overline{\theta}_j}{\pi}\right)\right| \leqslant \frac{d}{\pi}|\sin \overline{\theta}_d| \leqslant d\delta. 
\end{align*} Hence \begin{align*}
\al  = \frac{2\sin \overline{\theta}_d}{\pi} + \left(\frac{\pi - 2\overline{\theta}_d}{\pi}\right) \sum_{i=0}^{d-1} \frac{\sin \overline{\theta}_i}{\pi}\left(\prod_{j=i+1}^{d-1} \frac{\pi - \overline{\theta}_j}{\pi}\right)
& = O_1\left(\frac{2\delta}{3\pi}\right) + \left(1 + O_1\left(\frac{2\delta}{\pi}\right)\right)O_1(d\delta) \\
%& = O_1\left(\frac{2\delta}{3\pi}\right) + O_1(d\delta) + O_1\left(\frac{2d^2 \delta^2}{\pi}\right) \\
& = O_1 \left(\frac{(4 + 3d\pi + 6d^2)\delta}{3\pi}\right)
\end{align*} where we used $\delta < 1$ in the last equality. Thus since $|-\xi + \cos \overline{\theta}_0(r - \al)| \leqslant \beta M$ and $M \leqslant 6d$, we attain \begin{align*}
- \left(1 + O_1\left(\frac{2\delta}{\pi}\right)\right)\left(1 + O_1\left(\frac{2d\delta}{\pi}\right)\right) & + (1 + O_1(\delta))\left(r + O_1 \left(\frac{(4 + 3d\pi + 6d^2)\delta}{3\pi}\right)\right) = O_1(6d\beta).
\end{align*} Rearranging, this gives \begin{align*}
r-1 & = O_1\left(\frac{2d\delta}{\pi} + \frac{2\delta}{\pi} + \frac{16d\delta^2}{\pi} + (\delta + 1)\frac{(4 + 3d\pi + 6d^2)\delta}{3\pi}\right) + O_1(12d\beta) + O_1(6d\beta)  \\
& = O_1\left(\frac{(12d + 12 + 48d)\delta + (2\epsilon + 1)(4 + 3\pi d + 12d)\delta}{3\pi} + 18d\beta\right)\\
%& = O_1\left(\frac{(12d + 12 + 48d)\sqrt{\epsilon} + (4 + 3\pi d + 12d)\sqrt{\epsilon}}{3\pi}\right)\ \text{as}\ \epsilon < 1/2 \\
%& = O_1\left(\frac{(16 + (3\pi + 12)d + 60d)\sqrt{\epsilon}}{3\pi}\right) + O_1(2\epsilon)r + O_1(\epsilon M) \\
%& = O_1\left(\frac{98d}{3\pi}\sqrt{\epsilon}\right) + O_1(2\epsilon)r + O_1(\epsilon M) \\
%& = O_1\left(\frac{98d}{3\pi}\sqrt{\epsilon}\right) + O_1(12d\epsilon) + O_1(6d\epsilon ) \\
& = O_1(10d\delta + 18d\beta) \\
& = O_1(838\pi d^5\beta)
\end{align*} where we used $\delta < 1/2$ and $d \geqslant 2$ in the second to last equality and the definition of $\delta$ in the final equality.

\textbf{Large Angle Case:} Assume $\overline{\theta}_0 = \pi + O_1(\delta)$ where $\delta := 24d^4\pi^2\sqrt{\beta}.$  We first prove that $\al$ is close to $\rho_d$. Recall that $\overline{\theta}_d = \breve{\theta}_d + O_1(d\delta)$. Then by the mean value theorem: \begin{align*}
|\sin \overline{\theta}_d - \sin \breve{\theta}_d | \leqslant |\overline{\theta}_d - \breve{\theta}_d | \leqslant d \delta
\end{align*} so $\sin \overline{\theta}_d = \sin \breve{\theta}_d + O_1(d\delta)$. Let
$\Gamma_d : = \sum_{i=0}^{d-1} \frac{\sin \breve{\theta}_i}{\pi}\left(\prod_{j=i+1}^{d-1} \frac{\pi - \breve{\theta}_j}{\pi}\right)$ and note that $\rho_d = \frac{2\sin \breve{\theta}_d}{\pi} + \left(\frac{\pi - 2\breve{\theta}_d}{\pi}\right)\Gamma_d.$ In \cite{HV2017}, it was shown that if $d^2\delta /\pi \leqslant 1$, then $|\Gamma_d| \leqslant d$ and 
\begin{align*}
\sum_{i=0}^{d-1} \frac{\sin \overline{\theta}_i}{\pi}\left(\prod_{j=i+1}^{d-1} \frac{\pi - \overline{\theta}_j}{\pi}\right) = \Gamma_d + O_1(3d^3\delta).
\end{align*} By the condition, $d^2\delta /\pi \leqslant 1$, we require $\sqrt{\beta} \leqslant \frac{1}{24\pi d^6}.$
Thus for sufficiently small $\beta$, we have \begin{align*}
\al & = \frac{2\sin \overline{\theta}_d}{\pi} + \left(\frac{\pi - 2\overline{\theta}_d}{\pi}\right) \sum_{i=0}^{d-1} \frac{\sin \overline{\theta}_i}{\pi}\left(\prod_{j=i+1}^{d-1} \frac{\pi - \overline{\theta}_j}{\pi}\right) \\
& = \frac{2\sin \breve{\theta}_d}{\pi} + O_1\left(\frac{2d\delta}{\pi}\right) + \left(\frac{\pi - 2 \breve{\theta}_d}{\pi} + O_1 \left(\frac{2d\delta}{\pi}\right)\right)\left(\Gamma_d + O_1(3d^3 \delta)\right) \\
& = \rho_d + O_1\left(\frac{2d\delta}{\pi}\right) + \Gamma_d O_1\left(\frac{2d\delta}{\pi}\right) + \left(\frac{\pi - 2\breve{\theta}_d}{\pi}\right)O_1\left(3d^3\delta\right) + O_1 \left(\frac{6d^4\delta^2}{\pi}\right) \\
& = \rho_d + O_1\left(\frac{2d\delta}{\pi}\right) + O_1\left(\frac{2d^2\delta}{\pi}\right)+ O_1\left(3d^3\delta\right) + O_1\left(\frac{6d^4\delta^2}{\pi}\right) \\
%& = \rho_d + O_1 \left(\left(\frac{4\delta}{\pi} + 3\delta + \frac{6\delta^2}{\pi}\right)d^4\right) \\
%& = \rho_d + O_1 \left(\delta\left(\frac{4}{\pi} + 3 + \frac{6}{\pi}\right)d^4\right)\ \text{provided}\ \delta = 24d^4\pi^2\sqrt{\beta} \leqslant 1 \\
& = \rho_d + O_1(7d^4\delta).
\end{align*} We now prove $r$ is close to $\rho_d$. Since $x \in \Seps_{\beta}$, \begin{align*}
|-\beta + \cos \overline{\theta}_0(r - \al)|\leqslant \beta M.
\end{align*} Also note that $|\beta| \leqslant \delta/\pi$ by \eqref{beta_pi_upperbound}. Since $\cos \overline{\theta}_0 = 1 + O_1(\overline{\theta}_0^2/2)$, we have that \begin{align*}
O_1(\delta/\pi) + (1 + O_1(\delta^2/2))(r - \rho_d + O_1(7d^4\delta)) = O_1(\beta M).
\end{align*} Using $r \leqslant 6d$, $\rho_d \leqslant 2d$, and $\delta = 24d^4\pi^2\sqrt{\beta} \leqslant 1$, we get \begin{align*}
r - \rho_d & + O_1 \left(\frac{\delta^2}{2}\right)( r - \rho_d) + O_1(7d^4 \delta) + O_1\left(\frac{7d^4\delta^3}{2}\right) = O_1(\beta M) + O_1 \left(\frac{\delta}{\pi}\right)  \\
\Longrightarrow r - \rho_d & = O_1\left(4d\delta^2 + 7d^4\delta + \frac{7d^4\delta^3}{2} + 6d\beta + \frac{\delta}{\pi}\right) \\
& = O_1\left(6d\beta + \delta\left(4d + 7d^4 + \frac{7d^4}{2} + \frac{1}{\pi}\right)\right) \\
& = O_1\left(\left(6d + 24d^4\pi^2\left(4d + \frac{21d^4}{2} + \frac{1}{\pi}\right)\right)\sqrt{\beta}\right) \\
& = O_1(3517d^8\sqrt{\beta}).
\end{align*}

Finally, to complete the proof we use the inequality \begin{align*}
\|x - x_*\| \leqslant |\|x\| - \|x_*\|| + \left(\|x_*\| + |\|x\| - \|x_*\||\right)\overline{\theta}_0.
\end{align*} This inequality states that if a two dimensional point is known to be within $\Delta r$ of magnitude $r$ and an angle $\Delta \theta$ away from $0$, then it is at most a Euclidean distance of $\Delta r + (r + \Delta r)\Delta \theta$ away from the point $(r,0)$ in polar coordinates. Thus for $\overline{\theta}_0 = O_1(82\pi d^4\beta)$, we have $r = 1 + O_1(838\pi d^5\beta)$ so \begin{align*}
    \|x-x_*\| & \leqslant 838\pi d^5\beta + (1 + 838\pi d^5\beta)82\pi d^4\beta \leqslant 70000\pi^2 d^9\beta.
\end{align*} Then if $\overline{\theta}_0 = \pi + O_1(24d^4 \pi^2 \sqrt{\beta})$, $r = \rho_d + O_1(3517d^8\sqrt{\beta})$ so that \begin{align*}
    \|x + \rho_d x_*\| &\leqslant 3517d^8\sqrt{\beta} + (\rho_d + 3517d^8\sqrt{\beta})24d^4\pi^2\sqrt{\beta}    %& \leqslant 3517d^8\sqrt{\beta} + (2d + 3517d^8\sqrt{\beta})24d^4\pi^2\sqrt{\beta} \\
     \leqslant 77422\pi^2d^{12}\sqrt{\beta}.
\end{align*} Hence we attain \begin{align*}
\Seps_{\beta} \subset \mathcal{B}(x_*,70000\pi^2 d^9\beta) \cup \mathcal{B}(-\rho_d x_*, 77422\pi^2d^{12}\sqrt{\beta}).
\end{align*} The result that $\rho_d \rightarrow 1$ as $d \rightarrow \infty$ follows from the following facts: by \eqref{breve_theta_bound}, we have that $\breve{\theta}_d \rightarrow 0\ \text{as}\ d \rightarrow \infty$ which implies $\frac{2\sin \breve{\theta}_d}{\pi} \rightarrow 0\ \text{as}\ d \rightarrow \infty.$ Moreover, in \cite{HV2017}, it was shown that \begin{align*}
\sum_{i=0}^{d-1}\frac{\sin \breve{\theta}_i}{\pi} \left(\prod_{j=i+1}^{d-1} \frac{\pi - \breve{\theta}_j}{\pi}\right) \rightarrow 1\ \text{as}\ d \rightarrow \infty.
\end{align*} Hence \begin{align*}\left(\frac{\pi - 2 \breve{\theta}_d}{\pi}\right) \sum_{i=0}^{d-1}\frac{\sin \breve{\theta}_i}{\pi} \left(\prod_{j=i+1}^{d-1} \frac{\pi - \breve{\theta}_j}{\pi}\right) \rightarrow 1\ \text{as}\ d \rightarrow \infty
\end{align*} so $\rho_d \rightarrow 1$ as $d \rightarrow \infty$.
\end{proof}

We now aim to show that the objective function value for points near the minimizer are lower than near the negative multiple which is formally stated in Lemma \ref{obj_function_bigger_near_negmult}. We first define \begin{align}
    f_0(x) := \frac{1}{2} \Big\| |A\G(x)| - |A\G(x_*)| \Big\|^2 \nonumber
\end{align} which is the objective function without noise and $f_{\eta}(x) = f_0(x) - \langle |A\G(x)| - |A\G(x_*)|, \eta\rangle.$ Then note that $f(x) = f_{\eta}(x) + \frac{1}{2}\|\eta\|^2$. We will first show that the objective function without noise can be closely approximated by a particular function $\mathcal{F}$ which is defined by \begin{align}\begin{split}
    \F(x) & := \frac{1}{2^{d+1}}(\|x\|^2 + \|x_*\|^2) - \frac{1}{2^d}\left(\frac{\pi - 2 \overline{\theta}_d}{\pi}\right)\left(\prod_{j=0}^{d-1}\frac{\pi - \overline{\theta}_{j}}{\pi}\right)\langle x,x_* \rangle \label{expected_loss_full} \\
    & - \frac{1}{2^d}\left(\frac{2\sin \overline{\theta}_d}{\pi} + \left(\frac{\pi - 2\overline{\theta}_d}{\pi}\right) \sum_{i=0}^{d-1} \frac{\sin \overline{\theta}_i}{\pi}\left(\prod_{j=i+1}^{d-1} \frac{\pi - \overline{\theta}_i}{\pi}\right)\right)\|x\|\|x_*\| %\label{expected_loss_2}
\end{split}\end{align} where $\overline{\theta}_0 := \angle(x,y)$ and $\overline{\theta}_i := g(\overline{\theta}_{i-1})$ for $i \in [d]$. This result is formalized in the following lemma: \begin{lem} \label{obj_close_to_expectation}
Fix $0 < \epsilon < 1/(16\pi d^2)^2$. Suppose that $A \in \R^{m \times n_d}$ satisfies the RRCP with respect to $\G$ with constant $\epsilon$ and $\G$ is such that each $W_i \in \R^{n_i \times n_{i-1}}$ satisfies the WDC with constant $\epsilon$ for $i \in [d]$. Then we have that for all non-zero $x,x_* \in \R^k$: \begin{align*}
    |f_0(x) - \mathcal{F}(x)| \leqslant & \frac{(L + 12)d^3\sqrt{\epsilon}}{2^d}\|x\|^2 + \frac{(L+12)d^3\sqrt{\epsilon}}{2^d}\|x_*\|^2 + \frac{2L\epsilon}{2^d}\|x\|\|x_*\|  \\
    & + \frac{1}{2^d}\left[24d^3 + 8d\left(1 + \frac{d}{\pi}\right) + \frac{48d + 48d^3}{\pi}\right]\sqrt{\epsilon}\|x\|\|x_*\|.
\end{align*}
\end{lem}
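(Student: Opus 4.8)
The plan is to expand
\[
 f_0(x)=\tfrac12\||A\G(x)|\|^2+\tfrac12\||A\G(x_*)|\|^2-\langle |A\G(x)|,|A\G(x_*)|\rangle ,
\]
and to match each of the three pieces to a corresponding piece of $\mathcal F$. The key preliminary observation is that $\mathcal F$ in \eqref{expected_loss_full} can be rewritten in terms of the vector $\tilde h_{x,x_*}$ of \eqref{htilde_def}: since $\langle x,\tilde h_{x,x_*}\rangle=\tfrac1{2^d}\big[(\prod_{i=0}^{d-1}\tfrac{\pi-\overline{\theta}_i}{\pi})\langle x,x_*\rangle+(\sum_{i=0}^{d-1}\tfrac{\sin\overline{\theta}_i}{\pi}\prod_{j=i+1}^{d-1}\tfrac{\pi-\overline{\theta}_j}{\pi})\|x\|\|x_*\|\big]$, one has
\[
 \mathcal F(x)=\tfrac1{2^{d+1}}(\|x\|^2+\|x_*\|^2)-\Big(\tfrac{\pi-2\overline{\theta}_d}{\pi}\Big)\langle x,\tilde h_{x,x_*}\rangle-\tfrac1{2^d}\tfrac{2\sin\overline{\theta}_d}{\pi}\|x\|\|x_*\| ,
\]
where $\overline{\theta}_i=g^{\circ i}(\angle(x,x_*))$. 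Thus it suffices to show $\tfrac12\||A\G(x)|\|^2\approx\tfrac1{2^{d+1}}\|x\|^2$, likewise for $x_*$, and $\langle |A\G(x)|,|A\G(x_*)|\rangle\approx(\tfrac{\pi-2\overline{\theta}_d}{\pi})\langle x,\tilde h_{x,x_*}\rangle+\tfrac1{2^d}\tfrac{2\sin\overline{\theta}_d}{\pi}\|x\|\|x_*\|$, with the stated error budgets.

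\textbf{Squared-norm terms.} Write $\||A\G(x)|\|^2=\langle A_{x_d}^\top A_{x_d}\Lambda_x x,\Lambda_x x\rangle$ where $x_d=\G(x)=\Lambda_x x$. Applying the RRCP (Definition \ref{rrcp_def}) with free variables $x=y$ and $(x_1,x_2,x_3,x_4)=(x,0,x,0)$ (using $\G(0)=0$ and $\Phi_{z,z}=I_n$ from \eqref{Phi_def}) gives $\||A\G(x)|\|^2=\|x_d\|^2+O_1(L\epsilon\|x_d\|^2)$. Next, $\|x_d\|^2=\langle x,\Lambda_x^\top\Lambda_x x\rangle$, and \eqref{conc_htilde} with $y=x$ (so all iterated angles vanish and $\tilde h_{x,x}=2^{-d}x$) yields $\|x_d\|^2=\tfrac1{2^d}\|x\|^2+O_1(24d^3\sqrt\epsilon\,2^{-d}\|x\|^2)$; combined with the WDC bound $\|x_d\|^2\le 2^{-d+1}\|x\|^2$ and $L\epsilon\le Ld^3\sqrt\epsilon$, this gives $\tfrac12\||A\G(x)|\|^2=\tfrac1{2^{d+1}}\|x\|^2+O_1((L+12)d^3\sqrt\epsilon\,2^{-d}\|x\|^2)$; the $x_*$-term is handled identically.

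\textbf{Cross term.} Apply the RRCP with free variables $x,x_*$ and $(x_1,x_2,x_3,x_4)=(x,0,x_*,0)$ to obtain $\langle |A\G(x)|,|A\G(x_*)|\rangle=\langle \Phi_{x_d,x_{*,d}}x_{*,d},x_d\rangle+O_1(L\epsilon\|x_d\|\|x_{*,d}\|)$. Using \eqref{Phi_def} and the defining property $M_{\hat x_d\leftrightarrow\hat x_{*,d}}x_{*,d}=\|x_{*,d}\|\hat x_d$, the main term equals $\tfrac{\pi-2\theta_d}{\pi}\langle x_d,x_{*,d}\rangle+\tfrac{2\sin\theta_d}{\pi}\|x_d\|\|x_{*,d}\|$ with $\theta_d=\angle(x_d,x_{*,d})$. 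I then push $\theta_d$ to $\overline{\theta}_d$ via \eqref{conc_angle}, replace $\langle x_d,x_{*,d}\rangle=\langle x,\Lambda_x^\top\Lambda_{x_*}x_*\rangle$ by $\langle x,\tilde h_{x,x_*}\rangle$ via \eqref{conc_htilde}, and replace $\|x_d\|\|x_{*,d}\|$ by $2^{-d}\|x\|\|x_*\|$ via the norm estimate from the previous step (and \eqref{conc_norm}). Bounding $|\langle x,\tilde h_{x,x_*}\rangle|\le 2^{-d}(1+d/\pi)\|x\|\|x_*\|$ using \eqref{piprod_upperbound}, \eqref{sum_prod_upperbound}, and $|\tfrac{\pi-2\overline{\theta}_d}{\pi}|\le1$ using \eqref{pi_twotheta_upperbound}, and absorbing cross terms of errors with $\epsilon<1$, the total error is at most $\tfrac1{2^d}\big[2L\epsilon+24d^3+8d(1+d/\pi)+\tfrac{48d+48d^3}{\pi}\big]\sqrt\epsilon\,\|x\|\|x_*\|$. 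Comparing with the rewritten $\mathcal F$ finishes this piece, and a triangle inequality over the three pieces gives the claim.

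\textbf{Main obstacle.} The only genuinely non-routine step is recognizing the rewriting of $\mathcal F$ in terms of $\tilde h_{x,x_*}$ and that $\tfrac{\pi-2\theta_d}{\pi}\langle x_d,x_{*,d}\rangle+\tfrac{2\sin\theta_d}{\pi}\|x_d\|\|x_{*,d}\|$ is exactly the object to compare against it; once that is in place everything reduces to error bookkeeping driven by the RRCP and Lemma \ref{boundsforvtoh}. The subtle point to keep track of is that the angle appearing in $\Phi_{x_d,x_{*,d}}$ is the \emph{true} angle $\theta_d$, not the iterated-$g$ angle $\overline{\theta}_d$, so \eqref{conc_angle} must be invoked for both the $\tfrac{\pi-2\theta_d}{\pi}$ factor and the $\sin\theta_d$ factor, and the resulting products of small quantities must be controlled uniformly using the smallness hypothesis $\epsilon<1/(16\pi d^2)^2$.
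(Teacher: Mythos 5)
Your proof follows essentially the same route as the paper's: the same three-term decomposition of $f_0$ into two squared-norm terms and a cross term, the RRCP (with $\G(0)=0$ and $\Phi_{z,z}=I_n$) together with the WDC and \eqref{conc_htilde} for the quadratic terms, and \eqref{conc_htilde}, \eqref{conc_angle}, \eqref{conc_norm} to compare $\langle \Phi_{x_d,x_{*,d}}x_d,x_{*,d}\rangle$ with the $\tilde h_{x,x_*}$-rewriting of $\mathcal F$, which is exactly the paper's comparison against $\tfrac{1}{2^d}\xi_{x,x_*}$ (reused there from the proof of Lemma \ref{conc_wx_to_hx}). The only blemish is the final tally, where the RRCP contribution $\tfrac{2L\epsilon}{2^d}\|x\|\|x_*\|$ should appear outside the bracket that gets multiplied by $\sqrt{\epsilon}$ (as in the lemma statement), not inside it; this is a display slip, not a gap.
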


\begin{proof}[Proof of Lemma \ref{obj_close_to_expectation}] Fix $x,x_* \in \R^{k} \setminus \{0\}$. For notational simplicity, define \begin{align*}
    \xi_{x,x_*} := \frac{\pi - 2\overline{\theta}_d}{\pi}\left(\prod_{i=0}^{d-1} \frac{\pi - \overline{\theta}_i}{\pi}\right)\langle x,x_*\rangle  + \left(\frac{\pi - 2\overline{\theta}_d}{\pi}\sum_{i=0}^{d-1}\frac{\sin\overline{\theta}_i}{\pi} \left(\prod_{j=i+1}^{d-1} \frac{\pi - \overline{\theta}_i}{\pi}\right) + \frac{2\sin\overline{\theta}_d}{\pi} \right)\|x_*\|\|x\|.
\end{align*} Then observe that $\mathcal{F}$ can be written more compactly as 
    $\F(x) = \frac{1}{2^{d+1}}(\|x\|^2 + \|x_*\|^2) - \frac{1}{2^d} \xi_{x,x_*}.$
Then the following bound shows we need to approximate three particular terms: \begin{align*}
|f_0(x) - \mathcal{F}(x)| & \leqslant \frac{1}{2}\left|\|A_{x_d}x_d\|^2 - \frac{1}{2^d}\|x\|^2\right| + \frac{1}{2}\left|\|A_{x_{*,d}}x_{*,d}\|^2 - \frac{1}{2^d}\|x_*\|^2\right|  \\
& + \left| \langle A_{x_d}x_d, A_{x_{*,d}}x_{*,d}\rangle - \frac{1}{2^d}\xi_{x,x_*}\right|.
\end{align*} 

Bounds on the first two terms follow directly by the RRCP and WDC in the following way. Note that \begin{align}
    \left|\|A_{x_d}x_d\|^2 - \frac{1}{2^d}\|x\|^2\right| & \leqslant \left|\|A_{x_d}x_d\|^2 - \|x_d\|^2\right| + \left| \|x_d\|^2 - \frac{1}{2^d}\|x\|^2\right|. \nonumber
\end{align} Since $A$ satisfies the RRCP with respect to $\G$, we have that \begin{align*}
    \left|\|A_{x_d}x_d\|^2 - \|x_d\|^2\right| \leqslant L\epsilon\|\Lambda_x\|^2\|x\|^2 \leqslant  L\epsilon\left(\frac{1}{2} + \epsilon\right)^d\|x\|^2
\end{align*} where the last inequality follows by the WDC. Then by \eqref{conc_htilde}, we have  \begin{align*}
     \left| \|x_d\|^2 - \frac{1}{2^d}\|x\|^2\right|\leqslant  24 \frac{d^3\sqrt{\epsilon}}{2^d}\|x\|^2
\end{align*}  Using these two bounds, we have \begin{align}
\left|\|A_{x_d}x_d\|^2 - \frac{1}{2^d}\|x\|^2\right| %& \leqslant \left|x_d^\top A_{x_d}^\top A_{x_d} x_d - x_d^\top x_d\right| + \left| x_d^\top x_d - \frac{1}{2^d}\|x\|^2\right| \nonumber\\
& \leqslant L\epsilon\left(\frac{1}{2} + \epsilon\right)^d\|x\|^2 + 24 \frac{d^3\sqrt{\epsilon}}{2^d}\|x\|^2 \nonumber\\
%& \leqslant \frac{L\epsilon(1+2\epsilon)^d}{2^d}\|x\|^2 + 24 \frac{d^3 \sqrt{\epsilon}}{2^d}\|x\|^2 \nonumber\\
%& \leqslant \frac{L\epsilon(1 + 4\epsilon d)}{2^d}\|x\|^2 + 24\frac{d^3 \sqrt{\epsilon}}{2^d}\|x\|^2 \nonumber\\
& \leqslant \frac{(2L + 24)d^3\sqrt{\epsilon}}{2^d}\|x\|^2 \label{conc_to_F_first_bound}
\end{align} since $(1+2\epsilon)^d \leqslant e^{2\epsilon d} \leqslant 1+4\epsilon d \leqslant 2$ for $\epsilon \leqslant 1/(4d)$. By the same logic, we have that  
\begin{align}
\left|\|A_{x_{*,d}}x_{*,d}\|^2 - \frac{1}{2^d}\|x_*\|^2\right| \leqslant \frac{(2L + 24)d^3\sqrt{\epsilon}}{2^d}\|x_*\|^2. \label{conc_to_F_second_bound}
\end{align} 

For the last term, note that \begin{align*}
\left| \langle A_{x_d}x_d, A_{x_{*,d}}x_{*,d}\rangle - \frac{1}{2^d}\xi_{x,x_*}\right| & \leqslant \left| \langle A_{x_d}x_d, A_{x_{*,d}}x_{*,d}\rangle - \langle \Phi_{x_d,x_{*,d}} x_d, x_{*,d}\rangle\right| + \left|\langle \Phi_{x_d,x_{*,d}} x_d, x_{*,d}\rangle - \frac{1}{2^d}\xi_{x,x_*}\right|.
\end{align*} For the first term, the RRCP and WDC imply
\begin{align}
& \left| \langle A_{x_d}x_d, A_{x_{*,d}}x_{*,d}\rangle - \langle \Phi_{x_d,x_{*,d}} x_d, x_{*,d}\rangle\right| \leqslant L\epsilon \left(\frac{1}{2}+\epsilon\right)^d\|x\|\|x_*\| \leqslant \frac{2L\epsilon}{2^d}\|x\|\|x_*\| \label{A_to_Phi_bound}
\end{align} for $\epsilon \leqslant 1/(4d)$. For the second term, by the definition of $\Phi_{z,w}$ and $M_{\hat{z} \leftrightarrow \hat{w}}$, we have 
\begin{align*}
\left|\langle \Phi_{x_d,x_{*,d}} x_d, x_{*,d}\rangle - \frac{1}{2^d}\xi_{x,x_*}\right| & \leqslant \|x\|\underbrace{\left\|\frac{\pi - 2\theta_d}{\pi}\Lambda_x^\top\Lambda_{x_*}x_* - \frac{\pi - 2\overline{\theta}_d}{\pi}\tilde{h}_{x,x_*}\right\|}_{(I)}\\
& +\|x\|\underbrace{\left\|\frac{2\sin \theta_d}{\pi}\frac{\|x_{*,d}\|}{\|x_d\|}\Lambda_x^\top \Lambda_x x - \frac{2\sin \overline{\theta}_d}{\pi} \frac{\|x_*\|}{\|x\|}\frac{1}{2^d}x\right\|}_{(II)}
\end{align*} where $\tilde{h}_{x,x_*}$ is defined in \eqref{htilde_def}. It was shown in the proof of Lemma \ref{conc_wx_to_hx} that \begin{align*}
(I) \leqslant \frac{1}{2^d} \left(24 d^3 + \frac{8d}{\pi}\left( 1+ \frac{d}{\pi}\right)\right)\sqrt{\epsilon}\|x_*\|
\end{align*} and \begin{align*}
(II) & %\leqslant Q_{3,1} + Q_{3,2} + Q_{3,3} \\ 
%& \leqslant \frac{8d(1 + 2\epsilon)^d}{\pi 2^d}\sqrt{\epsilon}\|x_*\|+ \frac{16d(1+2\epsilon)^d}{\pi 2^d}\sqrt{\epsilon}\|x_*\|+ \frac{48d^3\sqrt{\epsilon}}{\pi2^d}\|x_*\| \\
\leqslant \frac{1}{2^d}\left(\frac{24d(1 + 2\epsilon)^d + 48d^3}{\pi}\right)\sqrt{\epsilon}\|x_*\|.
\end{align*} Combining the results for \eqref{A_to_Phi_bound}, (I), and (II) we have \begin{align}
\begin{split}
\left| \langle A_{x_d}x_d, A_{x_{*,d}}x_{*,d}\rangle - \frac{1}{2^d}\xi_{x,x_*}\right| %& \leqslant \left| x_d^\top A_{x_d}^\top A_{x_{*,d}}x_{*,d} - x_d^\top \Phi_{x_d,x_{*,d}} x_{*,d}\right| + \left|x_d^\top \Phi_{x_d,x_{*,d}} x_{*,d} - \frac{1}{2^d}\xi_{x,x_*}\right| \nonumber \\
& \leqslant \frac{2L\epsilon}{2^d}\|x\|\|x_*\| + \frac{1}{2^d} \left(24 d^3 + \frac{8d}{\pi}\left( 1+ \frac{d}{\pi}\right)\right)\sqrt{\epsilon}\|x_*\| \label{conc_to_F_third_bound}\\
& + \frac{1}{2^d}\left(\frac{24d(1 + 2\epsilon)^d + 48d^3}{\pi}\right)\sqrt{\epsilon}\|x_*\|. 
\end{split}
\end{align} Combining equations \eqref{conc_to_F_first_bound}, \eqref{conc_to_F_second_bound}, and \eqref{conc_to_F_third_bound} achieves the desired result.

\end{proof}

Now that we have established that the objective function without noise can be approximated by $\mathcal{F}$, we now show that $\F$ satsifies particular quadratic upper and lower bounds to establish the desired properties of the true objective function $f$:
\begin{lem} \label{poly_bounds_on_conc_objective}
Fix $0 < r < \frac{1}{4d^2\pi}$ and let $\kappa := \min_{d \geqslant 2} \rho_d > 0$. Then for any $\phi_d \in [\rho_d, 1]$, we have that \begin{align}
    \mathcal{F}(x) & \leqslant \frac{\|x_*\|^2}{2^{d+1}}\left(\phi_d^2 - 2\phi_d + \frac{45d}{\kappa^3}r\right) + \frac{\|x_*\|^2}{2^{d+1}}\ \forall\ x \in \mathcal{B}(\phi_d x_*, r \|x_*\|), \label{conc_obj_upper_bound}\\
    \mathcal{F}(y) & \geqslant \frac{\|x_*\|^2}{2^{d+1}}(\phi_d^2 - 2\rho_d \phi_d - 139d^4r) + \frac{\|x_*\|^2}{2^{d+1}}\ \forall\ y \in \mathcal{B}(-\phi_d x_*, r \|x_*\|). \label{conc_obj_lower_bound}
\end{align}
\end{lem}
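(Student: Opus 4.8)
The plan is to prove both inequalities by a perturbation analysis of the closed form $\F$ around the two points $\phi_d x_*$ and $-\phi_d x_*$: one checks that $\F$ takes the \emph{exact} values $\tfrac{\|x_*\|^2}{2^{d+1}}(\phi_d^2-2\phi_d+1)$ and $\tfrac{\|x_*\|^2}{2^{d+1}}(\phi_d^2-2\rho_d\phi_d+1)$ at these two points, and then controls how much $\F$ can change over a ball of radius $r\|x_*\|$ around each of them. As local shorthand within the proof, write $\F(x)=\tfrac{1}{2^{d+1}}(\|x\|^2+\|x_*\|^2)-\tfrac{1}{2^d}\mu_x\langle x,x_*\rangle-\tfrac{1}{2^d}\nu_x\|x\|\|x_*\|$, where $\mu_x:=\tfrac{\pi-2\overline\theta_d}{\pi}\prod_{j=0}^{d-1}\tfrac{\pi-\overline\theta_j}{\pi}$ and $\nu_x:=\tfrac{2\sin\overline\theta_d}{\pi}+\tfrac{\pi-2\overline\theta_d}{\pi}\sum_{i=0}^{d-1}\tfrac{\sin\overline\theta_i}{\pi}\prod_{j=i+1}^{d-1}\tfrac{\pi-\overline\theta_j}{\pi}$, with $\overline\theta_0=\angle(x,x_*)$ and $\overline\theta_i=g(\overline\theta_{i-1})$; these are exactly the quantities denoted $\xi$ and $\al$ inside the proof of Proposition \ref{Seps}.

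First I would record the two endpoint evaluations. At $x=\phi_d x_*$ one has $\overline\theta_0=0$, hence $\overline\theta_i=0$ for all $i$ (as $g(0)=0$), so $\mu_x=1$, $\nu_x=0$, and $\F(\phi_d x_*)=\tfrac{\|x_*\|^2}{2^{d+1}}(\phi_d^2-2\phi_d+1)$. At $y=-\phi_d x_*$ one has $\overline\theta_0=\pi$, hence $\overline\theta_i=\breve\theta_i$, so the $j=0$ factor $\tfrac{\pi-\breve\theta_0}{\pi}$ in $\mu_y$ vanishes, giving $\mu_y=0$, while $\nu_y$ is precisely the defining expression of $\rho_d$; thus $\F(-\phi_d x_*)=\tfrac{\|x_*\|^2}{2^{d+1}}(\phi_d^2-2\rho_d\phi_d+1)$. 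In particular both claimed bounds are tight at $r=0$, and it remains only to bound the perturbation error by $\tfrac{45d}{\kappa^3}r$ and by $139d^4 r$ (times $\tfrac{\|x_*\|^2}{2^{d+1}}$), respectively.

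For $x\in\mathcal{B}(\phi_d x_*,r\|x_*\|)$ write $x=\phi_d x_*+e$ with $\|e\|\leq r\|x_*\|$. Using $\phi_d\geq\rho_d\geq\kappa$, the hypothesis $r<\tfrac{1}{4\pi d^2}$, and that $\kappa$ is a fixed positive constant, one gets $\|x\|\geq\tfrac{\kappa}{2}\|x_*\|$ and $\overline\theta_0=\angle(x,x_*)\leq\tfrac{2r}{\kappa}$; since $g$ is nondecreasing with $g(0)=0$ and $g'\le1$ (so $g(\theta)\leq\theta$), this forces $\overline\theta_i\leq\tfrac{2r}{\kappa}$ for every $i$. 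Feeding this into $\mu_x,\nu_x$ via \eqref{piprod_upperbound}, \eqref{sum_prod_upperbound} and \eqref{pi_twotheta_upperbound} yields $\mu_x=1+O_1\!\big(\tfrac{Cdr}{\kappa}\big)$ and $0\le\nu_x=O_1\!\big(\tfrac{Cdr}{\kappa}\big)$; combining with $\|x\|^2=\phi_d^2\|x_*\|^2+O_1(3r\|x_*\|^2)$, $\langle x,x_*\rangle=\phi_d\|x_*\|^2+O_1(r\|x_*\|^2)$, and $0\le\|x\|\|x_*\|\le2\|x_*\|^2$ — and being careful to bound $-\mu_x\langle x,x_*\rangle$ from above using $\mu_x>0$ and $\langle x,x_*\rangle\ge(\phi_d-r)\|x_*\|^2>0$, while simply discarding the nonpositive term $-\tfrac{1}{2^d}\nu_x\|x\|\|x_*\|$ — gives \eqref{conc_obj_upper_bound} after absorbing the numerical factors and the accumulated powers of $1/\kappa$ into $\tfrac{45d}{\kappa^3}$. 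Symmetrically, for $y\in\mathcal{B}(-\phi_d x_*,r\|x_*\|)$ write $y=-\phi_d x_*+e$; the same argument gives $\angle(y,-x_*)\leq\tfrac{2r}{\kappa}$, so $\overline\theta_0=\pi+O_1(\delta)$ with $\delta:=\tfrac{2r}{\kappa}$, which satisfies $d^2\delta/\pi\leq1$ by the hypothesis on $r$. Then \eqref{pi_Oone_bound}, \eqref{beta_pi_upperbound} and \eqref{pi_twotheta_upperbound} give $|\mu_y|\leq\tfrac{\delta}{\pi}$, and the estimate chain carried out in the ``Large Angle Case'' of the proof of Proposition \ref{Seps} gives $\nu_y=\rho_d+O_1(7d^4\delta)$; substituting these together with $\|y\|^2=\phi_d^2\|x_*\|^2+O_1(3r\|x_*\|^2)$ and $\|y\|\|x_*\|=\phi_d\|x_*\|^2+O_1(r\|x_*\|^2)$, and now tracking signs so as to extract a \emph{lower} bound on $\F$, produces \eqref{conc_obj_lower_bound}.

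The routine parts are the Taylor-type expansions of $\|x\|^2$, $\langle x,x_*\rangle$ and $\|x\|\|x_*\|$ and the accounting of numerical constants. The genuinely delicate step is controlling $\nu_y$ in the large-angle regime, where the coefficient of $\|y\|\|x_*\|$ must be shown to stay within $O(d^4r)$ of $\rho_d$; this is exactly the estimate $\al=\rho_d+O_1(7d^4\delta)$ already established inside the proof of Proposition \ref{Seps}, so I would invoke that computation after only verifying its smallness hypothesis $d^2\delta/\pi\leq1$. A secondary subtlety is collecting the perturbation errors with the correct one-sided sign in each case, so that one genuinely obtains an upper bound in \eqref{conc_obj_upper_bound} and a lower bound in \eqref{conc_obj_lower_bound} rather than only a two-sided estimate with a weaker constant.
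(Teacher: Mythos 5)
Your proposal is correct and follows essentially the same route as the paper: both arguments exploit the closed form of $\F$, control the angle $\overline{\theta}_0$ and the norm $\|x\|$ over each ball, discard the sign-definite term (the $\nu$-term in the small-angle case, the $\langle x,x_*\rangle$-term in the large-angle case), and reuse the estimate $\al=\rho_d+O_1(7d^4\delta)$ from the large-angle analysis of Proposition \ref{Seps} after checking $d^2\delta/\pi\leqslant 1$. The only differences are organizational (you evaluate at the centers $\pm\phi_d x_*$ first and then perturb, while the paper bounds the factors directly) and in the exact constants of the angle bounds, which is harmless bookkeeping.
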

\begin{proof}[Proof of Lemma \ref{poly_bounds_on_conc_objective}]
Define 
    $\psi_d : = \frac{\pi - 2\overline{\theta}_d}{\pi}, \zeta_{i+1} : = \prod_{j=i+1}^{d-1}\frac{\pi - \overline{\theta}_{j}}{\pi},$ and $\al_i := \frac{\sin \overline{\theta}_i}{\pi}.$ Then note that we can write $\mathcal{F}$ as  \begin{align}
    \mathcal{F}(x) : = \frac{1}{2^{d+1}}(\|x\|^2 + \|x_*\|^2) - \frac{1}{2^d}\left(\psi_d \zeta_0 \langle x, x_*\rangle + \left(2\al_d + \psi_d\sum_{i=0}^{d-1}\al_i\zeta_{i+1}\right)\|x\|\|x_*\|\right). \nonumber
\end{align}Fix $x \in \mathcal{B}(\phi_d x_*, r\|x_*\|)$. Then observe that we have $\theta_0 \leqslant \frac{\pi r}{2\phi_d}$ and $(\phi_d - r)\|x_*\| \leqslant \|x\| \leqslant (\phi_d + r)\|x_*\|$. Furthermore, $\cos \theta_0 \geqslant 1 - \frac{\theta_0^2}{2}.$ Thus, we have the following bounds: \begin{align*}
    \psi_d \geqslant 1 - \frac{r}{\phi_d},\ \zeta_0 \geqslant \prod_{i=0}^{d-1}\left(1 - \frac{r}{2\phi_d}\right),\ \text{and}\ \cos \theta_0 \geqslant 1 - \frac{\pi^2r^2}{8\phi_d}.
\end{align*} Hence we see that \begin{align*}
    \mathcal{F}(x) - \frac{\|x_*\|^2}{2^{d+1}} & \leqslant \frac{\|x\|^2}{2^{d+1}} - \frac{1}{2^d}\psi_d \zeta_0 \cos \theta_0 \|x\|\|x_*\| \\
    & \leqslant \frac{1}{2^{d+1}}(\phi_d + r)^2\|x_*\|^2 - \frac{}{2^d}\psi_d \zeta_0 \cos \theta_0 (\phi_d -r)\|x_*\|^2 \\
    & \leqslant \frac{\|x_*\|^2}{2^{d+1}}\left(\phi_d^2 + 2r\phi_d + r^2 - 2 \left(1 - \frac{r}{\phi_d}\right)\left(1 - \frac{dr}{\phi_d}\right)(\phi_d - r)\left(1 - \frac{\pi^2r^2}{8\phi_d^2}\right)\right).
\end{align*} %We can expand this term further to obtain \begin{align*}
    %& \phi_d^2 + 2r\phi_d + r^2 - 2 \left(1 - \frac{r}{\phi_d}\right)\left(1 - \frac{dr}{\phi_d}\right)(\phi_d - r)\left(1 - \frac{\pi^2r^2}{8\phi_d^2}\right) \\
    %& \leqslant \phi_d^2 + 2r\phi_d + r^2 - 2\phi_d + \frac{\pi^2r^2}{4\phi_d} + 2(d+2)r - \frac{(d+2)\pi^2r^3}{4\phi_d} - \frac{2(2d+1)r^2}{\phi_d} + \frac{(2d+1)\pi^2r^4}{4\phi_d^3} + \frac{2dr^3}{\phi_d^2} - \frac{2d\pi^2r^5}{\phi_d^4}.
%\end{align*} 
 where in the first inequality we used $2\al_d + \psi_d\sum_{i=0}^{d-1}\al_i\zeta_{i+1}\geqslant 0.$ Noting that $\phi_d \in [\rho_d,1]$ and $r < 1$, with some algebra we attain \begin{align*}
    & \phi_d^2 + 2r\phi_d + r^2 - 2 \left(1 - \frac{r}{\phi_d}\right)\left(1 - \frac{dr}{\phi_d}\right)(\phi_d - r)\left(1 - \frac{\pi^2r^2}{8\phi_d^2}\right) \\
    & \leqslant \phi_d^2 - 2\phi_d + \frac{(8d + (2d+1)\pi^2 + 7)r}{\phi_d^3} \\
    & \leqslant \phi_d^2 - 2\phi_d + \frac{45d}{\kappa^3}r
\end{align*} so we may conclude that for $x \in \mathcal{B}(\phi_d x_*, r\|x_*\|)$, \begin{align*}
    \mathcal{F}(x) - \frac{\|x_*\|^2}{2^{d+1}} & \leqslant \frac{\|x_*\|^2}{2^{d+1}}\left(\phi_d^2 - 2\phi_d + \frac{45d}{\kappa^3}r\right). 
\end{align*}

Fix $x \in \mathcal{B}(-\phi_d x_*, r\|x_*\|)$. Then note that we have $\pi - \theta_0 \leqslant \frac{\pi^2}{2}r$ and $(\phi_d - r)\|x_*\| \leqslant \|x\| \leqslant (\phi_d + r)\|x_*\|$. Furthermore, for sufficiently small $r > 0$, we have that $\langle x,x_*\rangle \leqslant 0$ so that $-\psi_d \zeta_0 \langle x,x_*\rangle \geqslant 0$ (note that $\overline{\theta}_d \leqslant \pi/2$). Thus \begin{align*}
    \mathcal{F}(x) - \frac{1}{2^{d+1}}\|x_*\|^2 & = \frac{1}{2^{d+1}}\|x\|^2 - \frac{1}{2^d}\left(\psi_d \zeta_0 \langle x,x_*\rangle + (\psi_d \sum_{i=0}^{d-1} \al_i \zeta_{i+1} + 2\al_d)\|x\|\|x_*\|\right) \\
    & \geqslant \frac{1}{2^{d+1}}(\phi_d - r)^2\|x_*\|^2 - \frac{1}{2^d}(\psi_d \sum_{i=0}^{d-1} \al_i \zeta_{i+1} + 2\al_d)(\phi_d + r)\|x_*\|.
\end{align*} Note that we have $\overline{\theta}_0 = \pi + O_1(r\pi^2/2)$. As shown in Proposition \ref{Seps}, if $d^2(r\pi^2/2)/\pi \leqslant 1$, then we have that  $$\psi_d \sum_{i=0}^{d-1} \al_i \zeta_{i+1} + 2\al_d = \rho_d + O_1(7d^4r\pi^2/2).$$ Hence we have\begin{align*}
    \mathcal{F}(x) - \frac{\|x_*\|^2}{2^{d+1}} & \geqslant \frac{1}{2^{d+1}}(\phi_d - r)^2\|x_*\|^2 - \frac{1}{2^d}(\rho_d + 7\pi^2d^4r/2)(\phi_d + r)\|x_*\|^2 \\
    & = \frac{\|x_*\|^2}{2^{d+1}}(\phi_d^2 - 2r\phi_d + r^2 - 2(\rho_d\phi_d + r\rho_d + 7\pi^2d^4r\phi_d/2 + 7\pi^2d^4r^2/2))\|x_*\|^2 \\
    & \geqslant \frac{\|x_*\|^2}{2^{d+1}}(\phi_d^2 - 2r - 2\rho_d\phi_d -2r -7\pi^2d^4r -7\pi^2d^4r^2)\\
    %& \geqslant \frac{\|x_*\|^2}{2^{d+1}}(\phi_d^2 - 2\rho_d \phi_d - (4 + 4Kd^4)r)\\
    & \geqslant \frac{\|x_*\|^2}{2^{d+1}}(\phi_d^2 - 2\rho_d \phi_d - 139d^4r)
\end{align*} where we used the fact that $\phi_d \in [\rho_d , 1]$ and $0 < r < 1.$ This completes the proof.
\end{proof}

With this result, we are equipped to prove Lemma \ref{obj_function_bigger_near_negmult}.

\begin{proof}[Proof of Lemma \ref{obj_function_bigger_near_negmult}] By the same argument for \eqref{bound_on_q_norm}, we have that $|\langle A_{x_d}\Lambda_x x, \eta\rangle| \leqslant \frac{2}{2^{d/2}}\|x\|\|\eta\|$ for any $x \in \R^k$. Thus for $x \in \mathcal{B}(\phi_d x_*, \varphi \|x_*\|)$, \begin{align}
    |\langle |A\G(x)| - |A\G(x_*)|,\eta\rangle| & \leqslant |\langle A_{x_d}\Lambda_x x, \eta \rangle| + |\langle A_{x_{*,d}} \Lambda_{x_*} x_*,\eta\rangle| \nonumber \\
    & \leqslant (\|x\| + \|x_*\|) \frac{2}{2^{d/2}}\|\eta\| \nonumber \\
    & \leqslant (\varphi \|x_*\| + 2\|x_*\|)\frac{2}{2^{d/2}}\|\eta\| \nonumber
\end{align} where we used the fact that $\|x\| \leqslant (\phi_d + \varphi)\|x_*\| \leqslant (1 + \varphi)\|x_*\|$ in the last inequality.

Let $\kappa:=\min_{d \geqslant 2} \rho_d$. If $x \in \mathcal{B}(\phi_d x_*, \varphi \|x_*\|)$ and $K_d := 24d^3 + 8d\left(1 + \frac{d}{\pi}\right) + \frac{48d + 48d^3}{\pi}$, then Lemma \ref{obj_close_to_expectation} and Lemma \ref{poly_bounds_on_conc_objective} give \begin{align*}
    f_{\eta}(x) & \leqslant \mathcal{F}(x) + |f_0(x) - \mathcal{F}(x)| + |\langle |A\G(x)| - |A\G(x_*)|,\eta\rangle|\\
    & \leqslant \frac{\|x_*\|^2}{2^{d+1}}\left(\phi_d^2 - 2\phi_d + \frac{45d}{\kappa^3}\varphi\right) + \frac{\|x_*\|^2}{2^{d+1}} + \frac{(L+12)d^3\sqrt{\epsilon}}{2^d}\|x\|^2 \\
    & + \frac{(L+12)d^3\sqrt{\epsilon}}{2^d}\|x_*\|^2 + \frac{2L\epsilon}{2^d}\|x\|\|x_*\| + \frac{1}{2^d}K_d\sqrt{\epsilon}\|x\|\|x_*\|+  (\varphi \|x_*\| + 2\|x_*\|)\frac{2}{2^{d/2}}\|\eta\| \\
    & \leqslant \frac{\|x_*\|^2}{2^{d+1}}(\phi_d^2 - 2\phi_d + \frac{45d}{\kappa^3}\varphi + 1 + 2(L+12)d^3\sqrt{\epsilon}\left((\phi_d + \varphi)^2 + 1\right) + 4L\sqrt{\epsilon}(\phi_d + \varphi) + 2K_d(\phi_d + \varphi)) \\
    & + (\varphi \|x_*\| + 2\|x_*\|)\frac{2}{2^{d/2}}\|\eta\| \\
    & \leqslant \frac{\|x_*\|^2}{2^{d+1}}(1 + \phi_d^2 - 2\phi_d + \frac{45d}{\kappa^3}\sqrt{\epsilon} + \tilde{K}_d\sqrt{\epsilon}) + (\varphi \|x_*\| + 2\|x_*\|)\frac{2}{2^{d/2}}\|\eta\|
\end{align*} where $\tilde{K}_d := 6(L+12)d^3 + 8L + 4K_d$, in the second inequality we used $\|x\|\leqslant(\phi_d + \varphi)\|x_*\|$, and in the last inequality we used $\epsilon < \sqrt{\epsilon}$, $\rho_d \leqslant 1$ and $\varphi < 1$. 

Similarly, if $y \in \mathcal{B}(-\phi_d x_*,\varphi \|x_*\|)$, then \begin{align*}
    f_{\eta}(y) & \geqslant \mathcal{F}(y) - |f_0(y) - \mathcal{F}(y)| - |\langle |A\G(y)| - |A\G(x_*)|,\eta\rangle|\\ 
    & \geqslant \frac{\|x_*\|^2}{2^{d+1}}(\phi_d^2 - 2\rho_d \phi_d - 139d^4\varphi) + \frac{\|x_*\|^2}{2^{d+1}} - \frac{(L+12)d^3\sqrt{\epsilon}}{2^d}\|x\|^2 \\
    & - \frac{(L+12)d^3\sqrt{\epsilon}}{2^d}\|x_*\|^2 - \frac{2L\epsilon}{2^d}\|x\|\|x_*\| - \frac{1}{2^d}K_d\sqrt{\epsilon}\|x\|\|x_*\|-  (\varphi \|x_*\| + 2\|x_*\|)\frac{2}{2^{d/2}}\|\eta\|\\
    & \geqslant \frac{\|x_*\|^2}{2^{d+1}}(1 + \phi_d^2 - 2\rho_d \phi_d - 139d^4\sqrt{\epsilon} - \tilde{K}_d\sqrt{\epsilon}) - (\varphi \|x_*\| + 2\|x_*\|)\frac{2}{2^{d/2}}\|\eta\|.
\end{align*} In sum, we have for $x \in \mathcal{B}(\phi_d x_*,\varphi\|x_*\|)$, \begin{align}
    f_{\eta}(x) \leqslant \frac{\|x_*\|^2}{2^{d+1}} \left(1 + \phi_d^2 - 2\phi_d + \frac{45d\sqrt{\epsilon}}{\kappa^3} + \tilde{K}_d\sqrt{\epsilon}\right) + (\varphi \|x_*\| + 2\|x_*\|)\frac{2}{2^{d/2}}\|\eta\| \label{lemma5upperbound_quantity}
\end{align} while for $y \in \mathcal{B}(-\phi_d x_*,\varphi\|x_*\|)$, \begin{align}
    f_{\eta}(y) \geqslant \frac{\|x_*\|^2}{2^{d+1}}\left(1 + \phi_d^2 - 2\rho_d \phi_d - 139d^4 \sqrt{\epsilon} - \tilde{K}_d\sqrt{\epsilon}\right)- (\varphi \|x_*\| + 2\|x_*\|)\frac{2}{2^{d/2}}\|\eta\| \label{lemma5lowerbound_quantity}.
\end{align} 

Note that we require the lower bound in \eqref{lemma5lowerbound_quantity} to be larger than the upper bound in \eqref{lemma5upperbound_quantity}. Setting $\varphi = \epsilon$ and using both $\epsilon < \sqrt{\epsilon}$ and $\|\eta\| \leqslant c_2\frac{\|x_*\|}{2^{d/2}d^{48}}$, we see that we require %\begin{align}
%    \left(\frac{45d}{\kappa^3} + 2\tilde{K}_d  + 139d^4\right)\sqrt{\epsilon} < 2\phi_d - 2\rho_d\phi_d - 8\|\eta\| \label{boundonepsilontogetfunctionbound}
%\end{align} Thus if 
\begin{align}
    \epsilon \leqslant \left(\frac{2\phi_d(1-\rho_d) - 8c_2/d^{48}}{\frac{45d}{\kappa^3} + 2\tilde{K}_d + 139d^4}\right)^2. \label{boundonepsilontogetfunctionbound}
\end{align} By Lemma \ref{bounds_on_rho_lemma}, we have  $1-\rho_d \geqslant 1/(C(d+2)^2)$ for some numerical constant $C$ and $\phi_d \geqslant \kappa$. Hence it suffices to have $\frac{c_2}{d^{48}} \leqslant \frac{\kappa}{8C(d+2)^2}$  and $\varphi = \epsilon \leqslant r_1/d^{12}$ for some numerical constants $r_1$ and $c_2$. %Thus, as a concrete bound that only depends on $d$, we can take \begin{align*}
    %\epsilon \leqslant \left(\frac{\frac{2\kappa}{C(d+2)^2} - 8\|\eta\|}{\frac{45d}{\kappa^3} + 139d^4 + 2\tilde{K}_d}\right)^2.
%\end{align*} 
\end{proof}

\begin{lem} \label{bounds_on_rho_lemma}
We have that $\rho_d$ satisfies $\min_{d \geqslant 2} \rho_d > 0$ and for some numerical constant $C$, \begin{align*}
    \frac{1}{C(d+2)^2} \leqslant 1-\rho_d\ \forall\ d \geqslant 2.
\end{align*}
\end{lem}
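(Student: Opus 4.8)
The plan is to unpack the explicit formula for $\rho_d$ and isolate an exact cancellation. Write $\Gamma_d := \sum_{i=0}^{d-1}\frac{\sin\breve{\theta}_i}{\pi}\bigl(\prod_{j=i+1}^{d-1}\frac{\pi-\breve{\theta}_j}{\pi}\bigr)$, so that $\rho_d = \frac{2\sin\breve{\theta}_d}{\pi} + \bigl(1-\frac{2\breve{\theta}_d}{\pi}\bigr)\Gamma_d$. Peeling off the $i=d-1$ term gives the recursion $\Gamma_d = \frac{\sin\breve{\theta}_{d-1}}{\pi} + \frac{\pi-\breve{\theta}_{d-1}}{\pi}\Gamma_{d-1}$ with $\Gamma_1=0$; since every $\breve{\theta}_i\in[0,\pi]$ both coefficients are nonnegative, and $\sin\breve{\theta}_{d-1}\leqslant\breve{\theta}_{d-1}$ forces $0\leqslant\Gamma_d<1$ by induction. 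A short algebraic manipulation then yields the key identity
\[
1-\rho_d = (1-\Gamma_d)\Bigl(1-\frac{2\breve{\theta}_d}{\pi}\Bigr) + \frac{2}{\pi}\bigl(\breve{\theta}_d-\sin\breve{\theta}_d\bigr).
\]
For $d\geqslant 2$, \eqref{cos_inv_bound} gives $\breve{\theta}_d\leqslant\cos^{-1}(1/\pi)<\pi/2$, so both summands on the right are nonnegative; hence $\rho_d\leqslant 1$. Moreover $\breve{\theta}_d\geqslant\pi/(d+1)>0$ by \eqref{breve_theta_lower_bound} and $\Gamma_d\geqslant 0$, so in the formula for $\rho_d$ the first term is strictly positive and the second nonnegative, giving $\rho_d>0$ for every $d\geqslant 2$; together with $\rho_d\to 1$ (established at the end of the proof of Proposition \ref{Seps}), only finitely many $d$ have $\rho_d\leqslant 1/2$, so $\min_{d\geqslant 2}\rho_d>0$.

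For the quantitative bound, since $1-\frac{2\breve{\theta}_d}{\pi}\geqslant 1-\frac{2}{\pi}\cos^{-1}(1/\pi)=:c_0>0$ by \eqref{cos_inv_bound}, the identity reduces the problem to showing $1-\Gamma_d\geqslant\Omega\bigl((d+2)^{-2}\bigr)$. Set $a_d:=1-\Gamma_d$; the recursion for $\Gamma_d$ turns into $a_d = \frac{\breve{\theta}_{d-1}-\sin\breve{\theta}_{d-1}}{\pi} + \bigl(1-\frac{\breve{\theta}_{d-1}}{\pi}\bigr)a_{d-1}$, and since $a_1 = 1 = \frac{\breve{\theta}_0-\sin\breve{\theta}_0}{\pi}$ (using $\breve{\theta}_0=\pi$) this unrolls to $a_d = \sum_{i=0}^{d-1}\frac{\breve{\theta}_i-\sin\breve{\theta}_i}{\pi}\prod_{j=i+1}^{d-1}\bigl(1-\frac{\breve{\theta}_j}{\pi}\bigr)$. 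I would then bound the two ingredients separately: \eqref{breve_theta_bound} gives $1-\frac{\breve{\theta}_j}{\pi}\geqslant 1-\frac{3}{j+3}=\frac{j}{j+3}$, so the product telescopes, $\prod_{j=i+1}^{d-1}(1-\frac{\breve{\theta}_j}{\pi})\geqslant\frac{(i+1)(i+2)(i+3)}{d(d+1)(d+2)}$; and the scalar inequality $t-\sin t\geqslant t^3/\pi^2$ on $[0,\pi]$ (a standard repeated-differentiation argument: $\psi(t):=\pi^2(t-\sin t)-t^3$ vanishes at $0$ and $\pi$ and, since $\psi'''(t)=\pi^2\cos t-6$ changes sign exactly once on $[0,\pi]$, $\psi$ is nonnegative there), combined with $\breve{\theta}_i\geqslant\pi/(i+1)$ from \eqref{breve_theta_lower_bound}, gives $\frac{\breve{\theta}_i-\sin\breve{\theta}_i}{\pi}\geqslant\frac{\breve{\theta}_i^3}{\pi^3}\geqslant\frac{1}{(i+1)^3}$.

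Multiplying the two bounds and summing,
\[
a_d \;\geqslant\; \frac{1}{d(d+1)(d+2)}\sum_{i=0}^{d-1}\frac{(i+2)(i+3)}{(i+1)^2}\;\geqslant\;\frac{d}{d(d+1)(d+2)}\;\geqslant\;\frac{1}{(d+2)^2},
\]
where the middle step uses $(i+2)(i+3)\geqslant(i+1)^2$ and that there are $d$ summands. Plugging into the identity, $1-\rho_d\geqslant c_0\,a_d\geqslant c_0/(d+2)^2$, which is the claim with $C:=1/c_0$, a numerical constant. The only real work is the bookkeeping — verifying the two recursions, the telescoping product, and the elementary estimate for $t-\sin t$ — so I do not foresee a genuine obstacle; the single point that needs care is that bounding $a_d$ by its last summand alone yields only $\Theta(d^{-3})$, so one must genuinely sum the contributions across all layers, and the quadratic rate is exactly what the telescoped factor $\frac{(i+1)(i+2)(i+3)}{d(d+1)(d+2)}$ against the cubic gain $(i+1)^{-3}$ produces.
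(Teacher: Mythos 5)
Your proof is correct, but it takes a more self-contained route than the paper's. The paper's argument is short precisely because it imports the three key facts about $\Gamma_d$ — namely $\Gamma_d \in [0,1]$, $\min_{d\geqslant 2}\Gamma_d>0$, and $1-\Gamma_d \geqslant \frac{1}{a_7(d+2)^2}$ — from Lemma A.4 of \cite{Huangetal2018}, and then performs essentially the same algebra you do: for the minimum it combines $\frac{2\sin\breve{\theta}_d}{\pi}\geqslant \frac{3}{4(d+1)}$ (via \eqref{breve_theta_lower_bound}) with \eqref{breve_theta_bound} to get $\rho_d \geqslant \frac{1}{20}\Gamma_d$, and for the quantitative bound it uses $\rho_d \leqslant \Gamma_d + \frac{2}{\pi}\breve{\theta}_d(1-\Gamma_d)$ and $\breve{\theta}_d\leqslant \cos^{-1}(1/\pi)$, i.e. your identity $1-\rho_d=(1-\Gamma_d)\bigl(1-\tfrac{2\breve{\theta}_d}{\pi}\bigr)+\tfrac{2}{\pi}(\breve{\theta}_d-\sin\breve{\theta}_d)$ with the second summand discarded. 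What you do differently is to re-derive the hard ingredient $1-\Gamma_d\geqslant (d+2)^{-2}$ from first principles: the recursion $a_d=\frac{\breve{\theta}_{d-1}-\sin\breve{\theta}_{d-1}}{\pi}+\bigl(1-\frac{\breve{\theta}_{d-1}}{\pi}\bigr)a_{d-1}$, the telescoped product bound from \eqref{breve_theta_bound}, and the scalar inequality $t-\sin t\geqslant t^3/\pi^2$ on $[0,\pi]$; your bookkeeping checks out (the unrolled sum, the telescoping $\prod_{j=i+1}^{d-1}\frac{j}{j+3}=\frac{(i+1)(i+2)(i+3)}{d(d+1)(d+2)}$, and the final estimate $\frac{1}{(d+1)(d+2)}\geqslant\frac{1}{(d+2)^2}$ are all correct). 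For $\min_{d\geqslant 2}\rho_d>0$ you also argue differently: pointwise positivity of $\rho_d$ plus $\rho_d\to 1$ (quoted from the end of the proof of Proposition \ref{Seps}), rather than the paper's uniform bound $\rho_d\geqslant\frac{1}{20}\Gamma_d$ with $\min_d\Gamma_d>0$; this is legitimate and non-circular, though it still leans on the externally cited convergence of $\Gamma_d$. The only place to tighten is your justification of $t-\sin t\geqslant t^3/\pi^2$: vanishing of $\psi$ at $0,\pi$ together with a single sign change of $\psi'''$ is not by itself conclusive — you should also record $\psi'(0)=\psi''(0)=0$ and $\psi'(\pi)<0$, which gives the standard chain (one zero of $\psi''$, hence one interior zero of $\psi'$, hence $\psi$ increases then decreases between its two zero endpoints). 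With that sentence added, your argument is a complete, citation-free proof of the lemma, at the cost of about a page of elementary estimates that the paper avoids by citation.
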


\begin{proof}[Proof of Lemma \ref{bounds_on_rho_lemma}]
%Recall that we defined \begin{align*}
%\rho_d : = \frac{2 \sin \breve{\theta}_d}{\pi} + \left(\frac{\pi - 2 \breve{\theta}_d}{\pi}\right) \sum_{i=0}^{d-1}\frac{\sin \breve{\theta}_i}{\pi} \left(\prod_{j=i+1}^{d-1} \frac{\pi - \breve{\theta}_j}{\pi}\right)
%\end{align*} where $\breve{\theta}_0 = \pi$ and $\breve{\theta}_i = g(\breve{\theta}_{i-1})$. 
Let $\Gamma_d : = \sum_{i=0}^{d-1}\frac{\sin \breve{\theta}_i}{\pi} \left(\prod_{j=i+1}^{d-1} \frac{\pi - \breve{\theta}_j}{\pi}\right)$. In Lemma A.4 of \cite{Huangetal2018}, it has been established that $\Gamma_d \in [0,1]$ and $\min_{d \geqslant 2} \Gamma_d > 0$. By \eqref{breve_theta_bound} and \eqref{breve_theta_lower_bound}, we have that $\breve{\theta}_d \leqslant 3\pi/(d+3)$ and $\breve{\theta}_d \geqslant \pi/(d+1)$ for all $d \geqslant 2$. Since $\sin(2x) - 3x/4 \geqslant 0$ for all $x \in [0,\pi/3]$, observe that \begin{align*}
    \frac{2\sin \breve{\theta}_d}{\pi} \geqslant \frac{2\sin \left(\frac{\pi}{d+1}\right)}{\pi} \geqslant  \frac{2}{\pi}\cdot\frac{3}{4}\left(\frac{\pi}{2(d+1)}\right) = \frac{3}{4(d+1)}\ \forall\ d \geqslant 2.
\end{align*} Thus for any $d \geqslant 2$,\begin{align*}
    \rho_d = \frac{2 \sin \breve{\theta}_d}{\pi} + \left(\frac{\pi - 2 \breve{\theta}_d}{\pi}\right) \Gamma_d 
    & \geqslant \frac{3}{4(d+1)} + \frac{\pi - \frac{6\pi}{d+3}}{\pi} \Gamma_d \\
    & \geqslant \frac{3}{4(d+1)} \Gamma_d + \frac{\pi - \frac{6\pi}{d+3}}{\pi} \Gamma_d \\
    %& = \left(\frac{3}{4(d+1)} + 1 - \frac{6}{d+3}\right)\Gamma_d \\
    & = \left(\frac{3}{4(d+1)} + \frac{d-3}{d+3}\right)\Gamma_d  \geqslant \frac{1}{20}\Gamma_d
\end{align*}  where the second inequality is due to $\Gamma_d \in [0,1]$. We conclude that $\min_{d \geqslant 2} \rho_d \geqslant 1/20 \min_{d \geqslant 2} \Gamma_d > 0.$

We now establish the lower bound on $1-\rho_d$ for all $d \geqslant 2$. It was shown in Lemma A.4 of \cite{Huangetal2018} that $1-\Gamma_d \geqslant \frac{1}{a_7(d+2)^2}$ for some numerical constant $a_7$. Observe that \begin{align*}
    \rho_d = \left(1 - \frac{2}{\pi}\breve{\theta}_d\right)\Gamma_d + \frac{2}{\pi}\sin \breve{\theta}_d = \Gamma_d + \frac{2}{\pi}\left(\sin \breve{\theta}_d - \breve{\theta}_d\Gamma_d\right) \leqslant \Gamma_d + \frac{2}{\pi}\breve{\theta}_d (1-\Gamma_d). 
\end{align*} Furthermore, note that for all $d \geqslant 2$, $\breve{\theta}_d \leqslant \breve{\theta}_2 = g(g(\pi)) = g(\pi/2) = \cos^{-1}(1/\pi).$ Hence for all $d \geqslant 2$, \begin{align*}
    1 - \rho_d 
    \geqslant 1- \Gamma_d - \frac{2}{\pi}\breve{\theta}_d(1-\Gamma_d) 
    & = (1-\Gamma_d) \left(1 - \frac{2}{\pi}\breve{\theta}_d\right) \\
    & \geqslant \frac{1}{a_7(d+2)^2}\left(1 - \frac{2}{\pi}\cos^{-1}\left(\frac{1}{\pi}\right)\right) \\
    & \geqslant \frac{0.2}{a_7(d+2)^2}.
\end{align*}
\end{proof}

\subsubsection{Proofs for Section \ref{conv_proof_third_subsection}} \label{proof_of_convexity_lemma_subsection}

Here we prove the convexity-like property of $f$ around the minimizer $x_*$.

\begin{proof}[Proof of Lemma \ref{convexity_lemma}]

Suppose our objective function $f$ is differentiable at $x$. Recall that the gradient of $f$ is given by $v_x = \overline{v}_x - q_x$ where
    $\overline{v}_x = \Lambda_x^\top A_{x_d}^\top (A_{x_d}\Lambda_x x - A_{x_{*,d}}\Lambda_{x_*}x_*)$ and $q_x = \Lambda_x^\top A_{x_d}^\top \eta$. We will first show that $\overline{v}_x$ satisfies \begin{align*}
    \left\| \overline{v}_x - \Lambda_x^\top (\Lambda_xx - \Lambda_xx_*) \right\| \leqslant \frac{1}{16}\frac{1}{2^d}\|x-x_*\|.
\end{align*} Note that by the triangle inequality, we have that \begin{align*}
    \left\|\overline{v}_x - \Lambda_x^\top (\Lambda_xx - \Lambda_xx_*)\right\| & \leqslant \underbrace{\left\|\Lambda_x^\top A_{x_d}^\top (A_{x_d}\Lambda_x x - A_{x_{d}}\Lambda_{x_*}x_*) - \Lambda_x^\top(\Lambda_x x - \Lambda_{x_*} x_*)\right\|}_{T_1} \\
    & + \underbrace{\left\|\Lambda_x^\top A_{x_d}^\top(A_{x_d} - A_{x_{*,d}})\Lambda_{x_*}x_*\right\|}_{T_2}.
\end{align*} We will establish control of each of these terms separately.

\paragraph{Controlling $T_1$:} Since $f$ is differentiable at $x$, note that by the local linearity of $\G$ we have that for sufficiently small $z \in \R^k$, $\G(x+z) - \G(x) = \Lambda_x z$. Hence for all $z$, the RRCP implies that \begin{align*}
    |\langle A_{x_d}\Lambda_x z,A_{x_d}(\Lambda_x x - \Lambda_{x_*}x_*)\rangle - \langle \Lambda_x z,\Lambda_x x - \Lambda_{x_*}x_*\rangle| \leqslant L\epsilon \|\Lambda_x\|\|\Lambda_x x - \Lambda_{x_*}x_*\|\|z\|.
\end{align*} Since this holds for all $z$, we have that \begin{align}
    \|\Lambda_x^\top A_{x_d}^\top (A_{x_d}\Lambda_x x - A_{x_{d}}\Lambda_{x_*}x_*) - \Lambda_x^\top(\Lambda_x x - \Lambda_{x_*} x_*)\| \leqslant L\epsilon \|\Lambda_x\|\|\Lambda_x x - \Lambda_{x_*}x_*\|. \label{AxdAxd_conc_identity}
\end{align}In addition, we have that by Lemma \ref{G_lipschitz}, if $\epsilon < 1/(200^4d^6)$ and $x \in \mathcal{B}(x_*, d\sqrt{\epsilon}\|x_*\|)$ then \begin{align}
    \|\Lambda_x x - \Lambda_{x_*}x_*\| \leqslant \frac{1.2}{2^{d/2}}\|x-x_*\|. \label{G_lipschitz_d}
\end{align}  Combining \eqref{AxdAxd_conc_identity}, \eqref{G_lipschitz_d}, and \eqref{bound_on_lambda_energy} in Lemma \ref{upper_bounds_on_lambda_Az} we see that \begin{align}
      \left\|\Lambda_x^\top A_{x_d}^\top (A_{x_d}\Lambda_x x - A_{x_{d}}\Lambda_{x_*}x_*) - \Lambda_x^\top(\Lambda_x x - \Lambda_{x_*} x_*)\right\| \leqslant \frac{1.2\sqrt{\frac{13}{12}}L\epsilon}{2^d}\|x - x_*\|. \label{AxdAxd_conc_identity_final}
\end{align} Thus choosing $\epsilon$ so that $\epsilon < 1/(32\cdot1.2\sqrt{13/12}L)$ in \eqref{AxdAxd_conc_identity_final} shows that \begin{align}
    T_1 = O_1\left(\frac{1}{32}\right)\frac{1}{2^d}\|x-x_*\|. \label{T1_bound}
\end{align}%Thus by \eqref{AxdAxd_conc_identity_final}), we have that 
%\begin{align}
%    \left\|\Lambda_x^\top A_{x_d}^\top A_{x_d}\Lambda_x(x-x_*) - \Lambda_x^\top\Lambda_x(x-x_*)\right\| \leqslant \frac{13}{12}\frac{L\epsilon}{2^d}\|x-x_*\| \label{T1_first_inequality}.
%\end{align} We can then control $\Lambda_x^\top \Lambda_x(x-x_*)$ by noting that the WDC implies \begin{align}
%    \left\|\Lambda_x^\top \Lambda_x - \frac{1}{2^d} I_{n_d}\right\| \leqslant \frac{4d\epsilon}{2^d}. \label{Lambdax_lambdax_conc_identity_2d}
%\end{align} This inequaltiy is proven in Lemma 8 of \cite{HV2016}. Thus by the triangle inequality, we can use \eqref{T1_first_inequality}) and \eqref{Lambdax_lambdax_conc_identity_2d}) to attain \begin{align*}
%    \left\|\Lambda_x^\top A_{x_d}^\top A_{x_d}\Lambda_x(x-x_*) - \frac{1}{2^d}(x-x_*)\right\| & \leqslant \left\|\Lambda_x^\top A_{x_d}^\top A_{x_d}\Lambda_x(x-x_*) - \frac{1}{2^d}\Lambda_x^\top \Lambda_x(x-x_*)\right\| \\
%    & + \left\|\Lambda_x^\top \Lambda_x(x-x_*) - \frac{1}{2^d}(x-x_*)\right\| \\
%    & \leqslant \frac{1}{2^d}\left(\frac{13}{12}L\epsilon + 4d\epsilon\right)\|x-x_*\|.
%\end{align*} If $\epsilon < \frac{1}{32(\frac{13}{12}L + 4d)}$ then we can conclude that \begin{align*}
%    \left\|\Lambda_x^\top A_{x_d}^\top A_{x_d}\Lambda_x(x-x_*) - \frac{1}{2^d}(x-x_*)\right\| \leqslant \frac{1}{2^d} \frac{1}{32}\|x-x_*\|
%\end{align*} i.e., $T_1$ satisfies \begin{align}
%    T_1 = \frac{1}{2^d}(x-x_*) + O_1\left(\frac{1}{32}\right)\frac{1}{2^d}\|x-x_*\|. \label{T1_bound}
%\end{align}

\paragraph{Controlling $T_2$:} We will first show that for sufficiently small $\epsilon$,  $$\|(A_{x_d} - A_{x_{*,d}})\Lambda_{x_*}x_*\|^2 \leqslant \frac{1.44(4L + \frac{48d}{\pi})\sqrt{\epsilon}}{2^d}\|x - x_*\|^2.$$ Letting $\{a_i\}_{i=1}^m$ denote the rows of $A$, observe that we can write \begin{align*}
    \left\|(A_{x_d} - A_{x_{*,d}})\Lambda_{x_*}x_*\right\|^2 & = \left\|(A_{x_d} - A_{x_{*,d}})x_{*,d}\right\|^2 \\
    & = \sum_{i=1}^m\left(\sgn(\langle a_i, x_d\rangle) - \sgn(\langle a_i , x_{*,d}\rangle )\right)^2\langle a_i, x_{*,d}\rangle^2 \\
    & \leqslant \sum_{i=1}^m\left(\sgn(\langle a_i, x_d\rangle) - \sgn(\langle a_i , x_{*,d}\rangle)\right)^2\langle a_i, (x_d -x_{*,d})\rangle^2 \\
    & = \sum_{i=1}^m\left(\one(\langle a_i, x_d\rangle \neq 0) + \one(\langle a_i , x_{*,d}\rangle \neq 0) - 2\sgn(\langle a_i, x_d\rangle\langle a_i , x_{*,d}\rangle)\right)\langle a_i, (x_d -x_{*,d})\rangle^2 \\
    & = \|A_{x_d}(x_d - x_{*,d})\|^2 + \|A_{x_{*,d}}(x_d - x_{*,d})\|^2 - 2\langle x_d - x_{*,d},A_{x_d}^\top A_{x_{*,d}}(x_d - x_{*,d})\rangle. 
\end{align*} We first establish concentration of $A_{x_d}(x_d - x_{*,d})$. Since $A$ satisfies the RRCP with respect to $\G$, we have that $$|\langle (A_{x_d}^\top A_{x_d} - I_{n_d})(x_d - x_{*,d}), x_d - x_{*,d}\rangle| \leqslant L\epsilon \|x_d - x_{*,d}\|^2$$ which ultimately gives \begin{align}
    \|A_{x_d}(x_d - x_{*,d})\|^2 \leqslant (1 + L\epsilon)\|x_d - x_{*,d}\|^2. \label{A_xd_bound_w_xdx0d}
\end{align} Likewise the same upper bound holds for $A_{x_{*,d}}(x_d - x_{*,d})$: \begin{align}
    \|A_{x_{*,d}}(x_d - x_{*,d})\|^2 \leqslant (1 + L\epsilon)\|x_d - x_{*,d}\|^2. \label{A_xdstar_bound_w_xdx0d}
\end{align} We now aim to upper bound the inner product $-2\langle x_d - x_{*,d},A_{x_d}^\top A_{x_{*,d}}(x_d - x_{*,d})\rangle$. We first note that since $A$ satisfies the RRCP, we have \begin{align*}
     |\langle x_d - x_{*,d},(A_{x_d}^\top A_{x_{*,d}} - \Phi_{x_d,x_{*,d}})(x_d - x_{*,d})\rangle| \leqslant L\epsilon \|x_d - x_{*,d}\|^2. 
\end{align*} Hence we have that \begin{align}
   \langle x_d - x_{*,d},A_{x_d}^\top A_{x_{*,d}}(x_d - x_{*,d})\rangle = \langle x_d - x_{*,d},\Phi_{x_d,x_{*,d}}(x_d - x_{*,d})\rangle + O_1(L\epsilon) \|x_d - x_{*,d}\|^2. \label{A_xd_innerprod_w_phi}
\end{align} But recall that $x \in \mathcal{B}(x_*,d\sqrt{\epsilon}\|x_*\|)$ which implies $ |\overline{\theta}_{0,x}| \leqslant 2d\sqrt{\epsilon}$. Since  $|\overline{\theta}_{d,x}| \leqslant |\overline{\theta}_{0,x}|$ we have $|\overline{\theta}_{d,x}| \leqslant 2d\sqrt{\epsilon}$. Also equation \eqref{conc_angle} gives \begin{align*}
    |\theta_{d,x} - \overline{\theta}_{d,x}| \leqslant 4d\sqrt{\epsilon}.
\end{align*} Hence we have that $|\theta_{d,x}| \leqslant 6d\sqrt{\epsilon}.$ Thus $\Phi_{x_d,x_{*,d}}$ is approximately an isometry since \begin{align*}
    \left\|\Phi_{x_d,x_{*,d}} - I\right\| & \leqslant \frac{2|\theta_{d,x}|}{\pi}\|I\| + \frac{2|\sin \theta_{d,x}|}{\pi}\|M_{\hat{x}_d \leftrightarrow \hat{x}_{*,d}}\| \leqslant \frac{24d\sqrt{\epsilon}}{\pi}.
\end{align*} This implies that \begin{align}
    \langle x_d - x_{*,d},\Phi_{x_d,x_{*,d}}(x_d - x_{*,d})\rangle & = \|x_d - x_{*,d}\|^2 + O_1\left(\frac{24d\sqrt{\epsilon}}{\pi}\right)\|x_d - x_{*,d}\|^2. \label{Phi_inner_prod_bound}
\end{align} Combining \eqref{A_xd_innerprod_w_phi} and \eqref{Phi_inner_prod_bound} we attain \begin{align*}
    \langle x_d - x_{*,d},A_{x_d}^\top A_{x_{*,d}}(x_d - x_{*,d})\rangle & = \|x_d - x_{*,d}\|^2 + O_1\left(\frac{24d\sqrt{\epsilon}}{\pi}+ L\epsilon\right)\|x_d - x_{*,d}\|^2.
\end{align*} Note that this implies that \begin{align}
    -2\langle x_d - x_{*,d},A_{x_d}^\top A_{x_{*,d}}(x_d - x_{*,d})\rangle \leqslant \left(-2 +\frac{48d\sqrt{\epsilon}}{\pi}+ 2L\epsilon\right)\|x_d - x_{*,d}\|^2. \label{A_inner_prod_neg_lowerbound}
\end{align} Returning to establishing concentration of $(A_{x_d} - A_{x_{*,d}})\Lambda_{x_*}x_*$, we can use \eqref{A_xd_bound_w_xdx0d}, \eqref{A_xdstar_bound_w_xdx0d} and \eqref{A_inner_prod_neg_lowerbound} to obtain \begin{align*}
    \left\|(A_{x_d} - A_{x_{*,d}})\Lambda_{x_*}x_*\right\|^2 & \leqslant \|A_{x_d}(x_d - x_{*,d})\|^2 + \|A_{x_{*,d}}(x_d - x_{*,d})\|^2 - 2\langle x_d - x_{*,d},A_{x_d}^\top A_{x_{*,d}}(x_d - x_{*,d})\rangle \\
    & \leqslant \left(2 + 2L\epsilon -2 +\frac{48d\sqrt{\epsilon}}{\pi}+ 2L\epsilon\right)\|x_d - x_{*,d}\|^2 \\
    & = \left(4L\epsilon + \frac{48d\sqrt{\epsilon}}{\pi}\right)\|x_d - x_{*,d}\|^2.
\end{align*} Using this inequality, equation \eqref{G_lipschitz_d}, and the fact that $\epsilon < \sqrt{\epsilon}$, we attain \begin{align}
    \left\|(A_{x_d} - A_{x_{*,d}})\Lambda_{x_*}x_*\right\|^2 & \leqslant \left(4L\epsilon + \frac{48d\sqrt{\epsilon}}{\pi}\right)\|x_d - x_{*,d}\|^2 \leqslant \frac{1.44(4L + \frac{48d}{\pi})\sqrt{\epsilon}}{2^d}\|x - x_*\|^2. \label{Axd_minus_Axdstar_bound}
\end{align} Then by equations \eqref{bound_on_lambda_energy} and \eqref{bound_on_AxdLambdax} from Lemma \ref{upper_bounds_on_lambda_Az}, we have that \begin{align}
    \|A_{x_d}\Lambda_x\| \leqslant \sqrt{1 + L\epsilon}\|\Lambda_x\| \leqslant \sqrt{\frac{13}{12}(1+ L\epsilon)}\frac{1}{2^{d/2}}. \label{bound_on_AxdLambdax2}
\end{align} Combining \eqref{bound_on_AxdLambdax2} and \eqref{Axd_minus_Axdstar_bound} and choosing $\epsilon$ so that $\sqrt{\frac{13}{12}(1+ L\epsilon)} \leqslant 2$, we attain \begin{align*}
    \|A_{x_d}\Lambda_x\|\left\|(A_{x_d} - A_{x_{*,d}})\Lambda_{x_*}x_*\right\| \leqslant 2 \sqrt{1.44\left(4L + \frac{48d}{\pi}\right)\sqrt{\epsilon}}\frac{1}{2^d}\|x-x_*\|.
\end{align*} Thus if \begin{align*}
    \epsilon^{1/4} < \frac{1}{64\sqrt{1.44\left(4L + \frac{48d}{\pi}\right)}}
\end{align*} we attain \begin{align*}
    \|A_{x_d}\Lambda_x\|\left\|(A_{x_d} - A_{x_{*,d}})\Lambda_{x_*}x_*\right\| \leqslant \frac{1}{32}\frac{1}{2^d}\|x-x_*\| \end{align*} i.e., $T_2$ satisfies
    \begin{align}
     T_2 = O_1\left(\frac{1}{32}\right)\frac{1}{2^d}\|x-x_*\|. \label{T2_bound}
\end{align} 

Combining our results for $T_1$ and $T_2$ in equations \eqref{T1_bound} and \eqref{T2_bound} we ultimately get \begin{align}
     \left\|\overline{v}_x - \Lambda_x^\top (\Lambda_xx - \Lambda_xx_*)\right\| & \leqslant \frac{1}{16}\frac{1}{2^d}\|x-x_*\|. \label{vbar_conc_lambda_x}
\end{align} To finish establishing concentration of $\overline{v}_x$, we appeal to Lemma A.9 of \cite{Huangetal2018} which showed that if $\epsilon < 1/(200^4d^6)$ and $x \in \mathcal{B}(x_*, d\sqrt{\epsilon}\|x_*\|)$ then \begin{align}
    \left\|\Lambda_x^\top (\Lambda_x x - \Lambda_{x_*} x_*) - \frac{1}{2^d}(x-x_*) \right\| \leqslant \frac{1}{16}\frac{1}{2^d}\|x-x_*\|. \label{wen_bound_on_lambda_x}
\end{align} Thus by combining equations  \eqref{vbar_conc_lambda_x} and \eqref{wen_bound_on_lambda_x}, we finally attain \begin{align*}
    \left\|\overline{v}_x - \frac{1}{2^d}(x-x_*)\right\| & \leqslant \left\|\overline{v}_x - \Lambda_x^\top (\Lambda_xx - \Lambda_xx_*)\right\| + \left\|\Lambda_x^\top (\Lambda_x x - \Lambda_{x_*} x_*) - \frac{1}{2^d}(x-x_*) \right\|  \leqslant \frac{1}{8}\frac{1}{2^d}\|x-x_*\|.
\end{align*} as desired. Including the bound on $\|q_x\|$ from \eqref{bound_on_q_norm}, we achieve the final desired result: \begin{align*}
    \left\|v_x - \frac{1}{2^d}(x-x_*)\right\| \leqslant \left\|\overline{v}_x - \frac{1}{2^d}(x-x_*)\right\| + \|q_x\| \leqslant \frac{1}{8}\frac{1}{2^d}\|x-x_*\| + \frac{2}{2^{d/2}}\|\eta\|.
\end{align*}

Finally, for non-differentiable $x \neq 0$ and $v \in \partial f(x)$, by \eqref{subdifferential_definition} we have that there exists $c_{\ell} \geqslant 0$ for $\ell \in [s]$ such that $\sum_{\ell=1}^s c_{\ell} = 1$ and $v = \sum_{\ell=1}^s c_{\ell}v_{\ell}$. Hence \begin{align*}
    \left\|v - \frac{1}{2^d}(x-x_*)\right\| \leqslant \sum_{\ell=1}^s c_{\ell}\left\|v_{\ell} - \frac{1}{2^d}(x-x_*)\right\| \leqslant \frac{1}{8}\frac{1}{2^d}\|x-x_*\| + \frac{2}{2^{d/2}}\|\eta\|.
\end{align*}
\end{proof}

\section{Gaussian Matrices Satisfy the RRCP} \label{RRCP_section}

We set out to prove that Gaussian $A$ satisfies the RRCP with respect to $\G$ with high probability. The particular result is stated as follows: \begin{prop}[Range Restricted Concentration Property (RRCP)] \label{realRRCP}
Fix $0 < \epsilon < 1$. Let $W_{i} \in \R^{n_i \times n_{i-1}}$ have i.i.d. $\mathcal{N}(0,1/n_i)$ entries for $i = 1,\dots,d$. Let $A \in \R^{m \times n_d}$ have i.i.d. $\mathcal{N}(0,1/m)$ entries independent from $\{W_i\}$. Then if $m > \tilde{C}_{\epsilon}dk \log (n_1n_2\dots n_d)$, then with probability at least $1 - \tilde{\gamma} m^{4k} \exp\left(- \frac{\tilde{c}_{\epsilon}}{2} m\right)$, we have that for all $x,y,x_1,x_2,x_3,x_4 \in \R^k$, \begin{align}
    |\langle (A_{\G(x)}^\top A_{\G(y)} & - \Phi_{\G(x),\G(y)})(\G(x_1) - \G(x_2)), \G(x_3)-\G(x_4)\rangle | \nonumber \\
    & \leqslant L\epsilon\|\G(x_1) - \G(x_2)\|\|\G(x_3)-\G(x_4)\| \nonumber
\end{align} Here $\tilde{\gamma}$ and $L$ are positive universal constants, $\tilde{c}_{\epsilon}$ depends polynomially on $\epsilon$, and $\tilde{C}_{\epsilon}$ depends polynomially on $\epsilon^{-1}$.
\end{prop}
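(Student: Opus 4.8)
The plan is to prove Proposition~\ref{realRRCP} by a discretization argument that exploits two structural features of the generative model: the map $x\mapsto\G(x)$ is piecewise linear with at most polynomially many pieces, and for $z$ in any one piece the sign pattern $\sgn(Az)$ takes at most $m^{O(k)}$ values. Together these reduce the supremum over all $x,y,x_1,\dots,x_4$ to a union over a finite family of events, each a concentration statement for a fixed sub-exponential sum; the residual continuous parameters --- the angle $\theta_{\G(x),\G(y)}$ entering $\Phi$, and the directions of the secants $\G(x_1)-\G(x_2)$ and $\G(x_3)-\G(x_4)$ --- are absorbed by nets together with Lipschitz estimates. Throughout I will use the WDC (Lemma~\ref{WDC_lemma}) and the bound $\|\PiWdix\|\le(1/2+\epsilon)^{d/2}$ coming from it, which in particular give $\G(x)\ne 0$ and $\|\G(x)\|\gtrsim 2^{-d/2}\|x\|$.

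\textbf{Step 1 (pointwise concentration).} First I would fix deterministic nonzero $z,w\in\R^{n_d}$ and deterministic $u,v\in\R^{n_d}$ and show that, over the randomness of $A$,
\begin{align*}
\E\,\langle A_z^\top A_w u,v\rangle=\langle\Phi_{z,w}u,v\rangle\qquad\text{and}\qquad
\Pro\!\left(\bigl|\langle(A_z^\top A_w-\Phi_{z,w})u,v\rangle\bigr|>t\,\|u\|\,\|v\|\right)\le 2\exp\!\bigl(-c\,m\min(t,t^2)\bigr).
\end{align*}
Writing $A_z^\top A_w=\sum_{i=1}^m\sgn(\langle a_i,z\rangle)\sgn(\langle a_i,w\rangle)\,a_ia_i^\top$ with $a_i\sim\mathcal N(0,I_{n_d}/m)$, the expectation identity is the planar Gaussian integral $\E[\sgn(\langle g,z\rangle)\sgn(\langle g,w\rangle)\,gg^\top]=\frac{\pi-2\theta_{z,w}}{\pi}I+\frac{2\sin\theta_{z,w}}{\pi}M_{\hat z\leftrightarrow\hat w}$ (restrict to the span of $z,w$ and integrate in polar coordinates), and each summand $\sgn(\cdot)\sgn(\cdot)\langle a_i,u\rangle\langle a_i,v\rangle$ is sub-exponential with $\psi_1$-norm $\lesssim\frac1m\|u\|\|v\|$, so Bernstein's inequality yields the tail. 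I would record the variant in which $u,v$ range over the unit spheres of two fixed subspaces $V_1,V_2\subseteq\R^{n_d}$ with $\dim V_j\le 2k$: a $\tfrac12$-net of those spheres costs a factor $e^{Ck}$ and, together with the crude bound $\|A_z^\top A_w-\Phi_{z,w}\|\le\|A\|^2+2$, gives failure probability $2\exp(-c_\epsilon m+Ck)$ once $t$ is set to a small multiple of $\epsilon$.

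\textbf{Step 2 (enumeration and union bound).} Next I would partition $\R^k$ according to the activation pattern of $\G$; the number of resulting regions is at most $\prod_{i=1}^{d}\binom{n_i}{\le k}\le e^{Ck\log(n_1n_2\cdots n_d)}$, and on each region $\G(x)=\Lambda_c x$ with $\mathrm{rank}\,\Lambda_c\le k$. For $z$ ranging over one region, $\sgn(Az)$ is constant on the cells of an arrangement of $m$ hyperplanes in $\le k$ dimensions, so it takes at most $(Cm/k)^{k}$ values; hence the pairs (network region of $z$, pattern $\sgn(Az)$) number at most $e^{Ck\log(n_1\cdots n_d)}(Cm/k)^{k}$, and likewise for $w$. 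Since $\G(x_1)-\G(x_2)$ lies in $V_{c_1,c_2}:=\mathrm{span}\bigl(\mathrm{col}\,\Lambda_{c_1}\cup\mathrm{col}\,\Lambda_{c_2}\bigr)$, of dimension $\le 2k$ and depending only on the network regions of $x_1,x_2$, it suffices to invoke Step~1 for every fixed pair of sign patterns $(s_z,s_w)$ --- for which $A_z^\top A_w=A^\top\diag(s_z\odot s_w)A$ is a \emph{fixed} matrix --- every fixed choice of the four network regions (so $V_1=V_{c_1,c_2}$, $V_2=V_{c_3,c_4}$), and every point of an $\eta_0$-net of the latent sphere feeding the factor $\Phi_{\G(x'),\G(y')}$. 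Collecting the cardinalities and taking logarithms, the exponent in the union bound is at most $c_\epsilon m/2$ provided $m\ge C_\epsilon dk\log(n_1n_2\cdots n_d)$; here the $d$ enters through the net resolution $\eta_0\lesssim\epsilon\,2^{-d/2}/\mathrm{poly}(d)$ (see Step~3), which makes the latent net of size $e^{Cdk}$, and the residual probability is the stated $\tilde\gamma\,m^{4k}\exp(-\tfrac{\tilde c_\epsilon}{2}m)$, the $m^{4k}$ coming from the two $(Cm/k)^{k}$ sign-pattern counts and the polynomial net factors.

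\textbf{Step 3 (from the net to all points; the main obstacle).} The delicate point --- and the part I expect to be hardest --- is that $z\mapsto A_z$ and $x\mapsto\Lambda_x$ are discontinuous, so a query point $(x,y,x_1,\dots,x_4)$ need not share a sign pattern or a network region with the nearest net point. Given such a query with actual sign patterns $(s_z,s_w)$ and a nearby net point with patterns $(s'_z,s'_w)$, I would split the error into three pieces: the change in $\Phi$, bounded by $\|\Phi_{\G(x),\G(y)}-\Phi_{\G(x'),\G(y')}\|\,\|u\|\|v\|\lesssim 2^{d/2}\eta_0\,\|u\|\|v\|$ because $\theta_{\G(x),\G(y)}$ is only $O(2^{d/2})$-Lipschitz in $(x,y)$ (this is exactly where $\eta_0\sim\epsilon\,2^{-d/2}/\mathrm{poly}(d)$, hence the $d$ in the sample complexity, originates); the change in the secant directions, which is $\lesssim\eta_0\,\|u\|\|v\|$ since $\G$ is $1$-Lipschitz on each region and hence globally, multiplied by $\|A_z^\top A_w-\Phi_{z,w}\|\lesssim 1$; and the change in the diagonal $\diag(s_z\odot s_w)-\diag(s'_z\odot s'_w)$, supported on the coordinates $i$ with $|\langle a_i,\hat z\rangle|\le\|\hat z-\hat z'\|$ or $|\langle a_i,\hat w\rangle|\le\|\hat w-\hat w'\|$. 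For the last piece I would prove an auxiliary slab-counting lemma --- uniformly over $\hat z\in\G(\R^k)\cap\mathcal S^{n_d-1}$ and over $\delta\le\delta_0(\epsilon)$, the slab $\{i:|\langle a_i,\hat z\rangle|\le\delta\}$ contains at most $Cm\delta$ indices, shown by covering $\G(\R^k)$ via the same cell decomposition and a Chernoff bound --- and then control $\langle A^\top\diag(\one_S)A\,u,v\rangle\le\bigl\|\textstyle\sum_{i\in S}P_{V_1}a_ia_i^\top P_{V_1}\bigr\|^{1/2}\bigl\|\textstyle\sum_{i\in S}P_{V_2}a_ia_i^\top P_{V_2}\bigr\|^{1/2}\|u\|\|v\|\lesssim(\delta+k/m)\|u\|\|v\|$, uniformly over $|S|\le Cm\delta$ and over the $\le e^{Ck\log(n_1\cdots n_d)}$ admissible subspaces (an extra, easily-absorbed term in the union bound). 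Choosing $\eta_0$ and $\delta_0$ as small multiples of $\epsilon\,2^{-d/2}/\mathrm{poly}(d)$ makes all three pieces at most $L\epsilon/4$ each, and combining with Step~1 applied at the net point (with $t$ a small multiple of $\epsilon$) completes the proof with $L=33$.
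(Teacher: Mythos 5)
Your overall architecture (Bernstein-type concentration plus nets over fixed $\le 2k$-dimensional subspaces, then exploiting the piecewise linearity of $\G$ and the finiteness of sign patterns, then a union bound) matches the paper's in outline, but the mechanism you propose for the hardest step --- uniformity over $z=\G(x)$, $w=\G(y)$ despite the discontinuity of $z\mapsto\sgn(Az)$ --- is different and, as written, has genuine gaps. First, in Step~3 the flip set is not contained in $\{i:|\langle a_i,\hat z\rangle|\le\|\hat z-\hat z'\|\}$: a sign flip requires $|\langle a_i,\hat z\rangle|\le|\langle a_i,\hat z-\hat z'\rangle|$, and since $\|a_i\|\approx\sqrt{n_d/m}$ the rows are far from unit vectors, so controlling this needs a uniform bound on $\|P_V a_i\|$ over the admissible $2k$-dimensional subspaces, which changes both the slab width and the admissible $\eta_0$. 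Relatedly, your slab-counting lemma is mis-normalized: $\langle a_i,\hat z\rangle\sim\mathcal{N}(0,1/m)$, so a slab of half-width $\delta$ captures on the order of $\delta m^{3/2}$ rows in expectation, not $Cm\delta$, and making the count uniform over the continuum $\G(\R^k)\cap\mathcal{S}^{n_d-1}$ (a discontinuous counting function) requires its own enlarged-slab/net argument that you only gesture at. None of this is unfixable, but the stated choices of $\eta_0,\delta_0$ and the attribution of the factor $d$ to a $2^{d/2}$ angle-Lipschitz constant do not survive the corrected bookkeeping (under the WDC the angle of $\G$ on the latent sphere is Lipschitz with constant $O(1)$ up to WDC factors, since $\|\G(x)-\G(y)\|$ and $\|\G(x)\|$ both scale like $2^{-d/2}$; the $d$ in the sample complexity comes from counting activation regions).

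Second, Step~2's union bound is not well-founded as written: for an arbitrary fixed pattern pair $(s_z,s_w)$ the matrix $A^\top\diag(s_z\odot s_w)A$ does not concentrate around $\Phi_{\G(x'),\G(y')}$ --- the whole content of Step~1 is the correlation between $\sgn(Az)$ and $A$ for a deterministic $z$ --- so the events indexed by sign patterns are either false or must be restricted to the pattern generated by the net point, in which case the enumeration (and hence your claimed source of the $m^{4k}$ factor) is vacuous. The paper sidesteps both issues with a different device: by the Plan--Vershynin hyperplane-tessellation theorem, once $m\gtrsim_\epsilon k$ every cell of the arrangement $\{\langle a_\ell,z\rangle=0\}$ intersected with a $k$-dimensional subspace has diameter at most $\epsilon$, so one may choose one representative per cell (at most $10m^{2k}$ of them, whence the $m^{4k}$), and a query $z$ shares its sign vector with its in-cell representative except on the at most $k$ coordinates where $\langle a_\ell,z\rangle=0$ exactly; the resulting correction is a submatrix with at most $2k$ rows, killed by a simple operator-norm bound (Lemma \ref{Atilde_small_lemma}), with continuity of $\Phi$ (Lemma \ref{continuityphi}) closing the argument. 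If you wish to keep your route, you must repair the flip-set containment, redo the slab normalization and its uniformity over the range of $\G$, and re-index the union bound by deterministic net points rather than by sign patterns.
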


We will prove Proposition \ref{realRRCP} via the following steps: \begin{enumerate}
    \item We first establish that for any \textit{fixed} non-zero $z,w \in \R^n$, the inner product $\langle A_z^\top A_wx,y\rangle$ concentrates around its expectation $\langle \Phi_{z,w}x,y\rangle$ for all $x$ and $y$ in a fixed $k$-dimensional subspace of $\R^n$.
    \item Then we show that this concentration holds uniformly for all $z,w,x,y$ that live in the union of a finite number of $k$-dimensional subspaces of $\R^n$.
    \item To complete the proof, we apply the result from Step 2 for all $z,w,x,y$ in the range of the generative model which precisely lives in the union of $k$-dimensional subspaces.
\end{enumerate}

\subsection{Concentration Over a Fixed Subspace}

We first show that the matrix $A_z^\top A_w$ concentrates around $\Phi_{z,w}$ for any fixed $z \neq w$ while acting on a fixed $k$-dimensional subspace $T$ of $\R^n$. We will refer to this result as the \textit{Restricted Concentration Property} (RCP).

\begin{prop}[Variant of Lemma 5.1 in \cite{Baraniuk_lemma}; RCP] \label{RCPproposition} Fix $0 < \epsilon < 1$ and $k < m$. Let $A \in \R^{m \times n}$ have i.i.d. $\mathcal{N}(0,1/m)$ entries and fix $z,w \in \R^n \setminus \{0\}$. Let $T \subset \R^n$ be a $k$-dimensional subspace. Then if $m \geqslant C k$, we have that with probability exceeding $1 - 2\exp(-c_1 m)$, \begin{align}
|\langle A_z^\top A_w x, x \rangle - \langle \Phi_{z,w} x, x \rangle| \leqslant \epsilon \|x\|^2\ \forall\ x \in T 
\end{align} and \begin{align}
|\langle A_z^\top A_w x, y \rangle - \langle \Phi_{z,w} x, y \rangle| \leqslant 3\epsilon \|x\|\|y\|\ \forall\ x,y \in T. \label{RCP_3bound}
\end{align} Furthermore, let $U = \bigcup_{i=1}^M U_i$ and $V = \bigcup_{j=1}^N V_j$ where $U_i$ and $V_j$ are subspaces of $\R^n$ of dimension at most $k$ for all $i \in [M]$ and $j \in [N]$. Then if $m \geqslant 2Ck$
\begin{align}
\left|\langle  A_{z}^\top A_w u, v \rangle - \langle \Phi_{z,w} u,v\rangle\right| \leqslant 3\epsilon \|u\|\|v\|\ \forall\ u \in U,\ v \in V, 
\end{align} with probability exceeding $1 - 2MN\exp(-c_1 m)$. Here $c_1$ depends polynomially on $\epsilon$ and $C = \Omega(\epsilon^{-1}\log \epsilon^{-1})$.
\end{prop}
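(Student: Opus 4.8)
The plan is to prove the statement for a \emph{fixed} pair $z,w$ in three steps — identify the expectation, establish a pointwise sub-exponential tail, and make it uniform over $T$ by a net argument whose off-net extension is controlled by the operator norm of $A$ restricted to $T$ — and then deduce the subspace-union version by a one-line union bound. Throughout, $a_1,\dots,a_m$ denote the rows of $A$.

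\textbf{Step 1: expectation.} Here $\langle A_z^\top A_w x,y\rangle=\sum_{i=1}^m \sgn(\langle a_i,z\rangle)\,\sgn(\langle a_i,w\rangle)\,\langle a_i,x\rangle\,\langle a_i,y\rangle$ is a sum of $m$ i.i.d.\ terms. Because the two signs depend on $a_i$ only through its projection onto $\operatorname{span}(z,w)$, rotational invariance of the Gaussian splits the expectation of one term into a two-dimensional integral over $\operatorname{span}(z,w)$ together with $\E[\sgn(\langle a_i,z\rangle)\sgn(\langle a_i,w\rangle)]\,\E[a_ia_i^\top]$ on the orthogonal complement. Using $\E[\sgn(\langle a_i,z\rangle)\sgn(\langle a_i,w\rangle)]=(\pi-2\theta_{z,w})/\pi$ and evaluating the planar integral (a routine computation producing the $M_{\hat{z}\leftrightarrow\hat{w}}$ term) one gets $\E[A_z^\top A_w]=\Phi_{z,w}$, so $\langle A_z^\top A_w x,y\rangle$ has mean $\langle\Phi_{z,w}x,y\rangle$ for every $x,y$.

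\textbf{Step 2: pointwise tail and conditioning.} Fix unit vectors $x,y$. Each summand is, up to a $\pm 1$ sign, a product of two $\mathcal{N}(0,1/m)$ variables, hence sub-exponential with $\psi_1$-norm of order $1/m$; Bernstein's inequality gives $\Pro\!\left[\,\bigl|\langle A_z^\top A_w x,y\rangle-\langle\Phi_{z,w}x,y\rangle\bigr|>t\,\right]\le 2\exp\!\bigl(-c\,m\min(t^2,t)\bigr)$, i.e.\ at most $2\exp(-cm\epsilon^2)$ for $t=\epsilon<1$. To upgrade this to an estimate for all $x,y\in T$, let $P_T$ be the orthogonal projection onto $T$; then $A_z=\diag(\sgn(Az))A$ obeys $\|A_zP_T\|\le\|AP_T\|$, and since $AP_T$ is distributed like an $m\times k$ Gaussian, $\|AP_T\|\le 2$ on an event of probability at least $1-2e^{-cm}$ once $m$ exceeds a constant multiple of $k$. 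On that event $\bigl|\langle A_z^\top A_w x,y\rangle\bigr|=\bigl|\langle A_wx,A_zy\rangle\bigr|\le\|A_wP_T\|\|A_zP_T\|\le 4$ for unit $x,y\in T$, so the bilinear form — and also $(x,y)\mapsto\langle\Phi_{z,w}x,y\rangle$, using $\|\Phi_{z,w}\|<2$ — is Lipschitz on the unit sphere of $T$ with a universal constant.

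\textbf{Step 3: net argument.} Take a $\delta$-net of the unit sphere of $T$ with $\delta$ a small multiple of $\epsilon$ (for the quadratic bound), or a net of the product of two such spheres (for the bilinear bound); these have cardinality at most $(1+2/\delta)^k$ and $(1+2/\delta)^{2k}$. Union-bounding the Step~2 tail over the net and extending off the net via the Step~2 Lipschitz estimate, while choosing $m\ge Ck$ so that the net entropy $O(k\log\epsilon^{-1})$ is dominated by $cm\epsilon^2$, yields, with probability at least $1-2e^{-c_1m}$, both $|\langle A_z^\top A_w x,x\rangle-\langle\Phi_{z,w}x,x\rangle|\le\epsilon\|x\|^2$ and $|\langle A_z^\top A_w x,y\rangle-\langle\Phi_{z,w}x,y\rangle|\le 3\epsilon\|x\|\|y\|$ for all $x,y\in T$; here $c_1$ may be taken proportional to $\epsilon^2$. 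The factor $3$ in the bilinear bound is precisely the loss in the off-net extension, and working with the bilinear form directly (rather than polarizing the quadratic form) avoids any issue with $A_z^\top A_w$ being non-symmetric.

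\textbf{Step 4: union over subspaces, and the main obstacle.} For $U=\bigcup_{i=1}^M U_i$ and $V=\bigcup_{j=1}^N V_j$ with each $U_i,V_j$ of dimension at most $k$, apply the bilinear bound of Step~3 to each subspace $U_i+V_j$, of dimension at most $2k$ — this is why the hypothesis strengthens to $m\ge 2Ck$ — obtaining $|\langle A_z^\top A_w u,v\rangle-\langle\Phi_{z,w}u,v\rangle|\le 3\epsilon\|u\|\|v\|$ for all $u,v$ in that subspace, in particular for $u\in U_i$, $v\in V_j$; a union bound over the $MN$ pairs gives probability at least $1-2MNe^{-c_1m}$. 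The main obstacle is the uniformity in Steps~2--3: $A$ is a wide matrix ($n$ may greatly exceed $m$), so its full operator norm $\Theta(\sqrt{n/m})$ is useless for a net argument; the resolution is that every quantity in sight sees $A$, $A_z$, $A_w$ only through their restriction to a fixed subspace of dimension at most $2k$, where they are well conditioned as soon as $m$ exceeds a constant times $k$. One also checks the routine bookkeeping that the good event $\{\|AP_T\|\le 2\}$ and the Bernstein tail events are simultaneously controllable (immediate, since all are functions of the same rows $a_i$) and that the summands in Step~1 are genuinely independent across $i$ (each depends on $a_i$ alone).
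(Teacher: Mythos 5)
Your proposal is correct, and it matches the paper's skeleton (identify $\E[A_z^\top A_w]=\Phi_{z,w}$, Bernstein tail for a fixed point via subexponential summands, net over the sphere of $T$, then handle the union of subspaces by applying the result to $\mathrm{span}(U_i,V_j)$ of dimension at most $2k$ and union bounding over the $MN$ pairs). Where you genuinely diverge is the off-net extension. The paper never controls $\|AP_T\|$: it defines $\al^* := \inf\{\al : |\langle (A_z^\top A_w-\Phi_{z,w})x,x\rangle|\leqslant \al\|x\|^2\ \forall x\in T\}$, uses a purely deterministic polarization lemma (its Proposition on the $3\epsilon$ bilinear bound from the $\epsilon$ quadratic bound) to control the cross terms $\langle \Sigma_{z,w}(x-q),q\rangle$, and closes a self-bounding inequality $\al^*\leqslant \al^*(\tfrac{\epsilon^2}{196}+\tfrac{3\epsilon}{7})+\tfrac{3\epsilon}{8}$ to get $\al^*\leqslant\epsilon$; the factor $3$ in \eqref{RCP_3bound} then comes from that same polarization step, not from off-net slack. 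You instead add the (cheap) high-probability event $\|AP_T\|\leqslant 2$ — legitimate, since $AP_T$ is in distribution an $m\times k$ Gaussian and $\|A_zP_T\|\leqslant\|AP_T\|$ because the sign-diagonal has norm at most one — and use the resulting Lipschitz bound to net the bilinear form on pairs directly. Your route is the more standard "net plus operator-norm" argument and sidesteps polarization entirely; the paper's bootstrap buys one fewer probabilistic event and a cleaner source for the constant $3$. One small remark: your stated motivation for netting the bilinear form directly ("avoids any issue with $A_z^\top A_w$ being non-symmetric") is moot, since $A_z^\top A_w=A^\top\diag(\sgn(Az)\odot\sgn(Aw))A$ is symmetric — indeed the paper's polarization argument relies on exactly this symmetry. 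Your constants ($c_1\propto\epsilon^2$, $C$ polynomial in $\epsilon^{-1}$ up to a $\log\epsilon^{-1}$) and the $m\geqslant 2Ck$ requirement for the $2k$-dimensional spans are consistent with the statement.
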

 For the proof, we require the following large deviation inequality for subexponential random variables: \begin{lem}[Corollary 5.17 in \cite{Vershynin_notes}] \label{subexpbound}
Let $Y_1,\dots,Y_m$ be independent, centered, subexponential random variables. Let $K = \max_{i \in[m]} \|Y_i\|_{\psi_1}$. Then for all $\epsilon > 0$, \begin{align*}
\Pro\left(\frac{1}{m} \left|\sum_{i=1}^m Y_i \right| \geqslant \epsilon \right) \leqslant 2 \exp\left[-c \min\left(\frac{\epsilon^2}{K^2}, \frac{\epsilon}{K}\right)m\right]
\end{align*} where $c > 0$ is an absolute constant. Here $\|\cdot\|_{\psi_1}$ is the subexponential norm: $\|X\|_{\psi_1} := \sup_{p \geqslant 1} p^{-1}\left(\E|X|^p\right)^{1/p}$.
\end{lem}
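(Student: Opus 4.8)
The plan is to prove this Bernstein-type inequality by the standard exponential-moment (Chernoff) argument: bound the moment generating function of each $Y_i$ on a fixed neighborhood of the origin, multiply, apply the exponential Markov inequality, and then optimize the free parameter over that bounded neighborhood — the two-regime minimum $\min(\epsilon^2/K^2,\epsilon/K)$ arises precisely because the optimal parameter must sometimes be clipped to the edge of the admissible interval. First I would normalize: replacing each $Y_i$ by $Y_i/K$, we may assume $K=\max_i\|Y_i\|_{\psi_1}\le 1$, and the claim becomes $\Pro\bigl(\tfrac1m|\sum_i Y_i|\ge\epsilon\bigr)\le 2\exp(-c\min(\epsilon^2,\epsilon)m)$; undoing the scaling at the end sends $\epsilon\mapsto\epsilon/K$ and restores the stated form.

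The heart of the argument is the following single-variable estimate: there are absolute constants $c_0>0$, $C_0$ such that every centered $X$ with $\|X\|_{\psi_1}\le 1$ satisfies $\E[e^{\lambda X}]\le e^{C_0\lambda^2}$ for all $|\lambda|\le c_0$. I would derive this from the moment characterization built into the definition of $\|\cdot\|_{\psi_1}$: $\|X\|_{\psi_1}\le 1$ gives $(\E|X|^p)^{1/p}\le p$, i.e.\ $\E|X|^p\le p^p$, for every $p\ge 1$. Expanding the exponential, discarding the first-order term by $\E X=0$, and using $p^p\le e^p p!$ (Stirling), one gets $\E[e^{\lambda X}]\le 1+\sum_{p\ge 2}\frac{|\lambda|^p p^p}{p!}\le 1+\sum_{p\ge 2}(e|\lambda|)^p=1+\frac{(e\lambda)^2}{1-e|\lambda|}$ whenever $e|\lambda|<1$; restricting to $|\lambda|\le\frac{1}{2e}$ bounds this by $1+2e^2\lambda^2\le e^{2e^2\lambda^2}$, so $c_0=\frac{1}{2e}$, $C_0=2e^2$ work.

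With this in hand, for $\lambda\in(0,c_0]$ independence gives $\Pro(\sum_i Y_i\ge m\epsilon)\le e^{-\lambda m\epsilon}\prod_i\E[e^{\lambda Y_i}]\le\exp(-\lambda m\epsilon+C_0\lambda^2 m)$. The unconstrained minimizer of $-\lambda\epsilon+C_0\lambda^2$ is $\lambda^\*=\epsilon/(2C_0)$: if $\epsilon\le 2C_0c_0$ this is admissible and yields exponent $-\epsilon^2 m/(4C_0)$, while if $\epsilon>2C_0c_0$ we take $\lambda=c_0$ and, bounding $C_0c_0^2\le c_0\epsilon/2$, get exponent $\le -c_0 m\epsilon/2$; in both cases the tail is at most $\exp(-c_1\min(\epsilon^2,\epsilon)m)$ with $c_1=\min(1/(4C_0),c_0/2)$. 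Applying the same bound to $-Y_i$ (which satisfy the same hypotheses) and taking a union bound produces the two-sided inequality with the factor $2$, and reinstating the normalization of Step~1 gives the stated result with $c=c_1$.

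The main obstacle is the single-variable MGF lemma of the second paragraph — translating the $\psi_1$ moment growth into a quadratic-in-$\lambda$ exponential bound near zero; the rest is routine Chernoff bookkeeping. The only points requiring care are that the moment bound $\E|X|^p\le p^p$ is invoked only for $p\ge 2$ (the $p=0,1$ terms being handled by normalization and by $\E X=0$), and that the geometric series is summed strictly inside the radius $|\lambda|<1/e$, so that outside the regime $\epsilon\le 2C_0c_0$ the Chernoff parameter is simply saturated at $c_0$ — which is exactly the mechanism generating the linear-in-$\epsilon$ branch of the rate.
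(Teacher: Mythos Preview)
The paper does not give its own proof of this lemma: it is simply quoted as Corollary~5.17 from Vershynin's notes and used as a black box in the proof of Proposition~\ref{RCPproposition}. Your Chernoff argument is correct and is in fact the standard route (and the one Vershynin takes): bound the MGF of a single centered $\psi_1$ variable by $e^{C_0\lambda^2}$ on a bounded interval $|\lambda|\le c_0$ via the moment growth $\E|X|^p\le p^p$ and Stirling, multiply by independence, apply exponential Markov, and optimize $\lambda$ with saturation at $c_0$ producing the two-regime rate. So there is nothing to compare against in the paper itself, and your proof stands on its own.
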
 \noindent We also require the following simple technical result.

\begin{prop}\label{tech_lemma} Fix $z,w \in \R^n \setminus \{0\}$ and $0 < \epsilon < 1$. Let $T$ be a subspace of $\R^n$. If 
\begin{align}\label{tech_hypothesis}
\left|\langle  A_z^\top A_w x,  x\rangle - \langle \Phi_{z,w} x, x\rangle\right| \leqslant \epsilon \|x\|^2\ \forall\ x \in T 
\end{align} then 
\begin{align*}
\left|\langle A_z^\top A_w x, y\rangle - \langle \Phi_{z,w} x, y\rangle\right| \leqslant 3\epsilon\|x\|\|y\|\ \forall\ x,y \in T.
\end{align*}
\end{prop}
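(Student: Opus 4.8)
The plan is a routine polarization argument; essentially all of the content is a single structural observation, after which the rest is bilinear bookkeeping. Write $B := A_z^\top A_w - \Phi_{z,w}$, so that the hypothesis \eqref{tech_hypothesis} reads $|\langle Bx, x\rangle| \leqslant \epsilon\|x\|^2$ for all $x \in T$ and the conclusion reads $|\langle Bx, y\rangle| \leqslant 3\epsilon\|x\|\|y\|$ for all $x, y \in T$. The first — and only — thing I would verify is that $B$ is \textbf{symmetric}. This is the crucial point: the hypothesis constrains only the quadratic form $x \mapsto \langle Bx, x\rangle$, which sees only the symmetric part of $B$, so without symmetry the antisymmetric part would be uncontrolled and the statement would be false as written. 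Symmetry holds because, writing $A_z = \diag(\sgn(Az))A$ and $A_w = \diag(\sgn(Aw))A$, we have $A_z^\top A_w = A^\top \diag(\sgn(Az))\diag(\sgn(Aw)) A = A^\top D A$ with $D$ a diagonal matrix, hence symmetric; and $\Phi_{z,w}$ is symmetric since $I_n$ is and $M_{\hat z \leftrightarrow \hat w}$ is of the form $R^\top S R$ with $R$ orthogonal and $S$ a fixed symmetric matrix (see the formula in the definition of the WDC).

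With $B = B^\top$ in hand, I would polarize. Since $T$ is a subspace, $x + y \in T$ for all $x, y \in T$, and expanding $\langle B(x+y), x+y\rangle = \langle Bx, x\rangle + 2\langle Bx, y\rangle + \langle By, y\rangle$ (using symmetry) gives
\[
2\,|\langle Bx, y\rangle| \;=\; \bigl|\langle B(x+y), x+y\rangle - \langle Bx, x\rangle - \langle By, y\rangle\bigr| \;\leqslant\; \epsilon\bigl(\|x+y\|^2 + \|x\|^2 + \|y\|^2\bigr)
\]
by three applications of \eqref{tech_hypothesis}. Bounding $\|x+y\|^2 \leqslant (\|x\| + \|y\|)^2 = \|x\|^2 + 2\|x\|\|y\| + \|y\|^2$ then yields $|\langle Bx, y\rangle| \leqslant \epsilon(\|x\|^2 + \|y\|^2 + \|x\|\|y\|)$. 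Finally I would homogenize: the claim is trivial when $x = 0$ or $y = 0$, and otherwise applying the last inequality to the unit vectors $x/\|x\|, y/\|y\| \in T$ gives $|\langle B(x/\|x\|), y/\|y\|\rangle| \leqslant 3\epsilon$, and multiplying through by $\|x\|\|y\|$ completes the proof, since $\langle Bx, y\rangle = \langle A_z^\top A_w x, y\rangle - \langle \Phi_{z,w} x, y\rangle$.

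There is no genuine obstacle here. The one point that must not be glossed over is the symmetry of $A_z^\top A_w$ and of $\Phi_{z,w}$, for the reason noted above. I would also remark that the constant $3$ is not sharp — polarizing through $x \pm y$ would give $|\langle Bx, y\rangle| \leqslant \epsilon\|x\|\|y\|$ — but $3\epsilon$ is all that is invoked in Proposition \ref{RCPproposition}, so retaining the weaker constant costs nothing.
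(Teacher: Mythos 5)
Your proof is correct and is essentially the paper's own argument: both polarize the quadratic form over the subspace (the paper expands $\langle B(x-y),x-y\rangle$ after normalizing $x,y$ at the outset, you expand $\langle B(x+y),x+y\rangle$ and homogenize at the end), relying on the symmetry of $A_z^\top A_w$ and $\Phi_{z,w}$ to reach the same constant $3\epsilon$. Your explicit justification of that symmetry, which the paper only asserts in passing, is a welcome but inessential addition.
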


\noindent With these two results, we are now equipped to prove Proposition \ref{RCPproposition}.

\begin{proof}[Proof of Proposition \ref{RCPproposition}]

 Without loss of generality, it suffices to show concentration over $T \cap \mathcal{S}^{n-1}$. For notational simplicity, set $\Sigma_{z,w} : = A_z^\top A_w - \Phi_{z,w}$.

\paragraph{Step 1: Approximation.} We first show that if concentration over an $\epsilon$-net of $T \cap \mathcal{S}^{n-1}$ holds, then a continuity argument establishes concentration over all points in $T \cap \mathcal{S}^{n-1}$.  Choose an $\frac{\epsilon}{14}$-net $Q_T \subset T \cap \mathcal{S}^{n-1}$ such that $|Q_T| \leqslant (42/\epsilon)^k$ and for any $x \in T \cap \mathcal{S}^{n-1}$, \begin{align}
\min_{q \in Q_T}\|x - q\| \leqslant \frac{\epsilon}{14}. \label{min_bound}
\end{align}  We will prove that \begin{align}
    |\langle \Sigma_{z,w}q,q\rangle| \leqslant \frac{\epsilon}{8}\ \forall\ q \in Q_T \Longrightarrow |\langle \Sigma_{z,w}x,x\rangle| \leqslant \epsilon\ \forall\ x \in T. \label{conc_over_Q_implies_T}
\end{align} 

Now, define \begin{align}
\al^* : = \inf\left\{\al > 0 : |\langle \Sigma_{z,w} x,x\rangle| \leqslant \al\|x\|^2\ \forall\ x \in T\right\}. \label{alpha_star_def}
\end{align} We want to show that $\al^* \leqslant \epsilon$. Fix $x \in T \cap \mathcal{S}^{n-1}$. Then there exists a $q \in Q_T$ such that $\|x - q \| \leqslant \epsilon/14.$ In addition, observe that $x - q \in T$ since $q \in Q_T \subset T$ so by \eqref{alpha_star_def}, \begin{align} \label{xminqbound}
|\langle \Sigma_{z,w}(x-q),x-q\rangle| \leqslant \al^*\|x-q\|^2 \leqslant \al^*\frac{\epsilon^2}{196}.
\end{align} Now, note that by the definition of $\al^*$, \begin{align*}
|\langle \Sigma_{z,w} x, x \rangle| \leqslant \al^*\|x\|^2\ \forall\ x \in T.
\end{align*} Thus Proposition \ref{tech_lemma} gives \begin{align*}
|\langle \Sigma_{z,w} x, y \rangle| \leqslant 3\al^*\|x\|\|y\|\ \forall\ x,y \in T.
\end{align*} Applying this result to $x-q$ and $q$ gives \begin{align} \label{xminqandqbound}
|\langle \Sigma_{z,w} (x-q), q \rangle| \leqslant 3\al^*\|x- q\| \leqslant \al^*\frac{3\epsilon}{14}.
\end{align} Let $E$ be the event that $|\langle \Sigma_{z,w} q,q \rangle| \leqslant \frac{\epsilon}{8}$ for any $q \in Q_T$. Using $\langle \Sigma_{z,w} x,x \rangle = \langle \Sigma_{z,w}(x-q),x-q \rangle + 2\langle \Sigma_{z,w} x,q \rangle - \langle \Sigma_{z,w} q, q \rangle$ and $\langle \Sigma_{z,w} x,q \rangle = \langle \Sigma_{z,w} (x-q),q\rangle + \langle \Sigma_{z,w} q,q \rangle$, we have that on $E$, \begin{align*}
|\langle \Sigma_{z,w} x,x \rangle| & \leqslant |\langle \Sigma_{z,w}(x-q),x-q\rangle| + 2|\langle \Sigma_{z,w} x,q\rangle| + |\langle \Sigma_{z,w} q,q\rangle| \\
& \leqslant |\langle \Sigma_{z,w}(x-q),x-q\rangle| + 2|\langle \Sigma_{z,w} (x-q),q\rangle| + 3|\langle \Sigma_{z,w}q,q\rangle| \\
& \leqslant \al^* \frac{\epsilon^2}{196} + \al^*\frac{3\epsilon}{7} + \frac{3\epsilon}{8} \\
& = \al^*\left(\frac{\epsilon^2}{196} + \frac{3\epsilon}{7}\right) + \frac{3\epsilon}{8}
\end{align*} where we used \eqref{xminqbound}, \eqref{xminqandqbound}, and the event $E$ in the third inequality. Thus \begin{align}
|\langle \Sigma_{z,w} x,x \rangle| \leqslant \al^*\left(\frac{\epsilon^2}{196} + \frac{3\epsilon}{7}\right) + \frac{3\epsilon}{8}\ \forall\ x \in T \cap \mathcal{S}^{n-1}. \label{alpha_star_bound}
\end{align} However, recall that $\al^*$ was defined to be the smallest number such that \begin{align*}
|\langle \Sigma_{z,w} x,x \rangle| \leqslant \al^*\ \forall\ x \in T \cap \mathcal{S}^{n-1}.
\end{align*} Hence $\al^*$ must be smaller than the right hand side of \eqref{alpha_star_bound}, i.e. \begin{align*}
\al^* \leqslant \al^*\left(\frac{\epsilon^2}{196} + \frac{3\epsilon}{7}\right) + \frac{3\epsilon}{8} \Longrightarrow \al^* \leqslant \frac{3\epsilon}{8}\left(\frac{1}{1 - \frac{\epsilon^2}{196} - \frac{3\epsilon}{7}}\right) \leqslant \epsilon
\end{align*} since $0 < \epsilon < 1$. This establishes \eqref{conc_over_Q_implies_T}.

\paragraph{Step 2: Concentration.} We now establish concentration for a fixed point $x \in \Seps^{n-1}$. Then observe that \begin{align*}
\left|\langle \Sigma_{z,w} x,x \rangle \right| = \frac{1}{m} \left|\sum_{i=1}^m Y_i \right|
\end{align*} where $Y_i = X_i - \E[X_i]$, $X_i = \sign(\langle \tilde{a}_i,z \rangle \langle \tilde{a}_i, w\rangle) \langle \tilde{a}_i , x\rangle^2$, and  each $\tilde{a}_i \sim \mathcal{N}(0,I_n)$. Hence $Y_i$ are independent, centered, subexponential random variables. We now estimate their subexponential norm prior to invoking Lemma \ref{subexpbound}.

By Remark 5.18 in \cite{Vershynin_notes}, the subexponential norm satisfies \begin{align}
    \|Y_i\|_{\psi_1} = \|X_i - \E[X_i]\|_{\psi_1} \leqslant 2\|X_i\|_{\psi_{1}}. \label{sub_exp_mean_bound}
\end{align} Let $Z_i := \langle \tilde{a}_i,x\rangle \sim \mathcal{N}(0,1)$.  Then $\|Z_i\|_{\psi_2} \leqslant K_1$ for some absolute constant $K_1$ where $\|\cdot\|_{\psi_2}$ is the sub-gaussian norm. Observe that $\E|X_i|^p \leqslant \E|Z_i^2|^p$ which implies $\|X_i\|_{\psi_1} \leqslant \|Z_i^2\|_{\psi_1}$. Thus we have \begin{align*}
    \|Y_i\|_{\psi_1} \leqslant 2\|X_i\|_{\psi_1} \leqslant 2\|Z_i^2\|_{\psi_1} \leqslant 4\|Z_i\|^2_{\psi_2} \leqslant 4K_1^2
\end{align*} where we used equation \eqref{sub_exp_mean_bound} in the first inequality and Lemma 5.14 in \cite{Vershynin_notes} in the second to last inequality. Thus $K =\max_{i \in [m]} \|Y_i\|_{\psi_1} \leqslant 4K_1^2$ for an absolute constant $K_1$. Defining $K_2 := 4K_1^2$, Lemma \ref{subexpbound} guarantees that for any fixed $z,w,x \in \R^n \setminus \{0\}$ and $\epsilon > 0$, \begin{align}
\Pro\left(|\langle \Sigma_{z,w} x,x \rangle| \geqslant \epsilon\right) \leqslant 2\exp(-c_0(\epsilon) m) \label{conc_bound}
\end{align} where $c_0(\epsilon) = c \min(\epsilon^2/K_2^2,\epsilon/K_2)$ and $c > 0$ is an absolute constant.

\paragraph{Step 3: Union Bound.} We now show concentration over $Q_T$ holds. Recall that $|Q_T| \leqslant (42/\epsilon)^{k}$ so we can apply a union bound to \eqref{conc_bound} to attain \begin{align} \label{qbound}
\Pro\left(|\langle \Sigma_{z,w} q,q \rangle| \geqslant \frac{\epsilon}{8}\ \forall\ q \in Q_T\right) \leqslant 2 \left(\frac{42}{\epsilon}\right)^k\exp\left(-c_0\left(\frac{\epsilon}{8}\right)m\right).
\end{align} 

\paragraph{} By equations \eqref{conc_over_Q_implies_T} and \eqref{qbound}, we conclude that \begin{align*}
\Pro\left(|\langle \Sigma_{z,w} x,x \rangle| \geqslant \epsilon \|x\|^2\ \forall\ x \in T\right) \leqslant 2 \left(\frac{42}{\epsilon}\right)^k\exp\left(-c_0\left(\frac{\epsilon}{8}\right)m\right).
\end{align*}  The probability bound in the proposition can be shown by noting that \begin{align*}
    1 - 2(42/\epsilon)^k\exp(-c_0(\epsilon/8)m) & = 1 - 2\exp\left(-c_0(\epsilon/8) m + k\log\left(\frac{42}{\epsilon}\right)\right).
\end{align*} Thus if \begin{align*}
    \frac{2}{c_0(\epsilon/8)}\log\left(\frac{42}{\epsilon}\right) k \leqslant Ck\leqslant m
\end{align*} where $C = \Omega(\epsilon^{-1}\log\epsilon^{-1})$, we have that the result holds with probability exceeding \begin{align*}
    1-2\exp\left(-c_0(\epsilon/8) m + k\log\left(\frac{42}{\epsilon}\right)\right) \geqslant 1 - 2\exp(-c_1 m)
\end{align*} where $c_1 = c_0(\epsilon/8)/2$. Applying Proposition \ref{tech_lemma} to our result gives \eqref{RCP_3bound} with the same probability. The extension to the union of subspaces follows by applying \eqref{RCP_3bound} to all subspaces of the form $\text{span}(U_i,V_j)$ and using a union bound. Note that these subspaces have dimension at most $2k$, accounting for the extra factor of $2$ in the bound on $m$.

\end{proof}

\subsection{Uniform Concentration Over a Union of Subspaces}

We will now set out to prove a stronger version of Proposition \ref{RCPproposition} that holds uniformly for all $z$ and $w$ in (possibly) different $k$-dimensional subspaces:

\begin{prop}[Uniform RCP]\label{uniformrcp} 
Fix $0 < \epsilon < 1$ and $k < m$. Let $A \in \R^{m \times n}$ have i.i.d. $\mathcal{N}(0,1/m)$ entries. Let $T$, $W$, and $Z$ be fixed $k$-dimensional subspaces of $\R^n$. Then if $m \geqslant  C_{\epsilon}k$, then with probability at least $1 - \hat{\gamma} m^{4k}\exp(-\tilde{c}_{\epsilon} m),$ we have \begin{align}
\left|\langle A_z^\top A_w x,y \rangle - \langle \Phi_{z,w}x,y \rangle\right| \leqslant L\epsilon \|x\|\|y\|\ \forall\ x,y \in T,\ w \in W,\ z \in Z \label{Uniform_RCP_firstbound}
\end{align} where $\hat{\gamma}$ is a positive universal constant, $\tilde{c}_{\epsilon}$ depends on $\epsilon$ and $C_{\epsilon}$ depends polynomially on $\epsilon^{-1}$. Furthermore, let $U = \bigcup_{i=1}^{N_1} U_i$, $V = \bigcup_{j=1}^{N_2} V_j$, $W = \bigcup_{k=1}^{N_3} W_k$, and $Z = \bigcup_{\ell=1}^{N_4} Z_{\ell}$ be the union of at most $k$-dimensional subspaces of $\R^n$. Then if $m \geqslant 2C_{\epsilon}k$,
\begin{align}
\left|\langle A_z^\top A_w u,v \rangle - \langle \Phi_{z,w}u,v \rangle\right| \leqslant L\epsilon \|u\|\|v\|\ \forall\ u \in U,\ v \in V,\ w \in W,\ z \in Z \label{Uniform_RCP_secondbound}
\end{align} with probability exceeding $1 - N_1N_2N_3N_4\hat{\gamma} m^{4k}\exp(-\tilde{c}_{\epsilon} m)$. Here $L$ is a positive universal constant.
\end{prop}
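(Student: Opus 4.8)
The plan is to upgrade the fixed-pair estimate of Proposition~\ref{RCPproposition} to one uniform in $z,w$ by a covering argument on the subspaces $Z$ and $W$, combined with a separate bound on how much the matrix $A_z^\top A_w$ can change, on the subspace $T$, when $z$ and $w$ are perturbed within a small ball. By homogeneity it suffices to prove \eqref{Uniform_RCP_firstbound} for $x,y\in T\cap\mathcal S^{n-1}$, $z\in Z\cap\mathcal S^{n-1}$, $w\in W\cap\mathcal S^{n-1}$; since $T,W,Z$ are at most $k$-dimensional, the ambient dimension $n$ plays no role. For any $z_0\in Z\cap\mathcal S^{n-1}$, $w_0\in W\cap\mathcal S^{n-1}$, the triangle inequality gives
\begin{align*}
\bigl|\langle (A_z^\top A_w-\Phi_{z,w})x,y\rangle\bigr|\ \leqslant\ \bigl|\langle (A_z^\top A_w-A_{z_0}^\top A_{w_0})x,y\rangle\bigr|\ +\ \bigl|\langle (A_{z_0}^\top A_{w_0}-\Phi_{z_0,w_0})x,y\rangle\bigr|\ +\ \|\Phi_{z_0,w_0}-\Phi_{z,w}\|,
\end{align*}
and the strategy is to pick $z_0,w_0$ to be net points close to $z,w$ and to bound each of the three terms by $O(\epsilon)$.

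First I would fix a scale $\delta>0$ --- a small multiple of $\epsilon$, up to a factor involving $\log(1/\epsilon)$ and $\sqrt k$, to be pinned down at the end --- and take $\delta$-nets $\mathcal N_Z\subset Z\cap\mathcal S^{n-1}$ and $\mathcal N_W\subset W\cap\mathcal S^{n-1}$ of cardinality at most $(3/\delta)^k$; since $m\geqslant C_\epsilon k$ with $C_\epsilon$ polynomial in $\epsilon^{-1}$, this is at most $m^k$. The third term is deterministic: because $\theta_{z,w}$ is continuous in the pair of unit vectors and the only discontinuity of $M_{\hat z\leftrightarrow\hat w}$, occurring at $\theta_{z,w}\in\{0,\pi\}$, is multiplied by the factor $\sin\theta_{z,w}=0$ there, the kernel $\Phi$ of \eqref{Phi_def} is Lipschitz, hence uniformly continuous, on the compact set $\mathcal S^{n-1}\times\mathcal S^{n-1}$; thus $\delta$ can be chosen so that $\|z-z_0\|\leqslant\delta$ and $\|w-w_0\|\leqslant\delta$ force $\|\Phi_{z,w}-\Phi_{z_0,w_0}\|\leqslant\epsilon$. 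The middle term is handled by applying Proposition~\ref{RCPproposition} to the pair $(z_0,w_0)$ and the subspace $T$ and union-bounding over the at most $m^{2k}$ pairs in $\mathcal N_Z\times\mathcal N_W$: on an event of probability at least $1-2m^{2k}\exp(-c_1m)$ one has $\bigl|\langle(A_{z_0}^\top A_{w_0}-\Phi_{z_0,w_0})x,y\rangle\bigr|\leqslant 3\epsilon\|x\|\|y\|$ for every net pair and all $x,y\in T$.

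The first term is the crux. Writing $a_i$ for the rows of $A$ and using $A_z^\top A_w=\sum_i\sgn(a_i^\top z)\sgn(a_i^\top w)\,a_ia_i^\top$, the difference equals $\sum_{i\in S}\bigl(\sgn(a_i^\top z)\sgn(a_i^\top w)-\sgn(a_i^\top z_0)\sgn(a_i^\top w_0)\bigr)(a_i^\top x)(a_i^\top y)$, supported on $S\subseteq S_Z\cup S_W$ where $S_Z=\{i:\sgn(a_i^\top z)\neq\sgn(a_i^\top z_0)\}$ and similarly $S_W$. Two uniform Gaussian estimates are needed. \textbf{(i) A sign-flip count.} If $\sgn(a_i^\top z)\neq\sgn(a_i^\top z_0)$ then $|a_i^\top z_0|\leqslant|a_i^\top(z-z_0)|\leqslant\delta\,\|P_Za_i\|$, where $P_Z$ is the orthogonal projection onto $Z$; hence $S_Z\subseteq\{i:|a_i^\top z_0|\leqslant\delta\|P_Za_i\|\}$, a set depending only on $z_0$, so the bound is automatically uniform over the uncountable family of $z$ in the $\delta$-ball of $z_0$. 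For fixed $z_0$ the indicators $\one(|a_i^\top z_0|\leqslant\delta\|P_Za_i\|)$ are i.i.d.\ Bernoulli with parameter at most a constant multiple of $\delta\sqrt k$, so a Bernstein bound union-bounded over the $\leqslant m^k$ points of $\mathcal N_Z$ (and likewise over $\mathcal N_W$) gives, with high probability, $|S_Z|,|S_W|\leqslant s$ with $s\asymp\delta\sqrt k\,m+k\log m$. \textbf{(ii) A restricted-energy bound.} For $|S|\leqslant s$ and $x\in T$, $\sum_{i\in S}(a_i^\top x)^2=\|G_S\tilde x\|^2$, where $G_S$ is the $|S|\times k$ matrix with i.i.d.\ $\mathcal N(0,1/m)$ entries obtained from $A$ by restricting to the rows in $S$ and, after an orthonormal change of variables, to coordinates spanning $T$; a standard operator-norm tail bound union-bounded over the $\binom{m}{s}$ index sets yields, with high probability, $\sup_{|S|\leqslant s}\sup_{x\in T\cap\mathcal S^{n-1}}\sum_{i\in S}(a_i^\top x)^2\leqslant c_3\bigl(\tfrac{s}{m}\log\tfrac{em}{s}+\tfrac{k}{m}\bigr)$. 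Combining (i), (ii), and Cauchy--Schwarz, $\bigl|\langle(A_z^\top A_w-A_{z_0}^\top A_{w_0})x,y\rangle\bigr|\leqslant 2\bigl(\sum_{i\in S}(a_i^\top x)^2\bigr)^{1/2}\bigl(\sum_{i\in S}(a_i^\top y)^2\bigr)^{1/2}$ is $O\bigl(\delta\sqrt k\log(1/\delta)+k/m\bigr)$, which is at most $\epsilon\|x\|\|y\|$ once $\delta$ is a small enough multiple of $\epsilon/(\sqrt k\log(1/\epsilon))$ and $m\geqslant C_\epsilon k$ (this choice of $\delta$ also keeps $(3/\delta)^k\leqslant m^k$).

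Assembling the three bounds with $z_0,w_0$ the nearest net points to $z,w$ gives $\bigl|\langle(A_z^\top A_w-\Phi_{z,w})x,y\rangle\bigr|\leqslant L\epsilon\|x\|\|y\|$ with a universal $L$ absorbing the $3\epsilon+\epsilon+O(\epsilon)$. The failure probability is the sum of the contributions from the net/RCP step and from steps (i) and (ii); the covering numbers and index-set counts contribute only a polynomial prefactor, which --- making no attempt at tightness --- is at most $m^{4k}$, while the exponential rate is $\tilde c_\epsilon m$ with $\tilde c_\epsilon$ polynomial in $\epsilon$ (inherited from $\delta\asymp\epsilon$), giving the stated $1-\hat\gamma m^{4k}\exp(-\tilde c_\epsilon m)$. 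The union-of-subspaces statement \eqref{Uniform_RCP_secondbound} then follows by applying \eqref{Uniform_RCP_firstbound} with $T=\mathrm{span}(U_i,V_j)$ --- of dimension at most $2k$, which is the source of the $m\geqslant 2C_\epsilon k$ requirement --- $W=W_k$, $Z=Z_\ell$, and union-bounding over the $N_1N_2N_3N_4$ choices of $(i,j,k,\ell)$. The main obstacle is step (i): getting the sign-flip count \emph{uniformly} over all $z$ in a ball, which is exactly what forces the $\{i:|a_i^\top z_0|\leqslant\delta\|P_Za_i\|\}$ device and the resulting $\sqrt k$ loss in the admissible scale $\delta$.
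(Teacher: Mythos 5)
Your proof is a genuinely different route to the same three-term decomposition the paper uses. The paper never takes a metric net in $z,w$: it exploits the fact that $z \mapsto A_z$ is piecewise constant in the sign pattern $\sgn(Az)$, chooses one representative $z_0$ per cell of the hyperplane tessellation of $Z$ (Lemma \ref{Z_0lemma}, with the cell count $\leqslant 10m^{2k}$ coming from the sphere-covering bound of Lemma \ref{boundoncardAz}), and invokes the Plan--Vershynin tessellation theorem (Theorem \ref{randhyperplanethm}) only to guarantee that each cell has diameter at most $\epsilon$, so that $\Phi_{z,w}\approx\Phi_{z_0,w_0}$ by Lemma \ref{continuityphi}. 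Within a cell, $A_z^\top A_w$ agrees with $A_{z_0}^\top A_{w_0}$ \emph{exactly} except on the at most $2k$ rows where $Az$ or $Aw$ vanishes, so no sign-flip counting is needed; the only additional probabilistic input is Lemma \ref{Atilde_small_lemma}, the analogue of your step (ii), applied to row subsets of size $\leqslant 2k$. You instead use ordinary $\delta$-nets and must therefore control the random set of rows on which the sign pattern of $z$ differs from that of the nearest net point --- your step (i), anti-concentration plus Bernstein --- which is the source of your $\sqrt k$ loss in the admissible $\delta$ (harmless, since the net size only enters through an $m^{O(k)}$ prefactor). Both proofs share the fixed-pair estimate (Proposition \ref{RCPproposition}) with a union bound over $m^{O(k)}$ anchor points, the continuity of $\Phi$, the small-subset energy bound, and the identical reduction of \eqref{Uniform_RCP_secondbound} to \eqref{Uniform_RCP_firstbound} via $\mathrm{span}(U_i,V_j)$; they differ in how the anchors are chosen and in whether the exceptional row set is deterministic (the $\leqslant 2k$ zeros) or random (sign flips).

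There is, however, one step in your write-up that does not go through as stated: the combination of (i) and (ii). You take the Bernstein deviation to be of order $k\log m$ (to beat the $m^k$-point union bound), giving $s \asymp \delta\sqrt k\, m + k\log m$, and then assert a final bound $O\bigl(\delta\sqrt k\log(1/\delta)+k/m\bigr)$. But $(s/m)\log(em/s)$ contains a contribution of order $(k\log m/m)\log(em/s)$, which is not $O(\epsilon)$ under the sole hypothesis $m \geqslant C_\epsilon k$: in the proportional regime $m = C_\epsilon k$ with $k$ large one has $k\log m \gg \epsilon m$, and once $\log m \geqslant C_\epsilon$ even $k\log m \geqslant m$, so $s\asymp m$ and your energy bound degenerates to $O(1)$ --- and the proportional regime is precisely the one the theorem targets. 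The repair stays entirely within your framework: since the allowed failure probability is $\hat\gamma m^{4k}\exp(-\tilde c_\epsilon m)$, you can afford a Bernstein deviation proportional to $m$ (say $t=\epsilon' m$ with $\epsilon'$ a small multiple of $\epsilon/\log(1/\epsilon)$), which gives $|S_Z|,|S_W|\lesssim(\delta\sqrt k+\epsilon')m$ with failure probability $m^k e^{-c\epsilon' m}$, and then step (ii) applied uniformly over subsets of size at most $c(\epsilon)m$ --- the same union bound carried out in the paper's Lemma \ref{Atilde_small_lemma} --- yields the required $O(\epsilon)\|x\|\|y\|$ bound. With that single adjustment your argument is sound.
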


Note that Proposition \ref{RCPproposition} established concentration of $\langle A_z^\top A_w x,y \rangle$ around $\langle \Phi_{z,w}x,y \rangle$ for $x$ and $y$ in a fixed $k$-dimensional subspace for \textit{fixed} $z,w \in \R^n \setminus\{0\}$. We are interested in showing that this concentration holds uniformly for all $z$ and $w$ in the range of our generative model. The proof of this result uses an interesting fact from $1$-bit compressed sensing which establishes that if a sufficient number of random hyperplanes cut the unit sphere, the diameter of each tesselation piece is small with high probability \citep{Vershynin_hyperplane_thm}. We state the theorem here for convenience:

\begin{thm}[Theorem 2.1 in \cite{Vershynin_hyperplane_thm}]\label{randhyperplanethm}
Let $n,m,s > 0$ and set $\delta = C_1 \left(\frac{s}{m} \log(2n/s)\right)^{1/5}$. Let $a_i \in \R^n$ have i.i.d. $\mathcal{N}(0,1)$ entries for $i \in [m]$. Then with probability at least $1 - C_2 \exp(- c \delta m)$, the following holds uniformly for all $x,\tilde{x} \in \R^n$ that satisfy $\|x\|_2 = \|\tilde{x}\|_2 = 1$, $\|x\|_1 \leqslant \sqrt{s}$, and $\|\tilde{x}\|_1 \leqslant \sqrt{s}$ for $s \leqslant n$: \begin{align}
\langle a_i, \tilde{x} \rangle \langle a_i, x \rangle \geqslant 0,\ \forall\ i \in [m] \Longrightarrow \|\tilde{x} - x\|_2 \leqslant \delta.
\end{align} Here $C_1, C_2, c$ are positive universal constants.
\end{thm}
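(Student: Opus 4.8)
The plan is to recast the conclusion of Theorem~\ref{randhyperplanethm} contrapositively and prove a strengthened, \emph{robust} separation statement. Write $H_i := a_i^\perp$ for the random hyperplanes and
\[
d_A(x,\tilde x) := \tfrac{1}{m}\,\#\bigl\{\, i\in[m] : \sgn\langle a_i,x\rangle \neq \sgn\langle a_i,\tilde x\rangle \,\bigr\}.
\]
The hypothesis $\langle a_i,\tilde x\rangle\langle a_i,x\rangle\geqslant 0$ for all $i$ says exactly that $d_A(x,\tilde x)=0$, i.e.\ no $H_i$ separates $x$ from $\tilde x$. So it suffices to show that, with the stated probability, every pair $x,\tilde x$ in the sparse sphere $K_s:=\{v\in\R^n : \|v\|_2=1,\ \|v\|_1\leqslant\sqrt s\}$ with $\|x-\tilde x\|_2>\delta$ is separated by some $H_i$. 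Since $d_A$ is discontinuous in $(x,\tilde x)$, a direct net argument fails; instead I would prove that each such far-apart pair is separated by some hyperplane with a quantitative \emph{margin} $t>0$, a property stable under $\ell_2$-perturbations of size $\ll t/M$, where $M:=\max_{i\leqslant m}\sup_{v\in K_s}|\langle a_i,v\rangle|$.

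Two ingredients would be recorded first. (i) \emph{Single-pair margin estimate.} For fixed unit vectors $p,q$ at angle $\theta:=\angle(p,q)$ and $t\leqslant c_0\theta$, a single Gaussian row satisfies $\Pro\bigl(\langle a,p\rangle\geqslant t,\ \langle a,q\rangle\leqslant -t\bigr)\gtrsim\theta$; this is a two-dimensional computation after projecting onto $\mathrm{span}(p,q)$, where the event is a wedge of opening $\theta$ minus a strip of width $O(t)$. Since $\theta\geqslant\|p-q\|_2$, for $\|p-q\|_2\geqslant\delta/2$ this probability is $\gtrsim\delta$, so independence across $i$ gives $\Pro(\text{no }H_i\text{ margin-}t\text{-separates }p,q)\leqslant(1-c\delta)^m\leqslant\exp(-c\delta m)$. (ii) \emph{Norm control.} $M\leqslant\sqrt s\,\max_{i\leqslant m}\|a_i\|_\infty\lesssim\sqrt{s\log(mn)}$ with probability at least $1-C\exp(-cm)$, by a union bound over Gaussian maxima.

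Next I would discretize $K_s$. Take a finite $\eta$-net $\mathcal N\subset K_s$ in the $\ell_2$ metric; the standard support-enumeration (Maurey-type) construction for $K_s\subset\sqrt s\,\mathrm{conv}(\pm e_1,\dots,\pm e_n)$ gives $\log|\mathcal N|\lesssim s\log\!\bigl(\tfrac{n}{s\eta}\bigr)$. Union-bounding (i) over all pairs $p,q\in\mathcal N$ with $\|p-q\|_2\geqslant\delta/2$ and intersecting with the event of (ii), we obtain, with probability at least $1-|\mathcal N|^2\exp(-c\delta m)-C\exp(-cm)$, that simultaneously every such net-pair is margin-$t$-separated by some $H_i$ and $M\lesssim\sqrt{s\log(mn)}$. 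On this event, given arbitrary $x,\tilde x\in K_s$ with $\|x-\tilde x\|_2>\delta$, choose $p,q\in\mathcal N$ with $\|x-p\|_2,\|\tilde x-q\|_2\leqslant\eta$; if $\eta\leqslant\delta/4$ then $\|p-q\|_2\geqslant\delta/2$, so some $a_i$ has $\langle a_i,p\rangle\geqslant t$ and $\langle a_i,q\rangle\leqslant -t$, whence $\langle a_i,x\rangle\geqslant t-\eta M$ and $\langle a_i,\tilde x\rangle\leqslant -t+\eta M$. Taking $t:=2\eta M$ makes these strictly positive and strictly negative, so $H_i$ separates $x$ and $\tilde x$ — contradicting the hypothesis. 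It remains to balance parameters: $t=2\eta M$ together with $t\lesssim\theta\lesssim\delta$ forces $\eta\asymp\delta/M\asymp\delta/\sqrt{s\log(mn)}$; substituting into $\log|\mathcal N|\lesssim s\log(n/(s\eta))$ yields $\log|\mathcal N|^2\lesssim s\log(2n/s)$ up to $\log\log$ factors, and the failure probability is $\leqslant C_2\exp(-c\delta m)$ provided $\delta m\gtrsim s\log(2n/s)$; solving the resulting threshold for $\delta$ and absorbing the polylogarithmic slack into the exponent gives $\delta\asymp\bigl(\tfrac{s}{m}\log(2n/s)\bigr)^{1/5}$, the claimed form.

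The main obstacle is exactly this parameter balancing, which is also the origin of the (suboptimal) fifth-root: the margin $t$ must stay small, $t\lesssim\delta$, to keep the single-hyperplane separation probability $\gtrsim\delta$, yet $t$ must dominate the net error $\eta M$ with $M$ only polylogarithmically large; shrinking $\eta$ inflates $\log|\mathcal N|$, and reconciling $\log|\mathcal N|\lesssim\delta m$ with these two constraints leaves no room for a sharper exponent. A secondary technicality is engineering $\mathcal N$ so that its cardinality scales with the sparsity $s$ rather than the ambient dimension $n$, which is handled by the usual combinatorial net for scaled cross-polytopes and does not alter the structure of the argument.
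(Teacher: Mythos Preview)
The paper does not prove this theorem: it is quoted verbatim as Theorem~2.1 of \cite{Vershynin_hyperplane_thm} and used as a black box in the proof of Lemma~\ref{Z_0lemma}. There is therefore no ``paper's own proof'' to compare your proposal against.

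For context, your outline is broadly the strategy of the original Plan--Vershynin argument: establish a single-pair separation probability $\gtrsim\delta$ via a two-dimensional Gaussian wedge computation, discretize the effectively sparse set $K_s$ using a Maurey/empirical net of log-cardinality $O(s\log(n/s\eta))$, introduce a margin $t$ to transfer separation from net points to arbitrary points, control the linear forms via $M\lesssim\sqrt{s\log(mn)}$, and then balance $\eta$, $t$, and $\delta$ against the union-bound cost. The fifth-root exponent arises from exactly the parameter tension you identify. So as a sketch of the cited result your proposal is on target, but it is not something this paper proves or needs to prove.
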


We will use this result to prove the following: given a sufficient number of random hyperplanes and a $k$-dimensional subspace $Z$, there exists a finite set of points $Z_0 \subset Z$ that live in the interior of the tesselation pieces generated by the random hyperplanes such that any point in $Z$ can be closely approximated by a point in $Z_0$ with high probability.

\begin{lem} \label{Z_0lemma} Fix $0 < \epsilon < 1$. Let $A \in \R^{m \times n}$ have i.i.d. $\mathcal{N}(0,1/m)$ entries with rows $\{a_{\ell}\}_{\ell=1}^m$. Let $Z$ be a $k$-dimensional subspace of $\R^n$. Define $E_{Z,A}$ to be the event that there exists a set $Z_0 \subset Z$ with the following properties: \begin{itemize}
    \item each $z_0 \in Z_0$ satisfies $\langle a_{\ell}, z_0 \rangle  \neq 0$ for all $\ell \in [m]$,
    \item $|Z_0| \leqslant 10m^{2k}$, and 
    \item for all $z \in Z$ such that $\|z\|=1$, there exists a $z_0 \in Z_0$ such that $\|z-z_0\|\leqslant \epsilon.$
\end{itemize} If $m \geqslant \hat{C} k$, then $\Pro(E_{Z,A}) \geqslant 1 - C_2\exp(-c\epsilon m)$. %Let $Z \subset \R^n$ be a $k$-dimensional subspace. Define the following subset of $Z$: \begin{align} \label{Z_0def}
   % Z_0 := \left\{z \in Z : \|z\| = 1\ \text{and}\ a_{\ell}^\top z \neq 0\ \forall\ \ell \in [m]\right\}.
%\end{align} Define the random variable $I_{Z,A} := |Z_0|$. Then if $m \geqslant c_{\epsilon} k$, the following event holds with probability exceeding $1 - C_2 \exp(-c \epsilon m)$: \begin{align}
    %E_{Z,A} : = \left\{I_{Z,A}\leqslant 10m^{2k}\ \text{and}\ \forall\ z \in Z\ \text{s.t.}\ \|z\| = 1,\ \exists\ z_0 \in Z_0\ 
%\text{s.t.}\ \|z - z_0\| \leqslant \epsilon\right\}. \label{E_Zdef}
%\end{align} 
  Here $C_2$ and $c$ are positive absolute constants and $\hat{C}$ depends polynomially on $\epsilon^{-1}.$
\end{lem}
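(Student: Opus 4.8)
I would build $Z_0$ by selecting exactly one point from the (relative) interior of each cell of the hyperplane arrangement that the rows of $A$ cut out of $Z$, and then invoke the random hyperplane tessellation theorem (Theorem \ref{randhyperplanethm}) to show that with high probability every one of these cells has Euclidean diameter at most $\epsilon$. The cell count gives the cardinality bound, choosing interior points gives the ``off every hyperplane'' property automatically, and the diameter bound gives the $\epsilon$-net property. Randomness enters only in the last step.

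\textbf{Reduction and construction.} Fix an orthonormal basis of $Z$ and identify $Z$ with $\R^k$; under this identification the restricted rows $a_\ell|_Z$ are i.i.d.\ $\mathcal{N}(0,\tfrac{1}{m}I_k)$, and since $Z^\perp$ has positive codimension in $\R^n$, with probability one every $a_\ell|_Z$ is nonzero. Work on this probability-one event, so that $H_\ell := \{z\in Z:\langle a_\ell,z\rangle=0\}$ are $m$ genuine central hyperplanes and the connected components (``cells'') of $Z\setminus\bigcup_\ell H_\ell$ are nonempty open convex cones. For each cell $C$ pick a unit vector $z_C\in C$ and let $Z_0=\{z_C\}$. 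By construction $\langle a_\ell,z_C\rangle\neq 0$ for all $\ell$, giving the first bullet. For the second bullet, the number of cells of an arrangement of $m$ central hyperplanes in $\R^k$ is at most the standard count $2\sum_{j=0}^{k-1}\binom{m-1}{j}\le 2k\,m^{k-1}$, which is at most $10m^{2k}$ once $m\ge\hat C k$ for an absolute constant $\hat C$; hence $|Z_0|\le 10m^{2k}$. Both conclusions so far are deterministic on the probability-one event.

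\textbf{The net property.} It remains to arrange that, with the stated probability, every cell has diameter at most $\epsilon$: then any unit $z\in Z$ lies in some cell $C$ and its representative satisfies $\|z-z_C\|\le\epsilon$, which is the third bullet. Apply Theorem \ref{randhyperplanethm} in ambient dimension $k$, with the rescaled vectors $\sqrt{m}\,a_\ell\sim\mathcal{N}(0,I_k)$ (sign patterns, hence cells, are unchanged by this positive rescaling), and with $s=k$: every unit $x\in\R^k$ satisfies $\|x\|_1\le\sqrt{k}\,\|x\|_2=\sqrt{s}$, so the sparsity hypotheses are vacuous and $\log(2k/s)=\log 2$. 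The theorem then gives, with probability at least $1-C_2\exp(-c\delta m)$, that any two unit vectors lying in a common cell (so that $\langle a_\ell,x\rangle\langle a_\ell,\tilde x\rangle>0$ for all $\ell$) are within $\delta:=C_1\bigl(\tfrac{k}{m}\log 2\bigr)^{1/5}$ of one another, so every cell has diameter at most $\delta$. Taking $\hat C$ polynomial in $\epsilon^{-1}$ (concretely $\hat C=C_1^5(\log 2)\epsilon^{-5}$) forces $\delta\le\epsilon$ whenever $m\ge\hat C k$, so $Z_0$ is an $\epsilon$-net of $Z\cap\mathcal{S}^{n-1}$ and $E_{Z,A}$ holds with the claimed probability.

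\textbf{Main obstacle.} The cell-counting bound and the verification that the sparsity constraint in Theorem \ref{randhyperplanethm} is vacuous are routine. The real content is the diameter step: one must correctly match the combinatorial cells of the arrangement with the sign-pattern hypothesis of the tessellation theorem, and confirm that $\delta$ can be pushed below $\epsilon$ at a sample complexity $m$ that is only \emph{linear} in $k$ (this is exactly what the exponent $1/5$ in Theorem \ref{randhyperplanethm} buys). Using a cruder covering-number estimate for cell diameters in place of the optimal tessellation result would cost extra logarithmic factors downstream, so the use of the sharp result here is essential.
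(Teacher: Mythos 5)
Your proposal is correct and follows essentially the same route as the paper: reduce to $\R^k$ by rotational invariance, select one unit-norm representative lying off all hyperplanes in each cell of the arrangement (the paper bounds the number of cells via Lemma \ref{boundoncardAz}, which rests on the same classical central-arrangement count you invoke directly), and apply Theorem \ref{randhyperplanethm} with $n=s=k$ so that $\delta = C_1\bigl(\tfrac{k}{m}\log 2\bigr)^{1/5}\leqslant\epsilon$ once $m\geqslant C_1^5(\log 2)\epsilon^{-5}k$. The only difference is cosmetic bookkeeping in the cell count; the probabilistic content and the sample-complexity dependence are identical to the paper's argument.
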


\begin{proof}[Proof of Lemma \ref{Z_0lemma}]
By the rotational invariance of the Gaussian distribution, we may take $Z$ to be in the span of the first $k$ standard basis vectors. We may further without loss of generality assume $A \in \R^{m \times k}$. We will invoke the following lemma which establishes that the unit sphere of $Z$ is partitioned into at most $10m^{2k}$ regions by the rows $\{a_{\ell}\}_{\ell=1}^m$ of $A$ with probability $1$:

\begin{lem}\label{boundoncardAz}
Let $V$ be a subspace of $\R^n$. Let $A \in \R^{m \times n}$ have i.i.d. $\mathcal{N}(0,1/m)$ entries. With probability 1, \begin{align*}
|\{\diag(\sign(Av))A : v \in V\}| \leqslant 10m^{2\dimension V}.
\end{align*}
\end{lem}

In each tesselation piece defined by the rows of $A$, choose a single point $z_0$ from $Z$ with unit norm such that $a_{\ell}^{\top}z_0 \neq 0$ for all $\ell \in [m]$ (if such a point exists in the tesselation piece). Let $Z_0$ denote this collection of points and set $I_{Z,A} := |Z_0|$. By Lemma \ref{boundoncardAz} with $V = Z$, the cardinality of $Z_0$ is bounded with probability $1$: $I_{Z,A} \leqslant 10m^{2k}$. Then observe that we can set the parameters $n$ and $s$ in Theorem \ref{randhyperplanethm} equal to $k$ since $A \in \R^{m \times k}$ and $Z$ is in the span of the first $k$ standard basis vectors. Then if $m \geqslant \left(C_1^5\log(2)/\epsilon^5\right)k =: \hat{C} k$, we have that the quantity $\delta$ in the theorem is bounded by $\epsilon$: \begin{align*}
    \delta := C_1 \left(\frac{k}{m}\log(2)\right)^{1/5} \leqslant  \epsilon
\end{align*} so $\Pro(E_{Z,A}) \geqslant 1 - C_2 \exp(-c\epsilon m)$ for some positive universal constants $c$, $C_1$, and $C_2$ and $\hat{C}$ depends polynomially on $\epsilon^{-1}$.
\end{proof}

We now proceed with the proof of the Uniform RCP.

\begin{proof}[Proof of Proposition \ref{uniformrcp}]
Let $E_{Z,A}$ be the event defined in Lemma \ref{Z_0lemma}. By Lemma \ref{Z_0lemma}, we have that if $m \geqslant \hat{C} k$, there exists an event $E_{Z,A}$ with $\Pro(E_{Z,A}) \geqslant 1 - C_2 \exp(-c \epsilon m)$ on which there exists a finite subset $Z_0$ of $Z$ with cardinality $I_{Z,A} \leqslant 10m^{2k}$ such that for any $z \in Z$ with $\|z\|=1$, there exists a $z_0 \in Z_0$ such that $\|z-z_0\|\leqslant \epsilon$. The analogous finite set $W_0$ with cardinality $I_{W,A} \leqslant 10m^{2k}$ also exists on the event $E_{W,A}$ with probability at least $1 - C_2\exp(-c\epsilon m)$. Thus if $m \geqslant \hat{C} k$, the event $E_{Z,W} := E_{Z,A} \cap E_{W,A}$ satisfies \begin{align*}
    \Pro(E_{Z,W}) \geqslant 1 - 2 C_2 \exp(-c\epsilon m).
\end{align*} 

We now establish concentration over $Z_0$ and $W_0$. Let $E_0$ be the event that \begin{align*}
\left|\langle A_{z_0}^\top A_{w_0}  x,y \rangle - \langle \Phi_{z_0,w_0}x,y\rangle\right| & \leqslant 3\epsilon \|x\|\|y\|\ \forall\ x,y \in T,\ z_0 \in Z_0,\ w_0 \in W_0. 
\end{align*} By Proposition \ref{RCPproposition}, if $m \geqslant C k$, we have that the following holds for fixed $z_0 \in Z_0$ and $w_0 \in W_0$ with probability exceeding $1 - 2\exp(-c_1 m)$: \begin{align*}
    \left|\langle A_{z_0}^\top A_{w_0}x,y \rangle - \langle \Phi_{z_0,w_0}x,y\rangle\right| \leqslant 3\epsilon\|x\|\|y\|\ \forall\ x,y \in T.
\end{align*} Furthermore, on $E_{Z,W}$, a union bound over all $z_0 \in Z_0$ and $w_0 \in W_0$ shows that $$\Pro(E_0) \geqslant 1 - 2I_{Z,A}I_{W,A}\exp\left(-\frac{c_1 m}{2}\right) \geqslant 1 - \gamma m^{4k}\exp\left(-\frac{c_1m}{2}\right)$$ where $\gamma$ is a positive absolute constant and $c_1$ depends on $\epsilon$.

For the remainder of this proof, we work on the event $E_0 \cap E_{Z,W}$. Fix non-zero $z \in Z$ and $w \in W$. Define the following set: \begin{align*}
    \Omega_{z,w} : = \left\{\ell \in [m] : \langle a_{\ell}, z\rangle = 0\ \text{or}\ \langle a_{\ell}, w\rangle = 0\right\}.
\end{align*} Note that since $Z$ and $W$ are $k$-dimensional and any subset of $k$ rows of $A$ are linearly independent with probability $1$, at most $k$ entries of $Az$ are zero and similarly for $Aw$. Hence $|\Omega_{z,w}| \leqslant 2k$. Furthermore, observe that \begin{align*}
 A_{z}^\top A_w & = \sum_{\ell =1}^m\sgn(\langle a_{\ell},z\rangle \langle a_{\ell}, w\rangle) a_{\ell}a_{\ell}^\top \\
 & = \sum_{\ell \in \Omega_{z,w}}\sgn(\langle a_{\ell},z\rangle \langle a_{\ell}, w\rangle) a_{\ell}a_{\ell}^\top + \sum_{\ell \in \Omega_{z,w}^c}\sgn(\langle a_{\ell},z\rangle \langle a_{\ell}, w\rangle) a_{\ell}a_{\ell}^\top \\
 & = \sum_{\ell \in \Omega_{z,w}^c}\sgn(\langle a_{\ell},z\rangle \langle a_{\ell}, w\rangle) a_{\ell}a_{\ell}^\top
\end{align*} by the definition of $\Omega_{z,w}$. However, on the event $E_{Z,W}$, there exists a $z_0 \in Z_0$ and $w_0 \in W_0$ such that for all $\ell \in \Omega_{z,w}^c$, \begin{align*}
    \sgn(\langle a_{\ell},z\rangle) = \sgn(\langle a_{\ell},z_0\rangle)\ \text{and}\ \sgn(\langle a_{\ell},w\rangle) =  \sgn(\langle a_{\ell},w_0\rangle)
\end{align*} i.e. $z$ and $z_0$ (likewise $w$ and $w_0$) lie on the same side of each hyperplane defined by $\{a_{\ell}\}_{\ell=1}^m$. Hence we have \begin{align}
    A_{z}^\top A_w = \sum_{\ell \in \Omega_{z,w}^c}\sgn(\langle a_{\ell},z\rangle \langle a_{\ell}, w\rangle) a_{\ell}a_{\ell}^\top & = \sum_{\ell \in \Omega_{z,w}^c}\sgn(\langle a_{\ell},z_0\rangle \langle a_{\ell}, w_0\rangle) a_{\ell}a_{\ell}^\top \nonumber \\
    & = A_{z_0}^{\top}A_{w_0} - \sum_{\ell \in \Omega_{z,w}}\sgn(\langle a_{\ell},z_0\rangle \langle a_{\ell}, w_0\rangle) a_{\ell}a_{\ell}^\top \nonumber \\
    & =: A_{z_0}^{\top}A_{w_0} - \tilde{A}_{z_0}^{\top}\tilde{A}_{w_0}. \label{AzAw_as_Az0Aw0}
\end{align} We now use the following lemma which says that  $\tilde{A}_{z_0}^{\top}\tilde{A}_{w_0}$ is small when acting on $T$.

\begin{lem}\label{Atilde_small_lemma}
Fix $0 < \epsilon < 1$ and $k < m$. Suppose that $A \in \R^{m \times n}$ has i.i.d. $\mathcal{N}(0,1/m)$ entries. Let $T \subset \R^n$ be a $k$-dimensional subspace and $W_0$ and $Z_0$ be subsets of $\R^n$. Let $E$ be the event the following inequality holds for all $\Omega \subset [m]$ satisfying $|\Omega| \leqslant 2k$: \begin{align}
    |\langle\tilde{A}_{z_0}^\top \tilde{A}_{w_0} x,y \rangle | \leqslant \epsilon\|x\|\|y\|\ \forall\ x,y \in T,\ w_0 \in W_0,\ z_0 \in Z_0
    \label{Atilde_small_bound}
\end{align} where \begin{align*}
    \tilde{A}_{z_0}^\top \tilde{A}_{w_0} : = \sum_{\ell \in \Omega}\sgn(\langle a_{\ell},z_0\rangle \langle a_{\ell}, w_0\rangle) a_{\ell}a_{\ell}^\top.
\end{align*} Then there exists a $\delta_{\epsilon} > 0$ such that if $m \geqslant 9\epsilon^{-1}k$ and $2k \leqslant \delta_{\epsilon}m$, $\Pro(E) \geqslant 1 - 2m\exp(-\epsilon m/36)$.
\end{lem}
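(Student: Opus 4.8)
The plan is to strip the claim down to an operator-norm estimate on a small row-submatrix of $AU$, where $U\in\R^{n\times k}$ is any orthonormal basis of $T$, and then make that estimate uniform over the index sets $\Omega$. For $x,y\in T$,
\[
\langle \tilde A_{z_0}^\top\tilde A_{w_0}x,y\rangle=\sum_{\ell\in\Omega}\sgn(\langle a_\ell,z_0\rangle\langle a_\ell,w_0\rangle)\,\langle a_\ell,x\rangle\langle a_\ell,y\rangle ,
\]
so Cauchy--Schwarz gives $|\langle \tilde A_{z_0}^\top\tilde A_{w_0}x,y\rangle|\le\big(\sum_{\ell\in\Omega}\langle a_\ell,x\rangle^2\big)^{1/2}\big(\sum_{\ell\in\Omega}\langle a_\ell,y\rangle^2\big)^{1/2}$, a bound independent of $z_0,w_0$; hence the quantifiers over $Z_0,W_0$ are free, and by homogeneity it suffices to prove that, with the stated probability, $\sup_{|\Omega|\le 2k}\|(AU)_\Omega\|\le\sqrt\epsilon$, where $(AU)_\Omega\in\R^{|\Omega|\times k}$ is the submatrix of $AU$ on the rows in $\Omega$ and (by rotational invariance of the Gaussian) has i.i.d.\ $\mathcal N(0,1/m)$ entries.

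For a \emph{fixed} $\Omega$ I would bound $\|(AU)_\Omega\|$ by a net argument on $\mathcal S^{k-1}$: for a fixed unit $v$ one has $\|(AU)_\Omega v\|^2=\tfrac1m\sum_{\ell\in\Omega}g_\ell^2$ with $g_\ell$ i.i.d.\ $\mathcal N(0,1)$, so $\sum_{\ell\in\Omega}(g_\ell^2-1)$ is a sum of centered subexponential variables whose subexponential norm is a universal constant (exactly the estimate carried out in the proof of Proposition~\ref{RCPproposition}). Since $|\Omega|\le 2k$ and $m\ge 9\epsilon^{-1}k$, the target $\epsilon/2$ is a large multiple of the mean $|\Omega|/m$, so Lemma~\ref{subexpbound} yields $\Pro(\|(AU)_\Omega v\|^2>\epsilon/2)\le 2\exp(-c_1\epsilon m)$ for a universal $c_1$; taking $\delta_\epsilon$ small also ensures we are in the linear ($s/K$) rather than quadratic ($s^2/K^2$) regime of that tail bound. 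A union over a fixed $\tfrac14$-net of $\mathcal S^{k-1}$ (cardinality $e^{O(k)}$) together with the standard continuity bound for spectral norms upgrades this to $\|(AU)_\Omega\|^2\le\epsilon$ with probability $\ge 1-e^{O(k)}\exp(-c_1\epsilon m)$, and since $k\le\tfrac12\delta_\epsilon m$ the net factor contributes only $O(\delta_\epsilon m)$ to the exponent.

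The remaining step — the union over $\Omega$ — is where the hypothesis $2k\le\delta_\epsilon m$ does its real work, and this is the main obstacle. A naive union bound over all $\binom{m}{2k}$ index sets costs a factor $\exp(2k\log m)$, which with only $m\gtrsim k/\epsilon$ would overwhelm the per-set concentration $\exp(-\Theta(\epsilon m))$. The resolution is that, by monotonicity of binomial coefficients and $2k\le\delta_\epsilon m$, $\binom{m}{2k}\le\binom{m}{\delta_\epsilon m}\le(e/\delta_\epsilon)^{\delta_\epsilon m}$, so $\log\binom{m}{2k}\le\delta_\epsilon m\log(e/\delta_\epsilon)$ with \emph{no} factor $\log m$. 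Hence the total failure probability is at most $(2k{+}1)e^{O(k)}\cdot 2\exp\!\big(\delta_\epsilon m\log(e/\delta_\epsilon)-c_1\epsilon m\big)$; choosing $\delta_\epsilon>0$ small enough (polynomially in $\epsilon$) that the combined $\delta_\epsilon\log(e/\delta_\epsilon)+O(\delta_\epsilon)$ term is at most $\tfrac12 c_1\epsilon$ makes this $\le 4k\exp(-\tfrac12 c_1\epsilon m)\le 2m\exp(-\epsilon m/36)$ after rescaling constants (the extra factor $m$ absorbing the polynomial prefactor). Everything else is a routine subexponential tail estimate and a net argument, both of which already appear in the proof of Proposition~\ref{RCPproposition}; the only delicate point is organizing $\delta_\epsilon$ so that the combinatorial cost of quantifying over $\Omega$, the net cost over $\mathcal S^{k-1}$, and the concentration exponent all fit together.
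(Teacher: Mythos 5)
Your proposal follows the same skeleton as the paper's proof: Cauchy--Schwarz strips out $z_0,w_0$ and reduces everything to the bound $\|A_{\Omega}x\|\leqslant\sqrt{\epsilon}\|x\|$ on $T$, rotational invariance reduces to an $m\times k$ Gaussian matrix, and the union over $\Omega$ is handled by exactly the paper's trick $\sum_{j\leqslant 2k}\binom{m}{j}\leqslant (2k+1)\binom{m}{\lfloor\delta_{\epsilon}m\rfloor}\leqslant (2k+1)(e/\delta_{\epsilon})^{\delta_{\epsilon}m}$ with $\delta_{\epsilon}$ chosen so small that $(e/\delta_{\epsilon})^{\delta_{\epsilon}}$ is dominated by the per-set concentration. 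The one place you diverge is the fixed-$\Omega$ spectral bound: the paper simply invokes the Gaussian operator-norm concentration (Corollary 5.35 in \cite{Vershynin_notes}), $\sqrt{m}\|A_{\Omega}\|\leqslant\sqrt{|\Omega|}+\sqrt{k}+t$ with probability $1-2\exp(-t^2/2)$, taking $t=\sqrt{\epsilon m}/3$; this is what produces the clean explicit tail $2\exp(-\epsilon m/18)$ per $\Omega$ and hence the stated constant $\epsilon m/36$ after the union bound, with no net over $\mathcal{S}^{k-1}$ and no extra $e^{O(k)}$ factor to absorb. Your route re-derives that bound via a $1/4$-net on $\mathcal{S}^{k-1}$ plus Bernstein for the chi-square sums; this is correct (you rightly use $2k\leqslant\delta_{\epsilon}m$ to keep the net cardinality $e^{O(k)}$ harmless, and your constants $\epsilon/2$ versus the net inflation factor need only routine retuning), but the unspecified Bernstein constant means you prove the conclusion with $\exp(-c\,\epsilon m)$ for some universal $c>0$ rather than the literal $\exp(-\epsilon m/36)$ in the statement --- immaterial for how the lemma is used downstream, where only some $\tilde c_{\epsilon}$-type rate matters, but worth noting if you want the exact constant.
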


 Let $E$ be the event defined in Lemma \ref{Atilde_small_lemma}. On the event $E \cap E_0 \cap E_{Z,W}$, we have that for all $z \in Z \cap \Seps^{n-1}$ and $w \in W \cap \Seps^{n-1}$, there exists a $z_0 \in Z_0$ and $w_0 \in W_0$ such that for any $x, y \in T$, \begin{align*}
\left|\langle A_z^\top A_w x,y \rangle - \langle \Phi_{z,w}x,y \rangle\right| & = \left|\langle A_{z_0}^{\top}A_{w_0} x,y \rangle - \langle\tilde{A}_{z_0}^{\top}\tilde{A}_{w_0}x,y\rangle - \langle \Phi_{z,w}x,y \rangle\right| \\
& \leqslant \left|\langle A_{z_0}^{\top}A_{w_0} x,y \rangle - \langle \Phi_{z,w}x,y \rangle\right| + | \langle\tilde{A}_{z_0}^{\top}\tilde{A}_{w_0}x,y\rangle| \\
& \leqslant \left|\langle A_{z_0}^{\top}A_{w_0} x,y \rangle - \langle \Phi_{z_0,w_0}x,y \rangle\right| + \left|\langle \Phi_{z_0,w_0}x,y \rangle - \langle \Phi_{z,w}x,y \rangle\right| + | \langle\tilde{A}_{z_0}^{\top}\tilde{A}_{w_0}x,y\rangle| \\
& \leqslant 3\epsilon \|x\|\|y\| + \frac{88}{\pi}\epsilon \|x\|\|y\| + \epsilon\|x\|\|y\|\\
& =: L \epsilon\|x\|\|y\|
\end{align*} where we define $L:= 3 + \frac{88}{\pi} + 1 < 33$. In the first equality, we used the event $E_{Z,W}$ and \eqref{AzAw_as_Az0Aw0}. In the last inequality, we used the continuity of $\Phi_{z,w}$ from Lemma \ref{continuityphi} along with the event $E_0 \cap E$. Letting $C_{\epsilon}:= 9 \epsilon^{-1}\hat{C}$ where $\hat{C}$ is given by Lemma \ref{Z_0lemma}, we have that if $m \geqslant C_{\epsilon}k$, the event $E \cap E_0 \cap E_{Z,W}$ holds with probability exceeding \begin{align*}
    \Pro\left(E \cap E_0 \cap E_{Z,W}\right) & \geqslant 1 - 2m\exp(-\epsilon m/36) - \gamma m^{4k}\exp\left(-\frac{c_1m}{2}\right) - 2C_2 \exp\left(-c\epsilon m\right) \\
    & \geqslant 1 - \hat{\gamma} m^{4k}\exp\left(-\tilde{c}_{\epsilon}m\right) 
\end{align*} where $\hat{\gamma}$ is a positive absolute constant and $\tilde{c}_{\epsilon}$ depends polynomially on $\epsilon$. The extension to the union of subspaces follows by applying \eqref{Uniform_RCP_firstbound} to all combinations of subspaces $T_{ij} = \text{span}(U_i,V_j)$, $W_k$, and $Z_{\ell}$ where each $T_{ij}$ have dimension at most $2k$ and using a union bound.
\end{proof}

\subsection{Application to Range of Generative Model}

We now apply Proposition \ref{uniformrcp} to prove Proposition \ref{realRRCP}:

\begin{proof}[Proof of Proposition \ref{realRRCP}]

For pedagogical purposes, we first establish the lemma in the $d=2$ case. In order to apply Proposition \ref{uniformrcp}, we will show that $\{\G(x) - \G(y) : x,y \in \R^k\}$ is a subset of the union of at most $10^6(n_1^2n_2)^{2k}$ subspaces of dimensionality at most $2k$.

For fixed $W_1,W_2$, let $\mathcal{A}_{+,1} = \{W_{1,+,x} : x \neq 0\}$ and $\mathcal{B}_{+,2} = \{W_{2,+,x} : x \neq 0\}$.  
By Lemma 15 in \cite{HV2017}, there exists a probability 1 event, $E$, over $(W_1,W_2)$ on which $|\mathcal{A}_{+,1}| \leqslant 10 n_1^k$ and $|\mathcal{B}_{+,2}| \leqslant 10^2 n_1^k n_2^k$. 
On $E$, 
\[
|\{ W_{2,+,x}W_{1,+,x} : x \neq 0\}| \leqslant 10^3 (n_1^2 n_2)^k.
\]
 Note that $\dim \text{range}(W_{2,+,x}W_{1,+,x}) \leqslant k$ for all $x \neq 0$.  Hence $$\{\G(x) : x \in \R^k\} \subset \{ W_{2,+,x}W_{1,+,x} w : x, w \in \Seps^{k-1} \} \subset V$$ where $V$ the union of at most $10^3 (n_1^2 n_2)^k$ subspaces of dimensionality at most $k$. This implies that \begin{align*}
    \{\G(x)&-\G(y) : x,y \in \R^k\}
    \subset V'
\end{align*} where $V'$ is the union of at most $10^6(n_1^2n_2)^{2k}$ subspaces of dimensionality at most $2k$.

By applying the second half of Proposition \ref{uniformrcp} to the sets $V'$, $V'$, $V$, and $V$, we get that for fixed $W_1$, $W_2$, \begin{align}
    |\langle (A_{\G(x)}^\top A_{\G(y)} & - \Phi_{\G(x),\G(y)})(\G(x_1) - \G(x_2)), \G(x_3)-\G(x_4)\rangle | \nonumber \\
    & \leqslant L\epsilon\|\G(x_1) - \G(x_2)\|\|\G(x_3)-\G(x_4)\| \label{2_layer_concentration_inequality}
\end{align}
with probability at least $$1 - 10^{3(2) + 6(2)}(n_1^2n_2)^{2k + 4k}\hat{\gamma}m^{4k} e^{-\tilde{c}_{\epsilon} m} \geqslant 1 - \tilde{\gamma}m^{4k}e^{-\tilde{c}_{\epsilon}m/2 },$$ provided $m \geqslant \hat{K}C_{\epsilon}\tilde{c}_{\epsilon}^{-1}k\log(n_1n_2) =: \tilde{C}_{\epsilon}k \log(n_1n_2)$, where $\tilde{\gamma}$ and $\hat{K}$ are positive universal constants, $\tilde{c}_{\epsilon}$ depends polynomially on $\epsilon$, and $C_{\epsilon}$ depends polynomially on $\epsilon^{-1}$.

Integrating over the probability space of $(W_1,W_2)$, independence of $A$ and $(W_1,W_2)$ implies that  \eqref{2_layer_concentration_inequality} holds for random $(W_1,W_2)$ with the same probability bound.
Continuing from \eqref{2_layer_concentration_inequality}, we have \begin{align}
    |\langle (A_{\G(x)}^\top A_{\G(y)} & - \Phi_{\G(x),\G(y)})(\G(x_1) - \G(x_2)), \G(x_3)-\G(x_4)\rangle | \nonumber \\
    & \leqslant L\epsilon\|\G(x_1) - \G(x_2)\|\|\G(x_3)-\G(x_4)\| \nonumber
\end{align} $\forall x,y ,x_1,x_2,x_3,x_4 \in \R^k$ with probability at least $ 1 - \tilde{\gamma}m^{4k}e^{-\tilde{c}_{\epsilon}m/2 }$ for some positive absolute constant $\tilde{\gamma}$ and $\tilde{c}_{\epsilon}$ depends polynomially on $\epsilon$. 

The case for $d \geqslant 2$ follows similarly. We have 
\[
|\{ \PiWdix : x \neq 0\}| \leqslant 10^{(d^2)} (n_1^d n_2^{d-1} \cdots n_{d-1}^2 n_d)^k
\] on the probability 1 event. This implies that $\{\G(x) : x \in \R^k\} \subset \{\Pi_{i=d}^1W_{i,+,x} w : x,w \in \Seps^{k-1}\}$ is a subset of the union of at most $10^{(d^2)} (n_1^d n_2^{d-1} \cdots n_{d-1}^2 n_d)^k$ subspaces of dimensionality at most $k$. Moreover, $\{\G(x)-\G(y) : x,y \in \R^k\}$ is a subset of the union of at most \[10^{(2d^2)} (n_1^d n_2^{d-1} \cdots n_{d-1}^2 n_d)^{2k}\] subspaces of dimensionality at most $2k$. Hence the analogous bound \eqref{2_layer_concentration_inequality} holds for all $x,y,x_1,x_2,x_3,x_4 \in \R^k$ with probability at least $$1 - 10^{(2d^2 + 4d^2)} (n_1^d n_2^{d-1} \cdots n_{d-1}^2 n_d)^{2k+4k}\hat{\gamma}m^{4k} e^{-\tilde{c}_{\epsilon} m} \geqslant 1 - \tilde{\gamma}m^{4k}e^{-\tilde{c}_{\epsilon} m/2},$$ provided $m \geqslant \tilde{C}_{\epsilon} d k \log(n_1 n_2 \cdots n_d)$, where $\tilde{\gamma}$ is a positive absolute constant, $\tilde{c}_{\epsilon}$ depends polynomially on $\epsilon$, and $\tilde{C}_{\epsilon}$ depends polynomially on $\epsilon^{-1}$. 
    
\end{proof}

\subsection{RRCP Supplementary Results}

\begin{proof}[Proof of Proposition \ref{tech_lemma}]
Fix $0 < \epsilon < 1$. Suppose \eqref{tech_hypothesis} holds and fix $x,y \in T$. Without loss of generality, assume $x$ and $y$ are unit normed. We will use the shorthand notation $\Phi = \Phi_{z,w}$. Since $T$ is a subspace, $x - y \in T$ so by \eqref{tech_hypothesis}, \begin{align*}
\left|\langle  A_z^\top A_w (x - y), x-y\rangle - \langle \Phi (x-y), x-y\rangle\right| \leqslant \epsilon\|x-y\|^2
\end{align*} or equivalently \begin{align}
\langle \Phi (x-y), x-y\rangle - \epsilon \|x-y\|^2 \leqslant \langle  A_z^\top A_w(x - y), x-y\rangle  \leqslant \langle \Phi (x-y), x-y\rangle + \epsilon \|x-y\|^2. \label{expand_out}
\end{align} Note that \begin{align*}
\|x-y\|^2 = 2 - 2\langle x,y\rangle,
\end{align*} \begin{align*}
\langle \Phi (x-y), x-y\rangle = \langle \Phi x,x\rangle + \langle \Phi y,y\rangle - 2\langle \Phi x,y\rangle,
\end{align*} and \begin{align*}
\langle  A_z^\top A_w (x - y), x-y\rangle = \langle A_z^\top A_w x,  x\rangle + \langle A_z^\top A_w y, y\rangle - 2 \langle A_z^\top A_w x, y\rangle
\end{align*} where we used the fact that $\Phi$ and $A_z^\top A_w$ are symmetric. Rearranging \eqref{expand_out} yields \begin{align*}
2 \left(\langle \Phi x,y \rangle - \langle A_z^\top A_w x, y \rangle\right) \leqslant \left(\langle \Phi x, x\rangle - \langle  A_z^\top A_w x ,  x\rangle\right) + \left(\langle \Phi  y, y\rangle - \langle  A_z^\top A_w y,  y\rangle\right) + (2 - 2\langle x,y \rangle)\epsilon.
\end{align*} By assumption, the first two terms are bounded from above by $\epsilon$. Thus \begin{align*}
2 \left(\langle \Phi x,y \rangle - \langle A_z^\top A_w x,  y \rangle\right) & \leqslant 2 \epsilon + (2 - 2\langle x,y\rangle)\epsilon  = 2(2 - \langle x,y \rangle)\epsilon \leqslant 6 \epsilon
\end{align*} so \begin{align*}
\langle \Phi x,y \rangle - \langle A_z^\top A_w x, y \rangle \leqslant 3 \epsilon.
\end{align*} The lower bound is identical and establishes the desired result.
\end{proof}

\begin{proof}[Proof of Lemma \ref{boundoncardAz}]
It suffices to prove the same upperbound for $|\{\sgn(Av) : v \in V\}|.$ Let $\ell = \dim V$. By rotational invariance of Gaussians, we may take $V = \text{span}(e_1,\dots,e_{\ell})$ without loss of generality. Without loss of generality, we may let $A$ have dimensions $m \times 
\ell$ and take $V = \R^{\ell}$.

We will appeal to a classical result from sphere covering \cite{sphere_covering}. If $m$ hyperplanes in $\R^{\ell}$ contain the origin and are such that the normal vectors to any subset of $\ell$ of those hyperplanes are independent, then the complement of the union of these hyperplanes is partitioned into at most \begin{align*}
2 \sum_{i = 0}^{\ell - 1} \binom{m-1}{i}
\end{align*} disjoint regions. Each region uniquely corresponds to a constant value of $\sgn(Av)$ that has all non-zero entries. With probability $1$, any subset of $\ell$ rows of $A$ are linearly independent, and thus, \begin{align*}
|\{\sgn(Av) : v \in \R^{\ell},\ (Av)_i \neq 0\ \forall\ i\}| \leqslant 2 \sum_{i=0}^{\ell - 1} \binom{m-1}{i} \leqslant 2\ell \left(\frac{e m}{\ell}\right)^{\ell} \leqslant 10m^{\ell}
\end{align*} where the first inequality uses the fact that $\binom{m}{\ell} \leqslant (em/\ell)^{\ell}$ and the second inequality uses that $2\ell (e/\ell)^{\ell} \leqslant 10$ for all $\ell \geqslant 1$.

For arbitrary $v$, at most $\ell$ entries of $Av$ can be zero by linear independence of the rows of $A$. At each $v$, there exists a direction $\tilde{v}$ such that $(A(v + \delta \tilde{v}))_{i} \neq 0$ for all $i$ and for all $\delta$ sufficiently small. Hence, $\sgn(Av)$ differs from one of $\{\sgn(Av) : v \in \R^{\ell},\ (Av)_{i} \neq 0\ \forall\ i\}$ by at most $\ell$ entries. Thus, \begin{align*}
|\{\sgn(Av) : v \in \R^{\ell}\}| \leqslant \binom{m}{\ell} |\{\sgn(Av) : v \in \R^{\ell},\ (Av)_i \neq 0\ \forall\ i\}| \leqslant m^{\ell} 10 m^{\ell} = 10m^{2\ell}.
\end{align*}
\end{proof}

\begin{proof}[Proof of Lemma \ref{Atilde_small_lemma}] For any $\Omega \subset [m]$, let $A_{\Omega}$ denote the submatrix of $A$ with rows $a_{\ell}^{\top}$ where $\ell \in \Omega$. We claim that it suffices to show \begin{align}
    \|A_{\Omega}x\| \leqslant \sqrt{\epsilon} \|x\|\ \forall\ x \in T\ \forall\ \Omega \subset [m]\ \text{satisfying}\ |\Omega| \leqslant 2k \leqslant \delta_{\epsilon} m. \label{A_omega_bound}
\end{align} To see this, observe that for any $w_0 \in W_0$, $z_0 \in Z_0$, and $x,y \in T$ and $\Omega \subset [m]$, we have that \begin{align*}
    |\langle\tilde{A}_{z_0}^\top \tilde{A}_{w_0} x,y \rangle | 
    & = \left|\left\langle \diag(\sgn(A_{\Omega}z_0) \odot \sgn(A_{\Omega}w_0))A_{\Omega}x,A_{\Omega}y\right\rangle\right| \\
    & \leqslant \|\diag(\sgn(A_{\Omega}z_0) \odot \sgn(A_{\Omega}w_0))\|\|A_{\Omega}x\|\|A_{\Omega}y\| \\
    & \leqslant \|A_{\Omega}x\|\|A_{\Omega}y\|
\end{align*} where we used the Cauchy-Schwarz inequality in the first inequality. Hence establishing \eqref{A_omega_bound} will imply the desired conclusion.

By the rotational invariance of the Gaussian distribution, we may take $T$ to be in the span of the first $k$ standard basis vectors. We may further without loss of generality assume $A \in \R^{m \times k}$ so it suffices to establish  $\|A_{\Omega}\| \leqslant \sqrt{\epsilon}$. Fix $\Omega \subset[m]$ satisfying $|\Omega| \leqslant 2k$. By Corollary 5.35 in \cite{Vershynin_notes}, we have that for any $t \geqslant 0$, it holds with probability $1 - 2\exp(-t^2/2)$ that $$\sqrt{m}\|A_{\Omega}\| \leqslant \sqrt{|\Omega|} + \sqrt{k} + t.$$ Taking $t = \sqrt{\epsilon m}/3$, we conclude that if $|\Omega| \leqslant \epsilon m/9$ and $m \geqslant 9k/\epsilon$, then $\|A_{\Omega}\| \leqslant \sqrt{\epsilon}$ with probability $1 - 2\exp(-\epsilon m/18).$

We now establish that $\|A_{\Omega}\| \leqslant \sqrt{\epsilon}$ holds simultaneously over all subsets $\Omega \subset [m]$ of a sufficiently small size with a union bound. Observe that since $\lim_{\delta \rightarrow 0} \left(\frac{e}{\delta}\right)^{\delta} = 1$, there exists a $\delta_* > 0$ such that $\left(\frac{e}{\delta_*}\right)^{\delta_{*}} \leqslant \exp(\epsilon /36)$. Put $\delta_{\epsilon} := \min\{\epsilon, \delta_{*}\}$. Let $E$ be the event that $\|A_{\Omega}\| \leqslant \sqrt{\epsilon}$ for all subsets $\Omega \subset [m]$ satisfying $|\Omega| \leqslant 2k \leqslant \delta_{\epsilon} m$. If $m \geqslant 9\epsilon^{-1}k$, a union bound shows that this event holds with probability at least \begin{align*}
    1 - 2 \sum_{\ell = 1}^{\lfloor\delta_{\epsilon} m\rfloor}\binom{m}{\ell}\exp(-\epsilon m/18) & \geqslant 1 - 2\lfloor\delta_{\epsilon} m\rfloor \binom{m}{\lfloor\delta_{\epsilon} m\rfloor}\exp(-\epsilon m/18) \\
    & \geqslant 1 - 2\lfloor\delta_{\epsilon}m\rfloor \left(\frac{em}{\delta_{\epsilon} m}\right)^{\delta_{\epsilon}m}\exp(-\epsilon m/18) \\
    &=  1 - 2  \lfloor\delta_{\epsilon}m\rfloor\left[\left(\frac{e}{\delta_{\epsilon}}\right)^{\delta_{\epsilon}}\right]^m\exp(-\epsilon m/18) \\
    & \geqslant 1 - 2\lfloor\delta_{\epsilon} m\rfloor \exp(-\epsilon m/36) \\
    & \geqslant 1 - 2m\exp(-\epsilon m/36)
\end{align*} where we used the fact that $\left(\frac{e}{\delta_{\epsilon}}\right)^{\delta_{\epsilon}} \leqslant \exp(\epsilon /36)$ in the second to last inequality.

\end{proof}

We now prove the continuity of $\Phi_{z,w}$ for non-zero $z,w \in \R^n$. Recall that \begin{align*}
    \Phi_{z,w} : = \frac{\pi - 2\theta_{z,w}}{\pi} I_{n} + \frac{2\sin \theta_{z,w}}{\pi}M_{\hat{z} \leftrightarrow \hat{w}}
\end{align*} where $\theta_{z,w} := \angle(z,w)$ and $M_{z \leftrightarrow w}$ is the matrix that sends $\hat{z} \mapsto e_1$, $\hat{w} \mapsto \cos \theta_{z,w} e_1 + \sin \theta_{z,w} e_2$, and $h \mapsto 0$ for all $h \in \text{span}(\{z,w\}^{\perp}).$

\begin{lem}[Continuity of $\Phi_{z,w}$]\label{continuityphi}
Fix $0 < \epsilon < 1$ and $z,w \in \mathcal{S}^{n-1}$. If $\|\tilde{z} - z\| \leqslant \epsilon$ and $\|\tilde{w} - w\| \leqslant \epsilon$ for some $\tilde{z},\tilde{w} \in \mathcal{S}^{n-1}$, then \begin{align*}
\|\Phi_{\tilde{z},\tilde{w}} - \Phi_{z,w}\| \leqslant \frac{88}{\pi}\epsilon.
\end{align*}
\end{lem}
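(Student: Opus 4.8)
The plan is to decompose $\Phi_{\tilde z,\tilde w}-\Phi_{z,w}$ into three contributions: the change in the scalar coefficient $\frac{\pi-2\theta}{\pi}$ multiplying $I_n$, the change in the scalar coefficient $\frac{2\sin\theta}{\pi}$ multiplying the reflection-type matrix, and the change in the matrix $M_{\hat z\leftrightarrow\hat w}$ itself. For the first two, I would first control how much the angle $\theta_{z,w}=\angle(z,w)$ moves under the perturbation. Since $z,\tilde z,w,\tilde w$ all lie on $\mathcal S^{n-1}$, the chord-length bounds $\|\tilde z-z\|\le\epsilon$, $\|\tilde w-w\|\le\epsilon$ translate into an angular bound: $\|\hat z-\hat{\tilde z}\|\le\epsilon$ already since they are unit vectors, and the arc length between two unit vectors is at most $\frac{\pi}{2}$ times the chord length (or one can simply use that $|\theta_{z,w}-\theta_{\tilde z,\tilde w}|\le \|\tilde z-z\|+\|\tilde w-w\|\le 2\epsilon$ by the triangle inequality for the geodesic metric on the sphere combined with $\arcsin$-type estimates). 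Using $1$-Lipschitzness of $\theta\mapsto\cos\theta$ and $\theta\mapsto\sin\theta$, the two scalar-coefficient changes are each $O(\epsilon)$: specifically $\bigl|\tfrac{\pi-2\theta_{z,w}}{\pi}-\tfrac{\pi-2\theta_{\tilde z,\tilde w}}{\pi}\bigr|\le\tfrac{2}{\pi}|\theta_{z,w}-\theta_{\tilde z,\tilde w}|$ and similarly for the sine term, and these contribute $\le\tfrac{4}{\pi}\epsilon$ and $\le\tfrac{4}{\pi}\epsilon$ respectively to the spectral norm (using $\|I_n\|=1$ and $\|M_{\hat z\leftrightarrow\hat w}\|=1$).

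The main work is bounding $\|M_{\hat{\tilde z}\leftrightarrow\hat{\tilde w}}-M_{\hat z\leftrightarrow\hat w}\|$. Here I would use the explicit formula from the footnote: $M_{\hat x\leftrightarrow\hat y}$ acts as the identity-like reflection on $\mathrm{span}\{x,y\}$ and as zero on its orthogonal complement, and can be written via a rotation $R$ conjugating a fixed $3\times 3$ block. An alternative, cleaner route is to write $M_{\hat z\leftrightarrow\hat w}$ directly in terms of $\hat z$ and $\hat w$: it sends $\hat z\mapsto\hat w$, $\hat w\mapsto\hat z$, and kills $\mathrm{span}\{z,w\}^\perp$, so on $\mathrm{span}\{z,w\}$ (spanned by an orthonormal basis obtained from Gram–Schmidt on $\hat z,\hat w$) it is an explicit function of $\theta_{z,w}$ and of the orthonormal frame. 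I would express $M_{\hat z\leftrightarrow\hat w}$ as a low-degree rational/polynomial expression in $\hat z$, $\hat w$, and $\frac{1}{\sin\theta_{z,w}}$, then bound the perturbation by Lipschitz estimates term by term: the perturbations $\|\hat{\tilde z}-\hat z\|\le\epsilon$, $\|\hat{\tilde w}-\hat w\|\le\epsilon$, and $|\theta_{\tilde z,\tilde w}-\theta_{z,w}|\le 2\epsilon$ feed through, and each elementary operation (outer products, the $\csc\theta$ factor) contributes a constant multiple of $\epsilon$. Summing these constants and the two scalar contributions gives a bound of the form $\frac{c}{\pi}\epsilon$, and I would verify $c\le 88$.

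The hard part will be the bookkeeping for $\|M_{\hat{\tilde z}\leftrightarrow\hat{\tilde w}}-M_{\hat z\leftrightarrow\hat w}\|$, in particular handling the $\csc\theta_{z,w}$ factor cleanly — naively it blows up as $z$ and $w$ become parallel, yet the matrix $M$ itself stays bounded, so one must exploit cancellation. The robust fix is to note $\frac{2\sin\theta_{z,w}}{\pi}M_{\hat z\leftrightarrow\hat w}$ is the quantity that actually enters $\Phi_{z,w}$, and this product is a genuinely smooth (polynomial-in-the-entries) function of $\hat z,\hat w$ with no singularity: the $\sin\theta$ in the coefficient cancels the $\csc\theta$ hidden in $M$. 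So I would bound $\bigl\|\tfrac{2\sin\theta_{\tilde z,\tilde w}}{\pi}M_{\hat{\tilde z}\leftrightarrow\hat{\tilde w}}-\tfrac{2\sin\theta_{z,w}}{\pi}M_{\hat z\leftrightarrow\hat w}\bigr\|$ as a single unit, using that this product equals a fixed bilinear-type expression in $\hat z,\hat w$ (roughly $\frac{2}{\pi}(\hat w\hat z^\top+\hat z\hat w^\top) - \frac{2\cos\theta}{\pi}(\hat z\hat z^\top+\hat w\hat w^\top)$ restricted appropriately, i.e. an expression whose Lipschitz constant in $(\hat z,\hat w)$ is an absolute constant), hence its perturbation is $O(\epsilon)$ with an explicit constant. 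Combining with the $\frac{\pi-2\theta}{\pi}I_n$ term — whose perturbation is $\le\frac{4}{\pi}\epsilon$ — and tallying all constants yields the claimed $\frac{88}{\pi}\epsilon$.
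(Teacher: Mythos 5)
Your overall strategy matches the paper's in outline: control the perturbation of the angle $\theta_{z,w}$, and bound the term $\tfrac{2\sin\theta_{z,w}}{\pi}M_{\hat{z}\leftrightarrow\hat{w}}$ as a single unit so that the $1/\sin\theta$ hidden inside $M$ is never isolated. The gap is that the identity you invoke to implement this is false. A computation in an orthonormal basis of $\mathrm{span}\{z,w\}$ gives
\begin{align*}
(\hat{w}\hat{z}^\top + \hat{z}\hat{w}^\top) - \cos\theta_{z,w}\,(\hat{z}\hat{z}^\top + \hat{w}\hat{w}^\top) \;=\; \sin^2\theta_{z,w}\, M_{\hat{z}\leftrightarrow\hat{w}},
\end{align*}
so your bilinear expression represents $\tfrac{2\sin^2\theta}{\pi}M$, not the term $\tfrac{2\sin\theta}{\pi}M$ appearing in $\Phi_{z,w}$; recovering the latter from it reintroduces exactly the $1/\sin\theta$ factor you were trying to cancel, and near $\theta\in\{0,\pi\}$ the "absolute Lipschitz constant" you need is therefore not established by your argument. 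The fact you need — that $(\hat{z},\hat{w})\mapsto \sin\theta\, M_{\hat{z}\leftrightarrow\hat{w}}$ is Lipschitz with a universal constant — is true but requires its own proof. The paper's route is to diagonalize $M$: its $\pm1$ eigenvectors are $g=w+z$ and $h=w-z$, and after normalizing ($\|w\pm z\|^2 = 2\pm 2\cos\theta$) one gets $\sin\theta\,M = \tfrac{\sin(\theta/2)}{\|g\|}\,gg^\top - \tfrac{\cos(\theta/2)}{\|h\|}\,hh^\top$, where each summand is a bounded scalar times $vv^\top/\|v\|$ and $v\mapsto vv^\top/\|v\|$ is globally Lipschitz; this is what produces the $\tfrac{80}{\pi}\epsilon$ contribution (note the paper's write-up of this decomposition drops the normalizing factors, so be careful when reconstructing it).

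A secondary slip: the bound $|\theta_{z,w}-\theta_{\tilde{z},\tilde{w}}|\leqslant\|\tilde{z}-z\|+\|\tilde{w}-w\|\leqslant 2\epsilon$ goes the wrong way, since for unit vectors $\|z-\tilde{z}\| = 2\sin(\theta_{z,\tilde{z}}/2)\leqslant\theta_{z,\tilde{z}}$, i.e.\ the chord lower-bounds the angle. The correct conversion costs a constant: the paper uses $\theta\leqslant 4\sin(\theta/2)$ to get $\theta_{z,\tilde{z}}\leqslant 2\epsilon$ and hence $|\theta_{z,w}-\theta_{\tilde{z},\tilde{w}}|\leqslant 4\epsilon$ (your $\tfrac{\pi}{2}$-times-chord remark would give $\pi\epsilon$). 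This only changes constants — e.g.\ the identity-coefficient term contributes $\tfrac{8}{\pi}\epsilon$ rather than $\tfrac{4}{\pi}\epsilon$ — and there is slack in $\tfrac{88}{\pi}\epsilon$, but the bookkeeping you defer ("verify $c\leqslant 88$") must be redone with the corrected angle bound and with an actual Lipschitz estimate for the $M$-term as above.
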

\begin{proof}[Proof of Lemma \ref{continuityphi}] In this proof, we will utilize the following three inequalities: \begin{align}
    |\theta_{x_1,y} - \theta_{x_2,y}| & \leqslant |\theta_{x_1,x_2}|,\ \forall\ x_1,x_2,y \in \mathcal{S}^{n-1} \label{relativeanglebound} \\
    2\sin(\theta_{x,y}/2) & \leqslant \|x - y\|,\ \forall\ x,y \in \mathcal{S}^{n-1} \label{sinupperbound} \\
    \theta/4 & \leqslant \sin(\theta/2),\ \forall\ \theta \in [0,\pi]. \label{sinlowerbound}
\end{align} Observe that \begin{align*}
    \|\Phi_{\tilde{z},\tilde{w}} - \Phi_{z,w}\| & \leqslant \frac{2|\theta_{\tilde{z},\tilde{w}} - \theta_{z,w}|}{\pi}\|I_n\| + \left\| \frac{2\sin \theta_{\tilde{z},\tilde{w}}}{\pi}M_{\tilde{z} \leftrightarrow \tilde{w}} - \frac{2\sin \theta_{z,w}}{\pi}M_{z \leftrightarrow w}\right\|.
\end{align*} First, observe that by \eqref{relativeanglebound}, we have that  \begin{align*}
    |\theta_{\tilde{z},\tilde{w}} - \theta_{z,w}| & \leqslant |\theta_{\tilde{z},\tilde{w}} - \theta_{z,\tilde{w}}| + |\theta_{z,\tilde{w}} - \theta_{z,w}| \leqslant |\theta_{\tilde{z},z}| + |\theta_{\tilde{w},w}|.
\end{align*} Then, by \eqref{sinupperbound} and \eqref{sinlowerbound}, we have that \begin{align*}
    |\theta_{\tilde{z},z}| \leqslant 4\sin(\theta_{\tilde{z},z}/2) \leqslant 2\|\tilde{z} - z\| \leqslant 2\epsilon.
\end{align*} The same upper bound holds for $|\theta_{\tilde{w},w}|$. Thus we attain \begin{align}
    |\theta_{\tilde{z},\tilde{w}} - \theta_{z,w}| \leqslant |\theta_{\tilde{z},z}| + |\theta_{\tilde{w},w}| \leqslant 4\epsilon. \label{thetabound}
\end{align} 

Let $R$ be a rotation matrix that maps $z \mapsto e_1$ and $w \mapsto \cos \theta_{z,w} e_1 + \sin \theta_{z,w} e_2$ where $e_1$ and $e_2$ are the first and second standard basis vectors, respectively. Let $\tilde{R}$ denote the matrix that applies the same rotatation to the system $\tilde{z}$ and $\tilde{w}$. Recall that $M_{z \leftrightarrow w} := R^\top D R\ \text{and}\  M_{\tilde{z} \leftrightarrow \tilde{w}} : = \tilde{R}^\top \tilde{D} \tilde{R}$ where \begin{align*}
    D : = \left[\begin{array}{ccc}
    \cos\theta_{z,w} & \sin\theta_{z,w} & 0  \\
     \sin \theta_{z,w} & -\cos \theta_{z,w} & 0 \\
     0 & 0 & 0_{k-2}
\end{array}\right]\ \text{and}\ \tilde{D} : = \left[\begin{array}{ccc}
    \cos\theta_{\tilde{z},\tilde{w}} & \sin\theta_{\tilde{z},\tilde{w}} & 0  \\
     \sin\theta_{\tilde{z},\tilde{w}} & -\cos \theta_{\tilde{z},\tilde{w}} & 0 \\
     0 & 0 & 0_{k-2}
\end{array}\right].
\end{align*} An elementary calculation shows that $D$ has $2$ pairs of non-zero eigenvalues and eigenvectors $(\lambda_1, d_1)$ and $(\lambda_2,d_2)$ where \begin{align*}
    \lambda_1 = -1\ \text{and}\ d_1 = (\cos \theta_{z,w} - 1)e_1 + \sin \theta_{z,w}e_2
\end{align*} while \begin{align*}
    \lambda_2 = 1\ \text{and}\ d_2 = (\cos \theta_{z,w} + 1)e_1 + \sin \theta_{z,w} e_2.
\end{align*} Let $D = - d_1d_1^\top +  d_2d_2^\top$ be the eigenvalue decomposition for $D$. Then by the definition of $M_{z \leftrightarrow w}$, \begin{align*}
    M_{z \leftrightarrow w} = R^\top D R
    & = -R^\top d_1 d_1^\top R + R^\top d_2 d_2^\top R 
    =: - v_1 v_1^\top +  v_2 v_2^\top
\end{align*} so $v_1 := R^\top d_1$ and $v_2 := R^\top d_2$ are the eigenvectors of $M_{z \leftrightarrow w}$ with corresponding eigenvalues $-1$ and $1$, respectively. Then, recall that $Rz = e_1$ while $Rw = \cos \theta_{z,w} e_1 + \sin\theta_{z,w} e_2$. Thus the eigenvectors $d_1$ and $d_2$ can be written as \begin{align*}
    d_1 = Rw - Rz\ \text{and}\ d_2 = Rw + Rz.
\end{align*} Thus the eigenvectors of $M_{z\leftrightarrow w}$ are precisely \begin{align*}
    v_1 = w - z\ \text{and}\ v_2 = w + z.
\end{align*} By the same argument, the eigenvectors of $M_{\Tilde{z} \leftrightarrow \Tilde{w}}$ are \begin{align*}
    \Tilde{v}_1 = \Tilde{w} - \Tilde{z}\ \text{and}\ \tilde{v}_2 = \Tilde{w} + \Tilde{z}
\end{align*} with corresponding eigenvalues $-1$ and $1$, respectively. Hence, we have that \begin{align*}
    \frac{2\sin \theta_{z, w}}{\pi}M_{z \leftrightarrow w} & = \frac{2\sin \theta_{z, w}}{\pi}\left(-v_1v_1^\top + v_2v_2^\top\right) \\
    & = \frac{2\sin \theta_{z, w}}{\pi}\left(-(w-z)(w-z)^\top + (w+z)(w+z)^\top\right)
\end{align*} and likewise \begin{align*}
    \frac{2\sin \theta_{\Tilde{z},\Tilde{w}}}{\pi}M_{\Tilde{z} \leftrightarrow \Tilde{w}} & = \frac{2\sin \theta_{\Tilde{z},\Tilde{w}}}{\pi}\left(-(\Tilde{w} - \Tilde{z})(\Tilde{w}-\Tilde{z})^\top + (\Tilde{w} + \Tilde{z})(\Tilde{w} + \Tilde{z})^\top\right).
\end{align*} For simplicity of notation, let $h = w - z$, $\Tilde{h} = \tilde{w} - \Tilde{z}$, $g = w + z$, and $\Tilde{g} = \tilde{w} + \Tilde{z}$. Then \begin{align*}
    \left\|\frac{2\sin \theta_{z, w}}{\pi}M_{z \leftrightarrow w} -  \frac{2\sin \theta_{\Tilde{z}, \Tilde{w}}}{\pi}M_{\Tilde{z} \leftrightarrow \Tilde{w}}\right\| & = \frac{2}{\pi}\left\|\sin \theta_{z, w}\left(-hh^\top + gg^\top\right) + \sin \theta_{\Tilde{z}, \Tilde{w}}\left(\Tilde{h}\Tilde{h}^\top - \Tilde{g}\Tilde{g}^\top\right) \right\| \\
    & \leqslant \frac{2}{\pi}\left(\| \sin \theta_{z, w}hh^\top - \sin \theta_{\Tilde{z}, \Tilde{w}}\Tilde{h}\Tilde{h}^\top \| + \|\sin \theta_{z, w}gg^\top - \sin \theta_{\Tilde{z}, \Tilde{w}}\tilde{g}\Tilde{g}^\top \|\right).
\end{align*} Note that since $z,w,\Tilde{z},\Tilde{w} \in \mathcal{S}^{n-1}$, $\|h\|,\|\Tilde{h}\|,\|g\|,\|\Tilde{g}\| \leqslant 2$. In addition, \begin{align*}
    \|h-\Tilde{h}\| & \leqslant \|z - \tilde{z}\| + \|w - \Tilde{w}\| \leqslant 2\epsilon
\end{align*} and \eqref{thetabound} implies \begin{align*}
    |\sin\theta_{z,w} - \sin\theta_{\Tilde{z},\Tilde{w}}| \leqslant |\theta_{z,w} - \theta_{\Tilde{z},\Tilde{w}}| \leqslant 4\epsilon.
\end{align*} Hence \begin{align*}
    \|\sin \theta_{z, w}hh^\top - \sin \theta_{\tilde{z}, \tilde{w}}\Tilde{h}\Tilde{h}^\top\| & \leqslant \|\sin \theta_{z, w}hh^\top - \sin \theta_{z, w}h\Tilde{h}^\top\| + \|\sin \theta_{z, w}h\tilde{h}^\top - \sin \theta_{z, w}\tilde{h}\Tilde{h}^\top\| \\
    & + \|\sin \theta_{z, w}\tilde{h}\Tilde{h}^\top - \sin \theta_{\tilde{z}, \tilde{w}}\Tilde{h}\Tilde{h}^\top\| \\
    & \leqslant |\sin\theta_{z,w}|\|h\|\|h-\Tilde{h}\| + |\sin\theta_{z,w}|\|\Tilde{h}\|\|h-\Tilde{h}\| + \|\Tilde{h}\Tilde{h}^\top\||\sin\theta_{z,w}-\sin\theta_{\Tilde{z},\Tilde{w}}| \\
    & \leqslant 20\epsilon.
\end{align*} The same bound holds for $\|\sin\theta_{z,w}gg^\top - \sin\theta_{\Tilde{z},\Tilde{w}} \Tilde{g}\Tilde{g}^\top\|$. Hence we attain \begin{align}
    \left\|\frac{2\sin \theta_{z, w}}{\pi}M_{z \leftrightarrow w} -  \frac{2\sin \theta_{\Tilde{z}, \Tilde{w}}}{\pi}M_{\Tilde{z} \leftrightarrow \Tilde{w}}\right\| & \leqslant \frac{80}{\pi} \epsilon \label{Mbound}.
\end{align} 

Combining \eqref{thetabound} and \eqref{Mbound}, we see that \begin{align*}
    \|\Phi_{\tilde{z},\tilde{w}} - \Phi_{z,w}\| & \leqslant \frac{2|\theta_{\tilde{z},\tilde{w}} - \theta_{z,w}|}{\pi}\|I_n\| + \left\| \frac{2\sin \theta_{\tilde{z},\tilde{w}}}{\pi}M_{\tilde{z} \leftrightarrow \tilde{w}} - \frac{2\sin \theta_{z,w}}{\pi}M_{z \leftrightarrow w}\right\| \leqslant \frac{88}{\pi}\epsilon.
\end{align*}                                                                                        

\end{proof}

We now prove the inequalities used in the proof of Lemma \ref{continuityphi}. \begin{proof}[Proof of equations \eqref{relativeanglebound}, \eqref{sinupperbound}, and \eqref{sinlowerbound}] For \eqref{relativeanglebound}, we proceed similarly to the proof on page $12$ of \cite{Drury2001}. Observe that we can write \begin{align*}
    x_1 & = \cos \theta_{x_1,y} y + \sin \theta_{x_1,y} y_1^{\perp}
\end{align*} and \begin{align*}
    x_2 & = \cos \theta_{x_2,y} y + \sin \theta_{x_2,y} y_2^{\perp}
\end{align*} where $y_1^{\perp}$ and  $y_2^{\perp}$ are unit vectors that are orthogonal to $y$. Then observe that \begin{align*}
    \langle x_1,x_2 \rangle & = \langle \cos \theta_{x_1,y} y + \sin \theta_{x_1,y} y_1^{\perp},\cos \theta_{x_2,y} y + \sin \theta_{x_2,y} y_2^{\perp} \rangle\\
    %& = \cos \theta_{x_1,y} \cos \theta_{x_2,y} + \sin \theta_{x_1,y}\cos \theta_{x_1,y}\langle y,y_1^{\perp}\rangle + \cos\theta_{x_2,y}\sin\theta_{x_2,y}\langle y,y_2^{\perp}\rangle + \sin \theta_{x_1,y}\sin \theta_{x_2,y} \langle y_1^{\perp},y_2^{\perp} \rangle \\
    & =  \cos \theta_{x_1,y} \cos\theta_{x_2,y} + \sin \theta_{x_1,y}\sin \theta_{x_2,y} \langle y_1^{\perp},y_2^{\perp} \rangle.
\end{align*} Since $\theta_{x_1,y},\theta_{x_2,y} \in [0,\pi]$, we have that $\sin \theta_{x_1,y}\sin\theta_{x_2,y} \geqslant 0$. In addition, $\langle y_1^{\perp},y_2^{\perp} \rangle \leqslant \|y_1^{\perp}\|\|y_2^{\perp}\| = 1$ so we attain \begin{align*}
    \langle x_1,x_2 \rangle \leqslant \cos \theta_{x_1,y} \cos\theta_{x_2,y} + \sin \theta_{x_1,y}\sin \theta_{x_2,y} = \cos(\theta_{x_1,y} - \theta_{x_2,y})
\end{align*} by the trigonometric identity $\cos(\al \mp \beta) = \cos\al\cos\beta \pm \sin\al\sin\beta.$ Since the function $\cos^{-1} $ is decreasing on $[-1,1]$, we see that \begin{align*}
    \theta_{x_1,y} - \theta_{x_2,y} \leqslant \cos^{-1}(\langle x_1,x_2 \rangle) = \theta_{x_1,x_2}.
\end{align*} Similarly, $\theta_{x_2,y} - \theta_{x_1,y} \leqslant \theta_{x_1,x_2}$ so we attain $|\theta_{x_1,y} - \theta_{x_2,y}| \leqslant |\theta_{x_1,x_2}|.$

For \eqref{sinupperbound}, observe that for $x,y \in \Seps^{n-1}$, \begin{align*}
    \|x-y\|^2 & = \|x\|^2 + \|y\|^2 - 2 \langle x,y \rangle \\
    & = \|x\|^2 + \|y\|^2 - 2\|x\|\|y\|\cos\theta_{x,y} \\
    & = 2(1 - \cos \theta_{x,y}).
\end{align*} Thus, using the half angle formula \begin{align*}
    \sin \frac{\theta}{2} = \sgn\left(2\pi - \theta + 4\pi \left\lfloor \frac{\theta}{4\pi}\right\rfloor\right)\sqrt{\frac{1 - \cos \theta}{2}}
\end{align*} we see that \begin{align*}
    \|x-y\| = \sqrt{2(1- \cos\theta_{x,y})} = 2\sqrt{\frac{1 - \cos \theta_{x,y}}{2}} \geqslant 2\sin\frac{\theta_{x,y}}{2}.
\end{align*} 

For \eqref{sinlowerbound}, one can note that the function $\psi(\theta) : = 4\sin\frac{\theta}{2} - \theta$ is positive for all $\theta \in [0,\pi].$ 
\end{proof}

 \section*{Acknowledgements}
  PH is partially supported by NSF CAREER Grant DMS-1848087. OL acknowledges support by the NSF Graduate Research Fellowship under Grant No. DGE-1450681.

\bibliographystyle{plain}

\bibliography{dpr.bib}

\end{document}